\documentclass[11pt]{article}
\usepackage{authblk}
\usepackage{caption}
\captionsetup[figure]{font=small}
\usepackage[caption=false]{subfig}

\usepackage{amsmath,amsfonts,amsthm,amssymb,graphics,graphicx,mathtools,epstopdf,etoolbox,indentfirst,enumitem,mleftright,xcolor,booktabs,footnote,relsize,microtype}

\usepackage[scr=euler]{mathalpha} 
\usepackage{accents,yhmath} 

\makeatletter
\renewcommand{\paragraph}{%
  \@startsection{paragraph}{4}%
  {\z@}{1.25ex \@plus 1ex \@minus .2ex}{-1em}%
  {\normalfont\normalsize\bfseries}%
}
\makeatother

\setlength{\topmargin}{-0.125in}
\setlength{\headheight}{0in}
\setlength{\headsep}{0in}
\setlength{\textheight}{9in}
\setlength{\oddsidemargin}{0in}
\setlength{\textwidth}{6.5in}

\setlength{\parskip}{1.5mm}

\usepackage[
backend=bibtex,
style=alphabetic,
giveninits=true, 
maxbibnames=99,
minalphanames=3,
date=year
]{biblatex}
\addbibresource{biblio_GREAT.bib}

\usepackage{hyperref} 
\usepackage{doi} 

\renewbibmacro{in:}{}

\AtEveryBibitem{\clearfield{issue}} 

\newbibmacro{string+doiurl}[1]{%
	\iffieldundef{doi}{%
		\iffieldundef{url}{
			#1%
		}{%
			\href{\thefield{url}}{#1}%
		}%
	}{%
		\href{http://doi.org/\thefield{doi}}{#1}%
	}%
}
\DeclareFieldFormat{title}{\usebibmacro{string+doiurl}{\mkbibemph{#1}}}
\DeclareFieldFormat[article,thesis,incollection,inproceedings,manual]{title}%
{\usebibmacro{string+doiurl}
	{#1} 
}
\ExecuteBibliographyOptions{url=false,doi=false}


\usepackage{algorithm} 
\usepackage[noend]{algpseudocode}
\floatname{algorithm}{Protocol} 
\algrenewcommand\alglinenumber[1]{\normalsize #1.} 

\newcounter{algsubstate}
\makeatletter

\makeatother
\newenvironment{algsubstates}
  {\setcounter{algsubstate}{0}%
   \renewcommand{\State}{%
     \refstepcounter{algsubstate}%
     \Statex {\normalsize\arabic{ALG@line}.\arabic{algsubstate}.}\kern5pt}
     }
  {}

\makeatletter

\makeatother

\definecolor{darkmagenta}{rgb}{0.85, 0, 0.45}
\hypersetup{
colorlinks = true,
citecolor= blue,
urlcolor= blue,
linkcolor = blue
}

\newcommand{\ceil}[1]{\left\lceil #1 \right\rceil}
\newcommand{\floor}[1]{\left\lfloor #1 \right\rfloor}
\newcommand{\ket}[1]{\left| #1 \right>}
\newcommand{\bra}[1]{\left< #1 \right|}
\newcommand{\ketbra}[2]{\ket{#1} \!\! \bra{#2}}
\newcommand{\pure}[1]{\ketbra{#1}{#1}}
\newcommand{\tr}[2][]{\operatorname{Tr}_{#1}\!\left[#2\right]} 
\newcommand{\Tr}{\operatorname{Tr}} 
\newcommand{\binh}{h_{\mathrm{bin}}} 

\newcommand{\defvar}{\coloneqq} 
\newcommand{\dop}[1]{\operatorname{S}_{#1}} 
\newcommand{\eps}{\varepsilon}
\newcommand{\freq}{\operatorname{freq}}
\newcommand{\Hmin}{H_\mathrm{min}}
\newcommand{\Hmax}{H_\mathrm{max}}
\newcommand{\id}{\mathbb{I}} 
\newcommand{\idnorm}{\mathbb{U}} 
\newcommand{\idmap}{\operatorname{id}} 

\newcommand{\norm}[1]{\left\lVert#1\right\rVert} 
\newcommand{\pd}{P} 
\newcommand{\pr}[2][]{\Pr_{#1}\!\left[#2\right]}
\newcommand{\Pos}{\operatorname{Pos}} 
\newcommand{\smf}[1]{\vartheta_{#1}} 
\newcommand{\suchthat}{\text{ s.t.}} 
\newcommand{\supp}{\operatorname{supp}} 
\newcommand{\term}[1]{\textup{\textbf{#1}}}

\newcommand{\Renyi}{R\'{e}nyi}
\newcommand{\mbf}[1]{\mathbf{#1}} 
\newcommand{\bsym}[1]{\boldsymbol{#1}} 

\newcommand{\EATchann}{\mathcal{M}}
\newcommand{\CS}{\overline{C}} 
\newcommand{\copyCS}{\overline{\overline{C}}}
\newcommand{\cS}{\bar{c}} 
\newcommand{\alphCS}{\overline{\mathcal{C}}}
\newcommand{\CP}{\widehat{C}} 
\newcommand{\copyCP}{{\widehat{ \smash{\widehat{C}} \vphantom{\rule{0pt}{9.5pt}} }}}
\newcommand{\cP}{\hat{c}} 
\newcommand{\alphCP}{\widehat{\mathcal{C}}}

\newcommand{\pf}{\operatorname{\mathtt{Pur}}} 

\newcommand{\inQ}{\widetilde{Q}} 
\newcommand{\outQ}{Q} 
\newcommand{\qA}{\widetilde{A}}
\newcommand{\qB}{\widetilde{B}}
\newcommand{\qC}{\widetilde{C}}


\newtheorem{remark}{Remark}[section]
\newtheorem{theorem}{Theorem}[section]
\newtheorem{lemma}{Lemma}[section]
\newtheorem{corollary}{Corollary}[section]

\newtheorem{fact}{Fact}[section]
\theoremstyle{definition} 
\newtheorem{definition}{Definition}[section]

\newfloat{process}{htbp}{loa}
\floatname{process}{Process}

\begin{document}

\title{\textbf{Generalized Rényi entropy accumulation theorem and generalized quantum probability estimation}}
\renewcommand\Affilfont{\itshape\small} 

\author[1]{Amir Arqand}
\author[2]{Thomas A.\ Hahn}
\author[1]{Ernest Y.-Z.\ Tan}
\affil[1]{Institute for Quantum Computing and Department
of Physics and Astronomy, University of Waterloo, Waterloo, Ontario N2L 3G1, Canada.}
\affil[2]{The Center for Quantum Science and Technology, Department of Complex Systems, Weizmann Institute of Science, Rehovot, Israel}

\date{}

\maketitle

\begin{abstract}
The entropy accumulation theorem, and its subsequent generalized version, is a powerful tool in the security analysis of many device-dependent and device-independent cryptography protocols. However, it has the drawback that the finite-size bounds it yields are not necessarily optimal, and furthermore it relies on the construction of an affine min-tradeoff function, which can often be challenging to construct optimally in practice. In this work, we address both of these challenges simultaneously by deriving a new entropy accumulation bound. Our bound yields significantly better finite-size performance, and can be computed as an intuitively interpretable convex optimization, without any specification of affine min-tradeoff functions. Furthermore, it can be applied directly at the level of Rényi entropies if desired, yielding fully-Rényi security proofs. Our proof techniques are based on elaborating on a connection between entropy accumulation and the frameworks of quantum probability estimation or $f$-weighted Rényi entropies, and in the process we obtain some new results with respect to those frameworks as well. In particular, those findings imply that our bounds apply to prepare-and-measure protocols without the virtual tomography procedures or repetition-rate restrictions previously required for entropy accumulation.
\end{abstract}

\section{Introduction}

Many protocols in quantum cryptography consist of performing $n$ rounds of some operations in order to generate a long string of raw data, which is then processed into the final secret key produced by the protocol. In security proofs of such protocols, it is often extremely useful to relate some operational ``one-shot'' quantity of the raw data string, such as smooth min-entropy, to simpler quantities that can be computed by just analyzing single rounds of the protocol. A powerful tool developed for this purpose is the \term{entropy accumulation theorem} (EAT)~\cite{DFR20,DF19}, which gives a relation that can be informally described as follows. 
Suppose the state $\rho$ in the protocol can be produced by $n$ channels acting ``sequentially'' in some sense, and denote the resulting ``secret'' raw data as $S_1^n$ and some final side-information register as $E_n$.
Then letting $\Omega$ denote the event that the protocol accepts (based on some ``test data'' computed over the $n$ rounds), the EAT states that as long as $\rho$ satisfies some Markov conditions, its smooth min-entropy conditioned on $\Omega$ satisfies a bound with the following form:
\begin{align}\label{eq:EATsketch}
\text{(informal summary)}\quad \Hmin^\eps(S_1^n | E_n)_{\rho_{|\Omega}} \geq n h_\mathrm{vN} - O(\sqrt{n}),
\end{align}
where loosely speaking, $h_\mathrm{vN}$ is a value slightly smaller than the minimum von Neumann entropy over all single-round states that are ``compatible with'' the accept condition. This suffices to yield a  security proof, since the $O(\sqrt{n})$ ``finite-size correction'' becomes negligible in comparison to the leading-order $n h_\mathrm{vN}$ term at large $n$, so $(1/n)\Hmin^\eps(S_1^n | E_n)_{\rho_{|\Omega}}$ converges asymptotically to $h_\mathrm{vN}$. Subsequently, a \term{generalized entropy accumulation theorem} (GEAT) was developed~\cite{MFSR22,MFSR24}, which relaxed the Markov conditions to a less restrictive no-signalling condition, allowing a broader range of applications.

However, the existing EAT and GEAT results have some limitations. In particular, computing explicit values in their bounds requires specifying a technical object known as an affine \term{min-tradeoff function}, which affects both the $h_\mathrm{vN}$ term and the finite-size correction. The task of choosing and computing an affine min-tradeoff function that yields ``good'' finite-size bounds can be a very involved procedure in practice, as noted in e.g.~\cite{arx_GLT+22}. Furthermore, there is the question of how tight the finite-size correction terms are, since they are currently outperformed by approaches such as entropic uncertainty relations (EURs) for smooth entropies~\cite{TL17,LXP+21}. 

In this work, we address both of these challenges simultaneously. In particular, we present a bound that has no dependence on min-tradeoff functions at all, hence completely removing the issue of optimizing the choice of this function. Furthermore, the resulting finite-size bounds we obtain appear to be extremely tight. Our approach for achieving this is to prove a bound somewhat similar to the above; specifically, a bound with the form
\begin{align}\label{eq:ourbndsketch}
\text{(informal summary)}\quad H^\uparrow_\alpha(S_1^n | E_n)_{\rho_{|\Omega}} \geq n h_{\widehat{\alpha}} - \frac{\alpha}{\alpha-1} \log\frac{1}{\pr{\Omega}} ,
\end{align}
where $H^\uparrow_\alpha$ is a (sandwiched) {\Renyi} entropy with $\alpha>1$, and $h_{\widehat{\alpha}}$ is a {\Renyi} version of $h_\mathrm{vN}$, i.e.~loosely speaking, a value somewhat smaller than the minimum {\Renyi} entropy over all single-round states that are ``compatible with'' the accept condition.\footnote{While intermediate steps in the entropy accumulation proofs of~\cite{DFR20,DF19,MFSR24} do involve bounds of a roughly similar form, the critical difference is that in those bounds, the value on the right-hand-side is a minimization over \emph{all} possible single-round states, which results in trivial bounds if directly applied in a protocol.} (We also show that any ``reasonable'' protocol yields a strictly positive $h_{\widehat{\alpha}}$ value for classical $S_1^n$, as one would expect.) Importantly, $h_{\widehat{\alpha}}$ can be computed as a convex optimization that does not involve any specification of an affine min-tradeoff function, removing the issue of choosing such a function. Our approach is based on techniques developed in~\cite{arx_GLT+22}, but we find that by working directly with {\Renyi} entropies, we obtain simpler final results.

This bound on the overall {\Renyi} entropy has the appealing property of being a simple linear expression with only an $O(1)$ finite-size correction instead of $O(\sqrt{n})$.\footnote{In fact the $\frac{\alpha}{\alpha-1} \log\frac{1}{\pr{\Omega}}$ term does not affect the final keyrates in sufficiently ``simple'' protocols, as discussed in~\cite{Dup23,arx_KAG+24}; however, other $O(1)$ corrections arise when considering e.g.~privacy amplification theorems.} This is similar to bounds that were derived in~\cite{PM13,JMS20,arx_Vid17,JK25} for min-entropy (the $\alpha\to\infty$ limit of $H^\uparrow_\alpha$) or collision entropy, but our result holds for \emph{any} $\alpha>1$ (under the original EAT conditions; for the GEAT conditions it is restricted to $\alpha\in(1,2)$, 
though this still suffices for most applications).\footnote{A caveat here is that if the security proof uses a constant {\Renyi} parameter independent of $n$, then the $h_{\widehat{\alpha}}$ term in our result is also independent of $n$ --- this means that even at large $n$, it does not converge to exactly the minimum {\Renyi} entropy over single-round states ``compatible with'' the accept condition. This likely reflects the general principle that {\Renyi} entropies have ``worse'' chain rules as compared to von Neumann entropy. However, for the purposes of security proofs we can overcome this issue in various ways by tuning the {\Renyi} parameters as a function of $n$; we discuss this further in Sec.~\ref{subsec:DIRE}.} In particular, similar to a recent work~\cite{inprep_weightentropy} (and an earlier work~\cite{ZFK20}), this achieves the goal put forward in~\cite{Dup23} of finding a ``fully {\Renyi}'' approach for security proofs of many protocols, by combining our bound with the {\Renyi} privacy amplification theorem in that work.
Our proof techniques also yield \emph{upper} bounds on smooth max-entropy or $\alpha<1$ {\Renyi} entropies, if desired --- such bounds can be useful for quantifying one-shot distillable entanglement~\cite{AB19}.

While the entropy accumulation model is suitable for analyzing device-independent (DI) protocols and device-dependent entanglement-based (EB) protocols, it currently has limitations in device-dependent prepare-and-measure (PM) protocols, where one either has to introduce ``virtual tomography'' rounds~\cite{BGW+24} or impose a ``single-signal interaction'' condition that limits the repetition rate in the protocol~\cite{MR23}. 
Recently, in~\cite{inprep_weightentropy} a framework of \term{$f$-weighted {\Renyi} entropies} was introduced together with an analysis of a suitable model for such protocols, which achieves the critical goal of overcoming this issue while simultaneously providing much tighter finite-size bounds. 
They also show that $f$-weighted {\Renyi} entropies yield a very natural proof framework for protocols producing variable-length keys.
Our results are closely connected to theirs, and we obtain very similar bounds, though since we mostly work in the EAT or GEAT models, our results are more suitable for DI or EB protocols than PM protocols. 

Still, we contribute to slightly extending the~\cite{inprep_weightentropy} results for PM protocols by providing some simplifications to their bounds for protocols producing fixed-length keys; in particular, we derive bounds of the form~\eqref{eq:EATsketch}--\eqref{eq:ourbndsketch} under the model they consider as well.
Our techniques also provide an alternative approach for choosing \term{tradeoff functions} in their framework (the methods developed in~\cite{ZFK20} might also be relevant here), which we elaborate on in Sec.~\ref{subsec:fweighted}.
We thank the authors of that work for early presentations and discussions of their results to allow these comparisons.


More fundamentally, the proof techniques we used in this work expand on a connection between entropy accumulation and the concept of \term{quantum probability estimation} developed in~\cite{ZFK20}, based on \term{quantum estimation factors} (QEFs). 
The latter was generally found to yield better finite-size keyrates than the EAT in contexts such as DI randomness expansion (DIRE), though it has not yet been explicitly applied to DIQKD. 
We highlight that~\cite{ZFK20} did describe some connections between the EAT and their approach, including a conversion from min-tradeoff functions in the EAT to QEFs in their framework; however, the bounds that resulted from the latter conversion were quite suboptimal, due to a sequence of conversions between von Neumann entropy and {\Renyi} entropy.
Our work shows that these conversions can be avoided by modifying some steps in the entropy accumulation proofs; in particular, under the Markov conditions of the original EAT, we can \emph{exactly} reproduce the bounds that were obtained in the QEF framework. However, since we work in the general framework of entropy accumulation, this allowed us to obtain similar bounds under the conditions of the GEAT as well, which are less restrictive. In this sense, our results serve to slightly generalize the QEF framework as well.

For ease of understanding, we now present an overview of the key results of our work, describing how they can be applied in security proofs.

\subsection{Summary of key results for applications}

While the main contribution of our work (based on the EAT or GEAT models) is best suited for DI protocols, for this overview we focus on presenting the results we obtained in the context of device-dependent PM-QKD --- the bounds we obtain in all of these cases have a very similar structure and can hence be discussed in similar ways, but the channel structure for PM-QKD is easier to describe. 
To begin, we re-state a standard result regarding how the state generated in a PM protocol can be reformulated as one generated by an EB protocol, which is the model used in~\cite{inprep_weightentropy} for analyzing such protocols. Specifically, by applying the source-replacement technique~\cite{BBM92,FL12}, one writes the state (before error correction and privacy amplification~\cite{rennerthesis,TL17}) in the form \begin{align}\label{eq:PMstate}
\rho_{S_1^n \CP_1^n T_1^n \widehat{E}} = \EATchann^{\otimes n}\!\left[\omega^0_{\qA_1^n \qB_1^n \widehat{E}}\right],
\end{align}
where $\omega^0_{\qA_1^n \qB_1^n \widehat{E}}$ is the pre-measurement quantum state between Alice, Bob and Eve in the EB version of the protocol, and $\EATchann:\qA\qB \to S \CP T$ is simply a channel representing Alice and Bob's single-round operations to produce the following registers: $S$ is the ``secret'' data (after applying any required processing, such as sifting), and $\CP T$ are public announcements, in which $\CP$ denotes ``test data'' that Alice and Bob later use to decide whether to abort. (See Sec.~\ref{subsec:BB84} for an even more detailed description of these registers using the example of the BB84 protocol.) Furthermore, the source-replacement technique also has the property that Alice's registers $\qA_1^n$ are inaccessible to Eve even in the EB version, and hence the state $\omega^0_{\qA_1^n \qB_1^n \widehat{E}}$ satisfies $\omega^0_{\qA_1^n} = \sigma_{\qA}^{\otimes n}$ for some known trusted state $\sigma$ --- note that this holds even though $\omega^0_{\qA_1^n \qB_1^n \widehat{E}}$ is the state {after} Eve has performed some \emph{arbitrary} coherent attack. 

Let us now focus on fixed-length protocols, i.e.~protocols that always output a key of some specific \emph{fixed} length whenever they accept. Furthermore, suppose that one step in deciding whether to accept consists of an ``acceptance test'' (sometimes also called ``parameter estimation'') in which Alice and Bob compute the observed frequency distribution on the publicly announced registers $\CP_1^n$, and accept if and only if it lies within some suitably chosen set $S_\Omega$.

With this in mind, our main result for such protocols is the following, obtained by building on the work in~\cite{inprep_weightentropy} (which established a critical result regarding PM protocols; we present the details in Fact~\ref{fact:fweighted}). This is a slightly restricted version of Theorem~\ref{th:fweighted} that we present in full in Sec.~\ref{subsec:fweighted}. (See Definition~\ref{def:sandwiched entropy} for the formal mathematical definition of the {\Renyi} entropy $H^\uparrow_\alpha$; this technical aspect will not be critical to our discussion here. Also, $D\left(\mbf{q} \middle\Vert \mbf{p}\right)$ simply denotes the \term{Kullback–Leibler divergence}, which is a commonly used form of ``distance'' between any two probability distributions $\mbf{q}$ and $\mbf{p}$; see e.g.~\cite{NC10,BV04v8} or Definition~\ref{def:sandwiched divergence} for details.)
\begin{corollary}\label{cor:fweightedsimple}
Let $\rho_{S_1^n \CP_1^n T_1^n \widehat{E}}$ be a state of the form in Eq.~\eqref{eq:PMstate}, for an initial state $\omega^0_{\qA_1^n \qB_1^n \widehat{E}}$ satisfying $\omega^0_{\qA_1^n} = \sigma_{\qA}^{\otimes n}$ for some state $\sigma_{\qA}$, and with all the $\CP_j$ registers being classical. 
Let $\Omega$ denote the event that the frequency distribution on the classical registers $\CP_1^n$ lies within some convex set $S_\Omega$ of probability distributions, and let $\rho_{|\Omega}$ denote the state conditioned on $\Omega$. 
Then the following bound holds for any $\alpha\in(1,\infty]$:
\begin{align}\label{eq:fweightedREATsimple}
\begin{gathered}
H^\uparrow_\alpha(S_1^n | \CP_1^n T_1^n  \widehat{E})_{\rho_{|\Omega}} \geq  n h^\uparrow_{\alpha}
- \frac{\alpha}{\alpha-1} \log\frac{1}{\Pr[\Omega]}, \\
\begin{aligned}
\text{where}\quad h^\uparrow_{\alpha} &= 
\inf_{\mbf{q} \in S_\Omega} \inf_{\nu\in\Sigma} \left( \frac{\alpha}{{\alpha}-1}D\left(\mbf{q} \middle\Vert \bsym{\nu}_{\CP}\right)+\sum_{\cP}q(\cP)H^\uparrow_{{\alpha}}(S|T\widetilde{E})_{\nu_{|\cP}}  \right) 
,
\end{aligned}
\end{gathered}
\end{align}
where $\bsym{\nu}_{\CP}$ denotes the probability distribution on the classical register $\CP$ induced by the state ${\nu}_{\CP}$, and $\Sigma$ denotes the set of all states of the form $\EATchann\left[\omega_{\qA \qB \widetilde{E}}\right]$ for some initial state $\omega_{\qA \qB \widetilde{E}}$ satisfying $\omega_{\qA} = \sigma_{\qA}$, with $\widetilde{E}$ being any register of large enough dimension to purify $\qA \qB$.
\end{corollary}

To apply this result in a protocol, we can basically take $\Omega$ as the event that the protocol accepts during the acceptance test.
With this, the above theorem lets us bound the {\Renyi} entropy $H^\uparrow_\alpha(S_1^n | \CP_1^n T_1^n  \widehat{E})_{\rho_{|\Omega}}$ of the overall state, in terms of a quantity $h^\uparrow_{\alpha}$ that can be computed by only analyzing single rounds. Critically, this {\Renyi} entropy of the overall state can be easily used to find the finite-size secret key length that can be obtained from the protocol, according to the {\Renyi} privacy amplification theorem in~\cite{Dup23}. (We give examples of such computations in Sec.~\ref{subsec:BB84}--\ref{subsec:DIRE}, together with more elaboration for conditioning on other possible protocol steps such as ``error verification''.) 

Furthermore, the single-round quantity $h^\uparrow_{\alpha}$ can be given an intuitive interpretation. Specifically, let us compare it against a very ``simplistic'' scenario where the state across the $n$ rounds is independent and identically distributed (IID). In that case, the overall entropy $H^\uparrow_\alpha(S_1^n | \CP_1^n T_1^n  \widehat{E})$ (ignoring the conditioning on $\Omega$ for simplicity) would just be $n$ times the entropy $H^\uparrow_{{\alpha}}(S|\CP T\widetilde{E})_\nu$ of a state $\nu$ in a single round. Furthermore, note that as long as the single-round probability distribution $\bsym{\nu}_{\CP}$ lies in the acceptance set $S_\Omega$, such an IID state would be accepted in the protocol with high probability asymptotically. Given these facts, we cannot expect to get a bound on the overall entropy that is much better than $n$ times of the value
\begin{align}\label{eq:exactRenyioptPM}
\begin{gathered} 
\inf_{\nu\in\Sigma} H^\uparrow_{{\alpha}}(S|\CP T\widetilde{E})_\nu \\
\suchthat \quad \bsym{\nu}_{\CP} \in S_\Omega,
\end{gathered}
\end{align}
i.e.~the smallest single-round entropy over a set of states that would be accepted with high probability asymptotically.

Now observe that the optimization~\eqref{eq:exactRenyioptPM} for that ``simplistic'' scenario only differs from our actual formula~\eqref{eq:fweightedREATsimple} for $h^\uparrow_{\alpha}$ in two ways. First, rather than the ``hard'' constraint $\bsym{\nu}_{\CP} \in S_\Omega$ in~\eqref{eq:exactRenyioptPM}, in the $h^\uparrow_{\alpha}$ optimization~\eqref{eq:fweightedREATsimple} we instead basically have a ``soft'' version of that constraint, by allowing $\bsym{\nu}_{\CP}$ to move outside of $S_\Omega$ but at the cost of a ``penalty'' as quantified by $D\left(\mbf{q} \middle\Vert \bsym{\nu}_{\CP}\right)$ (with $\mbf{q} \in S_\Omega$). Second, we have replaced $H^\uparrow_{{\alpha}}(S|\CP T\widetilde{E})_\nu$ with a particular linear combination $\sum_{\cP}q(\cP)H^\uparrow_{{\alpha}}(S|T\widetilde{E})_{\nu_{|\cP}}$, but this change is less important as these quantities are still essentially similar (the latter is just a ``reweighted'' average of the entropies conditioned on the $\cP$ values).

Hence $h^\uparrow_{\alpha}$ can be intuitively understood as a relaxed version of the simplistic optimization~\eqref{eq:exactRenyioptPM}, so $h^\uparrow_{\alpha}$ should have a somewhat lower value than the latter. Critically, however, Corollary~\ref{cor:fweightedsimple} tells us that this relaxation already \emph{rigorously} accounts for all finite-size and non-IID effects in the protocol. Therefore, once the set $S_\Omega$ in the accept condition is specified, the QKD security proof basically only needs to focus on evaluating the specific relaxed optimization in $h^\uparrow_{\alpha}$, without needing to further consider any statistical analysis or concentration inequalities as in many previous security proof techniques (or min-tradeoff functions, for entropy accumulation). This gives a fairly intuitive method to compute finite-size keyrates; readers interested in applying this method can now skip directly to Sec.~\ref{subsec:intuition} for more details regarding practical implementation. Furthermore, it appears to also generally yield better keyrates in practice, as we demonstrate with explicit examples in Sec.~\ref{subsec:BB84}.

Also, if there are situations where it is not easy to compute the single-round {\Renyi} entropies, we note that they can be easily lower bounded in terms of the von Neumann entropy that has been extensively studied in previous QKD security proofs; see Eq.~\eqref{eq:tovN} later for the formulas. Similarly, the overall {\Renyi} entropy $H^\uparrow_\alpha(S_1^n | \CP_1^n T_1^n  \widehat{E})_{\rho_{|\Omega}}$ can be used to bound the overall smooth min-entropy if desired; see Eq.~\eqref{eq:toHmineps}. Hence our results can be easily reformulated to reproduce bounds similar to~\eqref{eq:EATsketch}, while still entirely avoiding min-tradeoff functions. Note that this will introduce some suboptimalities compared to using Corollary~\ref{cor:fweightedsimple} in a ``fully {\Renyi}'' analysis. However, we find that the resulting bounds still appear to be significantly tighter in practice compared to previous entropy accumulation results --- in fact, we still often obtain results roughly competitive with the smooth-entropy EUR approach in~\cite{TL17,LXP+21}; see Sec.~\ref{subsec:BB84}. \\

\paragraph{MDI and decoy-state protocols:}
We highlight that this model also suffices for measurement-device-independent (MDI) protocols~\cite{LCQ12}, where Alice and Bob prepare states and send them to an \emph{untrusted} third party Charlie, who measures the states and announces some values. Despite the fact that Charlie is untrusted, one can still apply the above model by observing that this untrusted measurement can be \emph{equivalently} described by an adversary performing some untrusted quantum operation followed by Charlie performing a completely trusted tensor-product measurement across some registers $\qC_1^n$; see e.g.~\cite{LL18}. With this model, the state produced in the protocol would similarly have the form
$\rho_{S_1^n \CP_1^n T_1^n \widehat{E}} = \EATchann^{\otimes n}\!\left[\omega^0_{\qA_1^n \qB_1^n \qC_1^n \widehat{E}}\right]$,
for a channel $\EATchann:\qA\qB\qC \to S \CP T$ that incorporates Charlie's (trusted) measurement in that model. Hence this framework also suffices to analyze MDI protocols, in nearly identical fashion to PM protocols. Furthermore, we note that the computation of $h^\uparrow_{\alpha}$ for decoy-state protocols can be addressed using techniques developed in~\cite[Sec.~7]{arx_KAG+24}, which evaluated optimizations of a very similar form (just with a different objective function). However, as the implementation details of applying this method to MDI and decoy-state protocols are fairly lengthy, we perform these explicit computations in separate work.\\

\paragraph{DI protocols:}
Turning to DI protocols, one cannot only consider states of the form in Eq.~\eqref{eq:PMstate}, since for instance the measurements may not be in tensor product across rounds. However, the original EAT~\cite{DFR20,ARV19} and GEAT~\cite{MFSR24} studied more sophisticated channel structures that are suitable for analyzing a variety of protocols in the DI setting (for instance, QKD, randomness amplification, and randomness expansion, including ``blind'' protocols that prove security even against an untrusted Bob~\cite{MFSR24}). In this work, we show that the EAT and GEAT channel structures also suffice to yield bounds of a very similar form to those in Corollary~\ref{cor:fweightedsimple}; we present this in detail with Theorem~\ref{th:GREAT} and some simplifications in Lemmas~\ref{lemma:GREATonlyH}--\ref{lemma:GREAT3Renyi} (these results were obtained independently from~\cite{inprep_weightentropy}).  Hence these simpler and tighter bounds can be immediately applied to improve the key rates in all circumstances where the EAT or GEAT were previously used, including the aforementioned variety of DI protocols.\\

To summarize, readers interested in applying our results to fixed-length protocols may skip directly to one of the following key points, which are mostly self-contained apart from some references to background definitions in Sec.~\ref{sec:GEATchann}:
\begin{itemize}
\item For DI protocols, refer to Sec.~\ref{subsec:simpleresults}, in which the ``fully {\Renyi}'' bound we described above is presented as Theorem~\ref{th:GREAT}, with simpler versions in Lemmas~\ref{lemma:GREATonlyH}--\ref{lemma:GREAT3Renyi}. Then in Sec.~\ref{subsec:intuition}, we give many further details helpful for applications.
\item For device-dependent protocols, refer instead to Theorem~\ref{th:fweighted} (which is an extension of the core result of~\cite{inprep_weightentropy}), along with the qualitative discussions in Sec.~\ref{subsec:intuition}. 
\item For explicit bounds of the form~\eqref{eq:EATsketch}, relating smooth min-entropy to von Neumann entropy, refer to the bounds~\eqref{eq:Hmineps_3Renyi} and~\eqref{eq:Hmineps_onlyH} for the DI case, and~\eqref{eq:Hmineps_fweighted} for the device-dependent case. 
\end{itemize}

For more elaborate results that we do not attempt to summarize in this section: note that under the GEAT or EAT model, we can also accommodate a form of ``time-varying'' behaviour in fixed-length protocols (similar to~\cite{ZFK20}); refer to Corollary~\ref{cor:QEScond} and the discussion below it.
Also, as mentioned above,~\cite{inprep_weightentropy} developed a powerful proof framework for variable-length protocols; 
we found that some steps in our analysis are sufficiently similar to theirs that we could extract analogous results, under the GEAT or EAT model rather than their model. We describe the background concepts for this in Sec.~\ref{sec:QES}, with some key parts being Theorem~\ref{th:QES} and Sec.~\ref{subsec:QESapp}.

\subsection{Paper structure}

This paper is structured as follows. 
\begin{itemize}
\item In Sec.~\ref{sec:notation}--\ref{sec:GEATchann}, we lay out various preliminary notations and concepts. 

\item In Sec.~\ref{sec:QES}, we introduce the concept of a ``quantum estimation score-system'' (QES) inspired by QEFs and $f$-weighted {\Renyi} entropies, and present our analogue of the QEF bounds under the conditions of the GEAT. 

\item In Sec.~\ref{sec:simplify}, we further simplify these results for applications in fixed-length protocols, including the detailed versions of the bounds~\eqref{eq:EATsketch}--\eqref{eq:ourbndsketch} sketched above. We also discuss many points relevant for practical applications. 

\item In Sec.~\ref{sec:variants} we give slightly better bounds under the original EAT Markov conditions or the~\cite{inprep_weightentropy} model, where the latter also has the advantage of being more suited for analyzing device-dependent PM protocols. 

\item In Sec.~\ref{sec:numerics} we analyze the tightness of our bounds, including keyrate computations for some example protocols. 

\item In Sec.~\ref{sec:conclusion}, we conclude by discussing prospects for future work. 
\end{itemize}
Various technical details are deferred to the appendices; however, we highlight Appendix~\ref{app:Hmaxversion} in particular, in which we discuss how to modify these results to instead obtain upper bounds on the $\alpha<1$ {\Renyi} entropies, or smooth max-entropy.

\section{Preliminaries}
\label{sec:notation}

\begin{table}[h!]
\caption{List of notation}\label{tab:notation}
\def\arraystretch{1.5} 
\setlength\tabcolsep{.28cm}
\begin{tabular}{c l}
\toprule
\textit{Symbol} & \textit{Definition} \\
\toprule
$\log$ & Base-$2$ logarithm \\
\hline
$H$ & Base-$2$ von Neumann entropy \\
\hline
$\binh$ & Binary entropy function; $\binh(x) \defvar -x \log x - (1-x)\log(1-x)$ \\
\hline
$\mathbb{R}\cup\{-\infty,+\infty\}$ & Extended real line \\
\hline
$\floor{\cdot}$ (resp.~$\ceil{\cdot}$) & Floor (resp.~ceiling) function \\
\hline
$\norm{\cdot}_p$ & Schatten $p$-norm \\
\hline
$\left|\cdot\right|$ & Absolute value of operator; $\left|M\right| \defvar \sqrt{M^\dagger M}$ \\
\hline
$A\perp B$ & $A$ and $B$ are orthogonal; $AB=BA=0$ \\
\hline
$X\geq Y$ (resp.~$X>Y$) & $X-Y$ is positive semidefinite (resp.~positive definite)\\
\hline
$\Pos(A)$ 
& Set of positive semidefinite operators on register $A$\\
\hline
$\dop{=}(A)$ (resp.~$\dop{\leq}(A)$) & Set of normalized (resp.~subnormalized) states on register $A$ \\
\hline
$\idnorm_A$ & Maximally mixed state on register $A$ \\
\hline
$A_j^k$ & Registers $A_j \dots A_k$ \\
\toprule
\end{tabular}
\def\arraystretch{1}
\end{table}

We list some basic notation in Table~\ref{tab:notation}.
Apart from the notation in 
that table,
we will also need to use some other concepts, which we shall define below, and briefly elaborate on in some cases. 
In this work, we will assume that all systems are finite-dimensional, but we will not impose any bounds on the system dimensions unless otherwise specified. 
All entropies are defined in base~$2$.
For a channel $\mathcal{E}$ from register $Q$ to register $Q'$, we will often write the abbreviated notation
\begin{align}
\mathcal{E}: Q \to Q',
\end{align}
rather than writing out the formal statement of it being a CPTP linear map $\mathcal{E}: \operatorname{End}(\mathcal{H}_Q) \to \operatorname{End}(\mathcal{H}_{Q'})$ (where $\operatorname{End}(\mathcal{H}_Q)$ is the set of linear operators on the Hilbert space $\mathcal{H}_{Q}$).
Also, throughout this work we will often leave tensor products with identity channels implicit; e.g.~given a channel $\mathcal{E}:Q\to Q'$, we often use the compact notation
\begin{align}
\mathcal{E}[\rho_{QR}] \defvar (\mathcal{E} \otimes \idmap_R)[\rho_{QR}].
\end{align}

\begin{definition}\label{def:freq}
(Frequency distributions) For a string $z_1^n\in\mathcal{Z}^n$ on some alphabet $\mathcal{Z}$, $\freq_{z_1^n}$ denotes the following probability distribution on $\mathcal{Z}$:
\begin{align}
\freq_{z_1^n}(z) \defvar \frac{\text{number of occurrences of $z$ in $z_1^n$}}{n} .
\end{align}
\end{definition}

\begin{definition}
A state $\rho \in \dop{\leq}(CQ)$ is said to be \term{classical on $C$} (with respect to a specified basis on $C$) if it is in the form 
\begin{align}
\rho_{CQ} = \sum_c \lambda_c \pure{c} \otimes \sigma_c,
\label{eq:cq}
\end{align}
for some normalized states $\sigma_c \in \dop{=}(Q) $ and weights $\lambda_c \geq 0$, with $\ket{c}$ being the specified basis states on $C$. In most circumstances, we will not explicitly specify this ``classical basis'' of $C$, leaving it to be implicitly defined by context.
It may be convenient to absorb the weights $\lambda_c$ into the states $\sigma_c$, writing them as subnormalized states $\omega_c = \lambda_c\sigma_c \in \dop{\leq}(Q)$ instead. 
\end{definition}

\begin{definition}\label{def:cond}
(Conditioning on classical events) For a state $\rho \in \dop{\leq}(CQ)$ classical on $C$, written in the form
$\rho_{CQ} = \sum_c \pure{c} \otimes \omega_c$ 
for some $\omega_c \in \dop{\leq}(Q)$,
and an event $\Omega$ defined on the register $C$, we will define a corresponding \term{partial state} and \term{conditional state} as, respectively,
\begin{align}
\rho_{\land\Omega} \defvar \sum_{c\in\Omega} \pure{c} \otimes \omega_c, \qquad\qquad \rho_{|\Omega} \defvar \frac{\tr{\rho}}{\tr{\rho_{\land\Omega}}} \rho_{\land\Omega} = \frac{
\sum_{c} \tr{\omega_c}
}{\sum_{c\in\Omega} \tr{\omega_c}} \rho_{\land\Omega} .
\end{align}
The process of taking partial states is commutative and ``associative'', in the sense that for any events $\Omega,\Omega'$ we have $(\rho_{\land\Omega})_{\land\Omega'} = (\rho_{\land\Omega'})_{\land\Omega} = \rho_{\land(\Omega\land\Omega')}$; hence for brevity we will denote all of these expressions as
\begin{align}
\rho_{\land\Omega\land\Omega'} \defvar (\rho_{\land\Omega})_{\land\Omega'} = (\rho_{\land\Omega'})_{\land\Omega} = \rho_{\land(\Omega\land\Omega')}.
\end{align}
On the other hand, some disambiguating parentheses are needed when combined with taking conditional states (due to the normalization factors).
\end{definition}

In light of the preceding two definitions, for a normalized state $\rho \in \dop{=}(CQ)$ that is classical on $C$, it is reasonable to write it in the form
\begin{align}
\rho_{CQ} = \sum_c \rho(c) \pure{c} \otimes \rho_{Q|c},
\label{eq:cqstateprobs}
\end{align}
where $\rho(c)$ denotes the probability of $C=c$ according to $\rho$, and $\rho_{Q|c}$ can indeed be interpreted as the conditional state on $Q$ corresponding to $C=c$, i.e.~$\rho_{Q|c} = \tr[C]{\rho_{|\Omega}}$ where $\Omega$ is the event $C=c$. 
We may sometimes denote the distribution on $C$ induced by $\rho$ as the tuple
\begin{align}\label{eq:stateprobvec}
\bsym{\rho}_C \defvar \left(\rho(1),\rho(2),\dots\right).
\end{align}

\begin{definition}
(Measure-and-prepare or read-and-prepare channels)
A (projective) \term{measure-and-prepare channel} is a channel $\mathcal{E}: Q \to QQ'$ of the form
\begin{align}
\mathcal{E}[\rho_Q] = \sum_j (P_j \rho_Q P_j) \otimes \sigma_{Q'|j},
\end{align}
for some projective measurement $\{P_j\}$ on $Q$ and some normalized states $\sigma_{Q'|j}$.
If $Q$ is classical and the measurement is a projective measurement in its classical basis, we shall refer to it as a \term{read-and-prepare channel}.
Note that a read-and-prepare channel always simply extends the state ``without disturbing it'', i.e. tracing out $Q'$ results in the original state again.
\end{definition}

The following definitions of {\Renyi} divergences and entropies are reproduced from~\cite{Tom16}, and coincide with those in~\cite{DFR20,DF19,MFSR24} for normalized states. 

\begin{definition}\label{def:sandwiched divergence}
({\Renyi} divergence)
For any $\rho,\sigma\in\Pos(A)$ with $\tr{\rho}\neq0$, and $\alpha\in(0,1)\cup (1,\infty)$, the (sandwiched) \Renyi\ divergence between $\rho$, $\sigma$ is defined as:
\begin{align}
    \label{eq:sand_renyi_div}
    D_\alpha(\rho\Vert\sigma)=\begin{cases}
    \frac{1}{\alpha-1}\log\frac{\tr{ \left(\sigma^{\frac{1-\alpha}{2\alpha}}\rho\sigma^{\frac{1-\alpha}{2\alpha}}\right)^\alpha}}{\tr{\rho}} &\left(\alpha < 1\ \wedge\ \rho\not\perp\sigma\right)\vee \left(\supp(\rho)\subseteq\supp(\sigma)\right) \\ 
    +\infty & \text{otherwise},
    \end{cases}  
\end{align}
where for $\alpha>1$ the $\sigma^{\frac{1-\alpha}{2\alpha}}$ terms are defined via the Moore-Penrose pseudoinverse if $\sigma$ is not full-support~\cite{Tom16}.
The above definition is extended to $\alpha \in \{0,1,\infty\}$ by taking the respective limits.
For the $\alpha=1$ case, it reduces to the Umegaki divergence:
\begin{align}
    \label{eq:umegaki_div}
    D(\rho\Vert\sigma)=\begin{cases}
        \frac{\tr{\rho\log\rho-\rho\log\sigma}}{\tr{\rho}} & \supp(\rho)\subseteq\supp(\sigma)\\
    +\infty & \text{otherwise}. 
    \end{cases}
\end{align}
For any two classical probability distributions $\mbf{p},\mbf{q}$ on a common alphabet, we also define the {\Renyi} divergence $D_\alpha(\mbf{p}\Vert\mbf{q})$ analogously, e.g.~by viewing the distributions as diagonal density matrices in the above formulas; in the $\alpha=1$ case this gives the Kullback–Leibler (KL) divergence.
\end{definition}

\begin{definition}\label{def:sandwiched entropy}
({\Renyi} entropies)
For any bipartite state $\rho\in
\dop{=}(AB)
$, and $\alpha\in[0,\infty]$, we define the following two (sandwiched) {\Renyi} conditional entropies:
\begin{align}
    \label{eq:cond_renyi}
    &H_\alpha(A|B)_\rho=-D_\alpha(\rho_{AB}\Vert\id_A\otimes\rho_B)\notag\\
    &H_\alpha^\uparrow(A|B)_\rho=\sup_{\sigma_B\in\dop{=}(B)}-D_\alpha(\rho_{AB}\Vert\id_A\otimes\sigma_B).
\end{align}
For $\alpha=1$, both the above values coincide and are equal to the von Neumann entropy.
\end{definition}

While we will not explicitly use the following definitions of min- and max-entropies (for subnormalized states) anywhere in our proofs, we state them here to ensure consistency with previous work such as~\cite{DFR20}.

\begin{definition}
For $\rho\in\dop{\leq}(AB)$ with $\tr{\rho}\neq 0$, the \term{min- and max-entropies of $A$ conditioned on $B$} are
\begin{align}
\Hmin(A|B)_\rho &\defvar 
-\log 
\min_{\substack{\sigma \in \dop{\leq}(B) \suchthat\\ \ker(\rho_B)\subseteq\ker(\sigma_B)}} 
\norm{\rho_{AB}^\frac{1}{2}
(\id_A \otimes \sigma_{B})
^{-\frac{1}{2}}}_\infty^2,
\\ 
\Hmax(A|B)_\rho &\defvar \log 
\max_{\sigma \in \dop{\leq}(B)} 
\norm{\rho_{AB}^\frac{1}{2}
(\id_A \otimes \sigma_{B})
^\frac{1}{2}}_1^2, 
\end{align}
where in the first equation the 
$(\id_A \otimes \sigma_{B})
^{-\frac{1}{2}}$ term
should be understood in terms of the Moore-Penrose generalized inverse.
In both equations, 
the optimum is indeed attained (see 
e.g.~Sec.~6.1.2 and Sec.~6.1.3
of~\cite{Tom16}), and it can be attained by a normalized state, so $\dop{\leq}(B)$ can be replaced by $\dop{=}(B)$ without loss of generality.
When $\tr{\rho}=1$, these definitions 
coincide with the {\Renyi} entropies $H^\uparrow_\infty$ and $H^\uparrow_{1/2}$ 
defined above.

Additionally, for $\eps\in\left[0,\sqrt{\tr{\rho_{AB}}}\right)$,
the \term{$\eps$-smoothed min- and max-entropies of $A$ conditioned on $B$} are
\begin{align}
\Hmin^\eps(A|B)_\rho \defvar
\max_
{\substack{\tilde{\rho} \in \dop{\leq}(AB) \suchthat\\ \pd(\tilde{\rho},\rho)\leq\eps}}
\Hmin(A|B)_{\tilde{\rho}}, 
\qquad
\Hmax^\eps(A|B)_\rho \defvar
\min_
{\substack{\tilde{\rho} \in \dop{\leq}(AB) \suchthat\\ \pd(\tilde{\rho},\rho)\leq\eps}} 
\Hmax(A|B)_{\tilde{\rho}},
\end{align}
where $\pd$ denotes purified distance as defined in~\cite{Tom16}. 
\end{definition}

We now also list some useful properties we will use throughout our work.

\begin{fact} \label{fact:DPI}
(Data-processing~\cite[Theorem~1]{FL13}; see also~\cite{MDS+13,Beigi13,MO14,Tom16}) For any $\alpha\in[1/2,\infty]$, any $\rho,\sigma\in\Pos(Q)$ with $\tr{\rho}\neq0$, and any channel $\mathcal{E}:Q\to Q'$, we have:
\begin{align}
D_\alpha(\rho\Vert\sigma) \geq D_\alpha(\mathcal{E}[\rho]\Vert\mathcal{E}[\sigma]),
\end{align}
and thus also for any $\rho\in\dop{=}(Q''Q)$,
\begin{align}
H_\alpha(Q''|Q)_{\rho} \leq H_\alpha(Q''|Q')_{\mathcal{E}[\rho]}, \quad H^\uparrow_\alpha(Q''|Q)_{\rho} \leq H^\uparrow_\alpha(Q''|Q')_{\mathcal{E}[\rho]}.
\end{align}
If $\mathcal{E}$ is an isometry, all the above bounds hold with equality.
\end{fact}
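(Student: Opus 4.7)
The plan is to reduce data-processing for a general channel to two simpler subclaims via the Stinespring dilation: invariance under isometries, and monotonicity under the partial trace. Given $\mathcal{E}:Q\to Q'$, write $\mathcal{E}[\omega]=\tr[E]{V\omega V^\dagger}$ for an isometry $V:Q\to Q'E$. For the isometry step, the identity $(V\sigma V^\dagger)^{(1-\alpha)/(2\alpha)}=V\sigma^{(1-\alpha)/(2\alpha)}V^\dagger$, valid on the range of $V$ under the Moore--Penrose convention of Definition~\ref{def:sandwiched divergence}, combined with $V^\dagger V=\id_Q$ and cyclicity of the trace, shows that the argument of the logarithm in~\eqref{eq:sand_renyi_div} is preserved by $\rho\mapsto V\rho V^\dagger$, $\sigma\mapsto V\sigma V^\dagger$; the support condition is preserved for analogous reasons. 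Hence $D_\alpha(V\rho V^\dagger\Vert V\sigma V^\dagger)=D_\alpha(\rho\Vert\sigma)$, and the remaining task is DPI under the partial trace $\tr[E]{\cdot}$.

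The partial-trace case is the main obstacle. My preferred route is complex interpolation: rewrite the $\alpha$-sandwiched divergence as an $L_\alpha$-type norm of $T_\sigma(\rho)\defvar\sigma^{(1-\alpha)/(2\alpha)}\rho\sigma^{(1-\alpha)/(2\alpha)}$ in suitably weighted Schatten spaces, and for $\alpha\in(1,\infty)$ apply Stein--Hirschman interpolation between the $\alpha=1$ endpoint, where DPI is Lindblad--Uhlmann monotonicity of Umegaki relative entropy, and the $\alpha=\infty$ endpoint, where the max-divergence is manifestly monotone under CPTP maps. Since the partial trace contracts the relevant weighted norms, DPI follows throughout the open interval, and the limits $\alpha\in\{1,\infty\}$ are recovered by continuity of the defining expression. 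For $\alpha\in[1/2,1)$, the same $L_\alpha$-norm representation combined with operator concavity of $x\mapsto x^\alpha$ on $[0,1]$ (equivalently, Lieb's concavity applied via a Stinespring realisation) yields the inequality, with $\alpha=1/2$ again obtained by continuity. All of this is standard operator-theoretic machinery, but it is the only genuinely nontrivial input to the statement.

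The entropy consequences follow by specialising the divergence DPI. For $H_\alpha(Q''|Q)_\rho\leq H_\alpha(Q''|Q')_{\mathcal{E}[\rho]}$, apply DPI to $D_\alpha(\rho_{Q''Q}\Vert\id_{Q''}\otimes\rho_Q)$ under the channel $\idmap_{Q''}\otimes\mathcal{E}$, which sends the reference $\id_{Q''}\otimes\rho_Q$ to $\id_{Q''}\otimes(\mathcal{E}[\rho])_{Q'}$, exactly the product state defining $H_\alpha$ on the output. For $H^\uparrow_\alpha$, invoke the variational formula: for any $\sigma_Q\in\dop{=}(Q)$, DPI gives
\begin{align*}
D_\alpha(\rho_{Q''Q}\Vert\id_{Q''}\otimes\sigma_Q)\geq D_\alpha((\idmap_{Q''}\otimes\mathcal{E})[\rho]\Vert\id_{Q''}\otimes\mathcal{E}[\sigma_Q])\geq\inf_{\sigma'_{Q'}\in\dop{=}(Q')}D_\alpha((\idmap_{Q''}\otimes\mathcal{E})[\rho]\Vert\id_{Q''}\otimes\sigma'_{Q'}),
\end{align*}
and taking the infimum over $\sigma_Q$ on the left and negating yields the claim. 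When $\mathcal{E}$ is an isometry, the invariance from the first step promotes the $D_\alpha$ inequality directly to an equality, the $H_\alpha$ equality follows by the same substitution, and the $H^\uparrow_\alpha$ equality follows after additionally pinching any output reference state onto the range of the isometry (where $\mathcal{E}[\rho]$ is supported) before inverting.
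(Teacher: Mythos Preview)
The paper does not give its own proof of this statement: it is presented as a Fact with citations to \cite{FL13,MDS+13,Beigi13,MO14,Tom16}, and no argument is supplied. So there is nothing in the paper to compare your proposal against.

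Your sketch is a reasonable outline of one of the standard proofs (closest in spirit to Beigi's interpolation approach in \cite{Beigi13}, though Frank--Lieb \cite{FL13} proceed differently, via a variational characterisation). The Stinespring reduction and isometry-invariance step are fine. The interpolation step is the nontrivial core and your description is somewhat schematic---the actual setup of the weighted non-commutative $L_p$ spaces and the verification that the relevant analytic family satisfies the Stein--Hirschman hypotheses takes real work---but as a high-level plan it is correct. The entropy consequences are derived correctly. For the isometry equality case of $H^\uparrow_\alpha$, your ``pinching'' argument is workable but a cleaner route is to build a reverse CPTP map $\mathcal{R}:Q'\to Q$ with $\mathcal{R}\circ\mathcal{E}=\idmap_Q$ (e.g.\ $\mathcal{R}[\omega]=V^\dagger\omega V+\tr{(\id-VV^\dagger)\omega}\,\tau$ for any fixed $\tau$) and apply DPI in the reverse direction; this avoids having to reason about supports of the reference state.
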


\begin{fact} \label{fact:classmix}
(Conditioning on classical registers; see~\cite[Sec.~III.B.2 and Proposition~9]{MDS+13} or~\cite[Eq.~(5.32) and Proposition~5.1]{Tom16}) Let $\rho,\sigma \in \dop{=}(C Q)$ be states classical on $C$. Then 
\begin{align}
\label{eq:classmixD}
\forall \alpha\in(0,1)\cup (1,\infty), \qquad D_\alpha(\rho\Vert\sigma)=\frac{1}{\alpha-1}\log\left(\sum_{c}\rho(c)^\alpha\sigma(c)^{1-\alpha}2^{(\alpha-1)D_\alpha(\rho_{Q|c}\Vert\sigma_{Q|c})}\right),
\end{align}
and hence for a state $\rho \in \dop{=}(C Q Q')$ classical on $C$,
\begin{align}
\label{eq:classmixHdown}
\forall \alpha\in(0,1)\cup (1,\infty), \qquad &H_\alpha(Q|CQ')=\frac{1}{1-\alpha}\log\left(\sum_{c}\rho(c)2^{(1-\alpha)H_\alpha(Q|Q')_{\rho|c}}\right),\\
\label{eq:classmixHup}
\forall \alpha\in[1/2,1)\cup (1,\infty), \qquad &H^\uparrow_\alpha(Q|CQ')=\frac{\alpha}{1-\alpha}\log\left(\sum_{c}\rho(c)2^{\frac{1-\alpha}{\alpha}H^\uparrow_\alpha(Q|Q')_{\rho|c}}\right).
\end{align}
\end{fact}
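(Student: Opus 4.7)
The plan is to prove the three identities in order, starting from the divergence formula (from which the two entropy formulas follow as specializations).

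For the divergence identity, I would begin directly from Definition~\ref{def:sandwiched divergence}. Since $\rho$ and $\sigma$ are both classical on $C$, they decompose as block-diagonal operators $\rho = \sum_c \rho(c) \pure{c} \otimes \rho_{Q|c}$ and $\sigma = \sum_c \sigma(c) \pure{c} \otimes \sigma_{Q|c}$ (ignoring support issues, which I would handle at the end by the usual pseudoinverse convention). The operator $\sigma^{(1-\alpha)/(2\alpha)} \rho\, \sigma^{(1-\alpha)/(2\alpha)}$ is then itself block-diagonal in the $C$-basis, with $c$-block equal to $\rho(c)\,\sigma(c)^{(1-\alpha)/\alpha}\, \sigma_{Q|c}^{(1-\alpha)/(2\alpha)} \rho_{Q|c}\, \sigma_{Q|c}^{(1-\alpha)/(2\alpha)}$. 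Raising to the $\alpha$-th power preserves the block structure, so the trace splits as a sum over $c$, and inside each block I recognize precisely the quantity defining $2^{(\alpha-1)D_\alpha(\rho_{Q|c}\Vert\sigma_{Q|c})}$. Taking $\frac{1}{\alpha-1}\log$ yields~\eqref{eq:classmixD}. The support conditions match across the two sides because $\supp(\rho)\subseteq\supp(\sigma)$ is equivalent to the per-$c$ condition holding whenever $\rho(c)>0$, and cases where $\sigma(c)=0$ contribute $+\infty$ consistently on both sides.

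Equation~\eqref{eq:classmixHdown} then follows immediately from the divergence formula: specialize \eqref{eq:classmixD} with the underlying state $\rho_{QCQ'}$ (classical on $C$) and reference state $\id_Q\otimes\rho_{CQ'}$, noting that the marginal distribution on $C$ is the same $\rho(c)$ on both sides, so the $\rho(c)^\alpha \sigma(c)^{1-\alpha}$ factor collapses to $\rho(c)$, while the per-$c$ divergence becomes $-H_\alpha(Q|Q')_{\rho|c}$. Rearranging the signs yields~\eqref{eq:classmixHdown}.

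For \eqref{eq:classmixHup}, I would write $H^\uparrow_\alpha(Q|CQ')_\rho = \sup_{\sigma_{CQ'}} -D_\alpha(\rho_{QCQ'}\Vert\id_Q\otimes\sigma_{CQ'})$, restrict the supremum to $\sigma_{CQ'}$ classical on $C$ (which is without loss of generality by the pinching inequality / data-processing, since $\rho$ is classical on $C$), and apply \eqref{eq:classmixD}. The supremum then decouples: for each $c$ the optimization over $\sigma_{Q'|c}$ produces $2^{-H^\uparrow_\alpha(Q|Q')_{\rho|c}}$, leaving an outer optimization over the classical distribution $\sigma(c)\geq 0$ with $\sum_c \sigma(c)=1$ of an expression of the form $\sum_c \rho(c)^\alpha \sigma(c)^{1-\alpha} x_c$ for known constants $x_c\geq 0$. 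The main technical step is this last optimization: by Hölder's inequality (or a Lagrange-multiplier calculation), for $\alpha>1$ the expression is minimized and for $\alpha<1$ maximized at $\sigma(c) \propto (\rho(c)^\alpha x_c)^{1/\alpha}$, with the optimal value $\bigl(\sum_c (\rho(c)^\alpha x_c)^{1/\alpha}\bigr)^\alpha$. Substituting $x_c = 2^{-(\alpha-1)H^\uparrow_\alpha(Q|Q')_{\rho|c}}$ and taking $\frac{-1}{\alpha-1}\log$ produces exactly~\eqref{eq:classmixHup}. I expect this Hölder step to be the main obstacle, mainly in getting the direction of the inequality right for both ranges of $\alpha$ and verifying the boundary/support cases where some $\rho(c)$ or $x_c$ vanishes.
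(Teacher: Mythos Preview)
Your argument is correct and is essentially the standard proof found in the cited references. Note however that the paper itself does not give a proof of this Fact at all: it simply records the three identities and cites \cite{MDS+13} and \cite{Tom16}, so there is no ``paper's own proof'' to compare against. Your block-diagonal computation for \eqref{eq:classmixD}, the direct specialization for \eqref{eq:classmixHdown}, and the pinching-plus-H\"older optimization for \eqref{eq:classmixHup} are exactly the arguments those references use.
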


\section{GEAT channels}
\label{sec:GEATchann}

In this section, we briefly introduce the sequences of channels that we will be focusing on in this work, essentially following the definitions in~\cite{MFSR24}.

\begin{definition}\label{def:GEATchann_notest}
$\{\mathcal{N}_j\}_{j=1}^n$ is called a \term{sequence of GEAT channels} if each
$\mathcal{N}_j$ is a channel $R_{j-1} E_{j-1} \to S_j R_j E_j$ satisfying the following \term{non-signalling (NS) condition}: 
\begin{align}
\exists \text{ a channel } \mathcal{R}_j: E_{j-1} \to E_j \text{ such that } \Tr_{S_j R_j} \circ \mathcal{N}_j = \mathcal{R}_j \circ \Tr_{R_{j-1}}.
\end{align} 
If a state $\rho \in \dop{=}(S_1^n \CS_1^n \CP_1^n E_n R_n)$ is of the form $\rho=\mathcal{N}_n \circ \dots \circ \mathcal{N}_1 [\omega^0]$
(leaving some identity channels implicit) 
for some initial state $\omega^0 \in \dop{=}(R_0 E_0)$, we say it is \term{generated by the sequence of GEAT channels $\{\mathcal{N}_j\}_{j=1}^n$}.
\end{definition}
Qualitatively, $S_j$ can be understood as (possibly quantum) registers that will be kept secret,  
$E_j$ represents some side-information that can be updated by each GEAT channel, and $R_j$ represents memory passed between the channels without being available as side-information. The NS condition serves to ensure that information is not ``signalled'' from the memory registers to the side-information, in the sense that if we consider the term $\mathcal{R}_j \circ \Tr_{R_{j-1}}$ on the right-hand-side, it is a channel that outputs the ``correct'' reduced state on $E_j$ (namely, the state produced by $\Tr_{S_j R_j} \circ \mathcal{N}_j$) despite first tracing out $R_{j-1}$. 

The above definition of GEAT channels is enough for entropy to ``accumulate'' in a certain sense; specifically, a key result of~\cite{MFSR24} is the following bound:
\begin{fact}\label{fact:GEATnotest}
\cite[Lemma~3.6]{MFSR24} Let $\rho$ be a state generated by a sequence of GEAT channels $\{\mathcal{N}_j\}_{j=1}^n$ (Definition~\ref{def:GEATchann_notest}). Take any $\alpha \in (1,2)$
and let $\widehat{\alpha}=1/(2-\alpha)$. Then we have
\begin{align}
H_\alpha(S_1^n| E_n)_\rho \geq \sum_j \inf_{\nu\in\Sigma_j} H_{\widehat{\alpha}}(S_j | E_j \widetilde{E})_{\nu},
\end{align}
where $\Sigma_j$ denotes the set of all states of the form $\mathcal{N}_j\left[\omega_{R_{j-1} E_{j-1} \widetilde{E}}\right]$ for some initial state $\omega \in \dop{=}(R_{j-1} E_{j-1} \widetilde{E})$, with $\widetilde{E}$ being a register of large enough dimension to serve as a purifying register for any of the $R_j E_j$ registers.
\end{fact}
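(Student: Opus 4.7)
My plan is to follow the standard inductive blueprint of the (generalized) entropy accumulation theorem: extract a single-round sandwiched Rényi chain rule that exploits the non-signalling condition, and then telescope over $j=1,\dots,n$. First, I would rewrite each channel $\mathcal{N}_j: R_{j-1} E_{j-1} \to S_j R_j E_j$ as an extended channel $\widetilde{\mathcal{N}}_j: R'_{j-1} E_{j-1} \to R'_j E_j$ that absorbs previously generated secrets into the memory, taking $R'_j \defvar R_j S_1^j$ (with $R'_0 = R_0$). Since $\mathcal{N}_j$ acts as the identity on $S_1^{j-1}$, the original NS condition lifts directly: the same channel $\mathcal{R}_j: E_{j-1} \to E_j$ witnesses $\Tr_{R'_j} \circ \widetilde{\mathcal{N}}_j = \mathcal{R}_j \circ \Tr_{R'_{j-1}}$. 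Thus the $\widetilde{\mathcal{N}}_j$ still form a sequence of GEAT channels, and tracking $R'_j$ is enough to retain all the secret outputs.

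Next, I would invoke the single-round sandwiched Rényi chain rule for NS channels, which is the core technical ingredient underlying the GEAT: for each $j$ and any input state $\rho^{(j-1)}$ on $R'_{j-1} E_{j-1} \widetilde{E}$,
\begin{align*}
H_\alpha(S_1^j \,|\, R_j E_j)_{\widetilde{\mathcal{N}}_j[\rho^{(j-1)}]} \geq H_\alpha(S_1^{j-1} \,|\, R_{j-1} E_{j-1})_{\rho^{(j-1)}} + \inf_{\nu \in \Sigma_j} H_{\widehat{\alpha}}(S_j \,|\, E_j \widetilde{E})_\nu.
\end{align*}
The parameter pairing $\widehat{\alpha} = 1/(2-\alpha)$ is exactly what a Dupuis-type sandwiched Rényi chain rule demands, and the infimum over $\Sigma_j$ arises from the need to dominate the dependence on the specific input state $\rho^{(j-1)}$ in the single-round term by the worst case over admissible single-round inputs.

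With this chain rule in hand, the proof concludes by telescoping induction on $j$. Starting from the trivial bound $H_\alpha(\emptyset \,|\, R_0 E_0)_{\omega^0} = 0$ at $j=0$ and iterating yields
\begin{align*}
H_\alpha(S_1^n \,|\, R_n E_n)_\rho \geq \sum_{j=1}^n \inf_{\nu \in \Sigma_j} H_{\widehat{\alpha}}(S_j \,|\, E_j \widetilde{E})_\nu.
\end{align*}
Finally, applying the data-processing inequality of Fact~\ref{fact:DPI} to the partial trace channel $\Tr_{R_n}: R_n E_n \to E_n$, treated as a channel on the conditioning registers, gives $H_\alpha(S_1^n \,|\, E_n)_\rho \geq H_\alpha(S_1^n \,|\, R_n E_n)_\rho$, which produces the claimed bound.

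The main obstacle is establishing the single-round chain rule. This is where the NS condition carries all the weight: it is precisely what allows one to ``simulate'' the output side-information $E_j$ using the purifying register $\widetilde{E}$, so that the optimization on the right-hand side reduces to a set $\Sigma_j$ of single-round states that no longer remember $R_{j-1}$. Proving this chain rule requires a complementary-channel / Stinespring-dilation argument combined with a sandwiched Rényi chain rule, and care must be taken with the domains on which the pseudo-inverses in Definition~\ref{def:sandwiched divergence} are well-defined and with the restriction $\alpha \in (1,2)$ (which enforces $\widehat{\alpha} \in (1,\infty)$, the range in which the relevant chain-rule inequalities are known to hold).
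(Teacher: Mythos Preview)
The paper does not prove this statement itself; it is stated as a Fact and attributed directly to \cite[Lemma~3.6]{MFSR24}. Your outline matches the standard proof structure used in that reference: the single-round sandwiched R\'enyi chain rule under the NS condition (with the $\alpha \to \widehat{\alpha}$ shift), followed by telescoping and a final data-processing step to discard $R_n$. So your proposal is correct and aligned with the original source, though there is nothing in the present paper to compare against.
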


However, as mentioned in the introduction, the main drawback of the above result is that there is no restriction on the set of states in the infimum (other than that they are possible output states of $\mathcal{N}_j$), and hence it is not obvious how this result can be applied in security proofs. To overcome this, one introduces a notion of GEAT with ``testing'':
\begin{definition}\label{def:GEATTchann}
$\{\EATchann_j\}_{j=1}^n$ is called a \term{sequence of GEAT-with-testing (GEATT) channels} if each
$\EATchann_j$ is a channel $R_{j-1} E_{j-1} \to S_j R_j E_j \CS_j \CP_j$ 
such that the output registers $\CS_j \CP_j$ are always classical (for any input state), and it satisfies the following \term{non-signalling (NS) condition}: 
\begin{align}
\exists \text{ a channel } \mathcal{R}_j: E_{j-1} \to E_j \CP_j \text{ such that } \Tr_{S_j \CS_j R_j} \circ \EATchann_j = \mathcal{R}_j \circ \Tr_{R_{j-1}}.
\end{align}
If a state $\rho \in \dop{=}(S_1^n \CS_1^n \CP_1^n E_n R_n)$ is of the form $\rho=\EATchann_n \circ \dots \circ \EATchann_1 [\omega^0]$
(leaving some identity channels implicit) 
for some initial state $\omega^0 \in \dop{=}(R_0 E_0)$, we say it is \term{generated by the sequence of GEATT channels $\{\EATchann_j\}_{j=1}^n$}.
\end{definition}
Qualitatively, the classical registers $\CS_j$, $\CP_j$ respectively contain secret and public information that will be used to estimate the accumulated entropy. It was shown in~\cite[Theorem~4.3]{MFSR24} that this roughly allows one to restrict the states considered in the infimum, at the price of various $O(\sqrt{n})$ corrections --- our goal in this work will be to obtain a similar result, but with a simpler formulation and better finite-size performance.
Note that in many applications, it is possible to take one or the other of the registers $\CS_j$, $\CP_j$ to be trivial --- for instance, in device-dependent security proofs as described in~\cite{MR23}, it is usually possible\footnote{In that model, the $E_n$ register will usually also contain a copy of $\CP_1^n$, but this does not affect any of the final bounds.} to put all the ``test-round data'' in the public registers $\CP_j$, so the secret registers $\CS_j$ can be trivial. On the other hand, for device-independent security proofs, the ``test-round data'' does not usually satisfy the NS conditions, and hence one has to model it first using the secret registers $\CS_j$ instead, then use chain rules to ``move'' them into the conditioning registers in the case of DIQKD; see~\cite{ARV19,TSB+22,arx_CT23} (or Remark~\ref{remark:secretC}).
Note that this definition of GEATT channels we have presented here is subtly different from~\cite{MFSR24}; for readers already familiar with that work, we highlight the differences in Appendix~\ref{app:compareGEAT}, though we believe there should be no significant differences in applications.

When studying GEAT or GEATT channels, one often has to consider purifications or extensions of the input states to the channels (for instance the registers $\widetilde{E}$ in Fact~\ref{fact:GEATnotest}). In much of our analysis, it will be convenient to have compact notation for taking some arbitrary purification of an input state to the channel. We therefore introduce the following terminology:
\begin{definition}\label{def:purify}
For registers $Q,Q'$ with $\dim(Q) \leq \dim(Q')$, a \term{purifying function for $Q$ onto $Q'$} is a function $\pf: \dop{\leq}(Q) \to \dop{\leq}(QQ')$ such that for any state $\rho_Q$, the state $\pf(\rho_Q)$ is a purification of $\rho_Q$ onto the register $Q'$, i.e.~a (possibly subnormalized) rank-$1$ operator such that $\tr[Q']{\pf(\rho_Q)} = \rho_Q$.
\end{definition}
Note that a purifying function is \emph{not} a channel (i.e.~CPTP map), for instance because it is necessarily nonlinear. As a concrete example, one can choose for instance the function 
\begin{align}\label{eq:pfexample}
\pf(\rho_Q) = \dim(Q) \left(\sqrt{\rho_Q} \otimes \id_{Q'}\right) \pure{\Phi^+} \left(\sqrt{\rho_Q} \otimes \id_{Q'}\right),
\end{align}
where $\ket{\Phi^+}$ is some normalized maximally entangled state across $QQ'$ (up to the support of $Q$) in some arbitrary basis. However, all results we present in this work should be basically independent of any choice of purifying function, by exploiting isometric equivalence of purifications.

\section{Deriving QEF-type bounds from the GEAT}
\label{sec:QES}

In this section, we show how to derive bounds very similar to the QEF framework, but under the conditions of the GEAT instead. 

\subsection{Quantum estimation score-systems (QES)}

We begin by presenting a slight variant of the QEF concept, which we refer to as a ``quantum estimation score-system'' (QES) --- while this is nearly identical to a QEF, we find it convenient to formulate the definition in a slightly different way, hence we coin a similar but slightly different term for it to avoid ambiguity. QES-s are also closely connected to the concept of \term{tradeoff functions} defined in~\cite{inprep_weightentropy} --- they use this to develop a notion of \term{$f$-weighted {\Renyi} entropies}, which we base our following definition on. However, we choose to use $H_\alpha$ as the basis of our definition for most of our work due to some technical points, whereas their definition uses $H^\uparrow_\alpha$, and our definition also slightly differs in some other respects.
Moreover, there also exist potential generalizations of those concepts, e.g.~to Petz entropies as noted in~\cite{ZFK20}. 
We defer further discussion of these points to Remark~\ref{remark:variants} later, after first establishing some helpful properties.

\begin{definition}\label{def:QES} ($H^{f}_\alpha$-entropies)
Let $\rho \in \dop{=}(\CS \CP Q Q')$ be a state where $\CS$ and $\CP$ are classical with alphabets $\alphCS$ and $\alphCP$ respectively. A \term{quantum estimation score-system (QES) on $\CS \CP$} is simply a function $f:\alphCS \times \alphCP \to \mathbb{R}$; equivalently, we may denote it as a real-valued tuple $\mbf{f}
\in \mathbb{R}^{|\alphCS \times \alphCP|}$ where each term in the tuple specifies the value $f(\cS \cP)$. Given a QES $f$ and a value\footnote{This definition could likely be extended to $\alpha \in \{0,1,\infty\}$ by taking the respective limits. However, our subsequent discussions suggest that for instance the $\alpha=\infty$ case would essentially correspond to $H_\infty$ conditional entropies, which are simply zero in many circumstances (unlike $H^\uparrow_\infty$), and hence unlikely to be of much interest. Furthermore, in Theorem~\ref{th:GREAT} later, the $\frac{1}{\widehat{\alpha}-1}$ factor would be zero at $\widehat{\alpha}=\infty$, which makes the result somewhat trivial; see the discussion in Sec.~\ref{subsec:intuition}.} $\alpha\in(0,1)\cup (1,\infty)$, we define
\begin{align}\label{eq:QESdefn}
H^{f}_\alpha(Q\CS|\CP Q')_{\rho} &\defvar \frac{1}{1-\alpha} \log \left( \sum_{\cS \cP} \rho(\cS \cP)^\alpha \rho(\cP)^{1-\alpha} \, 2^{(1-\alpha) \left(-f(\cS \cP) - D_\alpha\left(\rho_{QQ'|\cS\cP} \middle\Vert \id_{Q} \otimes \rho_{Q'|\cP} \right)\right) } \right) \nonumber\\
&=\frac{1}{1-\alpha} \log \left( \sum_{\cS \cP} \rho(\cS \cP) 2^{(1-\alpha) \left(-f(\cS \cP) - D_\alpha\left(\rho_{QQ' \land \cS\cP} \middle\Vert \id_{Q} \otimes \rho_{Q' \land \cP} \right)\right) } \right) \nonumber\\
&= \frac{1}{1-\alpha} \log \left( \sum_{\cS \cP}  
2^{-(1-\alpha)f(\cS \cP)} 
\Tr \left[\left(\left(
\rho_{Q' \land \cP}\right)^{\frac{1-\alpha}{2\alpha}}\rho_{QQ' \land \cS\cP}\left(
\rho_{Q' \land \cP}\right)^{\frac{1-\alpha}{2\alpha}}\right)^\alpha\right]
\right)
,
\end{align}
where the sum is over all $\cS\cP$ values such that $\rho(\cS\cP)>0$, and we leave some tensor factors of identity implicit in the last expression. 
We may variously refer to this quantity as a $H^{f}_\alpha$-entropy or (when there is no danger of confusion with the version in Definition~\ref{def:fweighted} later) an $f$-weighted {\Renyi} entropy or simply an $f$-weighted entropy.\footnote{Earlier versions of this work referred to $H^{f}_\alpha$ as a QES-entropy; we have chosen to update the terminology to bring it more in line with~\cite{inprep_weightentropy}.} 
\end{definition}

In the above, the restriction of the summation domain is just to avoid technical problems in defining the $D_\alpha\left(\rho_{QQ' \land \cS\cP} \middle\Vert \id_{Q} \otimes \rho_{Q' \land \cP} \right)$ terms when $\rho(\cS\cP)=0$. (In principle we could have assigned arbitrary finite values to such terms without changing the value of the sum, but this introduces some technical complications in our Sec.~\ref{sec:simplify} analysis; see Remark~\ref{remark:domain}.) With this convention, all terms that appear in the sum are well-defined and finite,
because for all such terms we have $\tr{\rho_{QQ' \land \cS\cP}} > 0$ and
\begin{align}
\supp\left(\rho_{QQ' \land \cS\cP}\right) \subseteq \supp\left(\rho_{QQ' \land \cP}\right) \subseteq \supp\left(\id_{Q} \otimes \rho_{Q' \land \cP}\right),
\end{align}
and similarly for the normalized conditional states, hence the condition for finiteness in~\eqref{eq:sand_renyi_div} is satisfied.

To gain some intuition about the definition, note that if the $\CS$ register is trivial, it reduces to
\begin{align}\label{eq:QESonlyH}
H^{f}_\alpha(Q|\CP Q')_{\rho} =
\frac{1}{1-\alpha} \log \left( \sum_{\cP} \rho(\cP) \, 2^{(1-\alpha) \left(H_\alpha(Q|Q')_{\rho_{|\cP}} - f(\cP) \right) } \right) ,
\end{align}
so it has almost the same form as the summations that appear in Fact~\ref{fact:classmix} regarding {\Renyi} entropies, except that each $H_\alpha(Q|Q')_{\rho_{|\cP}}$ term is shifted by the corresponding QES value $f(\cP)$. 
In particular, if we choose the QES to be simply the constant function $f(\cP) = 0$, then $H^{f}_\alpha(Q|\CP Q')_{\rho}$ is just exactly equal to the standard {\Renyi} entropy $H_\alpha(Q|\CP Q')_{\rho}$.
Apart from this ``entropic'' interpretation though (which we expand on in Lemmas~\ref{lemma:createD}--\ref{lemma:createD_2} below), we can also discuss a ``statistical'' interpretation. Suppose, loosely speaking, that we want the QES values $f(\cP)$ to be lower bounds on the entropies $H_\alpha(Q|Q')_{\rho_{|\cP}}$ of the conditional states, in some ``averaged'' sense over $\cP$ (hence our choice of terminology: we want to view $f(\cP)$ as a ``score'' assigned to each $\cP$ value, which bounds $H_\alpha(Q|Q')_{\rho_{|\cP}}$ in some qualitative sense). 
Most directly, we could impose such a condition by considering the usual notion of expected or mean value, $\mathbb{E}_{P_X} \left[g(X)\right] \defvar \sum_x P_X(x) g(x)$ (for some function $g$ of a random variable $X$ with distribution $P_X$), and requiring that $\mathbb{E}_{\bsym{\rho}_{\CP}} \left[H_\alpha(Q|Q')_{\rho_{|\CP}} - f(\CP)\right] \geq 0$.
However, it turns out that the quantity $\mathbb{E}_{\bsym{\rho}_{\CP}} \left[H_\alpha(Q|Q')_{\rho_{|\CP}} - f(\CP)\right]$  seems difficult to analyze with the techniques used in this work. Instead, suppose one modifies the usual notion of mean value to a notion of ``log-mean-exponential'' value, sometimes also known as the {exponential mean} (with respect to any base $b\in(0,\infty)$): 
\begin{align}\label{eq:lme}
\underset{P_X}{\operatorname{lme}_b} \left[g(X)\right] \defvar \log_b \left( \sum_{x} P_X(x) \, b^{g(x)} \right),
\end{align}
similar to how the log-sum-exponential function (see e.g.~\cite{BV04v8}) is a modified notion of a standard sum. Then we see that~\eqref{eq:QESonlyH} can be rewritten as
\begin{align}
H^{f}_\alpha(Q|\CP Q')_{\rho} = \underset{\bsym{\rho}_{\CP}}{\operatorname{lme}_b} \left[H_\alpha(Q|Q')_{\rho_{|\CP}} - f(\CP) \right] , \quad\text{where } b = 2^{1-\alpha} \in (0,\infty),
\end{align}
which implies for instance that a statement such as $H^{f}_\alpha(Q|\CP Q')_{\rho} \geq 0$ is precisely the statement that $\underset{\bsym{\rho}_{\CP}}{\operatorname{lme}_b} \left[H_\alpha(Q|Q')_{\rho_{|\CP}} - f(\CP) \right] \geq 0$, i.e.~$f(\CP)$ lower bounds the conditional entropies $H_\alpha(Q|Q')_{\rho_{|\CP}}$ in a log-mean-exponential sense rather than the usual mean sense. Our approach in this work is to simultaneously exploit the ``entropic'' interpretation of $H^{f}_\alpha(Q|\CP Q')_{\rho}$ to analyze such statements.

With this, we see that imposing a condition $H^{f}_\alpha(Q|\CP Q')_{\rho} \geq 0$ in single protocol rounds would be roughly a condition that the QES $f$ defines a {\Renyi} version of a min-tradeoff function as defined in the original EAT or GEAT, except that it involves a log-mean-exponential instead of the usual mean --- analogous observations were made in~\cite{inprep_weightentropy,ZFK20}. We remark however that for the approach in this work, we find it is often more flexible to first analyze QES-s without imposing the condition $H^{f}_\alpha(Q|\CP Q')_{\rho} \geq 0$, and then perform a ``normalization'' argument at the end (see Lemma~\ref{lemma:normalize} below) to obtain that property. 
Analogously, a condition that $H^{f}_\alpha(Q|\CP Q')_{\rho} \leq 0$ is basically the statement that $f(\CP)$ \emph{upper} bounds the conditional entropies $H_\alpha(Q|Q')_{\rho_{|\CP}}$ in a log-mean-exponential sense, so it is related to the notion of a max-tradeoff function in the original EAT.

Another way to gain some intuition is the following property (based on a construction in~\cite{DFR20,DF19}): the idea is to construct a new register $D$ that ``encodes the value of $f(\CS\CP)$'' via its entropy, in some sense.
\begin{lemma}\label{lemma:createD}
Let $f$ be a QES on some classical registers $\CS \CP$, let $M\in\mathbb{R}$ be any value such that $M - f(\cS \cP) > 0$ for all $\cS \cP$, and take any $\alpha\in(0,1)\cup (1,\infty)$.
Consider any read-and-prepare channel $\CS \CP \to \CS \CP D$ such that the state it prepares on $D$ always satisfies
\begin{align}\label{eq:D_entropy}
H_\alpha(D)_{\rho_{|\cS \cP}} = H^\uparrow_\alpha(D)_{\rho_{|\cS \cP}} = M-f(\cS\cP).
\end{align}
(It is always possible to construct such a channel, and construct it such that $D$ is classical.)
Then for any $\rho \in \dop{=}(\CS \CP Q Q')$ classical on $\CS \CP$, extending $\rho$ with this channel yields\footnote{There is no danger of ambiguity in having used $\rho$ to denote all states in this lemma, since a read-and-prepare channel always simply extends a state without ``disturbing'' any registers.} 
\begin{align}\label{eq:createD}
H_\alpha(DQ\CS|\CP Q')_{\rho} = M + H^f_\alpha(Q\CS|\CP Q')_{\rho}.
\end{align}
\end{lemma}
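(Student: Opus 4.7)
The plan is to verify the identity by direct computation, using Fact~\ref{fact:classmix} to peel off the classical registers one at a time. First, the existence claim is an auxiliary construction: since $M - f(\cS\cP) > 0$ for every $\cS\cP$, one can pick a classical distribution $\sigma_{D|\cS\cP}$ on finitely many outcomes whose $\alpha$-Rényi entropy equals $M - f(\cS\cP)$ exactly. (The map $(p_1,\dots,p_k) \mapsto (1-\alpha)^{-1} \log \sum_i p_i^\alpha$ is continuous in $p$ and, as $k$ varies, has range $(0,\infty)$, so the intermediate value theorem applies.) For a classical single-register state, $H_\alpha = H^\uparrow_\alpha$ holds automatically, so~\eqref{eq:D_entropy} can be met.

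For the main equality, I would first apply~\eqref{eq:classmixHdown} with $\CP$ as the classical conditioning register, obtaining
\begin{align*}
2^{(1-\alpha) H_\alpha(DQ\CS|\CP Q')_\rho} = \sum_{\cP} \rho(\cP) \, 2^{(1-\alpha) H_\alpha(DQ\CS|Q')_{\rho_{|\cP}}} .
\end{align*}
Next I would expand the inner quantity $H_\alpha(DQ\CS|Q')_{\rho_{|\cP}}$ directly from Definition~\ref{def:sandwiched divergence}, exploiting two structural facts. First, since $\CS$ is classical the state $\rho_{DQ\CS Q'|\cP}$ decomposes as $\sum_{\cS} \rho(\cS|\cP) \pure{\cS} \otimes \rho_{DQQ'|\cS\cP}$, and the reference operator $\id_{DQ\CS} \otimes \rho_{Q'|\cP}$ is proportional to the identity on $\CS$, so the sandwich and the $\alpha$-th power both preserve the $\cS$-block structure and the outer trace splits as a sum over $\cS$. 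Second, since the $D$-generating channel is read-and-prepare, $\rho_{DQQ'|\cS\cP} = \sigma_{D|\cS\cP} \otimes \rho_{QQ'|\cS\cP}$, so each block's trace factors into a $D$-part equal to $\Tr[\sigma_{D|\cS\cP}^\alpha] = 2^{(1-\alpha)(M-f(\cS\cP))}$ (using~\eqref{eq:D_entropy}) and a $QQ'$-part equal to $2^{(\alpha-1) D_\alpha(\rho_{QQ'|\cS\cP} \Vert \id_Q \otimes \rho_{Q'|\cP})}$ by Definition~\ref{def:sandwiched divergence}.

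Substituting these back and using the identity $\rho(\cP) \rho(\cS|\cP)^\alpha = \rho(\cS\cP)^\alpha \rho(\cP)^{1-\alpha}$, the factor $2^{(1-\alpha) M}$ pulls out of the double sum over $\cS\cP$, and the surviving sum matches exactly the expression in~\eqref{eq:QESdefn} defining $2^{(1-\alpha) H^f_\alpha(Q\CS|\CP Q')_\rho}$. Taking logarithms and dividing by $1-\alpha$ yields the claim. The main obstacle should be purely bookkeeping rather than conceptual: carefully tracking how the implicit $\id$ tensor factors in the sandwiched divergence interact with the $\cS$-block decomposition, and keeping the $\rho(\cS|\cP)$ versus $\rho(\cS\cP)$ normalizations straight when combining the two applications of Fact~\ref{fact:classmix}. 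The same argument runs uniformly for $\alpha \in (0,1)$ and $\alpha \in (1,\infty)$, since we are establishing an equality between exponentials before dividing by $1-\alpha$.
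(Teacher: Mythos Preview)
Your proposal is correct and follows essentially the same approach as the paper. The only cosmetic difference is ordering: the paper applies Fact~\ref{fact:classmix} once to the divergence $D_\alpha(\rho_{DQ\CS\CP Q'} \Vert \idnorm_{DQ\CS} \otimes \rho_{\CP Q'})$ to split over $\cS\cP$ simultaneously, whereas you peel off $\CP$ first via~\eqref{eq:classmixHdown} and then handle the $\CS$-block structure by hand; both routes reach the same double sum and use the tensor factorization $\rho_{DQQ'|\cS\cP} = \sigma_{D|\cS\cP} \otimes \rho_{QQ'|\cS\cP}$ in the same way.
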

\begin{proof}
First we briefly verify that such channels can indeed be constructed: note that for any $\alpha>0$ and $h>0$, given a register $D$ with dimension at least $2^h$, one can straightforwardly construct a state on $D$ with $H_\alpha(D) = H^\uparrow_\alpha(D) = h$ (the first equality holds simply because $H_\alpha = H^\uparrow_\alpha$ when there is no conditioning system), e.g.~by taking a mixture between some pure state on $D$ and the maximally mixed state; furthermore this can clearly be achieved using a classical $D$. Therefore to obtain a read-and-prepare channel satisfying~\eqref{eq:D_entropy}, we simply need to it to read the classical value $\cS\cP$ and prepare a state on $D$ with the corresponding desired entropy value. 

\clearpage 
We now show that such a read-and-prepare channel indeed achieves~\eqref{eq:createD} (using $\idnorm$ to denote maximally mixed states):
\begin{align}\label{eq:createD_proof}
& H_{\alpha}(D Q \CS | \CP Q')_{\rho} \nonumber\\
=& \log\left(d_{D} d_{Q} d_{\CS}\right) - D_{\alpha}\left(\rho_{D Q \CS \CP Q'} \middle\Vert \idnorm_{D Q \CS} \otimes \rho_{\CP Q'} \right) \nonumber \\
=& \log\left(d_{D} d_{Q} d_{\CS}\right) - D_{\alpha}\left(\sum_{\cS \cP} \rho(\cS \cP) \pure{\cS \cP}_{\CS \CP} \otimes \rho_{D Q Q' | \cS \cP} \middle\Vert \sum_{\cS \cP} \frac{\rho(\cP)}{d_{\CS}} \pure{\cS \cP}_{\CS \CP} \otimes \idnorm_{D Q} \otimes \sigma_{Q' | \cP} \right) \nonumber \\
=& \log\left(d_{D} d_{Q} d_{\CS}\right) - \frac{1}{{\alpha}-1} \log \left( \sum_{\cS \cP} \rho(\cS \cP)^{\alpha} \left(\frac{\rho(\cP)}{d_{\CS}}\right)^{1-{\alpha}} 2^{({\alpha}-1) D_{\alpha}\left(\rho_{D Q Q' | \cS \cP} \middle\Vert \idnorm_{D Q} \otimes \rho_{Q' | \cP} \right) } \right) \nonumber \\
=& \log\left(d_{D} d_{Q} d_{\CS}\right) - \frac{1}{{\alpha}-1} \log \left( \sum_{\cS \cP} \rho(\cS \cP)^{\alpha} \left(\frac{\rho(\cP)}{d_{\CS}}\right)^{1-{\alpha}} 2^{({\alpha}-1) \left(D_{\alpha}\left(\rho_{D | \cS \cP} \middle\Vert \idnorm_{D} \right) + D_{\alpha}\left(\rho_{Q Q' | \cS \cP} \middle\Vert \idnorm_{Q} \otimes \rho_{Q' | \cP} \right)\right) } \right) \nonumber \\
=& - \frac{1}{{\alpha}-1} \log \left( \sum_{\cS \cP} \rho(\cS \cP)^{\alpha} \rho(\cP)^{1-{\alpha}} \, 2^{({\alpha}-1) \left(-H_{\alpha}\left(D\right)_{\rho_{| \cS \cP}} + D_{\alpha}\left(\rho_{Q Q' | \cS \cP} \middle\Vert \id_{Q} \otimes \rho_{Q' | \cP} \right)\right) } \right) \nonumber \\
=& \frac{1}{1-{\alpha}} \log \left( \sum_{\cS \cP} \rho(\cS \cP)^{\alpha} \rho(\cP)^{1-{\alpha}} \, 2^{(1-{\alpha}) \left(M - f(\cS\cP) - D_{\alpha}\left(\rho_{Q Q' | \cS \cP} \middle\Vert \id_{Q} \otimes \rho_{Q' | \cP} \right)\right) } \right) \nonumber \\
=& M + \frac{1}{1-{\alpha}} \log \left( \sum_{\cS \cP} \rho(\cS \cP)^{\alpha} \rho(\cP)^{1-{\alpha}} \, 2^{(1-{\alpha}) \left(-f(\cS\cP) - D_{\alpha}\left(\rho_{Q Q' | \cS \cP} \middle\Vert \id_{Q} \otimes \rho_{Q' | \cP} \right)\right) } \right) \nonumber \\
=& M + H^{f}_{\alpha}(Q \CS | \CP Q')_{\rho} ,
\end{align}
where the fourth line holds by Fact~\ref{fact:classmix}, the fifth holds because $\rho_{DQQ'|\cS\cP}=\rho_{D|\cS\cP}\otimes\rho_{QQ'|\cS\cP}$, and the seventh line follows by substitution from Eq.~(\ref{eq:D_entropy}).
\end{proof}
This lemma implies that an $f$-weighted entropy is in fact just the usual {\Renyi} conditional entropy evaluated on some extension of the state (and the extension can be produced using the same channel for all states), apart from an additive term $M$ that does not affect any important properties.\footnote{In fact, this $M$ term can be entirely avoided if we first use the ``normalization property'' in Lemma~\ref{lemma:normalize} below to shift all the $f$ values to negative values, allowing us to choose $M=0$. Alternatively, $M=0$ could have been achieved by modifying our construction to instead follow~\cite{DFR20} and generate a \emph{pair} of quantum registers $D\overline{D}$, then use the fact that $H_\alpha(D|\overline{D})$ can be assigned negative values to obtain $H_\alpha(DQ\CS|\overline{D}\CP Q')_{\rho} = H^f_\alpha(Q\CS|\CP Q')_{\rho}$. However, in that case it might not straightforwardly hold that $H_\alpha(D|\overline{D}) = H^\uparrow_\alpha(D|\overline{D})$.} 
In particular, this implies that it immediately inherits many properties of the corresponding {\Renyi} conditional entropy, which we shall shortly describe. Before doing so, we present a minor variation of the above result, in which we obtain an approximate version of the relation~\eqref{eq:createD} for \emph{all} {\Renyi} parameters simultaneously:
\begin{lemma}\label{lemma:createD_2}
Let $f$ be a QES on some classical registers $\CS \CP$, let $M>0$ be any value such that $M - f(\cS \cP) > M/2 > 0$ for all $\cS \cP$.
Consider any read-and-prepare channel $\CS \CP \to \CS \CP D$ such that the state it prepares on $D$ always satisfies
\begin{align}\label{eq:D_entropy_2}
\forall \alpha \in[0,\infty], \quad H_\alpha(D)_{\rho_{|\cS \cP}} = H^\uparrow_\alpha(D)_{\rho_{|\cS \cP}} \in
\left[M-f(\cS\cP),M-f(\cS\cP)+2^{-\frac{M}{2}}\log e\right].
\end{align}
(It is always possible to construct such a channel, and construct it such that $D$ is classical.)
Then for any $\rho \in \dop{=}(\CS \CP Q Q')$ classical on $\CS \CP$, extending $\rho$ with this channel yields
\begin{align}\label{eq:createD_2}
\forall \alpha\in(0,1)\cup(1,\infty), \quad 
&M + H^f_\alpha(Q\CS|\CP Q')_{\rho} 
\leq 
H_\alpha(DQ\CS|\CP Q')_{\rho} 
\leq 
M + 2^{-\frac{M}{2}}\log e + H^f_\alpha(Q\CS|\CP Q')_{\rho}.
\end{align}
\end{lemma}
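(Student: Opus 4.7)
The plan is to explicitly construct the required channel by preparing a uniform classical state on an appropriate number of atoms of $D$, and then to follow the calculation in equation~\eqref{eq:createD_proof} from Lemma~\ref{lemma:createD} with inequalities at the substitution step, driven by a simple monotonicity observation.

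For the construction, upon reading a value $\cS\cP$, set $h \defvar M-f(\cS\cP)$ (so $h > M/2 > 0$ by hypothesis) and have the channel prepare a uniform classical distribution on $k \defvar \lceil 2^{h} \rceil$ atoms of $D$. A uniform distribution on $k$ atoms has $H_\alpha(D) = H^\uparrow_\alpha(D) = \log k$ for every $\alpha\in[0,\infty]$, so it suffices to verify $\log k \in [h,\, h + 2^{-M/2}\log e]$. The lower bound is immediate from $k \geq 2^h$, while the upper bound follows from
\begin{align*}
\log k \;\leq\; \log(2^h+1) \;=\; h + \log(1 + 2^{-h}) \;\leq\; h + 2^{-h}\log e \;<\; h + 2^{-M/2}\log e,
\end{align*}
using $\log(1+x)\leq x\log e$ for $x\geq 0$ and then $h > M/2$. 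A single finite-dimensional classical $D$ suffices since the set of attainable values of $h$ is finite.

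To obtain~\eqref{eq:createD_2}, I follow the derivation of~\eqref{eq:createD_proof} verbatim through its fifth line, which before substituting a specific value for $H_\alpha(D)_{\rho_{|\cS\cP}}$ reads
\begin{align*}
H_{\alpha}(D Q \CS | \CP Q')_{\rho} = \frac{1}{1-\alpha} \log\!\left( \sum_{\cS \cP} \rho(\cS \cP)^\alpha \rho(\cP)^{1-\alpha}\, 2^{(1-\alpha)\left(H_\alpha(D)_{\rho_{|\cS\cP}} - D_\alpha(\rho_{QQ'|\cS\cP}\Vert \id_Q\otimes \rho_{Q'|\cP})\right)}\right).
\end{align*}
A one-line sign check shows this expression is monotone increasing in each $H_\alpha(D)_{\rho_{|\cS\cP}}$ separately, regardless of the sign of $1-\alpha$, because the inner factor $(1-\alpha)$ in the exponent and the outer prefactor $1/(1-\alpha)$ always have the same sign. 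Plugging in the lower endpoint $M-f(\cS\cP)$ of the interval~\eqref{eq:D_entropy_2} then reproduces the final lines of~\eqref{eq:createD_proof} exactly and yields the lower bound $M + H^f_\alpha(Q\CS|\CP Q')_\rho$, while plugging in the upper endpoint $M-f(\cS\cP)+2^{-M/2}\log e$ simply shifts the additive constant by $2^{-M/2}\log e$ and gives the matching upper bound claimed in~\eqref{eq:createD_2}.

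The main step is the construction itself; the remainder is a refinement of Lemma~\ref{lemma:createD} whose only subtlety is verifying the monotonicity uniformly for $\alpha\in(0,1)\cup(1,\infty)$, which is a short sign check rather than a real obstacle. No new technical machinery is required beyond the observation that a uniform distribution on $\lceil 2^h\rceil$ atoms has \Renyi\ entropy close to $h$ simultaneously for all orders, which is precisely what replaces the single-$\alpha$ requirement of Lemma~\ref{lemma:createD}.
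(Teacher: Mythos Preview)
Your proposal is correct and essentially identical to the paper's proof: the same uniform-on-$\lceil 2^{M-f(\cS\cP)}\rceil$-atoms construction, the same verification of the entropy interval via $\log(2^h+1)=h+\log(1+2^{-h})\leq h+2^{-h}\log e<h+2^{-M/2}\log e$, and the same monotonicity argument applied to the fifth line of~\eqref{eq:createD_proof} to obtain both inequalities in~\eqref{eq:createD_2}.
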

\begin{proof}
To see that such channels indeed exist, simply consider a read-and-prepare channel that reads the classical value $\cS\cP$ and prepares a state on $D$ that is a \emph{uniform} mixture with support size $\ceil{2^{M-f(\cS\cP)}}$ (this is sometimes called a \term{flat state}); again, this can clearly be achieved with $D$ classical.
Since all (unconditioned) {\Renyi} entropies take the same value for a uniform distribution, we see that this means for \emph{every} $\alpha\in[0,\infty]$, this state prepared on $D$ has {\Renyi} entropy
\begin{align}
H_\alpha(D)_{\rho_{|\cS \cP}} = H^\uparrow_\alpha(D)_{\rho_{|\cS \cP}} = \log\left(\ceil{2^{M-f(\cS\cP)}}\right),
\end{align} 
which lies within the interval described in~\eqref{eq:D_entropy_2} --- the upper bound holds from observing that
\begin{align}
\log\left(\ceil{2^{M - f(\cS\cP)}}\right) &\leq \log\left(2^{M - f(\cS\cP)} + 1\right) \nonumber\\
&= \log\left(2^{M - f(\cS\cP)}\right) + \log\left(1 + 2^{ f(\cS\cP) - M} \right) \nonumber\\ 
&\leq M - f(\cS\cP) + 2^{-\frac{M}{2}}\log e
,
\end{align}
where in the second line we simply factor out $2^{M - f(\cS\cP)}$, and the third line follows from $\ln (1+x)\leq x$ along with the lemma condition $M - f(\cS \cP) > M/2 > 0$.

The remainder of the proof follows the same way as in Lemma~\ref{lemma:createD}, except that we replace the seventh line with inequalities in either direction by bounding $H_\alpha(D)_{\rho_{|\cS \cP}}$ using the interval in~\eqref{eq:D_entropy_2} (together with the observation that the preceding line is monotone increasing with respect to the $H_\alpha(D)_{\rho_{|\cS \cP}}$ terms; note that this is true in both the $\alpha<1$ and $\alpha>1$ regimes).
\end{proof}

The main difference between the above two lemmas is that Lemma~\ref{lemma:createD_2} only achieves an ``approximate'' version of Lemma~\ref{lemma:createD}, but does so using a single channel that works for all {\Renyi} parameters simultaneously, which is needed in some of our subsequent proofs. At large $M$, the gap between the upper and lower bounds in the above lemma shrinks to zero, so the approximation usually becomes sufficiently good for our analysis.
We now use these two lemmas to ``transfer'' many properties of the standard {\Renyi} entropy over to the $f$-weighted entropy.

\begin{lemma}\label{lemma:DPI}
(Data-processing) Let $\rho \in \dop{=}(\CS \CP Q Q')$ be classical on $\CS \CP$, let $f$ be a QES on $\CS \CP$, and take any $\alpha\in [\frac{1}{2},1)\cup(1,\infty)$. Then for any channel $\mathcal{E}:Q'\to Q''$, 
\begin{align}
H^f_\alpha(Q\CS|\CP Q'')_{\mathcal{E}[\rho]} \geq H^f_\alpha(Q\CS|\CP Q')_{\rho}.
\end{align}
If $\mathcal{E}$ is an isometry, then we have equality in the above bound.
\end{lemma}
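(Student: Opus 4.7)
The plan is to reduce the statement to the ordinary data-processing inequality for sandwiched Rényi entropies (Fact~\ref{fact:DPI}) by invoking Lemma~\ref{lemma:createD} to rewrite the QES-entropy as a standard conditional Rényi entropy on a suitably extended state. The key observation is that the register $D$ constructed in Lemma~\ref{lemma:createD} is produced by a read-and-prepare channel acting only on $\CS\CP$, while the channel $\mathcal{E}$ acts only on $Q'$; since these act on disjoint tensor factors, they commute, and the extension by $D$ passes freely through $\mathcal{E}$.

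Concretely, I would first pick any $M\in\mathbb{R}$ large enough that $M - f(\cS\cP) > 0$ for all $\cS\cP$ (e.g.~$M = 1 + \max_{\cS\cP} f(\cS\cP)$), and let $\mathcal{D}:\CS\CP \to \CS\CP D$ be the read-and-prepare channel guaranteed by Lemma~\ref{lemma:createD} satisfying condition~\eqref{eq:D_entropy}. Writing $\tilde{\rho} \defvar \mathcal{D}[\rho]$, Lemma~\ref{lemma:createD} gives
\begin{equation*}
H^f_\alpha(Q\CS|\CP Q')_{\rho} = H_\alpha(DQ\CS|\CP Q')_{\tilde{\rho}} - M.
\end{equation*}
Because $\mathcal{D}$ and $\mathcal{E}$ act on disjoint registers, we have $\mathcal{E}[\tilde{\rho}] = \mathcal{D}[\mathcal{E}[\rho]]$, and moreover $\mathcal{E}[\rho]$ has the same classical distribution on $\CS\CP$ as $\rho$, so the same channel $\mathcal{D}$ applied to $\mathcal{E}[\rho]$ still satisfies~\eqref{eq:D_entropy} with respect to $\mathcal{E}[\rho]$. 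Hence a second application of Lemma~\ref{lemma:createD} yields
\begin{equation*}
H^f_\alpha(Q\CS|\CP Q'')_{\mathcal{E}[\rho]} = H_\alpha(DQ\CS|\CP Q'')_{\mathcal{E}[\tilde{\rho}]} - M.
\end{equation*}

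Subtracting these two identities, the claim $H^f_\alpha(Q\CS|\CP Q'')_{\mathcal{E}[\rho]} \geq H^f_\alpha(Q\CS|\CP Q')_{\rho}$ reduces to
\begin{equation*}
H_\alpha(DQ\CS|\CP Q'')_{\mathcal{E}[\tilde{\rho}]} \geq H_\alpha(DQ\CS|\CP Q')_{\tilde{\rho}},
\end{equation*}
which is exactly the standard data-processing inequality of Fact~\ref{fact:DPI} applied to the channel $\mathcal{E}:Q'\to Q''$ (noting that $\alpha\in[1/2,1)\cup(1,\infty)$ lies in the allowed range). The isometry case is immediate: if $\mathcal{E}$ is an isometry, Fact~\ref{fact:DPI} gives equality in the reduced statement, hence equality in the original QES-entropy bound as well.

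There is no substantive obstacle here; the only minor thing to check carefully is that a single choice of read-and-prepare channel $\mathcal{D}$ can be used consistently before and after applying $\mathcal{E}$, which holds because $\mathcal{E}$ does not alter the $\CS\CP$ marginals and therefore does not alter the QES values or the conditional distribution on which $\mathcal{D}$ depends. If one wished to avoid committing to a specific $M$ at the start, one could alternatively invoke Lemma~\ref{lemma:createD_2} with $M\to\infty$ and take limits, but this is unnecessary since Lemma~\ref{lemma:createD} already gives an exact identity for each fixed $\alpha$.
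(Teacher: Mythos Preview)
Your proposal is correct and takes essentially the same approach as the paper's primary proof: extend the state with the read-and-prepare channel $\mathcal{D}$ from Lemma~\ref{lemma:createD}, commute $\mathcal{D}$ with $\mathcal{E}$ (as they act on disjoint registers), and then invoke the standard data-processing inequality for $H_\alpha$ from Fact~\ref{fact:DPI}. The paper also supplies a second, direct proof from the definition via termwise data-processing of the sandwiched divergence, but your argument matches the first one.
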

\begin{proof}
	Let $\mathcal{D}$ be a read-and-prepare channel as described in Lemma~\ref{lemma:createD}. Then we have (here for clarity we explicitly write $\mathcal{D}$ in the formulas, instead of leaving it implicit in the state extension as in~\eqref{eq:createD}):
	\begin{align}
	H^f_\alpha(Q\CS|\CP Q'')_{\mathcal{E}[\rho]} &= H_\alpha(DQ\CS|\CP Q'')_{\mathcal{D}\circ\mathcal{E}[\rho]} - M \nonumber\\
	&= H_\alpha(DQ\CS|\CP Q'')_{\mathcal{E}\circ\mathcal{D}[\rho]} - M \nonumber\\
	&\geq H_\alpha(DQ\CS|\CP Q'')_{\mathcal{D}[\rho]} - M \nonumber\\
	&= H^f_\alpha(Q\CS|\CP Q')_{\rho},
	\end{align}
	where the second line holds since $\mathcal{D}$ commutes with $\mathcal{E}$, and the third line is simply the usual data-processing inequality for {\Renyi} conditional entropies in the range $\alpha\in [\frac{1}{2},\infty)$ (Fact~\ref{fact:DPI}). Similarly, the equality condition is simply inherited from that data-processing inequality.
	
	Alternatively, a proof directly from the definition: first considering $\alpha\in[\frac{1}{2},1)$, the data-processing inequality for sandwiched \Renyi\ divergence itself gives
	\begin{align}
		D_\alpha\left(\mathcal{E}\left[\rho_{QQ'\land\cS\cP}\right] \middle\Vert \mathcal{E}\left[\id_{Q} \otimes \rho_{Q'\land\cP}\right] \right)&\leq D_\alpha\left(\rho_{QQ'\land\cS\cP} \middle\Vert \id_{Q} \otimes \rho_{Q'\land\cP} \right) \nonumber\\
		\iff \rho(\cP\cS)2^{(1-\alpha)\left(-f(\cP\cS)-D_\alpha\left(\mathcal{E}\left[\rho_{QQ'\land\cS\cP}\right] \middle\Vert \mathcal{E}\left[\id_{Q} \otimes \rho_{Q'\land\cP}\right] \right)\right)}&\ge \rho(\cP\cS)2^{(1-\alpha)\left(-f(\cP\cS)-D_\alpha\left(\rho_{QQ'\land\cS\cP} \middle\Vert \id_{Q} \otimes \rho_{Q'\land\cP} \right)\right)},
	\end{align}
	summing over all values of $\cP\cS$, taking logarithm, and dividing by $1-\alpha$ would prove the result. The proof for the case where $\alpha\in(1,\infty)$ follows from a similar argument.
\end{proof}

\begin{lemma}\label{lemma:classmixQES}
(Conditioning on classical registers) Let $\rho \in \dop{=}(\CS \CP Q Q' Z)$ be classical on $\CS \CP Z$, let $f$ be a QES on $\CS \CP$, and take any $\alpha\in(0,1)\cup(1,\infty)$. Then
\begin{align}\label{eq:classmixQES}
H^{f}_\alpha(Q \CS|\CP Q' Z)_{\rho} =
\frac{1}{1-\alpha} \log \left( \sum_{z} \rho(z) \, 2^{(1-\alpha) H^{f}_\alpha(Q \CS|\CP Q')_{\rho_{|z}}  } \right) .
\end{align}
\end{lemma}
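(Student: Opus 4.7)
The plan is to reduce the claim to the classical conditioning formula for standard Rényi conditional entropies (Fact~\ref{fact:classmix}, Eq.~\eqref{eq:classmixHdown}), using Lemma~\ref{lemma:createD} as a bridge that translates between the QES-entropy and an ordinary $H_\alpha$ on an enlarged state. The proof via this route is essentially a substitution argument, and all the technical content is already packaged in those two previously proved statements.

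First I would pick any $M > \max_{\cS\cP} f(\cS\cP)$ and take a read-and-prepare channel $\mathcal{D}:\CS\CP \to \CS\CP D$ as in Lemma~\ref{lemma:createD}. Applying that lemma (with $Q' \to Q'Z$) to the full state $\rho$ gives
\begin{align*}
H^f_\alpha(Q\CS|\CP Q' Z)_{\rho} \;=\; H_\alpha(DQ\CS|\CP Q' Z)_{\mathcal{D}[\rho]} - M.
\end{align*}
Since $\mathcal{D}$ touches only $\CS\CP$, the extended state $\mathcal{D}[\rho]$ is still classical on $Z$, so Eq.~\eqref{eq:classmixHdown} is applicable and gives
\begin{align*}
H_\alpha(DQ\CS|\CP Q' Z)_{\mathcal{D}[\rho]} \;=\; \frac{1}{1-\alpha}\log\!\left(\sum_{z} \rho(z)\,2^{(1-\alpha)\,H_\alpha(DQ\CS|\CP Q')_{(\mathcal{D}[\rho])_{|z}}}\right).
\end{align*}

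Next I would use the fact that $\mathcal{D}$ commutes with the operation of conditioning on $Z=z$, because the channel only reads $\CS\CP$ and writes into a fresh register $D$, which is disjoint from $Z$. This yields $(\mathcal{D}[\rho])_{|z} = \mathcal{D}[\rho_{|z}]$, so a second application of Lemma~\ref{lemma:createD}, this time to each conditional state $\rho_{|z} \in \dop{=}(\CS\CP Q Q')$, produces
\begin{align*}
H_\alpha(DQ\CS|\CP Q')_{(\mathcal{D}[\rho])_{|z}} \;=\; H^f_\alpha(Q\CS|\CP Q')_{\rho_{|z}} + M.
\end{align*}
Plugging this back in, the additive $M$ in the exponent factors out of the log, exactly cancels the outer $-M$ from the first step, and the identity drops out.

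The only subtlety worth checking is the commutation $(\mathcal{D}[\rho])_{|z} = \mathcal{D}[\rho_{|z}]$; but since conditioning on a classical value of $Z$ is just a weighted restriction of the block-diagonal structure on $Z$, and $\mathcal{D}$ acts as the identity on $Z$, this is immediate. As an alternative that avoids introducing $D$, one could unfold the definition~\eqref{eq:QESdefn} and apply Eq.~\eqref{eq:classmixD} to each Rényi divergence $D_\alpha(\rho_{QQ'Z|\cS\cP}\Vert \id_Q\otimes\rho_{Q'Z|\cP})$ in the sum, then rearrange the double sum over $(\cS\cP,z)$ to pull $z$ outside; I expect this to work but to be notationally heavier than the route through Lemma~\ref{lemma:createD}.
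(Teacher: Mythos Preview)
Your proposal is correct and follows precisely the paper's first proof: extend with the read-and-prepare channel of Lemma~\ref{lemma:createD}, apply Fact~\ref{fact:classmix} to condition on $Z$, then invoke Lemma~\ref{lemma:createD} again on each conditional state so that the additive $M$ cancels. The paper also gives the direct-from-definition alternative you sketch at the end, so both of your routes match the two proofs in the paper.
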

\begin{proof}
	Extend $\rho$ with a read-and-prepare channel as described in Lemma~\ref{lemma:createD}. Then
	\begin{align}
	H^{f}_\alpha(Q \CS|\CP Q' Z)_{\rho} &= 	H_\alpha(D Q \CS|\CP Q' Z)_{\rho} - M \nonumber\\
	&= \frac{1}{1-\alpha} \log \left( \sum_{z} \rho(z) \, 2^{(1-\alpha) H_\alpha(D Q \CS|\CP Q')_{\rho_{|z}} } \right) - M \nonumber\\
	&= \frac{1}{1-\alpha} \log \left( \sum_{z} \rho(z) \, 2^{(1-\alpha) \left( M + H^{f}_\alpha(Q \CS|\CP Q')_{\rho_{|z}} \right) } \right) - M \nonumber\\
	&= \frac{1}{1-\alpha} \log \left( \sum_{z} \rho(z) \, 2^{(1-\alpha) H^{f}_\alpha(Q \CS|\CP Q')_{\rho_{|z}} } \right) ,
	\end{align}
	where in the second line we applied Fact~\ref{fact:classmix}, and in the third line we implicitly exploited the fact that the read-and-prepare channel in Lemma~\ref{lemma:createD} achieves the relation~\eqref{eq:createD} for all states (in particular, also the conditional states $\rho_{|z}$).
	
	Alternatively, a proof directly from the definition: rewriting the formula for $H^{f}_\alpha$-entropy of the state in the lemma, we have:
	\begin{align}
		H^{f}_\alpha(Q \CS|\CP Q' Z)_{\rho}&=\frac{1}{1-\alpha} \log \left(\sum_z \sum_{\cS \cP}  
		2^{-(1-\alpha)f(\cS \cP)} 
		\Tr \left[\left(\left(
		\rho_{Q' \land z \cP}\right)^{\frac{1-\alpha}{2\alpha}}\rho_{QQ' \land z \cS\cP}\left(
		\rho_{Q' \land z \cP}\right)^{\frac{1-\alpha}{2\alpha}}\right)^\alpha\right]
		\right)\nonumber\\
		&=\frac{1}{1-\alpha} \log \left(\sum_z \rho(z) \sum_{\cS \cP}  
		2^{-(1-\alpha)f(\cS \cP)} 
		\Tr \left[\left(\left(
		\rho_{Q' \land \cP|z}\right)^{\frac{1-\alpha}{2\alpha}}\rho_{QQ' \land \cS\cP|z}\left(
		\rho_{Q' \land \cP|z}\right)^{\frac{1-\alpha}{2\alpha}}\right)^\alpha\right]
		\right)\nonumber\\
		&=\frac{1}{1-\alpha} \log \left( \sum_{z} \rho(z) \, 2^{(1-\alpha) H^{f}_\alpha(Q \CS|\CP Q')_{\rho_{|z}}  } \right) .
	\end{align}
\end{proof}

\begin{lemma}\label{lemma:Monotonicity}
	(Monotonicity in $\alpha$) Let $\rho \in \dop{=}(\CS \CP Q Q')$ be classical on $\CS \CP$, let $f$ be a QES on $\CS \CP$, and take any $\alpha, \alpha'\in [\frac{1}{2},1)\cup(1,\infty)$ such that $\alpha\geq\alpha'$. Then, 
	\begin{align}
		H^{f}_\alpha(Q\CS|\CP Q')_{\rho} \leq H^{f}_{\alpha'}(Q\CS|\CP Q')_{\rho}.
	\end{align}
\end{lemma}
\begin{proof}
    Let $M>0$ be any value such that $M-f(\cS\cP)>M/2>0$, and extend $\rho$ with a read-and-prepare channel as described in Lemma~\ref{lemma:createD_2}, then we have
    \begin{align}
        \label{eq:monotonicity}
        H^f_\alpha(Q\CS|\CP Q')_\rho&\leq H_\alpha(DQ\CS|\CP Q')_\rho-M\notag\\
        &\leq H_{\alpha'}(DQ\CS|\CP Q')_\rho-M\notag\\
        &\leq 2^{-\frac{M}{2}}\log e+H^f_{\alpha'}(Q\CS|\CP Q')_\rho,
    \end{align}
    where the first line follows from the first inequality in Eq.~\eqref{eq:createD_2}, the second line follows from monotonicity of sandwiched conditional entropies in $\alpha$~\cite{MDS+13}, and the last line follows from the second inequality in Eq.~\eqref{eq:createD_2}. Since this inequality holds for \emph{arbitrary} (sufficiently large) $M$, we can take the $M\rightarrow\infty$ limit so the $2^{-\frac{M}{2}}$ term vanishes, yielding the desired result.
\end{proof}

\begin{lemma}\label{lemma:QES3Renyi}
(Chain rules) Let $f$ be a QES on classical registers $\CS \CP$, take any $\alpha,\alpha',\alpha''\in (\frac{1}{2},1)\cup(1,\infty)$ such that $\frac{\alpha}{\alpha-1} = \frac{\alpha'}{\alpha'-1} + \frac{\alpha''}{\alpha''-1}$. Then we have for any $\rho \in \dop{=}(\CS \CP R Q Q')$:
\begin{align}\label{eq:chain3Renyi2}
\begin{gathered}
H^{f}_{\alpha}(R Q \CS|\CP Q')_{\rho} \leq H^{f}_{\alpha'}(Q \CS | \CP Q' R)_{\rho} + H_{\alpha''}(R | \CP Q')_{\rho} \quad\text{ if } (\alpha-1)(\alpha'-1)(\alpha''-1) < 0 ,\\
H^{f}_{\alpha}(R Q \CS|\CP Q')_{\rho} \geq H^{f}_{\alpha'}(Q \CS | \CP Q' R)_{\rho} + H_{\alpha''}(R | \CP Q')_{\rho} \quad\text{ if } (\alpha-1)(\alpha'-1)(\alpha''-1) > 0 .
\end{gathered}
\end{align}
Furthermore, we also have for any $\rho \in \dop{=}(\CS \CP Q Q')$:
\begin{align}\label{eq:chain3RenyiCS}
\begin{gathered}
H^{f}_{\alpha}(Q \CS|\CP Q')_{\rho} \leq H^{f}_{\alpha'}(Q | \CS \CP Q')_{\rho} + H_{\alpha''}(\CS|\CP Q')_{\rho} \quad\text{ if } (\alpha-1)(\alpha'-1)(\alpha''-1) < 0 ,\\
H^{f}_{\alpha}(Q \CS|\CP Q')_{\rho} \geq H^{f}_{\alpha'}(Q | \CS \CP Q')_{\rho} + H_{\alpha''}(\CS|\CP Q')_{\rho} \quad\text{ if } (\alpha-1)(\alpha'-1)(\alpha''-1) > 0 ,
\end{gathered}
\end{align}
where by $H^{f}_{\alpha}(Q | \CS \CP Q')_{\rho}$ we mean
\begin{align}\label{eq:CSinconditioning}
H^{f}_{\alpha}(Q | \CS \CP Q')_{\rho} &= \frac{1}{1-\alpha} \log \left( \sum_{\cS \cP} \rho(\cS \cP) 2^{(1-\alpha) \left(-f(\cS \cP) - D_\alpha\left(\rho_{QQ' \land \cS\cP} \middle\Vert \id_{Q} \otimes \rho_{Q' \land \cS\cP} \right)\right) } \right) \nonumber\\
&= \frac{1}{1-\alpha} \log \left( \sum_{\cS \cP} \rho(\cS \cP) 2^{(1-\alpha) \left(-f(\cS \cP) + H_\alpha(Q|Q')_{\rho_{|\cS\cP}}\right) } \right).
\end{align}
\end{lemma}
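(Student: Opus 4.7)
The plan is to reduce both chain rules to the standard \Renyi\ chain rule for conditional entropies by invoking Lemma~\ref{lemma:createD_2}. I would fix a large parameter $M$ and extend $\rho$ with the single read-and-prepare channel $\mathcal{D}:\CS\CP\to\CS\CP D$ guaranteed by that lemma, so that relation~\eqref{eq:createD_2} holds simultaneously at all three \Renyi\ orders $\alpha,\alpha',\alpha''$. Consequently, every QES-entropy appearing on either side of~\eqref{eq:chain3Renyi2} or~\eqref{eq:chain3RenyiCS} equals the corresponding standard \Renyi\ entropy of the $\mathcal{D}$-extended state minus $M$, up to an additive error of at most $2^{-M/2}\log e$. (Note that Lemma~\ref{lemma:createD} is not enough here, because it only guarantees the identity at one \Renyi\ order at a time, whereas the chain rule couples three different orders on the same extended state.)

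For~\eqref{eq:chain3Renyi2}, I would then apply the standard \Renyi\ chain rule on the extended state with the split $A=DQ\CS$, $B=R$, $C=\CP Q'$. The parameter relation $\tfrac{\alpha}{\alpha-1}=\tfrac{\alpha'}{\alpha'-1}+\tfrac{\alpha''}{\alpha''-1}$ is precisely the condition under which that chain rule holds, with the direction of inequality governed by the sign of $(\alpha-1)(\alpha'-1)(\alpha''-1)$. Since $\mathcal{D}$ does not touch $R$, the $H_{\alpha''}(R|\CP Q')$ term transports back unchanged; the two $M$'s on the remaining terms cancel, and sending $M\to\infty$ removes the $O(2^{-M/2})$ slack, yielding~\eqref{eq:chain3Renyi2}. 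The derivation of~\eqref{eq:chain3RenyiCS} proceeds analogously with the split $A=DQ$, $B=\CS$, $C=\CP Q'$, but requires an auxiliary identity: I would show that $H_{\alpha'}(DQ|\CS\CP Q')_{\mathcal{D}[\rho]}=M+H^{f}_{\alpha'}(Q|\CS\CP Q')_\rho+O(2^{-M/2})$. This follows by expanding the left-hand side via Fact~\ref{fact:classmix}, noting that conditional on each $\cS\cP$ the register $D$ is classical and in product with $QQ'$, so $H_{\alpha'}(DQ|Q')_{\rho_{|\cS\cP}}=H_{\alpha'}(D)_{\rho_{|\cS\cP}}+H_{\alpha'}(Q|Q')_{\rho_{|\cS\cP}}$; applying~\eqref{eq:D_entropy_2} to the first summand and factoring $M$ out of the log-mean-exponential then matches exactly the definition~\eqref{eq:CSinconditioning}.

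The step I expect to require the most care is tracking signs. The prefactor $\tfrac{1}{1-\alpha}$ in the QES-entropy flips sign between the $\alpha<1$ and $\alpha>1$ regimes, the bounds in Lemma~\ref{lemma:createD_2} sit on a specific side of $H^{f}_\alpha$, and each sign pattern for $(\alpha-1),(\alpha'-1),(\alpha''-1)$ compatible with the chain-rule constraint must be verified to ensure that the $O(2^{-M/2})$ slack falls on the correct side of the inequality; otherwise, sending $M\to\infty$ would yield a vacuous bound rather than the claimed one. Apart from this bookkeeping, everything is a routine consequence of Lemma~\ref{lemma:createD_2} and the standard \Renyi\ chain rule.
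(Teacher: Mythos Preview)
Your proposal is correct and follows essentially the same approach as the paper: extend the state via Lemma~\ref{lemma:createD_2}, invoke the Dupuis chain rules (\cite[Propositions~7--8]{Dup15}) on the extended state, then let $M\to\infty$. The only cosmetic difference is in justifying the auxiliary identity for $H_{\alpha'}(DQ|\CS\CP Q')$: the paper observes that $H^{f}_{\alpha'}(Q|\CS\CP Q')$ in~\eqref{eq:CSinconditioning} is itself a QES-entropy with both $\CS$ and $\CP$ treated as the ``public'' part, so Lemma~\ref{lemma:createD_2} applies directly, whereas you propose to expand via Fact~\ref{fact:classmix} and the product structure of $D$ with $QQ'$ --- these are equivalent computations.
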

\begin{proof}
We focus only on proving the bounds in~\eqref{eq:chain3RenyiCS}; the bounds in~\eqref{eq:chain3Renyi2} hold by precisely analogous arguments, but simpler since we do not need to consider the consistency of~\eqref{eq:CSinconditioning} with our preceding $H^{f}_\alpha$-entropy properties.

Take any $M>0$ such that $M - f(\cS \cP) > \frac{M}{2} > 0$, and extend $\rho$ with a read-and-prepare channel as described in Lemma~\ref{lemma:createD_2}. Then as stated in the lemma, we have
\begin{align}\label{eq:fentropy_bounds}
		M + H^{f}_\alpha(Q\CS | \CP Q')_\rho \leq H_\alpha(DQ\CS | \CP Q')_\rho \leq M+2^{-\frac{M}{2}}\log e+H^{f}_\alpha(Q\CS | \CP Q')_\rho,
\end{align}
and also (by viewing $H^{f}_{\alpha}(Q | \CS \CP Q')_{\rho}$ as an instance of $H^{f}_\alpha$-entropy where \emph{both} the $\CS$ and $\CP$ registers are viewed as the ``public'' part, as is consistent with the formula~\eqref{eq:CSinconditioning} above)
\begin{align}\label{eq:fentropy_bounds2}
		M + H^{f}_{\alpha'}(Q| \CS \CP Q')_\rho \leq H_{\alpha'}(DQ| \CS \CP Q')_\rho \leq M+2^{-\frac{M}{2}}\log e+H^{f}_{\alpha'}(Q| \CS \CP Q')_\rho.
\end{align}
Then, we have the following chain of inequalities if $(\alpha-1)(\alpha'-1)(\alpha''-1) < 0$:
\begin{align}\label{eq:3renyifirst}
	H^{f}_\alpha(Q\CS | \CP Q')_\rho &\leq H_\alpha(DQ\CS | \CP Q')_\rho - M \nonumber\\
	&\leq H_{\alpha'}(DQ|\CS\CP Q')_\rho + H_{\alpha''}(\CS|\CP Q')_\rho -M \nonumber\\
	&\leq H^{f}_{\alpha'}(Q | \CS \CP Q')_\rho + H_{\alpha''}(\CS|\CP Q')_\rho + 2^{-\frac{M}{2}}\log e,
\end{align}
where the first line follows from the first inequality in Eq.~\eqref{eq:fentropy_bounds}, the second line is a special case of~\cite[Proposition~8]{Dup15}, and the third line a result of applying the second inequality in Eq.~\eqref{eq:fentropy_bounds2}. Taking the $M\rightarrow\infty$ limit, we obtain the first bound in~\eqref{eq:chain3RenyiCS}. The second bound (for $(\alpha-1)(\alpha'-1)(\alpha''-1) > 0$) holds by a similar argument, using~\cite[Proposition~7]{Dup15} instead.
\end{proof}

Arguably, the construction of the $D$ register is not strictly necessary for our results; it should be possible in principle to prove these results directly from the definition of $H^{f}_\alpha$-entropies, and indeed we did so for Lemmas~\ref{lemma:DPI}--\ref{lemma:classmixQES}. However, the advantage of the $D$ register construction is that more sophisticated results like the chain rule corresponding to Lemma~\ref{lemma:QES3Renyi} can be transferred to $H^{f}_\alpha$-entropies with equal ease, rather than having to re-derive it ``from base principles''. Similarly, in our subsequent results involving entropy accumulation, this $D$ register construction will allow us to exploit advanced results such as the GEAT without explicitly re-deriving them in the framework of $H^{f}_\alpha$-entropies.

Apart from the above, another convenient property is the following:
\begin{lemma}\label{lemma:normalize}
(Normalization property) Let $\rho \in \dop{=}(\CS \CP Q Q')$ be classical on $\CS \CP$ and let $f$ be a QES on $\CS \CP$. Then for any constant $\kappa\in\mathbb{R}$, we have
\begin{align}
H^{f+\kappa}_\alpha(Q\CS|\CP Q')_{\rho} = H^{f}_\alpha(Q\CS|\CP Q')_{\rho} - \kappa,
\end{align}
where $f+\kappa$ simply denotes the QES with values $f(\cS \cP) + \kappa$. 
\end{lemma}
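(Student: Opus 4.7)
The plan is to prove this by direct substitution into the definition~\eqref{eq:QESdefn} of the QES-entropy. Replacing $f$ by $f+\kappa$ in the summand contributes an extra factor of $2^{-(1-\alpha)\kappa}$ in every term of the sum, and since this factor does not depend on $\cS\cP$ it can be pulled outside the summation. Taking the logarithm and dividing by $1-\alpha$ then produces the additive term $-\kappa$, giving exactly $H^{f+\kappa}_\alpha(Q\CS|\CP Q')_\rho = H^f_\alpha(Q\CS|\CP Q')_\rho - \kappa$. This is essentially a one-line computation; no additional properties of the divergence or of the state $\rho$ are needed.

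As a consistency check (and an alternative proof route), one can also derive the identity from Lemma~\ref{lemma:createD}. Given a read-and-prepare channel that prepares $D$ with $H_\alpha(D)_{\rho_{|\cS\cP}} = M - f(\cS\cP)$, the same channel satisfies $H_\alpha(D)_{\rho_{|\cS\cP}} = (M+\kappa) - (f(\cS\cP)+\kappa)$, so it is equally a valid $D$-register construction for the QES $f+\kappa$ with mass constant $M' = M+\kappa$. Applying~\eqref{eq:createD} with both choices to the same extended state yields $M + H^f_\alpha(Q\CS|\CP Q')_\rho = (M+\kappa) + H^{f+\kappa}_\alpha(Q\CS|\CP Q')_\rho$, from which the claim follows after canceling $M$.

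There is no real obstacle; the only mild subtlety is purely bookkeeping, namely keeping track of the sign of $(1-\alpha)$ and the fact that the identity holds for all $\alpha\in(0,1)\cup(1,\infty)$ uniformly, since the same factoring argument works regardless of whether $1-\alpha$ is positive or negative. One does need to note that the shifted function $f+\kappa$ is indeed still a QES by the definition (any real-valued function on $\alphCS\times\alphCP$ qualifies), so no admissibility condition is in play. I would therefore present this as a short direct computation from~\eqref{eq:QESdefn}, perhaps with a one-line remark that the same result follows transparently from Lemma~\ref{lemma:createD} by shifting $M$.
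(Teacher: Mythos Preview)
Your proposal is correct and matches the paper's own proof, which simply states that the result follows directly from the definition. Your additional observation via Lemma~\ref{lemma:createD} is a nice consistency check but is not needed.
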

\begin{proof}
This follows directly from the definition.
\end{proof}

\begin{remark}\label{remark:variants}
In the special case where $Q$ is trivial (or in a certain sense, if $Q$ is classical)\footnote{It may appear strange to make the $Q$ register trivial, since in some protocols it corresponds to the raw ``secret data'', but the idea is that in~\cite{ZFK20} and follow-up works, they apply the QEF analysis in such a way that the $\CS$ register itself contains all the raw ``secret data''. Our results can be viewed as slightly extending their approach by also allowing for an additional possibly-quantum $Q$ (for the case where $Q$ is classical, it can still be basically accommodated in their approach, by incorporating it into the $\CS$ register but choosing a QES/QEF that is independent of its value).},  
a QES is \emph{exactly} equivalent to a QEF as defined in~\cite{ZFK20}, except that in this work we take some logarithms or exponentials to make our results more ``entropy-like'' --- for instance~\cite{ZFK20} mainly works directly with the trace terms
in the last line of~\eqref{eq:QESdefn}, which they refer to as \term{{\Renyi} powers}. (Strictly speaking, they also impose a condition that corresponds to requiring $H^{f}_\alpha(Q\CS|\CP Q')_{\rho}\geq0$ over some class of states in consideration, but as noted in that work, this can be enforced using the normalization property in Lemma~\ref{lemma:normalize}.)
Similarly, in the case where $\CS$ is trivial 
(so~\eqref{eq:QESonlyH} holds), the definition of $f$-weighted entropy we gave above is extremely similar to the original definition that was introduced in~\cite{inprep_weightentropy} (which we re-state in Definition~\ref{def:fweighted} later), except we used $H_\alpha$ rather than $H^\uparrow_\alpha$ and the convex combination is adjusted accordingly (also,~\cite{inprep_weightentropy} uses the term ``tradeoff function'' rather than ``QES'' to refer to the function $f$, under the imposition of the aforementioned ``normalization'' condition that we have chosen not to include here). 

Furthermore, as noted in~\cite{ZFK20}, one can certainly consider using Petz entropies instead of sandwiched entropies when defining these concepts. However, they were unable to prove a ``QEF chaining'' property for Petz entropies, and similarly we do not do so here for the $\alpha>1$ regime, because we rely on various intermediate results in the GEAT proof that were only derived for sandwiched entropies. (On the other hand, our results for the $\alpha<1$ regime do involve Petz entropies; see Appendix~\ref{app:Hmaxversion}.) Another potential generalization to consider would be whether the original $f$-weighted entropy definition from~\cite{inprep_weightentropy} could be extended to the case where $\CS$ is nontrivial; however, some technical care is necessary since it is less clear how to choose an analogue of $H^\uparrow_\alpha$ to use in the divergence-based formulas we gave in Definition~\ref{def:QES} --- we discuss this in a follow-up work.
\end{remark}

\subsection{QES chaining or accumulation}
\label{subsec:QESchain}

We now present the first key result in this work, that $H^{f}_\alpha$-entropies ``accumulate'' in a simple fashion when considering a state generated by a sequence of GEATT channels. 
This result is almost identical to the ``QEF chaining'' property described in~\cite{ZFK20}. However, we have derived it using the conditions and framework of the GEAT instead; in particular, this means that it holds under the more relaxed NS conditions of the GEAT, rather than the stricter Markov conditions of~\cite{ZFK20} or the original EAT (at the cost of a very slight ``loss'' in {\Renyi} parameter; see~\eqref{eq:hatmu} later).
We also show later in Sec.~\ref{subsec:EAT} that we can exactly reproduce the bounds in~\cite{ZFK20} if we simply impose the same Markov conditions instead.

\begin{theorem}\label{th:QES}
($H^{f}_\alpha$-entropy accumulation)
Let $\rho$ be a state generated by a sequence of GEATT channels $\{\EATchann_j\}_{j=1}^n$ (Definition~\ref{def:GEATTchann}). For each $j$, suppose that for every value $\cS_1^{j-1} \cP_1^{j-1}$, we have a QES $f_{|\cS_1^{j-1} \cP_1^{j-1}}$ 
on registers $\CS_j \CP_j$.\footnote{To avoid potential confusion, we highlight that this means for each $j$ we have \emph{multiple} different QES-s, which are indexed by the values $\cS_1^{j-1} \cP_1^{j-1}$. We also highlight that each such QES is a function on the alphabet of $\CS_j \CP_j$ only, \emph{not} the ``preceding'' registers $\CS_1^{j-1} \CP_1^{j-1}$. Of course, by the equivalence between $k$-tuples and functions on a discrete set of size $k$, in principle we could view the tuple of values $\left\{f_{|\cS_1^{j-1} \cP_1^{j-1}}(\cS_j \cP_j)\right\}_{\cS_1^j \cP_1^j}$ 
as being a function on the alphabet of $\CS_1^j \CP_1^j$, but for this theorem statement it is more convenient to use the presented form. (In other contexts, it can be convenient to view it as such a function, but we defer that discussion to whenever it should be relevant.)} 
Define the following QES on $\CS_1^n \CP_1^n$:
\begin{align}\label{fullQES}
f_\mathrm{full}(\cS_1^n \cP_1^n) \defvar \sum_{j=1}^n f_{|\cS_1^{j-1} \cP_1^{j-1}}(\cS_j \cP_j).
\end{align}
Take any $\alpha \in (1,2)$ 
and let $\widehat{\alpha}=1/(2-\alpha)$.
Then we have
\begin{align}\label{eq:chainQES}
\begin{gathered}
H^{f_\mathrm{full}}_\alpha(S_1^n \CS_1^n | \CP_1^n E_n)_\rho \geq \sum_j \min_{\cS_1^{j-1} \cP_1^{j-1}} \kappa_{\cS_1^{j-1} \cP_1^{j-1}}, \\ 
\text{where}\quad \kappa_{\cS_1^{j-1} \cP_1^{j-1}} \defvar \inf_{\nu\in\Sigma_j} H^{f_{|\cS_1^{j-1} \cP_1^{j-1}}}_{\widehat{\alpha}}(S_j \CS_j | \CP_j E_j \widetilde{E})_{\nu},
\end{gathered}
\end{align}
where $\Sigma_j$ denotes the set of all states of the form $\EATchann_j\left[\omega_{R_{j-1} E_{j-1} \widetilde{E}}\right]$ for some initial state $\omega \in \dop{=}(R_{j-1} E_{j-1} \widetilde{E})$, with $\widetilde{E}$ being a register of large enough dimension to serve as a purifying register for any of the $R_j E_j$ registers.

Consequently, if we instead define the following ``normalized'' QES on $\CS_1^n \CP_1^n$:
\begin{align}\label{eq:fullQESnorm}
\hat{f}_\mathrm{full}(\cS_1^n \cP_1^n) \defvar \sum_{j=1}^n \hat{f}_{|\cS_1^{j-1} \cP_1^{j-1}}(\cS_j \cP_j), \quad\text{where}\quad \hat{f}_{|\cS_1^{j-1} \cP_1^{j-1}}(\cS_j \cP_j) \defvar f_{|\cS_1^{j-1} \cP_1^{j-1}}(\cS_j \cP_j) + \kappa_{\cS_1^{j-1} \cP_1^{j-1}},
\end{align}
then we have $\hat{\kappa}_{\cS_1^{j-1} \cP_1^{j-1}} \defvar \inf_{\nu\in\Sigma_j} H^{\hat{f}_{|\cS_1^{j-1} \cP_1^{j-1}}}_{\widehat{\alpha}}(S_j \CS_j | \CP_j E_j \widetilde{E})_{\nu} = 0$ for all $\cS_1^{j-1} \cP_1^{j-1}$, and thus
\begin{align}\label{eq:chainQESnorm}
H^{\hat{f}_\mathrm{full}}_\alpha(S_1^n \CS_1^n | \CP_1^n E_n)_\rho \geq 
0.
\end{align}
\end{theorem}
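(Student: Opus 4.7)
The plan is to reduce the bound to the standard GEAT (Fact~\ref{fact:GEATnotest}) by augmenting each GEATT channel with a read-and-prepare step that adds a classical ``dummy'' register $D_j$ whose {\Renyi} entropy encodes the QES value $M - f_{|\cS_1^{j-1}\cP_1^{j-1}}(\cS_j\cP_j)$ for any constant $M$ chosen large enough. Following the constructions of Lemmas~\ref{lemma:createD}--\ref{lemma:createD_2} (using a uniform distribution on the appropriate support so that the construction works for \emph{both} $H_\alpha$ and $H_{\widehat{\alpha}}$ simultaneously), I would define augmented channels $\widetilde{\EATchann}_j$ with: secret output $S_j\CS_j D_j$, side-info output $E_j\CP_1^j$ (accumulating all public registers forward), and memory output $R_j\CS_1^j$ (carrying past secret classical registers forward so that $D_k$ for $k>j$ can be constructed as a function of $\cS_1^{k-1}\cP_1^{k-1}$). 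Because $D_j$ sits on the secret side and not the side-info side, the NS condition of $\widetilde{\EATchann}_j$ is inherited from the GEATT NS condition of $\EATchann_j$: the required update map on the side-info simply copies $\CP_1^{j-1}$ forward and applies the original $\mathcal{R}_j$ to $E_{j-1}$.

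Applying Fact~\ref{fact:GEATnotest} to this augmented sequence then yields a bound of the form
\[
H_\alpha(S_1^n\CS_1^n D_1^n \mid E_n\CP_1^n)_{\widetilde{\rho}} \;\geq\; \sum_{j=1}^n \inf_{\nu} H_{\widehat{\alpha}}(S_j\CS_j D_j \mid E_j\CP_1^j\widetilde{E})_{\nu},
\]
where the infima run over the augmented per-round state sets. For the left-hand side, conditioned on $\cS_1^n\cP_1^n$ the $D_j$'s are independently prepared with total {\Renyi} entropy $\approx nM - f_\mathrm{full}(\cS_1^n\cP_1^n)$, so by Lemma~\ref{lemma:createD} (or Lemma~\ref{lemma:createD_2} with an $M\to\infty$ limit to absorb the round-off error) the LHS rewrites as $nM + H^{f_\mathrm{full}}_\alpha(S_1^n\CS_1^n\mid\CP_1^n E_n)_\rho$. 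For each per-round infimum, the subtlety is that $\CS_1^{j-1}$ lives in the memory $R_j\CS_1^j$ and so is traced out of the entropy formula; I would lower-bound by using data-processing (Fact~\ref{fact:DPI}) to move $\CS_1^{j-1}$ into the conditioning, after which Fact~\ref{fact:classmix} expresses the result as a log-mean-exponential over $\cS_1^{j-1}\cP_1^{j-1}$ of history-fixed conditional entropies, which is at least the minimum over history. For each fixed history, Lemma~\ref{lemma:createD} applied in reverse to that round's $D_j$ identifies the infimum with $M + \kappa_{\cS_1^{j-1}\cP_1^{j-1}}$. Summing over $j$ and cancelling the $nM$ terms on both sides yields the bound~\eqref{eq:chainQES}.

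The normalized bound~\eqref{eq:chainQESnorm} is then an immediate consequence: for each fixed history $\cS_1^{j-1}\cP_1^{j-1}$ the shifted function $\hat{f}_{|\cS_1^{j-1}\cP_1^{j-1}}$ differs from $f_{|\cS_1^{j-1}\cP_1^{j-1}}$ by the constant $\kappa_{\cS_1^{j-1}\cP_1^{j-1}}$ in $\cS_j\cP_j$, so by the normalization property in Lemma~\ref{lemma:normalize} the corresponding per-round infimum under $\hat{f}$ is zero, and applying the first part of the theorem to $\hat{f}$ gives the nonnegative lower bound. The main technical obstacle is bookkeeping the classical registers correctly: $\CP_1^j$ must live in the side-info (both to appear in the LHS conditioning and to preserve the NS condition in the presence of history-dependent $D_k$ constructions), $\CS_1^j$ must live in the memory (so it appears on the ``secret'' side of the LHS), and the per-round inf must nonetheless be reducible to a fixed-history inf despite the trace-out of $\CS_1^{j-1}$ --- the combination of DPI with classical-conditioning expansion via Fact~\ref{fact:classmix} threads this needle.
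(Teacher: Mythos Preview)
Your proposal is correct and matches the paper's proof essentially step for step: construct augmented GEAT channels carrying $\CS_1^{j-1}$ in the memory and $\CP_1^{j-1}$ in the side-information, generate $D_j$ via the Lemma~\ref{lemma:createD_2} construction, apply Fact~\ref{fact:GEATnotest}, relate both sides to QES-entropies, and take $M\to\infty$. The only bookkeeping items the paper makes explicit that you leave implicit are an explicit pinching channel on $\CS_1^{j-1}\CP_1^{j-1}$ (so that the per-round infimum state is genuinely classical on those registers before invoking Fact~\ref{fact:classmix}) and an explicit copy register $\copyCS_j$ (so that $\CS_j$ can simultaneously sit in the accumulated secret outputs and in the forwarded memory without the same register appearing twice in the channel's output partition).
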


From a statistical viewpoint, one can interpret the above theorem as follows, following basically the same ideas as QEF-chaining properties presented in~\cite{ZFK20}. First consider the bound~\eqref{eq:chainQES}, and for simplicity let us ignore the $\CS_j$ registers for now by taking them to be trivial. Recalling the ``log-mean-exponential'' interpretation discussed below~\eqref{eq:lme}, observe that \emph{if} we had $\kappa_{\cP_1^{j-1}} \geq 0$ for all the single-round values $\kappa_{\cP_1^{j-1}}$ (i.e.~the ${f_{| \cP_1^{j-1}}}$-weighted entropy minimized over output states), then this would basically be saying that for every output state of any single round, $f_{|\cP_1^{j-1}}(\CP_j)$ is a ``log-mean-exponential'' lower bound on the {\Renyi} entropy. In that case, the bound~\eqref{eq:chainQES} would tell us that $H^{f_\mathrm{full}}_\alpha(S_1^n | \CP_1^n E_n)_\rho \geq 0$ as well, which is to say that the ``global'' QES $f_\mathrm{full}(\CP_1^n)$ in~\eqref{fullQES} is in turn a ``log-mean-exponential'' lower bound on the {\Renyi} entropy of the \emph{global} state. 
(This intuition is less clear for the general case with $\CS_j$ registers, but roughly speaking one basically replaces the {\Renyi} conditional entropies with the more elaborate {\Renyi} divergence terms in~\eqref{eq:QESdefn}.)

In other words,~\eqref{eq:chainQES} essentially gives a martingale-like form of chaining property for such ``log-mean-exponential'' bounds/estimators: it implies that \emph{if} we had such estimators ${f_{| \cP_1^{j-1}}}(\CP_j)$ that hold for all single-round output states, then we simply need to add them up via~\eqref{fullQES} to obtain such an estimator $f_\mathrm{full}(\CP_1^n)$ that is valid for the global state. Finally, the remaining formulas~\eqref{eq:fullQESnorm}--\eqref{eq:chainQESnorm} in the theorem tell us how to get such single-round estimators in the first place: given any \emph{arbitrary} choice of the functions $f_{| \cP_1^{j-1}}$, those formulas let us construct ``normalized'' versions $\hat{f}_{|\cP_1^{j-1}}$ that are indeed ``log-mean-exponential'' estimators for single rounds, and thus construct a valid such estimator $\hat{f}_\mathrm{full}(\CP_1^n)$ for the global state in the above manner, at least in theory --- see Sec.~\ref{subsubsec:varlength} for practical considerations.

We highlight that an important difference between this notion of ``log-mean-exponential estimator'' and the notion of estimators in e.g.~confidence intervals is that our methods in this work never need to explicitly consider the probability of the estimator ``failing'' --- for instance, we does not claim to bound the {\Renyi} entropy with high probability (with some small probability of ``failing'' to do so); rather, we are saying that $\hat{f}_\mathrm{full}$ bounds it ``on average'' in the ``log-mean-exponential'' sense. This property alone turns out to be sufficient to perform the rest of our analysis in this work (based on techniques developed in~\cite{inprep_weightentropy,ZFK20}), without explicitly requiring that the probability of the estimator ``failing'' is small. 

From an entropic viewpoint, the bound~\eqref{eq:chainQES} in the above theorem is basically an extension of the chain rule of~\cite[Lemma~3.6]{MFSR24} (Fact~\ref{fact:GEATnotest} stated above) to $H^{f}_\alpha$-entropies rather than the usual {\Renyi} entropies. This is in fact the core idea behind our proof, namely to rely on the relations between $H^{f}_\alpha$-entropies and standard {\Renyi} entropies we described in the previous section. This approach is similar to some steps in the proofs of the (G)EAT-with-testing in~\cite{DFR20,DF19,MFSR24}.

As the full proof of the theorem is somewhat lengthy, we defer it to Appendix~\ref{app:someproofs}; however, the overall structure can be sketched out as follows. 
For simplicity let us focus on a case where all the QES-s $f_{|\cS_1^{j-1} \cP_1^{j-1}}$ are identical (assuming all the $\CS_j$ registers have the same alphabet, and similarly for $\CP_j$), 
so that all of them are equal to a single QES $f$.
Then we basically define new channels $\mathcal{N}_j$ that extend the $\EATchann_j$ channels by applying an additional read-and-prepare channel $\mathcal{D}_j:\CS_j \CP_j \to \CS_j \CP_j D_j$ of the form described in Lemma~\ref{lemma:createD_2}.
With this, we could apply the {\Renyi} chain rule of~\cite[Lemma~3.6]{MFSR24} (Fact~\ref{fact:GEATnotest} stated above) to get a bound of the form
\begin{align}\label{eq:GEATsketch}
H_\alpha(D_1^n S_1^n \CS_1^n | \CP_1^n E_n)_\rho \geq \sum_j \inf H_{\widehat{\alpha}}(D_j S_j \CS_j | \CP_1^j E_j \widetilde{E})_{\nu'},
\end{align}
with the infimum taking place over states $\nu'$ that could be produced by $\mathcal{N}_j \otimes \idmap_{\widetilde{E}}$. (This bound is basically the same as some intermediate steps in~\cite{MFSR24}.)
Now by analyzing these channels in a similar way to the Lemma~\ref{lemma:createD_2} proof, one can show the states they produce satisfy
\begin{align}
\begin{gathered}
H_\alpha(D_1^n S_1^n \CS_1^n | \CP_1^n E_n )_\rho \leq nM + H^{f_\mathrm{full}}_\alpha(S_1^n \CS_1^n | \CP_1^n E_n )_\rho + n 2^{-\frac{M}{2}}\log e , \\
H_{\widehat{\alpha}}(D_j S_j \CS_j | \CP_1^j E_j \widetilde{E})_{\nu'}
\geq M + H^f_{\widehat{\alpha}}(S_j \CS_j | \CP_j E_j \widetilde{E})_{\nu'},
\end{gathered}
\end{align}
hence relating the terms in the bound~\eqref{eq:GEATsketch} to $H^{f}_\alpha$-entropies. Noting that the resulting $nM$ terms simply cancel off, and recalling that we can choose arbitrarily large $M$ so $2^{-\frac{M}{2}}$ becomes arbitrarily small, we obtain the desired result~\eqref{eq:chainQES}.  
Essentially, our proof differs from~\cite{DFR20,DF19,MFSR24} because in those works they apply a condition that $f$ lower-bounds the single-round von Neumann entropy in an ``averaged'' sense (i.e.~defines a min-tradeoff function~\cite{DFR20,DF19,MFSR24}), whereas we observe that in fact this was not necessary --- we can just view $f$ as arbitrary ``score'' values. (Though as discussed above, the normalization to $\hat{f}$ then yields a ``log-mean-exponential'' bound on {\Renyi} entropy.)

The full proof essentially proceeds along the above lines, except that we need to define the input and output registers of the $\mathcal{N}_j$ channels differently, such that they preserve a copy of the ``past'' testing registers $\CS_1^{j-1} \CP_1^{j-1}$ registers at each step. This is needed so that the read-and-prepare channels $\mathcal{D}_j$ can also take the values on these registers into account, which allows us to ``encode'' any QES value $f_{|\cS_1^{j-1} \cP_1^{j-1}}(\cS_j \cP_j)$ in the entropies of the $D_j$ registers. We present these details in Appendix~\ref{app:someproofs}.

Note that without further conditions on the channels or QES choices, the bound in the above result is almost tight (for $\alpha$ close to $1$), since it is almost saturated by an IID example. Specifically, if the action of each channel $\EATchann_j$ is simply to independently generate a fresh copy of some fixed state $\sigma$ on registers $S_j \CS_j \CP_j \breve{E}_j$ and append $\breve{E}_j$ to $E_{j-1}$, then the final state is an IID state $\rho=\sigma^{\otimes n}$. Hence if all the QES-s are identical to a single QES $f$, we immediately get
\begin{align}
\begin{gathered}
H^{f_\mathrm{full}}_\alpha(S_1^n \CS_1^n | \CP_1^n E_n)_\rho = n H^{f}_{\alpha}(S_1 \CS_1 | \CP_1 \breve{E}_1)_{\sigma} = \sum_j \inf_{\nu\in\Sigma_j} H^{f}_{\alpha}(S_j \CS_j | \CP_j E_j \widetilde{E})_{\nu},
\end{gathered}
\end{align}
which only differs from the bound~\eqref{eq:chainQES} by having $\alpha$ in place of $\widehat{\alpha}$. However, if we write $\alpha=1+\mu$ for some $\mu>0$ then we see
\begin{align}\label{eq:hatmu}
\widehat{\alpha} 
= \frac{1}{1-\mu} 
= 1+\frac{\mu}{1-\mu} = 1+\mu+O(\mu^2),
\end{align}
so it only differs from $\alpha$ by a second-order correction (which becomes a ``third-order'' correction for the purposes of finite-size analysis; see the later discussion in Sec.~\ref{sec:simplify} leading up to~\eqref{eq:Osqrtnbnd}). Hence in an IID scenario, Theorem~\ref{th:QES} is tight up to the difference~\eqref{eq:hatmu}. In fact, in Sec.~\ref{subsec:EAT} we show that under the stricter Markov conditions in the original EAT, this change of {\Renyi} parameter can be avoided entirely, hence producing a bound exactly identical to the QEF-chaining property in~\cite{ZFK20}.

There might remain some room for improvement in scenarios where the QES-s are different in each round; for instance, at a glance it perhaps seems a bit too pessimistic to take the \emph{worst}-case value over $\cS_1^{j-1} \cP_1^{j-1}$ in~\eqref{eq:chainQES}. However, inspecting our above proof shows that while this minimum first appears in the bound~\eqref{eq:worstcasebnd}, it is easy to construct an initial state $\omega$ that saturates that bound, using~\eqref{eq:omega_conditioned}. Hence to improve it, it seems one would have to restrict the infimum in the original ``GEAT without testing'' bound~\eqref{eq:GEATnotest} from~\cite{MFSR24}, which is a challenging prospect given the proof methods in that work (unless perhaps one could design the $\mathcal{N}_j$ channels in a different way such that the minimum over $\cS_1^{j-1} \cP_1^{j-1}$ does not occur at all).

We highlight that computation of the $\kappa_{\cS_1^{j-1} \cP_1^{j-1}}$ values in practice might be aided by the following observation, which allows it to be rewritten as a convex optimization (and is in fact critical for our later proofs in Sec.~\ref{sec:simplify}). In the following lemma, for flexibility in potential applications, we denote the input space for the channel as a register $\inQ$ that does not need to straightforwardly correspond to the input spaces of GEATT channels. We have also not included a ``memory register'' $R$ in the output of the channel, since that is not necessary for the lemma statement and our subsequent proofs.
\begin{lemma}\label{lemma:convexity}
Let $\EATchann$ be a channel $\inQ \to S 
E \CS \CP$ where the output is always classical on $\CS \CP$, and let $f$ be a QES on $\CS \CP$. Let $\widetilde{E}$ be a register such that $\dim(\inQ) \leq \dim(\widetilde{E})$, and let $\pf$ be a purifying function for $\inQ$ onto $\widetilde{E}$ (Definition~\ref{def:purify}). Then for any state $\omega_{\inQ \widetilde{E}}$ and $\alpha\in [\frac{1}{2},1)\cup(1,\infty)$,
\begin{align}\label{eq:purification_proccesing}
H^f_\alpha(S \CS | \CP E \widetilde{E})_{\EATchann\left[\omega_{\inQ \widetilde{E}}\right]} \geq H^f_\alpha(S \CS | \CP E \widetilde{E})_{\EATchann\left[ \pf\left(\omega_{\inQ}\right) \right]},
\end{align}
with equality if $\omega_{\inQ \widetilde{E}}$ is pure, and furthermore for $\alpha\in(1,\infty)$, the right-hand-side is a convex function of $\omega_{\inQ}$.
\end{lemma}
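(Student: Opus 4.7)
The plan is to reduce all three parts of the lemma to the data-processing inequality of Lemma~\ref{lemma:DPI} combined with a standard Uhlmann-style comparison of different purifications/extensions of the fixed marginal $\omega_{\inQ}$. Throughout, the crucial fact I would exploit is that $\EATchann$ acts only on the $\inQ$ register, so it commutes with any map applied solely to the ``extension'' register $\widetilde{E}$, and that $\widetilde{E}$ sits purely in the conditioning system of the QES-entropy.

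For the inequality~\eqref{eq:purification_proccesing}, I would take a purification $\ket{\psi}_{\inQ \widetilde{E} \widetilde{E}'}$ of $\omega_{\inQ\widetilde{E}}$ on an auxiliary register $\widetilde{E}'$. Both $\ket{\psi}$ and $\pf(\omega_{\inQ})$ are then purifications of the same marginal $\omega_{\inQ}$, so Uhlmann's theorem provides an isometry $V:\widetilde{E} \to \widetilde{E}\widetilde{E}'$ such that $(I_{\inQ}\otimes V)\pf(\omega_{\inQ}) = \ket{\psi}$. Tracing out $\widetilde{E}'$ gives $\omega_{\inQ\widetilde{E}} = \mathcal{E}_V[\pf(\omega_{\inQ})]$ where $\mathcal{E}_V$ is a channel on $\widetilde{E}$ alone. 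Since $\EATchann$ and $\mathcal{E}_V$ act on disjoint registers, $\EATchann[\omega_{\inQ\widetilde{E}}] = \mathcal{E}_V[\EATchann[\pf(\omega_{\inQ})]]$, and applying Lemma~\ref{lemma:DPI} to the channel $\mathcal{E}_V$ on the conditioning register $\widetilde{E}$ yields the claim. The equality case when $\omega_{\inQ\widetilde{E}}$ is pure then follows immediately: the auxiliary register $\widetilde{E}'$ is unnecessary, so $V$ may be taken as an isometry on $\widetilde{E}$ alone, and Lemma~\ref{lemma:DPI} is an equality for isometries.

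For the convexity claim with $\alpha \in (1,\infty)$, given $\omega_{\inQ} = \sum_i p_i \omega^i_{\inQ}$ I would introduce a classical flag register $X$ and form
\begin{equation*}
\tilde{\omega}_{\inQ \widetilde{E} X} \defvar \sum_i p_i \pure{i}_X \otimes \pf(\omega^i_{\inQ})_{\inQ\widetilde{E}},
\end{equation*}
whose $\inQ$-marginal is $\omega_{\inQ}$. Viewing $\widetilde{E}X$ as an enlarged extension register and invoking the already-proven inequality, then using isometric equivalence of purifications (which, together with Lemma~\ref{lemma:DPI}, shows QES-entropies of pure extensions of $\omega_{\inQ}$ are independent of the choice of purifying register), I get
\begin{equation*}
H^{f}_\alpha(S\CS|\CP E\widetilde{E})_{\EATchann[\pf(\omega_{\inQ})]} \leq H^{f}_\alpha(S\CS|\CP E\widetilde{E}X)_{\EATchann[\tilde{\omega}]}.
\end{equation*}
The state $\EATchann[\tilde{\omega}]$ is classical on $X$ with conditional branches $\EATchann[\pf(\omega^i_{\inQ})]$, so Lemma~\ref{lemma:classmixQES} writes the right-hand side as a log-mean-exponential over $i$ of the single-branch QES-entropies. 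For $\alpha>1$ the prefactor $1/(1-\alpha)$ is negative, and Jensen's inequality (convexity of $\exp$) bounds this log-mean-exponential above by the ordinary convex combination $\sum_i p_i H^{f}_\alpha(S\CS|\CP E\widetilde{E})_{\EATchann[\pf(\omega^i_{\inQ})]}$, which is precisely convexity.

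The main obstacle I anticipate is bookkeeping around the non-canonical purifying function $\pf$: since it is explicitly nonlinear, the convexity argument must avoid ever combining $\pf(\omega^i_{\inQ})$ linearly, and both the Uhlmann step and the switch between $\pf$ and an ``enlarged'' purifying function $\pf_{\widetilde{E}X}$ require checking that dimensions suffice (which they do, since $\widetilde{E}X$ is strictly larger than $\widetilde{E}$ and $\dim(\widetilde{E})\geq\dim(\inQ)$). Once one is careful to state the isometric-invariance of QES-entropies under change of purifying register as a separate auxiliary fact (an immediate consequence of Lemma~\ref{lemma:DPI}), the argument assembles cleanly.
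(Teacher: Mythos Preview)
Your proposal is correct and follows essentially the same approach as the paper: both arguments reduce the inequality to Uhlmann's theorem plus the data-processing property of QES-entropies (Lemma~\ref{lemma:DPI}), and both prove convexity by building the flagged extension $\sum_i p_i \pf(\omega^i_{\inQ})\otimes\pure{i}$, relating it to the purification of the mixture via a channel on the conditioning side, and then invoking Lemma~\ref{lemma:classmixQES} together with the concavity of $\log$ (equivalently your Jensen step) for $\alpha>1$. The only cosmetic difference is that you recycle the already-proven inequality~\eqref{eq:purification_proccesing} (applied with the enlarged extension register) in the convexity step, whereas the paper repeats the Uhlmann-plus-data-processing argument directly; the underlying mechanism is identical.
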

\begin{proof}
	If $\omega_{\inQ \widetilde{E}}$ is pure, then by using the isometric equivalence of purifications along with the corresponding invariance of $H^{f}_\alpha$-entropies (one instance of equality condition in Lemma~\ref{lemma:DPI}), we conclude the equality condition in Eq.~(\ref{eq:purification_proccesing}). If $\omega_{\inQ \widetilde{E}}$ is not pure, then there exists a purification of it such that $\tr[\widetilde{E}']{\omega_{\inQ \widetilde{E}\widetilde{E}'}}=\omega_{\inQ \widetilde{E}}$. Thus, we have the following:
	\begin{align}
		H^f_\alpha(S \CS | \CP E \widetilde{E})_{\EATchann\left[\omega_{\inQ \widetilde{E}}\right]}&=H^f_\alpha(S \CS | \CP E \widetilde{E})_{\EATchann\circ\tr[\widetilde{E}']{\omega_{\inQ \widetilde{E}\widetilde{E}'}}}\nonumber\\
		&\geq H^f_\alpha(S \CS | \CP E \widetilde{E}\widetilde{E}')_{\EATchann\left[\omega_{\inQ \widetilde{E}\widetilde{E}'}\right]}\nonumber\\
		&=H^f_\alpha(S \CS | \CP E \widetilde{E})_{\EATchann\left[ \pf\left(\omega_{\inQ}\right) \right]} ,
	\end{align}
	where the second line holds by applying Lemma~\ref{lemma:DPI}, and noting that the partial trace $\tr[\widetilde{E}']{\cdot}$ and $\EATchann$ commute; the last line follows by noting that purifications are isometrically equivalent, and the equality condition in Lemma~\ref{lemma:DPI}.
	
	To prove the convexity property, for any two 
	states $\omega_{\inQ}^{(0)},\omega_{\inQ}^{(1)}$, and probability distribution $\{p(i)\}_{i=0}^1$, we construct the following state,
	which is easily verified to be an extension of the ``averaged'' state $\omega_{\inQ} \defvar \sum_{i=0}^1 p(i) \omega_{\inQ}^{(i)}$:
	\begin{align}\label{eq:extendstate}
		\omega_{\inQ\widetilde{E} \widetilde{F}} \defvar \sum_{i=0}^1 p(i) \pf\left(\omega_{\inQ}^{(i)}\right)\otimes\ketbra{i}{i}_{\widetilde{F}}.
	\end{align}
	Now consider the state
	$\pf\left(\omega_{\inQ}\right)$. Since this is a purification of $\omega_{\inQ}$ 
	onto $\widetilde{E}$, there exists a channel $\mathcal{E}: \widetilde{E} \to \widetilde{E}\widetilde{F}$ such that 
	\begin{align}\label{eq:purifystate}
		\mathcal{E}\left[\pf\left(\omega_{\inQ}
		\right)\right]=\omega_{\inQ\widetilde{E} \widetilde{F}}.
	\end{align}
	Therefore, we can bound the function value at the ``averaged'' state $\omega_{\inQ}$ via
		\begin{align}
		H^f_\alpha(S \CS | \CP E \widetilde{E})_{\EATchann\left[ \pf\left(
		\omega_{\inQ}\right) \right]}&\leq H^f_\alpha(S \CS | \CP E \widetilde{E}\widetilde{F})_{\mathcal{E}\circ\EATchann\left[ \pf\left(\omega_{\inQ}\right) \right]}\nonumber\\
		&= H^f_\alpha(S \CS | \CP E \widetilde{E}\widetilde{F})_{\EATchann\circ\mathcal{E}\left[ \pf\left(\omega_{\inQ}\right) \right]}\nonumber\\
		&=\frac{1}{1-\alpha}\log\left(\sum_{i=0}^1p(i)2^{(1-\alpha)\left(H^f_\alpha(S \CS | \CP E \widetilde{E})_{\EATchann\left[ \pf\left(\omega^{(i)}_{\inQ}\right)\right]}\right)}\right)\nonumber\\
		&\leq \sum_{i=0}^1p(i) H^f_\alpha(S \CS | \CP E \widetilde{E})_{\EATchann\left[ \pf\left(\omega^{(i)}_{\inQ}\right)\right]},
	\end{align}
as desired, where the first line is due to the data-processing property in Lemma~\ref{lemma:DPI}, the second line holds since $\mathcal{E}$ and $\EATchann$ commute, the third line follows from Eqs.~\eqref{eq:extendstate}--\eqref{eq:purifystate} and Lemma~\ref{lemma:classmixQES}, and the last line follows by concavity of log function, and noting $\alpha\in(1,\infty)$.
\end{proof}

\begin{remark}\label{remark:convexity}
The only properties of $H^f_\alpha$ that were used in the above proof were data-processing (Lemma~\ref{lemma:DPI}) and the fact that for $\alpha>1$ it is ``convex when conditioned on a labelling register'', as in, for a state classical on $Z$, we have
\begin{align}\label{eq:convexcond}
H^{f}_\alpha(Q \CS|\CP Q' Z)_{\rho} \leq
\sum_{z} \rho(z) H^{f}_\alpha(Q \CS|\CP Q')_{\rho_{|z}} ,
\end{align}
which follows from the conditioning-on-classical-registers property (Lemma~\ref{lemma:classmixQES}). Therefore, the proof immediately generalizes to any other entropy with the same properties, such as the Petz or sandwiched {\Renyi} entropies for $\alpha\geq 1$ (as long as $\alpha$ is within the regime where data-processing holds; see~\cite{Tom16,FL13,BFT17}), which includes the von Neumann entropy.

Note that including the ``labelling'' register $Z$ in the conditioning is vital for~\eqref{eq:convexcond} to hold. If it is omitted, then e.g.~for the case of von Neumann entropy, we know that $H(Q \CS|\CP Q')_{\rho}$ is a \emph{concave} function of $\rho$ and hence the general inequality is instead the reverse direction $H(Q \CS|\CP Q')_{\rho} \geq
\sum_{z} \rho(z) H(Q \CS|\CP Q')_{\rho_{|z}}$; i.e.~the only reason we have the bound~\eqref{eq:convexcond} for that case is that including $Z$ in the conditioning makes it become an equality.
\end{remark}

\subsection{Applications}
\label{subsec:QESapp}

We now describe some applications of Theorem~\ref{th:QES} by itself (without yet reducing it to the simpler form~\eqref{eq:ourbndsketch} outlined previously). Note that the key ideas in this section were already established in~\cite{inprep_weightentropy,ZFK20}; here we just provide an exposition of the main steps for completeness.

\subsubsection{The \texorpdfstring{\cite{inprep_weightentropy}}{[HB24]} approach (variable-length protocols)}
\label{subsubsec:varlength}

We first describe how~\cite{inprep_weightentropy} apply bounds of the form in Theorem~\ref{th:QES} to protocols that produce an output key of variable length. The basic idea is that if the $\CS_1^n$ registers are trivial and we have the ``normalized QES'' bound~\eqref{eq:chainQESnorm} from Theorem~\ref{th:QES} (i.e.~we have $H^{\hat{f}_\mathrm{full}}_\alpha(S_1^n | \CP_1^n E_n)_\rho \geq 0$), then their analysis shows
that a variable-length secret key can usually be obtained by looking at the value $\cP_1^n$ on the publicly announced registers $\CP_1^n$, then computing $\hat{f}_\mathrm{full}(\cP_1^n)$ and hashing to a key length determined by that value. (Strictly speaking this needs a relation between the $H^{\hat{f}_\mathrm{full}}_\alpha$ entropy and its analogue based on $H^\uparrow_\alpha$ that we describe later; see Lemma~\ref{lemma:QEStofweighted}.) This usually suffices for device-dependent security proofs. On the other hand, in DI security proofs, the $\CS_1^n$ registers are usually {not} trivial~\cite{ARV19,LLR+21,ZFK20,TSB+22}. In that case, if we only have a bound on $H^{\hat{f}_\mathrm{full}}_\alpha(S_1^n \CS_1^n | \CP_1^n E_n)_\rho$, then it does not seem straightforward to immediately apply their proof approach to obtain variable-length security. 

Still, one of the results we derived above provides a possible solution --- specifically, Lemma~\ref{lemma:QES3Renyi} lets us convert this to a lower bound on $H^{\hat{f}_\mathrm{full}}_{\alpha'}(S_1^n | \CS_1^n  \CP_1^n E_n)_\rho$ instead (see also Remark~\ref{remark:secretC}), after which we can apply the preceding analysis.\footnote{While the resulting lower bound on $H^{\hat{f}_\mathrm{full}}_{\alpha'}(S_1^n | \CS_1^n  \CP_1^n E_n)_\rho$ is not zero, this is easily addressed by either using Lemma~\ref{lemma:normalize} to ``normalize'' the lower bound to zero, or equivalently just modifying the analysis to hash to a shorter key.}
Unfortunately, this may be somewhat suboptimal as the change in {\Renyi} parameters here is potentially significant (see~\eqref{eq:3Renyirelation} later, although there may be some potential to carefully tune the $\alpha'$ and $\alpha''$ choices to minimize this effect), unlike the change from $\alpha$ to $\widehat{\alpha}$. It might potentially be better to find a more ``direct'' analysis that uses a bound on $H^{\hat{f}_\mathrm{full}}_\alpha(S_1^n \CS_1^n | \CP_1^n E_n)_\rho$ to produce a variable-length secret key. We leave this interesting question for future work, though one caveat is that for protocols such as DIQKD (though potentially not DIRE), the $\CS_1^n$ registers have to be somehow communicated between Alice and Bob to perform parameter estimation, and hence it might be necessary anyway to ``move'' the $\CS_1^n$ registers into the conditioning systems somewhere in the analysis.

Note that in the above procedure, it might appear that computing $\hat{f}_\mathrm{full}(\cP_1^n)$ (more generally $\hat{f}_\mathrm{full}(\cS_1^n \cP_1^n)$) could be challenging, because by the definition~\eqref{eq:fullQESnorm}, the function $\hat{f}_\mathrm{full}$ depends on specifying all the QES-s $f_{|\cS_1^{j-1} \cP_1^{j-1}}$ across different rounds, and also all the $\kappa_{\cS_1^{j-1} \cP_1^{j-1}}$ values (each of which is some minimization problem). However,~\cite{ZFK20} presented the following critical simplification (albeit for the purposes of applying their results for fixed-length protocols): when applying the above procedure in practice, one only needs to compute $\hat{f}_\mathrm{full}(\cS_1^n \cP_1^n)$ for the single value of $\cS_1^n \cP_1^n$ that actually occurred when the protocol was performed. Therefore, one can instead apply the following procedure:
\begin{enumerate}
\item First, implement the protocol without choosing any QES-s beforehand. Look at the value $\cS_1^n \cP_1^n$ that actually occurred on the registers $\CS_1^n \CP_1^n$.
\item For each $j$, apply any arbitrary procedure that looks at the values $\cS_1^{j-1} \cP_1^{j-1}$ that actually occurred, then (deterministically) returns a single choice of QES on $\CS_j \CP_j$. Denote this QES as $f_{|\cS_1^{j-1} \cP_1^{j-1}}$, and compute (or lower bound) the corresponding $\kappa_{\cS_1^{j-1} \cP_1^{j-1}}$ value. 
\item The values in the preceding step suffice to determine (or lower bound) $\hat{f}_\mathrm{full}(\cS_1^n \cP_1^n)$, so compute it and proceed with the privacy amplification.
\end{enumerate}
With the above procedure, for each round we only need to choose the \emph{single} QES $f_{|\cS_1^{j-1} \cP_1^{j-1}}$ (and compute $\kappa_{\cS_1^{j-1} \cP_1^{j-1}}$) corresponding to the value of $\cS_1^{j-1} \cP_1^{j-1}$ that actually occurred in the protocol, i.e.~we do not need to explicitly specify QES-s $f_{|\cS_1^{j-1} \cP_1^{j-1}}$ or compute $\kappa_{\cS_1^{j-1} \cP_1^{j-1}}$ for all the other $\cS_1^{j-1} \cP_1^{j-1}$ values. Note, however, that the Theorem~\ref{th:QES} bound is entirely valid to apply in this context, because this procedure does indeed specify on an abstract level a well-defined QES $f_{|\cS_1^{j-1} \cP_1^{j-1}}$ and value $\kappa_{\cS_1^{j-1} \cP_1^{j-1}}$ for every possible $\cS_1^{j-1} \cP_1^{j-1}$ value, even if they did not actually occur in the protocol. (Qualitatively, we are exploiting the fact that while the abstract \emph{function} $\hat{f}_\mathrm{full}$ depends on all those quantities, in the physical protocol we only need to evaluate the single value $\hat{f}_\mathrm{full}(\cS_1^n \cP_1^n)$, which only depends on a subset of those quantities.)

We have not yet specified how one should design a procedure that chooses the QES-s $f_{|\cS_1^{j-1} \cP_1^{j-1}}$ (while our above description is valid for any such procedure, we would of course want one that produces keyrates that are as high as possible). 
In~\cite{ZFK20}, some methods were developed for this purpose, including the option of ``adaptively'' choosing QES-s $f_{|\cS_1^{j-1} \cP_1^{j-1}}$ based on the past observations $\cS_1^{j-1} \cP_1^{j-1}$ for each $j$, which may be useful for handling time-varying behaviour. However, in this work we describe an alternative procedure in Sec.~\ref{sec:simplify} (see Remark~\ref{remark:optQES}) that may be simpler and easier to apply.
Also, note that if we ignore the option to have them depend on past observations $\cS_1^{j-1} \cP_1^{j-1}$, so all of them are equal to a single QES $f$, then~\cite{inprep_weightentropy} also describes a procedure for choosing $f$. Still, we believe that the approach we present in Remark~\ref{remark:optQES} should perform better in the finite-size regime, though we do not analyze this rigorously; furthermore, our approach also gives the option for ``adaptive'' QES choices as mentioned above, at least in principle. We defer further discussion to that section.

\subsubsection{The \texorpdfstring{\cite{ZFK20}}{[ZFK20]} approach (fixed-length protocols)}

Alternatively,~\cite{ZFK20} described how to apply bounds of the form in Theorem~\ref{th:QES} when we are instead interested in entropy conditioned on an event, as is the case for protocols that simply make an accept/abort decision and produce a key of fixed length if they accept. (More precisely, their approach was based on QEFs; here we verify that it generalizes to $H^{f}_\alpha$-entropies as well.)

To that end, we first state the following lemma, which is a slight generalization of~\cite[Lemma~4.5]{MFSR24}.
\begin{lemma}\label{lemma:extract_D}
	Consider any state $\rho \in \dop{=}(S E D \CS \CP )$ that is classical on $\CS \CP$, and has the form
	\begin{align}\label{eq:extract_D_state}
		\rho_{S E D \CS \CP}=\sum_{\cS \cP \in\Omega}\rho(\cS \cP)\ket{\cS \cP}\bra{\cS \cP}_{\CS \CP}\otimes\rho_{SE|\cS \cP}\otimes\rho_{D|\cS \cP},
	\end{align}
	with $\Omega$ being a subset of the alphabet of registers $\CS \CP$. Then for any $\alpha\in (1,\infty)$, we have the following:
	\begin{align}
		H^\uparrow_\alpha(DS \CS|\CP E)_\rho\leq H^\uparrow_\alpha(S \CS|\CP E)_\rho +\max_{\cS\cP\in\Omega}H_\alpha(D)_{\rho|\cS\cP}, \label{eq:extract_D} \\
		H_\alpha(DS \CS|\CP E)_\rho\leq H_\alpha(S \CS|\CP E)_\rho +\max_{\cS\cP\in\Omega}H_\alpha(D)_{\rho|\cS\cP}. \label{eq:extract_D_unopt}
	\end{align} 
\end{lemma}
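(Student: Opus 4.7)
The plan is to prove the \(H^\uparrow_\alpha\) bound \eqref{eq:extract_D} directly from the definition of the sandwiched R\'enyi divergence, and then deduce \eqref{eq:extract_D_unopt} by repeating the same computation with the supremum in \(H^\uparrow_\alpha\) specialized to the single choice \(\sigma_{\CP E}=\rho_{\CP E}\). Since \(\rho\) is classical on \(\CP\), the pinching channel on \(\CP\) acts as the identity on \(\rho\); by data-processing of \(D_\alpha\) this lets one restrict the supremum in \(H^\uparrow_\alpha(DS\CS|\CP E)_\rho\) without loss of generality to \(\sigma_{\CP E}\) that are also classical on \(\CP\), i.e.~of the form \(\sigma_{\CP E}=\sum_{\cP}\sigma(\cP)\ketbra{\cP}{\cP}\otimes \sigma_{E|\cP}\).

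Next, for such a \(\sigma_{\CP E}\), both \(\rho_{DSE\CS\CP}\) and \(\id_{DS\CS}\otimes\sigma_{\CP E}\) are block-diagonal in the \(\CS\CP\) basis, so \(D_\alpha(\rho_{DSE\CS\CP}\Vert\id_{DS\CS}\otimes\sigma_{\CP E})\) reduces to \(\tfrac{1}{\alpha-1}\log\) of a sum over blocks. Within each \(\cS\cP\)-block the assumed product structure \(\rho_{DSE|\cS\cP}=\rho_{D|\cS\cP}\otimes \rho_{SE|\cS\cP}\), together with the fact that \(\id_{DS\CS}\otimes\sigma_{\CP E}\) carries \(\id_D\) on the \(D\) factor, lets the \(D\)-contribution factor cleanly out as a multiplicative \(\Tr[\rho_{D|\cS\cP}^\alpha]=2^{(1-\alpha)H_\alpha(D)_{\rho|\cS\cP}}\), while the remaining \(SE\)-part of the block trace evaluates to \(2^{(\alpha-1)D_\alpha(\rho_{SE|\cS\cP}\Vert\id_S\otimes\sigma_{E|\cP})}\). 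Collecting everything, one obtains
\begin{align*}
D_\alpha(\rho_{DSE\CS\CP}\Vert\id_{DS\CS}\otimes\sigma_{\CP E})
= \frac{1}{\alpha-1}\log\sum_{\cS\cP\in\Omega} \rho(\cS\cP)^\alpha\, \sigma(\cP)^{1-\alpha}\,2^{(\alpha-1)\left[D_\alpha(\rho_{SE|\cS\cP}\Vert\id_S\otimes \sigma_{E|\cP}) - H_\alpha(D)_{\rho|\cS\cP}\right]}.
\end{align*}

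Finally, since \(\alpha>1\), the bracketed exponent is monotone decreasing in \(H_\alpha(D)_{\rho|\cS\cP}\); replacing that term throughout by \(M\defvar \max_{\cS\cP\in\Omega}H_\alpha(D)_{\rho|\cS\cP}\) therefore lower bounds every summand, and the common factor \(2^{-(\alpha-1)M}\) can be pulled outside the sum. The leftover sum is exactly what the same block-diagonal calculation yields for \(D_\alpha(\rho_{SE\CS\CP}\Vert\id_{S\CS}\otimes\sigma_{\CP E})\), giving
\[
D_\alpha(\rho_{DSE\CS\CP}\Vert\id_{DS\CS}\otimes\sigma_{\CP E}) \;\geq\; D_\alpha(\rho_{SE\CS\CP}\Vert\id_{S\CS}\otimes\sigma_{\CP E}) - M.
\]
Negating this and taking \(\sup_{\sigma_{\CP E}}\) yields \eqref{eq:extract_D}, while negating after fixing \(\sigma_{\CP E}=\rho_{\CP E}\) yields \eqref{eq:extract_D_unopt}. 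The proof is mostly bookkeeping; the two care-points are to track the direction of the monotonicity in \(H_\alpha(D)_{\rho|\cS\cP}\) (which depends on the sign of \(\alpha-1\), and is where the hypothesis \(\alpha>1\) enters), and to justify the reduction of the supremum to classical-on-\(\CP\) \(\sigma_{\CP E}\); neither is a serious obstacle.
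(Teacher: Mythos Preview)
Your proposal is correct and follows essentially the same route as the paper's proof: both exploit the block-diagonal structure over $\CS\CP$ (the paper via Fact~\ref{fact:classmix}), factor out the $D$-contribution using the product form $\rho_{DSE|\cS\cP}=\rho_{D|\cS\cP}\otimes\rho_{SE|\cS\cP}$, and then bound termwise by $\max_{\cS\cP\in\Omega}H_\alpha(D)_{\rho|\cS\cP}$. Your explicit justification for restricting the supremum to $\sigma_{\CP E}$ classical on $\CP$ is a small clarifying addition that the paper leaves implicit in its appeal to Fact~\ref{fact:classmix}.
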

\begin{proof}
	Let $\sigma_{\CP E}$ be the optimizer in Definition~\ref{def:sandwiched entropy} for the conditional entropy in the LHS of Eq.~(\ref{eq:extract_D}), i.e.,
	\begin{align}
		H^\uparrow_\alpha(D S \CS| \CP E)_\rho=-D_\alpha(\rho_{DSE\CS\CP}\Vert\id_{DS\CS}\otimes\sigma_{\CP E}).
	\end{align}
	From Fact~\ref{fact:classmix}, we have
	\begin{align}
		H^\uparrow_\alpha(D S \CS| \CP E)_\rho&=\frac{1}{1-\alpha}\log\left(\sum_{\cS\cP\in\Omega}\rho(\cS\cP)^\alpha\sigma(\cP)^{1-\alpha}2^{(\alpha-1)D_\alpha(\rho_{DSE|\cS \cP}\Vert\id_{DS}\otimes\sigma_{E|\cP})}\right)\nonumber\\
		&=\frac{1}{1-\alpha}\log\left(\sum_{\cS\cP\in\Omega}\rho(\cS\cP)^\alpha\sigma(\cP)^{1-\alpha}2^{(\alpha-1)\left(D_\alpha(\rho_{D |\cS \cP} \Vert \id_D) + D_\alpha(\rho_{SE|\cS \cP}\Vert\id_{S}\otimes\sigma_{E|\cP})\right)}\right)\nonumber\\
		&\leq \frac{1}{1-\alpha}\log\left(\min_{\cS \cP \in\Omega}\left[2^{(\alpha-1)D_\alpha(\rho_{D |\cS \cP} \Vert \id_D)}\right]\sum_{\cS\cP\in\Omega}\rho(\cS\cP)^\alpha\sigma(\cP)^{1-\alpha}2^{(\alpha-1) D_\alpha(\rho_{SE|\cS \cP}\Vert\id_{S}\otimes\sigma_{E| \cP})}\right)\nonumber\\
		&\leq \max_{\cS \cP\in\Omega} H_\alpha(D)_{\rho| \cS \cP} + H^\uparrow_\alpha(S \CS | \CP E)_\rho,
	\end{align}
	where the second line follows from the fact that $\rho_{DSE|\cS \cP}=\rho_{D|\cS \cP}\otimes\rho_{SE|\cS \cP}$, and the last line follows from Definition~\ref{def:sandwiched entropy}, and concavity of $\log$ function. This yields the first bound~\eqref{eq:extract_D}. To get the second bound~\eqref{eq:extract_D_unopt}, simply repeat the above computations with $\sigma$ replaced by $\rho$ itself.
\end{proof}
With the above lemma in hand, we have the following corollary.
\begin{corollary}\label{cor:QEScond}
Consider the same conditions and notation as in Theorem~\ref{th:QES}. Suppose furthermore that the state $\rho$ is of the form\footnote{We write this condition in this form to highlight that $\Omega$ does not have to be an event defined entirely on the state $\rho$ itself, which allows more flexibility in applications. This is also why we use a separate symbol $\widetilde{\Omega}$ to denote the set of values $\cS_1^n \cP_1^n$ such that $\rho_{|\Omega}(\cS_1^n \cP_1^n) > 0$.
} $p_\Omega \rho_{|\Omega} + (1-p_\Omega) \rho_{|\overline{\Omega}}$ for some $p_\Omega \in (0,1]$ and normalized states $\rho_{|\Omega},\rho_{|\overline{\Omega}}$, and let 
$\widetilde{\Omega}$ denote the set of all values $\cS_1^n \cP_1^n$ that have nonzero probability in $\rho_{|\Omega}$. Then 
\begin{align}\label{eq:QEScond}
H^\uparrow_\alpha(S_1^n \CS_1^n | \CP_1^n E_n)_{\rho_{|\Omega}} &\geq
\min_{\cS_1^n \cP_1^n \in \widetilde{\Omega}} \hat{f}_\mathrm{full}(\cS_1^n \cP_1^n)
- \frac{\alpha}{\alpha-1} \log\frac{1}{p_\Omega}\nonumber\\
&\geq \left(\sum_j \min_{\cS_1^{j-1} \cP_1^{j-1}}\kappa_{\cS_1^{j-1} \cP_1^{j-1}}\right) 
+ \min_{\cS_1^n \cP_1^n \in \widetilde{\Omega}} f_\mathrm{full}(\cS_1^n \cP_1^n) 
- \frac{\alpha}{\alpha-1} \log\frac{1}{p_\Omega} 
\end{align}
\end{corollary}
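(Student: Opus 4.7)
The plan is to reduce the QES-entropy statement~\eqref{eq:chainQESnorm} from Theorem~\ref{th:QES} to a statement about standard sandwiched \Renyi{} entropies by extending $\rho$ with an ancilla register $D$ whose conditional entropy ``encodes'' the values of $\hat{f}_\mathrm{full}$. This lets me invoke two well-established tools: the standard conditioning-on-event inequality for $H^\uparrow_\alpha$, and Lemma~\ref{lemma:extract_D}, which lets us strip off such an ancilla on the conditioned state.

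Concretely, I would first apply Lemma~\ref{lemma:createD} to $\hat{f}_\mathrm{full}$ (viewed as a QES on $\CS_1^n \CP_1^n$), with any constant $M$ strictly exceeding $\max_{\cS_1^n \cP_1^n} \hat{f}_\mathrm{full}(\cS_1^n \cP_1^n)$. This produces a read-and-prepare channel $\mathcal{D}$ that extends $\rho$ with a classical $D$ register satisfying $H_\alpha(D)_{\mathcal{D}[\rho]_{|\cS_1^n \cP_1^n}} = M - \hat{f}_\mathrm{full}(\cS_1^n \cP_1^n)$ and
\begin{equation*}
H_\alpha(D S_1^n \CS_1^n | \CP_1^n E_n)_{\mathcal{D}[\rho]} = M + H^{\hat{f}_\mathrm{full}}_\alpha(S_1^n \CS_1^n | \CP_1^n E_n)_{\rho} \geq M,
\end{equation*}
where the inequality is exactly~\eqref{eq:chainQESnorm}. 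Since $\mathcal{D}$ commutes with the classical conditioning on $\Omega$, we have $\mathcal{D}[\rho]_{|\Omega} = \mathcal{D}[\rho_{|\Omega}]$, and this extended conditioned state has exactly the factorized form~\eqref{eq:extract_D_state} on the event set $\widetilde{\Omega}$, because the $D$ register produced by $\mathcal{D}$ is independent of the remaining systems once $\CS_1^n \CP_1^n$ is fixed.

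Next, I would combine this with the standard ``conditioning-on-event'' inequality for sandwiched \Renyi{} entropy at $\alpha>1$, namely $H^\uparrow_\alpha(A|B)_{\rho_{|\Omega}} \geq H^\uparrow_\alpha(A|B)_{\rho} - \frac{\alpha}{\alpha-1}\log\frac{1}{p_\Omega}$. This is immediate from the definition of $D_\alpha$: because $\rho_{\land\Omega} \leq \rho$ with \emph{classically orthogonal} supports, one has $\tr{\bigl(\sigma^{(1-\alpha)/2\alpha}\rho_{\land\Omega}\sigma^{(1-\alpha)/2\alpha}\bigr)^{\alpha}} \leq \tr{\bigl(\sigma^{(1-\alpha)/2\alpha}\rho\,\sigma^{(1-\alpha)/2\alpha}\bigr)^{\alpha}}$, after which the usual bookkeeping on the $p_\Omega^\alpha$ rescaling relating $\rho_{\land\Omega}$ to $\rho_{|\Omega}$ and the normalization factor in the denominator of $D_\alpha$ produces the $\tfrac{\alpha}{\alpha-1}\log\tfrac{1}{p_\Omega}$ correction. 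Combining this with $H^\uparrow_\alpha \geq H_\alpha$ and the lower bound above yields
\begin{equation*}
H^\uparrow_\alpha(D S_1^n \CS_1^n | \CP_1^n E_n)_{\mathcal{D}[\rho]_{|\Omega}} \;\geq\; M - \frac{\alpha}{\alpha-1}\log\frac{1}{p_\Omega}.
\end{equation*}

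Finally, I would apply Lemma~\ref{lemma:extract_D} to $\mathcal{D}[\rho]_{|\Omega}$ with event set $\widetilde{\Omega}$; the maximum of $H_\alpha(D)_{\mathcal{D}[\rho]_{|\cS_1^n \cP_1^n}}$ over $\widetilde{\Omega}$ equals $M - \min_{\cS_1^n \cP_1^n \in \widetilde{\Omega}} \hat{f}_\mathrm{full}(\cS_1^n \cP_1^n)$, so rearranging causes the arbitrary constant $M$ to cancel and produces the first inequality in~\eqref{eq:QEScond}. The second inequality then follows immediately by expanding $\hat{f}_\mathrm{full}(\cS_1^n\cP_1^n) = f_\mathrm{full}(\cS_1^n\cP_1^n) + \sum_j \kappa_{\cS_1^{j-1}\cP_1^{j-1}}$ and using the elementary bound $\min_{\cS_1^n\cP_1^n \in \widetilde{\Omega}} \sum_j \kappa_{\cS_1^{j-1}\cP_1^{j-1}} \geq \sum_j \min_{\cS_1^{j-1}\cP_1^{j-1}} \kappa_{\cS_1^{j-1}\cP_1^{j-1}}$. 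I do not expect any serious obstacles, but the only subtlety worth double-checking is the conditioning-on-event inequality for $H^\uparrow_\alpha$, which is standard but does rely on the classical nature of $\Omega$ for the orthogonal-support step.
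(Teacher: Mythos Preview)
Your proposal is correct and follows essentially the same route as the paper's proof: extend with the $D$ register via Lemma~\ref{lemma:createD}, invoke the conditioning-on-$\Omega$ inequality (the paper cites \cite[Lemma~B.5]{DFR20} together with $H^\uparrow_\alpha\geq H_\alpha$), and strip off $D$ with Lemma~\ref{lemma:extract_D}. One small clarification on the subtlety you flag at the end: orthogonal supports are not actually needed for the trace inequality you write, since $X\mapsto\tr{X^\alpha}$ is monotone in the L\"owner order for $\alpha\geq 1$ and $p_\Omega\,\rho_{|\Omega}\leq\rho$ holds automatically from the convex decomposition --- this matters because the corollary explicitly allows $\Omega$ to be defined only on an extension of $\rho$, where $\rho_{|\Omega}$ and $\rho_{|\overline{\Omega}}$ need not have orthogonal supports on the visible registers.
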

\begin{proof}
	Let $\mathcal{D}: \CS_1^n\CP_1^n\rightarrow \CS_1^n\CP_1^n D$ be a read-and-prepare channel in the sense of Lemma~\ref{lemma:createD} for the QES $\hat{f}_{\mathrm{full}(\cS_1^n \cP_1^n)}$ stated in the theorem conditions, i.e.~it reads the registers $\CS_1^n \CP_1^n$ and generates a corresponding state on register $D$. By linearity, we can write $\mathcal{D}[\rho] = p_\Omega \mathcal{D}[\rho_{|\Omega}] + (1-p_\Omega) \mathcal{D}[\rho_{|\overline{\Omega}}]$ (in this proof, for clarity we explicitly write out the $\mathcal{D}$ channel when extending the states with it). Note that by construction of the $\mathcal{D}$ channel, the $\mathcal{D}[\rho_{|\Omega}]$ term is of the form~(\ref{eq:extract_D_state}), with the sum over $\cS\cP$ being restricted to $\widetilde{\Omega}$ since that was the case in the original state $\rho_{|\Omega}$. Hence we are allowed to apply Lemma~\ref{lemma:extract_D}, obtaining
	\begin{align}
			H^\uparrow_\alpha(S_1^n \CS_1^n | \CP_1^n E_n)_{\mathcal{D}[\rho_{|\Omega}]} &\geq H^\uparrow_\alpha(D S_1^n \CS_1^n | \CP_1^n E_n)_{\mathcal{D}[\rho_{|\Omega}]} - \max_{\cS_1^n \cP_1^n \in \widetilde{\Omega}} H_\alpha(D)_{\mathcal{D}[\rho]_{| \cS_1^n \cP_1^n}}\nonumber\\		
			&\geq H_\alpha(D S_1^n \CS_1^n | \CP_1^n E_n)_{\mathcal{D}[\rho]} - \max_{\cS_1^n \cP_1^n \in \widetilde{\Omega}} H_\alpha(D)_{\mathcal{D}[\rho]_{| \cS_1^n \cP_1^n}}-\frac{\alpha}{\alpha-1}\log\frac{1}{p_\Omega}\nonumber\\
			& = M + H_\alpha^{\hat{f}_{\mathrm{full}}} (S_1^n \CS_1^n|\CP_1^n E_n)_\rho - \max_{\cS_1^n \cP_1^n \in \widetilde{\Omega}} H_\alpha(D)_{\mathcal{D}[\rho]_{| \cS_1^n \cP_1^n}}-\frac{\alpha}{\alpha-1}\log\frac{1}{p_\Omega}\nonumber\\
			&= H_\alpha^{\hat{f}_{\mathrm{full}}} (S_1^n \CS_1^n|\CP_1^n E_n)_\rho + \min_{\cS_1^n \cP_1^n \in \widetilde{\Omega}} \hat{f}_{\mathrm{full}}(\cS_1^n \cP_1^n)-\frac{\alpha}{\alpha-1}\log\frac{1}{p_\Omega}\nonumber\\
			&\geq \min_{\cS_1^n \cP_1^n \in \widetilde{\Omega}} \hat{f}_{\mathrm{full}}(\cS_1^n \cP_1^n)-\frac{\alpha}{\alpha-1}\log\frac{1}{p_\Omega}\nonumber\\
			&= \min_{\cS_1^n \cP_1^n \in \widetilde{\Omega}} \sum_{j=1}^n \left(f_{| \cS_1^{j-1} \cP_j^{j-1} }(\cS_j \cP_j) + \kappa_{\cS_1^{j-1} \cP_1^{j-1}} \right) 
			- \frac{\alpha}{\alpha-1} \log\frac{1}{p_\Omega}\nonumber\\
			&\geq \left(\sum_{j=1}^n \min_{\cS_1^{j-1} \cP_1^{j-1}} \kappa_{\cS_1^{j-1} \cP_1^{j-1}} \right) + \min_{\cS_1^n \cP_1^n \in \widetilde{\Omega}}f_{\mathrm{full}}(\cS_1^n \cP_1^n) - \frac{\alpha}{\alpha-1} \log\frac{1}{p_\Omega},					
	\end{align}	
	where the first line follows from Lemma~\ref{lemma:extract_D}, the second line is by applying~\cite[Lemma~B.5]{DFR20} along with the fact that $H_\alpha^\uparrow\geq H_\alpha$, the third line is due to Eq.~(\ref{eq:createD}), the fourth line is the result of Eq.~(\ref{eq:D_entropy}), the fifth follows from Eq.~(\ref{eq:chainQESnorm}), and the sixth line is just the substitution of $\hat{f}_{\mathrm{full}}(\cS_1^n \cP_1^n)$ from Eq.~(\ref{eq:fullQESnorm}).
\end{proof}

To apply the above corollary in practice, we can follow the approach presented in~\cite{ZFK20}: compute $\hat{f}_\mathrm{full}(\cS_1^n \cP_1^n)$ in a protocol using the procedure described above in Sec.~\ref{subsubsec:varlength}, and define the conditioning event $\Omega$ to be that $\hat{f}_\mathrm{full}(\cS_1^n \cP_1^n) \geq K$ for some predetermined threshold value $K$. In that case, the above corollary tells us that $H^\uparrow_\alpha(S_1^n \CS_1^n | \CP_1^n E_n)_{\rho_{|\Omega}} \geq
K - \frac{\alpha}{\alpha-1}\log\frac{1}{p_\Omega}$, and so we have a lower bound on the {\Renyi} entropy of the conditional state, to use in subsequent proof steps (for example, showing we can produce a secret key of fixed length). The question of finding a ``reasonable'' choice of $K$ (such that the protocol does not just almost always abort, even under the honest behaviour) is somewhat subtle, and we defer to~\cite{ZFK20} for in-depth discussion. Here, we simply note that the analysis in the next section suggests that given a ``good'' choice of QES-s, when the honest behaviour is IID we should pick $K$ to be somewhat less than $n$ times the honest single-round {\Renyi} entropy; see the discussions in Sec.~\ref{subsec:simpleresults}--\ref{subsec:GREATproof}.

Thus far, we have been deferring the discussion of how to choose ``good'' QES-s. In the next section, we finally turn to discussing this point in great detail, which is the next main contribution of our work. A remarkable consequence of our analysis, however, is that for some scenarios (basically, where all the channels are ``basically the same'') we can obtain an intuitive and very tight lower bound on $H^\uparrow_\alpha(S_1^n \CS_1^n | \CP_1^n E_n)_{\rho_{|\Omega}}$ (namely, Theorem~\ref{th:GREAT} below) that does not involve any explicit choice of QES, in that the bound is already implicitly using the best such choice. Note that even outside of those scenarios (e.g.~if the channels have time-varying behaviour, so the optimal QES choice in each round might be different, as discussed previously), our methods may still be useful; see Remark~\ref{remark:optQES}. We now present these results.

\section{Simplified versions}
\label{sec:simplify}

In some applications of the GEAT to fixed-length protocols, the channels in the sequence can be thought of as being ``basically the same'' in some sense (or at least they can be ``relaxed'' to a common channel).
In such circumstances, we can greatly simplify Corollary~\ref{cor:QEScond}, as we shall now describe. Furthermore, the overall approach we describe here should also yield some useful (albeit more elaborate) techniques to handle variable-length protocols and/or scenarios where the channels exhibit time-varying behaviour, by combining it with the results from the previous section --- we describe further details in Remark~\ref{remark:optQES} later. 

\begin{remark}\label{remark:domain}
Within this section, we explicitly specify the domains of the summations over $\cS\cP$, to ensure that all terms in the calculations are well-defined --- in particular, note that all terms of the form $D_{\alpha}\left(\rho_{QQ' \land \cS \cP} \middle\Vert \id_Q\otimes\rho_{Q' \land \cP} \right)$ that appear in our later formulas are well-defined, because we restrict the sums to terms where the first argument has nonzero trace. In principle, as noted below Definition~\ref{def:QES}, we could instead assign arbitrary (finite) values to all $D_{\alpha}\left(\rho_{QQ' \land \cS \cP} \middle\Vert \id_Q\otimes\rho_{Q' \land \cP} \right)$ terms with $\rho(\cS\cP)=0$, and still carry out basically the same calculations, which would yield similar results except with the summations over the whole of $\alphCS\times\alphCP$ instead --- those results would still be valid due to other properties of the formulas; for instance, in the formula~\eqref{eq:GandGstar} we present later, the $D\left(\bsym{\lambda} \middle\Vert \bsym{\rho}_{\CS\CP}\right)$ term ensures that the value of the sum over $\cS\cP$ only affects the value of $G_{\alpha,\rho}^*(\bsym{\lambda})$ in cases where $\lambda(\cS\cP)=0$ whenever $\rho(\cS\cP)=0$. However, we find it cleaner to simply restrict the summation domain rather than constantly clarify these technicalities.
\end{remark}

We first focus on presenting our main results in Sec.~\ref{subsec:simpleresults} and discussing their applications in Sec.~\ref{subsec:intuition}, with the proofs being deferred to Sec.~\ref{subsec:GREATproof} and Appendix~\ref{app:someproofs}.

\subsection{Main results}
\label{subsec:simpleresults}

To begin, we note that in the subsequent discussions, we will often have to discuss the set of all \emph{mixtures} of possible output states from some set of channels, rather than just the output states of any individual channel. It is hence convenient to introduce a concise term for this concept, as follows. Note that the embedding condition is mainly just a technicality to ensure there is a well-defined notion of taking mixtures of output states from different channels.
\begin{definition}\label{def:convrange} 
(Convex range) Let $\{\mathcal{E}_j\}_{j=1}^n$ be any set of channels $\mathcal{E}_j:\inQ_j \to \outQ'_j \outQ''_j \dots $ (for some finite set of output registers $\outQ'_j, \outQ''_j, \dots $), such that the registers $\outQ'_j$ (resp.~$\outQ''_j, \dots$) can all be embedded in a common register $\outQ'$ (resp.~$\outQ'', \dots$). The \term{convex range} of these channels\footnote{Strictly speaking, the convex range is not fully defined by only the channels $\mathcal{E}_j$, since one also has to specify the various embeddings. However, all of our subsequent results hold for any choice of embeddings, so we will usually not specify this explicitly.}  simply refers to the set of all mixtures of states of the form $\mathcal{E}_j[\omega]$ for some $j$ and some $\omega\in\dop{=}(\inQ_j)$, interpreting the mixtures as taking place in the embedding spaces.
\end{definition} 

In scenarios where all the channels $\mathcal{E}_j$ are isomorphic to each other (for instance, as was the case for the channels used to describe device-dependent EB-QKD in~\cite{DFR20}), the convex range of the channels can simply be taken to be the range of any one of the channels, and is hence simple to describe. More generally, in many applications of entropy accumulation results, the single-round analysis is often ``indifferent'' to which specific channel in the sequence is being considered, treating them all in ``basically the same'' way.\footnote{The potential exception here would be for protocols where one has \emph{a priori} knowledge that some rounds would behave differently from others, for instance in satellite QKD where the photon losses are likely to vary over the course of the protocol. In that case, there are potential benefits to be gained from the more general ``adaptive'' analysis described in the previous section, though as mentioned, such an analysis would be more elaborate than the results we present in this section.} 
In many such cases, the single-round analysis already implicitly analyzes the convex range of all the channels, since in each round an adversary could anyway be applying some mixture of attack strategies. Hence most existing methods for analyzing single rounds in a protocol should also immediately generalize to the convex range; for instance, this is indeed the case for the analysis of device-dependent PM-QKD and EB-QKD using the GEAT in~\cite{MR23}, as well as the analysis of device-independent QKD in~\cite{ARV19}. (Previous versions of this manuscript were based on the concept of a ``rate-bounding channel'', which is arguably slightly more flexible but a bit more technical --- for completeness, we discuss this further in Appendix~\ref{app:directsum}, together with a technical error in a claim regarding such channels in a previous version.)

We now turn to the main result of this section. Note that from a technical standpoint, our current proof of this result explicitly relies on our background assumption in this work that all registers are finite-dimensional\footnote{This does not affect its applicability to DI security proofs (at least, assuming that the dimensions are finite but unbounded), because in such security proofs we can suppose that $\EATchann$ acts on systems of finite but unknown dimension, and say that the bounds we derive are independent of this dimension and hence valid for any such $\EATchann$~\cite{TSB+22}.} (in particular, $\inQ$), but for completeness we also discuss in Remark~\ref{remark:duality} of Appendix~\ref{app:duality} an alternative formulation that removes this assumption in part of the proof, by imposing additional conditions on $S_\Omega$ (which should be easily satisfied in practice).

\begin{theorem}\label{th:GREAT}
Let $\rho$ be a state generated by a sequence of GEATT channels $\{\EATchann_j\}_{j=1}^n$ (Definition~\ref{def:GEATTchann}).
Take any $\alpha \in (1,2)$ 
and let $\widehat{\alpha}=1/(2-\alpha)$.
Suppose furthermore that $\rho = p_\Omega \rho_{|\Omega} + (1-p_\Omega) \rho_{|\overline{\Omega}}$ for some $p_\Omega \in (0,1]$ and normalized states $\rho_{|\Omega},\rho_{|\overline{\Omega}}$. 
Let $S_\Omega$ be a convex set of probability distributions on the alphabet $\alphCS \times \alphCP$, such that for all $\cS_1^n \cP_1^n$ with nonzero probability in $\rho_{|\Omega}$, the frequency distribution $\freq_{\cS_1^n \cP_1^n}$ lies in $S_\Omega$.
Then letting $\bsym{\sigma}_{\CS\CP}$ denote the distribution on $\CS\CP$ induced by any state $\sigma_{\CS\CP}$, we have 
\begin{align}\label{eq:GREAT}
\begin{gathered}
H^\uparrow_\alpha(S_1^n \CS_1^n | \CP_1^n E_n)_{\rho_{|\Omega}} \geq n h_{\widehat{\alpha}} 
- \frac{\alpha}{\alpha-1} \log\frac{1}{p_\Omega},\\
\text{where}\quad h_{\widehat{\alpha}} = \inf_{\mbf{q} \in S_\Omega} \inf_{\nu\in\Sigma} \left( \frac{1}{\widehat{\alpha}-1}D\left(\mbf{q} \middle\Vert \bsym{\nu}_{\CS\CP}\right)-\sum_{\cS\cP\in \supp(\bsym{\nu}_{\CS\CP})}q(\cS\cP)D_{\widehat{\alpha}}\left(\nu_{SE\widetilde{E} \land \cS\cP} \middle\Vert \id_S\otimes\nu_{E\widetilde{E} \land \cP} \right) \right),
\end{gathered}
\end{align}
where $\Sigma$ denotes the convex range (Definition~\ref{def:convrange}) of the channels $\{ \EATchann_j \otimes \idmap_{\widetilde{E}} \}_{j=1}^n$, with $\widetilde{E}$ being a register of large enough dimension to serve as a purifying register for any of the $R_j E_j$ registers.

Furthermore, if there exists some channel $\EATchann: \inQ \to S E \CS \CP$ such that $\Sigma$ is equal to the set of output states of $\EATchann \otimes \idmap_{\widetilde{E}}$ (i.e.~$\left\{ \EATchann\left[\omega_{\inQ \widetilde{E}}\right] \;\middle|\; \omega \in \dop{=}(\inQ \widetilde{E}) \right\}$), then for any purifying function $\pf$ for $\inQ$ onto $\widetilde{E}$ (Definition~\ref{def:purify}), we have
\begin{align}\label{eq:GREATconvex}
\begin{gathered}
h_{\widehat{\alpha}} = 
\inf_{\mbf{q} \in S_\Omega} \inf_{\omega \in \dop{=}(\inQ)} \left( \frac{1}{\widehat{\alpha}-1}D\left(\mbf{q} \middle\Vert \bsym{\nu}^\omega_{\CS\CP}\right)-\sum_{\cS\cP\in\supp\left(\bsym{\nu}^\omega_{\CS\CP}\right)}q(\cS\cP)D_{\widehat{\alpha}}\left(\nu^\omega_{SE\widetilde{E} \land \cS\cP} \middle\Vert \id_S\otimes\nu^\omega_{E\widetilde{E} \land \cP} \right) \right), \\
\text{where}\quad \nu^\omega \defvar \EATchann\left[ \pf\left(\omega_{\inQ}\right) \right] ,
\end{gathered}
\end{align}
and the objective function in the above infimum is jointly convex in $\omega$ and $\mbf{q}$.
\end{theorem}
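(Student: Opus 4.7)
My plan is to extract Theorem~\ref{th:GREAT} from Corollary~\ref{cor:QEScond} by choosing a single QES $f$ applied uniformly across all rounds (i.e.~$f_{|\cS_1^{j-1}\cP_1^{j-1}}\equiv f$), optimizing over $f$, and then swapping the resulting supremum with the infimum over states/distributions via a minimax argument. The uniform choice of $f$ yields $f_{\mathrm{full}}(\cS_1^n\cP_1^n) = n\sum_{\cS\cP}\freq_{\cS_1^n\cP_1^n}(\cS\cP)f(\cS\cP)$, and collapses every $\kappa_{\cS_1^{j-1}\cP_1^{j-1}}$ to the same single-round infimum $\inf_{\nu\in\Sigma_j}H^{f}_{\widehat{\alpha}}(S_j\CS_j|\CP_j E_j\widetilde{E})_\nu$, which the rate-bounding property (Definition~\ref{def:ratebndchann}) bounds below by $\inf_{\nu\in\Sigma}H^{f}_{\widehat{\alpha}}(S\CS|\CP E\widetilde{E})_\nu$. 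Meanwhile, the hypothesis on $S_\Omega$ gives $\min_{\cS_1^n\cP_1^n\in\widetilde{\Omega}}\sum_{\cS\cP}\freq_{\cS_1^n\cP_1^n}(\cS\cP)f(\cS\cP)\geq\inf_{\mbf{q}\in S_\Omega}\sum_{\cS\cP}q(\cS\cP)f(\cS\cP)$. Plugging these into Corollary~\ref{cor:QEScond}, and invoking Lemma~\ref{lemma:convexity} to reduce the infimum over $\Sigma$ to one over pure inputs $\nu^\omega\defvar\EATchann[\pf(\omega_\inQ)]$, I obtain for every QES $f$:
\begin{equation*}
H^\uparrow_\alpha(S_1^n\CS_1^n|\CP_1^n E_n)_{\rho_{|\Omega}} + \tfrac{\alpha}{\alpha-1}\log\tfrac{1}{p_\Omega} \;\geq\; n\inf_{\omega_\inQ,\mbf{q}}F(f;\omega_\inQ,\mbf{q}), \qquad F\defvar H^{f}_{\widehat{\alpha}}(S\CS|\CP E\widetilde{E})_{\nu^\omega} + \sum_{\cS\cP}q(\cS\cP)f(\cS\cP).
\end{equation*}
Taking the supremum over $f$ on the right then leaves $n\sup_f\inf_{\omega_\inQ,\mbf{q}}F$ on the right-hand side.

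The main obstacle is swapping the sup and the inf so as to match the form in~\eqref{eq:GREAT}. For this I would invoke Sion's minimax theorem. Concavity and continuity of $F$ in $f$ follow from writing $-H^{f}_{\widehat{\alpha}}$ as a log-sum-exp in the affine argument $f+\tilde D$, where $\tilde D(\cS\cP)\defvar D_{\widehat{\alpha}}(\nu^\omega_{SE\widetilde{E}\land\cS\cP}\|\id_S\otimes\nu^\omega_{E\widetilde{E}\land\cP})$; convexity and continuity in $(\omega_\inQ,\mbf{q})$ come from Lemma~\ref{lemma:convexity} (which applies because $\widehat{\alpha}>1$) together with linearity in $\mbf{q}$; and $\dop{=}(\inQ)\times\overline{S_\Omega}$ is convex and compact since $\inQ$ is finite-dimensional (replacing $S_\Omega$ by its closure is harmless thanks to lower semicontinuity of the objective). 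Sion's theorem therefore yields $\sup_f\inf_{\omega,\mbf{q}}F = \inf_{\omega,\mbf{q}}\sup_f F$.

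With the swap in place, evaluating the inner supremum is routine: substituting $g\defvar f+\tilde D$ and applying the Donsker--Varadhan variational representation of $D(\mbf{q}\Vert\bsym{\nu}^\omega_{\CS\CP})$ (equivalently, differentiating and solving a one-line Lagrangian) gives
\begin{equation*}
\sup_f F(f;\omega_\inQ,\mbf{q}) = \tfrac{1}{\widehat{\alpha}-1}D\bigl(\mbf{q}\bigm\Vert\bsym{\nu}^\omega_{\CS\CP}\bigr) - \sum_{\cS\cP\in\supp(\bsym{\nu}^\omega_{\CS\CP})}q(\cS\cP)\,\tilde D(\cS\cP),
\end{equation*}
attained at $g^*(\cS\cP)=\tfrac{1}{\widehat{\alpha}-1}\log\bigl(q(\cS\cP)/\nu^\omega(\cS\cP)\bigr)$ on the support of $\bsym{\nu}^\omega_{\CS\CP}$ (with divergent terms matching $D(\mbf{q}\Vert\bsym{\nu}^\omega_{\CS\CP})=\infty$). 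This is exactly the integrand appearing in~\eqref{eq:GREAT} and~\eqref{eq:GREATconvex}, establishing the main bound.

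Finally, the joint convexity claim is immediate from the same representation: the objective of~\eqref{eq:GREATconvex} equals $\sup_f F$, and for each fixed $f$ the function $F$ is convex in $\omega_\inQ$ (by Lemma~\ref{lemma:convexity}) and linear in $\mbf{q}$, hence jointly convex; a pointwise supremum of jointly convex functions is jointly convex, so joint convexity of the objective follows.
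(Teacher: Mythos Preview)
Your proposal is correct and follows essentially the same route as the paper. The paper packages the steps slightly differently---it first isolates the ``uniform-QES'' bound as Corollary~\ref{cor:GREATfixedf}, then records the Legendre-conjugate computation as a standalone Lemma~\ref{lemma:Legendre_conjugate}, and finally phrases the $\sup_f$/$\inf_{\omega,\mbf{q}}$ exchange as a Lagrange strong-duality statement (Lemma~\ref{lemma:duality}) verified via the Clark--Duffin condition---but it explicitly notes Sion's minimax theorem as an equivalent way to obtain the swap, which is exactly what you do. Your Donsker--Varadhan evaluation of $\sup_f F$ is the same computation as the paper's conjugate formula~\eqref{eq:GandGstar}, and your joint-convexity argument (supremum of jointly convex functions) is identical to the paper's. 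One minor point: you assert continuity of $F$ in $\omega$ via Lemma~\ref{lemma:convexity}, but that lemma only gives convexity; the paper separately argues (lower semi)continuity by taking $\pf$ continuous and invoking continuity of $H_{\widehat{\alpha}}$ on the $D$-extended state, so you should spell that out. Also, your derivation lands directly on the pure-input form~\eqref{eq:GREATconvex}; to get the $\nu\in\Sigma$ form~\eqref{eq:GREAT} you still need the easy data-processing argument the paper gives just after Lemma~\ref{lemma:duality}.
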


Note that the condition on $S_\Omega$ is basically just the simple statement that $S_\Omega$ contains all frequency distributions on $\CS_1^n \CP_1^n$ ``compatible with'' $\Omega$ (viewing $\Omega$ as an event, e.g.~by extending the state $\rho$ with a register on which $\Omega$ is defined, if necessary).
While it is true that the convexity requirement is not necessarily fulfilled ``by default'' in various applications, in many circumstances we can reasonably choose $S_\Omega$ such that it holds (e.g.~by taking the convex hull). Still, if that is really not possible, then one can use Corollary~\ref{cor:GREATfixedf} we present later (combined with Lemma~\ref{lemma:duality} and the discussion below it) as an alternative to Theorem~\ref{th:GREAT}, albeit a more complicated one. We also note that it is indeed always possible in principle to construct a valid channel $\EATchann$ for the bound in~\eqref{eq:GREATconvex}; we defer this explicit construction to Lemma~\ref{lemma:convrange} in Appendix~\ref{app:directsum}.

As discussed in the introduction, the bound~\eqref{eq:GREAT} means that for any $\alpha\in(1,2)$, 
we have obtained a simple $\Theta(n)$ lower bound on $H^\uparrow_\alpha(S_1^n \CS_1^n | \CP_1^n E_n)_{\rho_{|\Omega}}$, with an explicit first-order constant $h_{\widehat{\alpha}}$ that can be computed as a convex optimization via~\eqref{eq:GREATconvex}. 
To get an intuitive interpretation of this constant $h_{\widehat{\alpha}}$, as we discuss in the next section, 
we can consider either of the following simplified lower bounds (proven in Appendix~\ref{app:someproofs}), which replace the {\Renyi} divergences with some {\Renyi} conditional entropies:
\begin{lemma}\label{lemma:GREATonlyH}
In Theorem~\ref{th:GREAT}, the following bound holds on $h_{\widehat{\alpha}}$:
\begin{align}\label{eq:GREATonlyH}
h_{\widehat{\alpha}} &\geq \inf_{\mbf{q} \in S_\Omega} \inf_{\nu\in\Sigma} \left( \frac{1}{\widehat{\alpha}-1}D\left(\mbf{q} \middle\Vert \bsym{\nu}_{\CS\CP}\right) + \sum_{\cS\cP\in\supp(\bsym{\nu}_{\CS\CP})}q(\cS\cP) H_{\widehat{\alpha}}(S|E\widetilde{E})_{\nu_{|\cS\cP}} \right) \nonumber\\
&= \inf_{\mbf{q} \in S_\Omega} \inf_{\omega \in \dop{=}(\inQ)} \left( \frac{1}{\widehat{\alpha}-1}D\left(\mbf{q} \middle\Vert \bsym{\nu}^\omega_{\CS\CP}\right) + \sum_{\cS\cP\in\supp\left(\bsym{\nu}^\omega_{\CS\CP}\right)}q(\cS\cP) H_{\widehat{\alpha}}(S|E\widetilde{E})_{\nu^\omega_{|\cS\cP}} \right),
\end{align}
and the objective function in the second line is jointly convex in $\omega$ and $\mbf{q}$.
If the $\CS$ register is trivial, then the above bound becomes an equality, i.e.~we have
\begin{align}\label{eq:onlyCP}
h_{\widehat{\alpha}} &= 
\inf_{\mbf{q} \in S_\Omega} \inf_{\nu\in\Sigma} \left( \frac{1}{\widehat{\alpha}-1}D\left(\mbf{q} \middle\Vert \bsym{\nu}_{\CP}\right)+\sum_{\cP\in\supp(\bsym{\nu}_{\CP})}q(\cP)H_{\widehat{\alpha}}(S|E\widetilde{E})_{\nu_{|\cP}}  \right) \nonumber\\
&= \inf_{\mbf{q} \in S_\Omega} \inf_{\omega \in \dop{=}(\inQ)} \left( \frac{1}{\widehat{\alpha}-1}D\left(\mbf{q} \middle\Vert \bsym{\nu}^\omega_{\CP}\right) + \sum_{\cP\in\supp\left(\bsym{\nu}^\omega_{\CP}\right)}q(\cP) H_{\widehat{\alpha}}(S|E\widetilde{E})_{\nu^\omega_{|\cP}} \right).
\end{align}
\end{lemma}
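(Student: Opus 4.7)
The plan is to prove the lemma in three steps: (i) a termwise pointwise inequality converting~\eqref{eq:GREAT} into the weaker bound, (ii) equivalence between the $\nu\in\Sigma$ form and the purified $\omega\in\dop{=}(\inQ)$ form via a data-processing argument on the purifying register, and (iii) joint convexity of the purified objective. The equality case when $\CS$ is trivial falls out of step~(i). For step~(i), I would show $-D_{\widehat{\alpha}}(\nu_{SE\widetilde{E}\land\cS\cP}\Vert\id_S\otimes\nu_{E\widetilde{E}\land\cP}) \geq H_{\widehat{\alpha}}(S|E\widetilde{E})_{\nu_{|\cS\cP}}$ for each $\cS\cP$ with $\nu(\cS\cP)>0$. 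Applying the scaling identity $D_\alpha(c\rho\Vert c'\sigma) = \log(c/c') + D_\alpha(\rho\Vert\sigma)$ rewrites the left-hand side as $-\log\nu(\cS|\cP) - D_{\widehat{\alpha}}(\nu_{SE\widetilde{E}|\cS\cP}\Vert\id_S\otimes\nu_{E\widetilde{E}|\cP})$. Since $\nu_{E\widetilde{E}|\cP} = \sum_{\cS'}\nu(\cS'|\cP)\nu_{E\widetilde{E}|\cS'\cP} \geq \nu(\cS|\cP)\nu_{E\widetilde{E}|\cS\cP}$ as positive operators, antitonicity of $D_{\widehat{\alpha}}$ in its second argument (Fact~\ref{fact:DPI}, valid for $\widehat{\alpha}\geq 1/2$) together with the same scaling identity gives $D_{\widehat{\alpha}}(\nu_{SE\widetilde{E}|\cS\cP}\Vert\id_S\otimes\nu_{E\widetilde{E}|\cP}) \leq -\log\nu(\cS|\cP) + D_{\widehat{\alpha}}(\nu_{SE\widetilde{E}|\cS\cP}\Vert\id_S\otimes\nu_{E\widetilde{E}|\cS\cP})$, which rearranges to the desired inequality. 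Substituting termwise into~\eqref{eq:GREAT} yields the first line of~\eqref{eq:GREATonlyH}. When $\CS$ is trivial, $\nu(\cS|\cP)=1$ makes the antitonicity step vacuous, so the inequality becomes an equality and~\eqref{eq:onlyCP} follows.

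For step~(ii), the direction $\inf_\nu F \leq \inf_\omega F$ is immediate since $\EATchann[\pf(\omega_\inQ)]\in\Sigma$. For the reverse direction, given any $\nu = \EATchann[\omega_{\inQ\widetilde{E}}]$, extend $\omega_{\inQ\widetilde{E}}$ to a purification $\ket{\psi}_{\inQ\widetilde{E}\widetilde{E}'}$ on an enlarged register. By isometric equivalence of purifications of $\omega_\inQ$ onto $\widetilde{E}\widetilde{E}'$, this pure state is related to $\pf(\omega_\inQ)\otimes\ket{0}\bra{0}_{\widetilde{E}'}$ by an isometry on $\widetilde{E}\widetilde{E}'$, and applying $\EATchann$ (which commutes with any channel on $\widetilde{E}\widetilde{E}'$) then tracing out $\widetilde{E}'$ recovers $\nu$. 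Isometric invariance of $H_{\widehat{\alpha}}$ on the conditioning register, combined with the data-processing inequality for the partial trace $\Tr_{\widetilde{E}'}$ (Fact~\ref{fact:DPI}), yields $H_{\widehat{\alpha}}(S|E\widetilde{E})_{\nu^{\omega_\inQ}_{|\cS\cP}} \leq H_{\widehat{\alpha}}(S|E\widetilde{E})_{\nu_{|\cS\cP}}$. The first term of the objective is unchanged because the $\CS\CP$-marginal depends only on $\omega_\inQ$, so $F(\nu^{\omega_\inQ},\mbf{q})\leq F(\nu,\mbf{q})$ and hence $\inf_\omega F \leq \inf_\nu F$.

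Step~(iii), joint convexity, is the main obstacle. The KL-divergence term $\frac{1}{\widehat{\alpha}-1}D(\mbf{q}\Vert\bsym{\nu}^\omega_{\CS\CP})$ is jointly convex in $(\omega,\mbf{q})$ by joint convexity of KL divergence together with linearity of $\omega\mapsto\bsym{\nu}^\omega_{\CS\CP}$ (which holds because the $\CS\CP$-marginal of $\EATchann[\pf(\omega_\inQ)]$ equals that of $\EATchann[\omega_\inQ]$, hence is linear in $\omega_\inQ$). For the weighted-entropy term $\sum_{\cS\cP} q(\cS\cP) H_{\widehat{\alpha}}(S|E\widetilde{E})_{\nu^\omega_{|\cS\cP}}$, the flag-register construction of Lemma~\ref{lemma:convexity} combined with data processing and Jensen's inequality yields, for $\omega=p\omega^{(1)}+(1-p)\omega^{(2)}$,
\[
H_{\widehat{\alpha}}(S|E\widetilde{E})_{\nu^\omega_{|\cS\cP}} \;\leq\; \sum_{i\in\{1,2\}} \frac{p_i\,\nu^{\omega^{(i)}}(\cS\cP)}{\nu^\omega(\cS\cP)}\, H_{\widehat{\alpha}}(S|E\widetilde{E})_{\nu^{\omega^{(i)}}_{|\cS\cP}},
\]
but the Bayes-rule weights $p_i\nu^{\omega^{(i)}}(\cS\cP)/\nu^\omega(\cS\cP)$ do not match the target weights $p_iq^{(i)}(\cS\cP)$ required to conclude joint convexity of the weighted sum. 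Reconciling this mismatch by recombining the log-ratio contributions $\log\nu^{\omega^{(i)}}(\cS\cP)/\nu^\omega(\cS\cP)$ with the KL term, exploiting the joint convexity of KL divergence in its two arguments, is the delicate step; it closely parallels the convexity argument needed for the objective in Theorem~\ref{th:GREAT}'s~\eqref{eq:GREATconvex}, to which our objective is related by subtracting $\sum_{\cS\cP}q(\cS\cP)\log\nu^\omega(\cS|\cP)$ (essentially the log-conditional arising from the scaling step in~(i)).
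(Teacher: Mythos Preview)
Your steps~(i) and~(ii) are correct and in fact slightly more elementary than the paper's route for the inequality itself: the paper does not work termwise from~\eqref{eq:GREAT} but instead goes back to Corollary~\ref{cor:GREATfixedf}, relaxes the QES-entropy via the $D$-register construction and the separability inequality $H_{\widehat{\alpha}}(DS\CS|\CP E\widetilde{E})\geq H_{\widehat{\alpha}}(DS|\CS\CP E\widetilde{E})$, and then reruns the Legendre-conjugate machinery of Lemmas~\ref{lemma:Legendre_conjugate}--\ref{lemma:duality} with a modified $G$. Your direct termwise antitonicity argument (using $\nu_{E\widetilde{E}|\cP}\geq\nu(\cS|\cP)\,\nu_{E\widetilde{E}|\cS\cP}$) is a cleaner way to get the inequality and the equality-when-$\CS$-trivial claim.

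However, step~(iii) has a genuine gap. You correctly identify the obstruction---the Bayes weights $p_i\,\nu^{\omega^{(i)}}(\cS\cP)/\nu^\omega(\cS\cP)$ do not match the target weights---but your proposed fix (relating the new objective to the old one by subtracting $\sum_{\cS\cP}q(\cS\cP)\log\nu^\omega(\cS|\cP)$) does not work: the two objectives also differ in the second argument of the divergence ($\nu^\omega_{E\widetilde{E}|\cP}$ versus $\nu^\omega_{E\widetilde{E}|\cS\cP}$), so there is no simple additive relation. Indeed the paper itself remarks (Sec.~\ref{subsubsec:bndconvexity}) that direct verification of joint convexity for weighted sums of this type is unclear in general.

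The paper sidesteps this entirely. Define
\[
G_{\widehat{\alpha},\nu^\omega}(\mbf{f}) \;\defvar\; -\frac{1}{1-\widehat{\alpha}}\log\!\Bigl(\sum_{\cS\cP}\nu^\omega(\cS\cP)\,2^{(1-\widehat{\alpha})\bigl(-f(\cS\cP)+H_{\widehat{\alpha}}(S|E\widetilde{E})_{\nu^\omega_{|\cS\cP}}\bigr)}\Bigr).
\]
Via the $D$-register construction (Lemma~\ref{lemma:createD}) one has $-G_{\widehat{\alpha},\nu^\omega}(\mbf{f})=H_{\widehat{\alpha}}(DS|\CS\CP E\widetilde{E})_{\nu^\omega}-M$, which is convex in $\omega$ by Remark~\ref{remark:convexity}; hence $G$ is \emph{concave} in $\omega$. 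Then $G^*_{\widehat{\alpha},\nu^\omega}(\bsym{\lambda})=\sup_{\mbf{f}}\bigl(\bsym{\lambda}\cdot\mbf{f}-G_{\widehat{\alpha},\nu^\omega}(\mbf{f})\bigr)$ is a supremum of functions each affine in $\bsym{\lambda}$ and convex in $\omega$, hence jointly convex in $(\bsym{\lambda},\omega)$. Computing $G^*$ as in Lemma~\ref{lemma:Legendre_conjugate} gives exactly your target objective, so joint convexity follows without ever touching the weight-mismatch problem. This Legendre-transform trick is the missing idea.
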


\begin{lemma}\label{lemma:GREAT3Renyi}
In Theorem~\ref{th:GREAT}, the following bound holds on $h_{\widehat{\alpha}}$, for any $\alpha',\alpha'' \in (1,\infty)$ such that $\frac{\widehat{\alpha}}{\widehat{\alpha}-1} = \frac{\alpha'}{\alpha'-1} + \frac{\alpha''}{\alpha''-1}$:
\begin{align}\label{eq:GREAT3Renyi}
h_{\widehat{\alpha}} &\geq \inf_{\mbf{q} \in S_\Omega} \inf_{\nu\in\Sigma} \left( \frac{\alpha''}{\alpha''-1}D\left(\mbf{q} \middle\Vert \bsym{\nu}_{\CS\CP}\right) + H_{\alpha'}(S\CS| \CP E \widetilde{E})_{\nu} \right) \nonumber\\
&= \inf_{\mbf{q} \in S_\Omega} \inf_{\omega \in \dop{=}(\inQ)} \left( \frac{\alpha''}{\alpha''-1}D\left(\mbf{q} \middle\Vert \bsym{\nu}^\omega_{\CS\CP}\right) +  H_{\alpha'}(S\CS| \CP E \widetilde{E})_{\nu^\omega} \right), 
\end{align}
and the objective function in the second line is jointly convex in $\omega$ and $\mbf{q}$.
\end{lemma}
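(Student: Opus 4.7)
The plan is to reduce the Theorem~\ref{th:GREAT} expression for $h_{\widehat{\alpha}}$ to the one claimed in~\eqref{eq:GREAT3Renyi} by combining a variational identity for the standard Rényi conditional entropy with monotonicity of the sandwiched Rényi divergence in $\alpha$. Two observations set things up: from the hypothesis $\tfrac{\widehat{\alpha}}{\widehat{\alpha}-1} = \tfrac{\alpha'}{\alpha'-1}+\tfrac{\alpha''}{\alpha''-1}$ I immediately obtain the arithmetic identity $\tfrac{1}{\widehat{\alpha}-1}-\tfrac{\alpha''}{\alpha''-1} = \tfrac{1}{\alpha'-1}$, and since $x\mapsto \tfrac{x}{x-1}$ is strictly decreasing on $(1,\infty)$, the same hypothesis also forces $\alpha'>\widehat{\alpha}$.

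The core intermediate step is to establish the following variational bound, valid for any $\beta>1$, any $\nu$ classical on $\CS\CP$, and any distribution $\mbf{q}$ on $\alphCS\times\alphCP$:
\begin{align*}
H_\beta(S\CS|\CP E\widetilde{E})_\nu \;\leq\; \frac{1}{\beta-1}D\!\left(\mbf{q}\,\middle\Vert\,\bsym{\nu}_{\CS\CP}\right) \;-\; \sum_{\cS\cP}q(\cS\cP)\,D_\beta\!\left(\nu_{SE\widetilde{E}\land\cS\cP}\,\middle\Vert\,\id_S\otimes \nu_{E\widetilde{E}\land\cP}\right).
\end{align*}
I would prove this by first checking that $H_\beta(S\CS|\CP E\widetilde{E})_\nu$ coincides with the QES-entropy $H^0_\beta(S\CS|\CP E\widetilde{E})_\nu$ of Definition~\ref{def:QES} (which follows from Fact~\ref{fact:classmix}~\eqref{eq:classmixD} applied with conditioning system $\CS\CP$, together with the scaling $D_\beta(c_1\rho\Vert c_2\sigma) = D_\beta(\rho\Vert\sigma)+\log(c_1/c_2)$ used to absorb the subnormalization factors). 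Definition~\ref{def:QES} writes the QES-entropy in log-sum-exponential form, $\tfrac{1}{1-\beta}\log\sum_{\cS\cP}\nu(\cS\cP)\,2^{-(1-\beta)D_\beta(\cdots)}$, so the desired bound then follows from the standard Legendre duality $\tfrac{1}{\beta-1}\log\sum_i w_i 2^{(\beta-1)x_i} = \sup_{\mbf{p}}\bigl[\sum_i p_i x_i - \tfrac{1}{\beta-1}D(\mbf{p}\Vert\mbf{w})\bigr]$ (valid for $\beta>1$), specialized to $\mbf{p}=\mbf{q}$.

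Using this bound at $\beta=\alpha'$, splitting $\tfrac{1}{\widehat{\alpha}-1}D(\mbf{q}\Vert\bsym{\nu}_{\CS\CP}) = \tfrac{\alpha''}{\alpha''-1}D(\mbf{q}\Vert\bsym{\nu}_{\CS\CP})+\tfrac{1}{\alpha'-1}D(\mbf{q}\Vert\bsym{\nu}_{\CS\CP})$ in the Theorem~\ref{th:GREAT} objective, and invoking the monotonicity $D_{\alpha'}\geq D_{\widehat{\alpha}}$ (valid since $\alpha'>\widehat{\alpha}>\tfrac{1}{2}$) to replace $-\sum q\,D_{\widehat{\alpha}}$ by the smaller $-\sum q\,D_{\alpha'}$, the Theorem~\ref{th:GREAT} objective is pointwise in $(\mbf{q},\nu)$ lower-bounded by $\tfrac{\alpha''}{\alpha''-1}D(\mbf{q}\Vert\bsym{\nu}_{\CS\CP}) + H_{\alpha'}(S\CS|\CP E\widetilde{E})_\nu$. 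Taking $\inf_{\mbf{q}\in S_\Omega}\inf_{\nu\in\Sigma}$ on both sides then delivers the first line of~\eqref{eq:GREAT3Renyi}.

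For the second line and the joint-convexity claim, I would mirror the argument already used in Theorem~\ref{th:GREAT}: by Remark~\ref{remark:convexity} the proof of Lemma~\ref{lemma:convexity} transfers to the standard Rényi entropy $H_{\alpha'}=H^0_{\alpha'}$ at $\alpha'\in(1,\infty)$, so the $\inf$ over $\nu\in\Sigma$ is achieved on states of the form $\nu^\omega=\EATchann[\pf(\omega_{\inQ})]$ with $\omega\mapsto H_{\alpha'}(S\CS|\CP E\widetilde{E})_{\nu^\omega}$ convex. The marginal $\bsym{\nu}^\omega_{\CS\CP} = \Tr_{SE}\EATchann[\omega_{\inQ}]$ is affine in $\omega_{\inQ}$, so joint convexity of KL divergence transfers through this affine map to make $(\omega,\mbf{q})\mapsto \tfrac{\alpha''}{\alpha''-1}D(\mbf{q}\Vert\bsym{\nu}^\omega_{\CS\CP})$ jointly convex, and the sum inherits joint convexity. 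The main care-point in executing this plan is the bookkeeping of subnormalization factors when identifying $H_{\alpha'}$ with its QES-entropy form so that the $D_{\alpha'}$ terms from the variational identity use exactly the subnormalized partial states $\nu_{SE\widetilde{E}\land\cS\cP},\nu_{E\widetilde{E}\land\cP}$ matching those in~\eqref{eq:GREAT}; once this algebraic identification is set up, the remainder of the argument is essentially a one-line combination of the variational inequality with divergence monotonicity.
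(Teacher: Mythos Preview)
Your proof is correct but follows a genuinely different route from the paper's. The paper's proof (Appendix~\ref{app:simplify}) starts from Corollary~\ref{cor:GREATfixedf}, extends $\nu^\omega$ with a $D$-register via Lemma~\ref{lemma:createD_2}, applies the Dupuis chain rule~\cite[Proposition~7]{Dup15} to split $H_{\widehat{\alpha}}(DS\CS|\CP E\widetilde{E})$ into $H_{\alpha'}(S\CS|\CP E\widetilde{E})+H^\uparrow_{\alpha''}(D|S\CS\CP E\widetilde{E})$, bounds the $D$-entropy by the QES values, and then \emph{re-runs} the Legendre-duality analysis of Lemma~\ref{lemma:duality} with a modified function $G_{\alpha',\alpha'',\nu^\omega}$. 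You instead work directly at the level of the Theorem~\ref{th:GREAT} objective: you recognize $H_{\alpha'}=\inf_{\mbf{q}}G^*_{\alpha',\nu}(\mbf{q})$ from Lemma~\ref{lemma:Legendre_conjugate} at $\mbf{f}=\mbf{0}$, combine the arithmetic identity $\tfrac{1}{\widehat{\alpha}-1}=\tfrac{\alpha''}{\alpha''-1}+\tfrac{1}{\alpha'-1}$ with the monotonicity $D_{\alpha'}\geq D_{\widehat{\alpha}}$, and obtain the pointwise bound in one step. Your argument is more elementary --- it needs neither the $D$-register device nor the external chain rule, and it avoids re-deriving strong duality since Theorem~\ref{th:GREAT} already encodes it. The paper's route has the virtue of parallelism with its other ``transfer'' arguments (Lemmas~\ref{lemma:DPI}--\ref{lemma:QES3Renyi}) and makes explicit how the bound descends from the chain rule, but yours is shorter and more self-contained.
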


\subsection{Understanding and applying the bounds}
\label{subsec:intuition}

We now present some intuitive interpretations of the bounds on the single-round term $h_{\widehat{\alpha}}$ in Lemmas~\ref{lemma:GREATonlyH}--\ref{lemma:GREAT3Renyi} (or the term $h^\uparrow_\alpha$ in Corollary~\ref{cor:fweightedsimple} or Theorem~\ref{th:fweighted}), along with many details relevant for applications.
The bound in Lemma~\ref{lemma:GREAT3Renyi} has a slightly simpler form, so we shall discuss it first to gain the desired intuition (however, it has worse {\Renyi} parameters --- we stress that it would likely be better to instead use Lemma~\ref{lemma:GREATonlyH} or Theorem~\ref{th:fweighted} for device-dependent protocols, but we defer discussion of this point until after we have explained the basic intuition). 

To begin, first recall that as discussed in the introduction, we can interpret the optimization~\eqref{eq:GREAT3Renyi} for $h_{\widehat{\alpha}}$ in Lemma~\ref{lemma:GREAT3Renyi} as just a relaxation of the ``simplistic'' optimization
\begin{align}\label{eq:exactRenyiopt}
\begin{gathered} 
\inf_{\omega \in \dop{=}(\inQ)} H_{\alpha'}(S\CS| \CP E \widetilde{E})_{\nu^\omega} \\
\suchthat \quad \bsym{\nu}^\omega_{\CS\CP} \in S_\Omega.
\end{gathered}
\end{align}
i.e.~computing the worst-case entropy $H_{\alpha'}(S\CS| \CP E \widetilde{E})_{\nu^\omega}$ over \emph{exactly} the states $\omega$ that produce a distribution contained in $S_\Omega$. Hence the value of $h_{\widehat{\alpha}}$ would be lower than this ``simplistic'' optimization; however, it has already rigorously accounted for finite-size and non-IID effects.

We remark that for theoretical analysis, it may also be convenient to rewrite $h_{\widehat{\alpha}}$ as follows:
\begin{align}\label{eq:explicitpenalty}
\inf_{\omega \in \dop{=}(\inQ)} \left(  H_{\alpha'}(S\CS| \CP E \widetilde{E})_{\nu^\omega} 
+  \inf_{\mbf{q} \in S_\Omega} \frac{\alpha''}{\alpha''-1}D\left(\mbf{q} \middle\Vert \bsym{\nu}^\omega_{\CS\CP}\right) 
\right),
\end{align}
since the $\inf_{\mbf{q} \in S_\Omega} \frac{\alpha''}{\alpha''-1}D\left(\mbf{q} \middle\Vert \bsym{\nu}^\omega_{\CS\CP}\right)$ term may have some useful properties --- for instance, the optimizer $\mbf{q}^\star$ (if attained) is sometimes termed the \term{information projection of $\bsym{\nu}^\omega_{\CS\CP}$ onto $S_\Omega$}, and satisfies a form of triangle inequality.
However, when applying computational methods such as those in~\cite{WLC18}, it should be easier to solve the joint optimization~\eqref{eq:GREAT3Renyi} rather than the nested-optimization formulation in~\eqref{eq:explicitpenalty}.

Comparing our results to the previous GEAT, our bound in Lemma~\ref{lemma:GREAT3Renyi} should be much easier to evaluate in practice --- the previous GEAT requires a technical construction of an \emph{affine} ``entropy-bounding function'' (the \term{min-tradeoff function}~\cite{MFSR24}) on the space of probability distributions on $\CS\CP$, and optimizing the choice of min-tradeoff function while ensuring it is affine is often a challenging task~\cite{arx_GLT+22}. In contrast, our bound only requires changing the objective function by an additional term $D\left(\mbf{q} \middle\Vert \bsym{\nu}^\omega_{\CS\CP}\right)$, which is just a classical KL divergence, and hence convex and differentiable on the domain interior. Finally, we observe that one can straightforwardly avoid considering the full ``fine-grained'' distributions $\mbf{q}, \bsym{\nu}^\omega_{\CS\CP}$ if they are too complicated --- since the KL divergence satisfies data-processing, \emph{any} ``coarse-graining'' of those distributions (formally, a stochastic map $\mathcal{E}$ on probability distributions on $\CS\CP$) still yields valid lower bounds by replacing $D\left(\mbf{q} \middle\Vert \bsym{\nu}^\omega_{\CS\CP}\right)$ with $ D\left(\mathcal{E}[\mbf{q}] \middle\Vert \mathcal{E}[\bsym{\nu}^\omega_{\CS\CP}]\right)$.

\begin{remark}
We can also verify that $h_{\widehat{\alpha}}$ should be strictly positive in any ``reasonable'' protocol (with classical $S$, so all entropies of interest are non-negative), as follows. First, observe that any ``reasonable'' choice of $S_\Omega$ should be such that at least the ``simplistic'' optimization~\eqref{eq:exactRenyiopt} evaluates to some strictly positive value $h_\star>0$, since otherwise we cannot expect a nontrivial rate.\footnote{In fact, rather than $H_{\alpha'}$ specifically, for the purposes of this argument we can discuss any {\Renyi} entropy $\mathbb{H}$ satisfying $\Hmin \leq \mathbb{H} \leq H$, because for any classical-quantum state $\rho_{CQ}$, we have $\mathbb{H}(C|Q)_\rho=0$ for some such $\mathbb{H}$ if and only if $\mathbb{H}(C|Q)_\rho =0$ for all such $\mathbb{H}$ (since $\mathbb{H}(C|Q)_\rho=0$ implies $\Hmin(C|Q)_\rho=0$, which implies the conditional states $\rho_{Q|c}$ are all perfectly distinguishable and hence $H(C|Q)_\rho=0$ as well).}
In that case, the lower bound on $h_{\widehat{\alpha}}$ given by~\eqref{eq:GREAT3Renyi} would be strictly positive, because for any feasible point $(\mbf{q},\omega)$ we either have $\bsym{\nu}^\omega_{\CS\CP} \in S_\Omega$ and so $H_{\alpha'}(S\CS| \CP E \widetilde{E})_{\nu^\omega} \geq h_\star > 0$, or we have $\bsym{\nu}^\omega_{\CS\CP} \notin S_\Omega$ and so $D\left(\mbf{q} \middle\Vert \bsym{\nu}^\omega_{\CS\CP}\right) > 0$.\footnote{Technically, to complete the argument we must rule out the possibility of a sequence of feasible values \emph{limiting} to zero. To do so, we can follow steps described in the proof of Lemma~\ref{lemma:duality} later to replace $S_\Omega$ with its closure by a continuity argument, in which case for finite-dimensional $\inQ$, we get a continuous optimization over a compact set and hence the infimum is attained (and hence strictly positive).
An alternative prospect is to introduce another convex optimization $\inf_{\mbf{q} \in S_\Omega} \inf_{\omega \in \dop{=}(\inQ)} H_{\alpha'}(S\CS| \CP E \widetilde{E})_{\nu^\omega} $ subject to constraint $\norm{\mbf{q}-\bsym{\nu}^\omega_{\CS\CP}}_1\leq \delta$, which matches the ``simplistic'' optimization~\eqref{eq:exactRenyiopt} when picking $\delta=0$. It seems likely that one could prove the optimal value of this optimization is continuous with respect to $\delta \geq 0$ (for instance by modifying the arguments in~\cite{Duff78}); if so, then by continuity there exists some $\delta_\star > 0$ such that this optimization is lower bounded by, say, $h_\star/2$. With this we can conclude the desired property that every feasible $(\mbf{q},\omega)$ in our bound~\eqref{eq:GREAT3Renyi} is \emph{bounded away from} zero, since for every such point we either have $\norm{\mbf{q}-\bsym{\nu}^\omega_{\CS\CP}}_1\leq \delta_\star$ and thus $H_{\alpha'}(S\CS| \CP E \widetilde{E})_{\nu^\omega} \geq h_\star/2 > 0$, or else we have $D\left(\mbf{q} \middle\Vert \bsym{\nu}^\omega_{\CS\CP}\right) \geq \delta_\star^2 / (2\ln2) > 0$ by Pinsker's inequality. This prospective approach has the advantage that in the DI case, it seems plausible that we still have continuity with respect to $\delta$ even after later proof steps where we further minimize over the GEATT channels themselves (to accommodate all possible measurements and dimensions). However, we leave a rigorous proof of these continuity claims for future work, since here we only aim to give a plausibility argument that $h_{\widehat{\alpha}}$ is ``usually'' nonzero.
} 
\end{remark}

Another convenient property of the Lemma~\ref{lemma:GREAT3Renyi} bound is that with it, we can make use of single-round analysis methods that bound \emph{any} particular {\Renyi} entropy $H_{\alpha'}(S\CS| \CP E \widetilde{E})_{\nu^\omega}$, instead of having to focus on von Neumann entropy as in the original GEAT. For instance, this means we can straightforwardly ``accumulate'' min-entropy or collision entropy, as studied in e.g.~\cite{PAM+10,MPA11,JK25} (or~\cite{MvDR+19} for the IID case). 
This hence yields a variety of ``fully {\Renyi}'' approaches for security proofs, at least as a proof-of-concept. 
Note however that in practice, even if the single-round analysis only yields bounds for a specific {\Renyi} parameter rather than arbitrary ones, one should usually still tune the choice of {\Renyi} parameter when applying our results, to obtain the best keyrates --- we defer discussion of this point to Sec.~\ref{subsec:DIRE}.

On that note though, it is not necessary to explicitly have single-round bounds on arbitrary {\Renyi} entropies to apply our methods --- it suffices to consider the usual von Neumann entropy, if desired. To see this (and also verify that our bounds scale as expected at large $n$), first note that if we write $\alpha'=1+\mu', \alpha''=1+\mu''$ in Lemma~\ref{lemma:GREAT3Renyi} (for some $\mu',\mu''>0$), 
then we have
\begin{align}\label{eq:3Renyirelation}
\begin{gathered}
\widehat{\alpha} = 1+ \frac{1}{1+\frac{1}{\mu'}+\frac{1}{\mu''}} 
= 1+ \frac{\mu'\mu''}{\mu'\mu'' + \mu' + \mu''}, \\
\alpha = 1+ \frac{1}{2+\frac{1}{\mu'}+\frac{1}{\mu''}}= 1+ \frac{\mu'\mu''}{2\mu'\mu'' + \mu' + \mu''},
\end{gathered}
\end{align}
recalling $\widehat{\alpha}=1/(2-\alpha)$. In particular, this means that if we take $\alpha',\alpha''\to 1$, then $\widehat{\alpha},\alpha\to1$ as well. 
This lets us describe our results in terms of more well-known entropies 
(von Neumann entropy and smooth min-entropy) 
using the same approach as~\cite{DFR20}, as follows. 

\begin{remark}\label{remark:relaxations}
We highlight however that for applications, since the subsequent bounds~\eqref{eq:tovN}--\eqref{eq:toHmineps} are inequalities rather than equalities, it should be advantageous to avoid their use whenever possible (though in Sec.~\ref{subsec:BB84} later, we show we can still obtain rather good results even when using those bounds). In particular, it may be possible to use just one bound rather than both; for instance~\eqref{eq:tovN} can be avoided if one can bound the single-round {\Renyi} entropy directly, and~\eqref{eq:toHmineps} can be avoided if the security proof can use the recent {\Renyi} privacy amplification theorem of~\cite{Dup23} (which typically gives better finite-size keyrates;
see~\cite{arx_GLT+22} or Sec.~\ref{subsec:BB84} for comparisons).
\end{remark}

First, we can relate the single-round terms to von Neumann entropy via continuity bounds in $\alpha$ of the form
\begin{align}\label{eq:tovN}
H_{\alpha}(S\CS| \CP E \widetilde{E}) \geq H(S\CS| \CP E \widetilde{E}) - g_\mathrm{cont}(\alpha),
\end{align}
where $g_\mathrm{cont}(\alpha)$ is a function that goes to $0$ as $\alpha\to1$. In a separate work~\cite[Appendix~B]{arx_KAG+24}, we derive explicit bounds on $g_\mathrm{cont}(\alpha)$ that dramatically improve over previous such bounds in the context of optical QKD protocols. For this work we just note that~\cite[Lemma~B.9]{DFR20} gives the somewhat loose bounds $g_\mathrm{cont}(\alpha) \leq (\alpha-1)\log^2\left(1+2 \dim(S\CS)\right)$ or in some cases\footnote{The latter bound holds whenever $\CS$ has the property that it can be ``projectively reconstructed'' from $S\CP E$ in the sense of~\cite[Lemma~B.7]{DFR20}, by simply observing that in that case that lemma gives us $H_{\alpha}(S\CS| \CP E \widetilde{E}) = H_{\alpha}(S| \CP E \widetilde{E})$ and $H(S\CS| \CP E \widetilde{E}) = H(S| \CP E \widetilde{E})$, so we can apply the~\cite[Lemma~B.9]{DFR20} continuity bound without including $\CS$.} $g_\mathrm{cont}(\alpha) \leq (\alpha-1)\log^2\left(1+2 \dim(S)\right)$, for $\alpha$ sufficiently close to $1$. (Alternatively,~\cite[Corollary~IV.2]{DF19} gives bounds of the form $(\alpha-1)\frac{\ln 2}{2}\log^2\left(1+2\dim(S\CS) \right) +  O((\alpha-1)^2)$ or $(\alpha-1)\frac{\ln 2}{2}\log^2\left(1+2\dim(S) \right) +  O((\alpha-1)^2)$ for classical $S$; these bounds are tighter than~\cite[Lemma~B.9]{DFR20} for $\alpha$ close to $1$. See~\cite[Appendix~B]{arx_KAG+24} for further discussion.)
Moreover, for the overall entropy we can write~\cite[Lemma~B.10]{DFR20} 
\begin{align}\label{eq:toHmineps}
\Hmin^\eps(S_1^n \CS_1^n | \CP_1^n E_n) \geq H^\uparrow_\alpha(S_1^n \CS_1^n | \CP_1^n E_n) - \frac{\smf{\eps}}{\alpha - 1} ,  \;\text{where } \smf{\eps}\defvar 
\log\left(\frac{1}{1-\sqrt{1-\eps^2}}\right)
\leq \log
\frac{2}{\eps^2}
,
\end{align}
for any $\alpha\in(1,2]$ and $\eps\in(0,1)$.

Substituting these two bounds~\eqref{eq:tovN}--\eqref{eq:toHmineps} into Theorem~\ref{th:GREAT} with the Lemma~\ref{lemma:GREAT3Renyi} bound, we obtain (for $\alpha',\alpha''$ sufficiently close to $1$ to apply those two bounds):
\begin{align}\label{eq:Hmineps_3Renyi}
\Hmin^\eps(S_1^n \CS_1^n | \CP_1^n E_n)_{\rho_{|\Omega}} &\geq \inf_{\mbf{q} \in S_\Omega} \inf_{\omega \in \dop{=}(\inQ)} \left( \frac{\alpha''}{\alpha''-1}D\left(\mbf{q} \middle\Vert \bsym{\nu}^\omega_{\CS\CP}\right) +  H(S\CS| \CP E \widetilde{E})_{\nu^\omega} \right) n
\nonumber\\&\qquad 
- g_\mathrm{cont}(\alpha') n
- \frac{\smf{\eps}}{\alpha - 1}
- \frac{\alpha}{\alpha-1} \log\frac{1}{p_\Omega},
\end{align}
where $g_\mathrm{cont}$ can be explicitly bounded as described below~\eqref{eq:tovN}.

The above bound is the desired reformulation of our results in terms of the single-round von Neumann entropy and the $n$-rounds smooth min-entropy. 
If we choose\footnote{Similar to~\cite{DF19}, this choice here is just to demonstrate a possible scaling; for the best finite-size bounds, one should optimize the {\Renyi} parameter choices numerically for each $n$.} $\alpha',\alpha'' = 1+\Theta(1/\sqrt{n})$, then we see from~\eqref{eq:3Renyirelation} that we have $\widehat{\alpha},\alpha = 1+\Theta(1/\sqrt{n})$ as well, so the above bound simplifies (recalling $g_\mathrm{cont}(\alpha) \leq \Theta(\alpha-1)$) to 
\begin{align}\label{eq:Osqrtnbnd}
\Hmin^\eps(S_1^n \CS_1^n | \CP_1^n E_n)_{\rho_{|\Omega}} &\geq \inf_{\mbf{q} \in S_\Omega} \inf_{\omega \in \dop{=}(\inQ)} \left( \Theta(\sqrt{n}) D\left(\mbf{q} \middle\Vert \bsym{\nu}^\omega_{\CS\CP}\right) +  H(S\CS| \CP E \widetilde{E})_{\nu^\omega} \right) n
- \Theta(\sqrt{n}).
\end{align}
Here, the $\Theta(\sqrt{n})$ finite-size correction term is the same scaling as we would get from the AEP in the IID case, and the first-order term is a relaxed optimization of the form we have considered above (and it will again be nonzero for ``reasonable'' $S_\Omega$, by the same arguments). Furthermore, since the ``soft penalty term'' $D\left(\mbf{q} \middle\Vert \bsym{\nu}^\omega_{\CS\CP}\right)$ has a prefactor of order $\Theta(\sqrt{n})$, we see that at large $n$, the optimization approaches what we would get by simply imposing the ``hard constraint'' $\bsym{\nu}^\omega_{\CS\CP} \in S_\Omega$, so the bound~\eqref{eq:Osqrtnbnd} asymptotically converges to what we expect. 

A drawback of the Lemma~\ref{lemma:GREAT3Renyi} bound is the change in {\Renyi} parameters; specifically, if for simplicity\footnote{One can tune $\alpha',\alpha''$ in other ways; e.g.~if we fix one as a constant and set the other to $1+\mu$ then we can get $\widehat{\alpha},\alpha = 1+\mu+O(\mu^2)$, but this sacrifices the option to bring the constant one closer to $1$.} we set $\alpha',\alpha''$ both equal to $1+\mu$ for some small $\mu>0$, then from~\eqref{eq:3Renyirelation} we find that $\widehat{\alpha},\alpha = 1+\mu/2+O(\mu^2)$, i.e.~the ``distance from $1$'' for $\alpha,\alpha''$ is roughly twice that of the original {\Renyi} parameters $\widehat{\alpha},\alpha$. Loosely speaking, this ``distance from $1$'' is one of the major contribution to losses from finite-size effects, and so Lemma~\ref{lemma:GREAT3Renyi} roughly doubles the effect of some such losses (for instance, it effectively halves the prefactor on the $D\left(\mbf{q} \middle\Vert \bsym{\nu}^\omega_{\CS\CP}\right)$ ``penalty term'', and from~\eqref{eq:tovN} with~\cite[Lemma~B.9]{DFR20} we loosely expect it to double the distance from the von Neumann entropy, even if the bound~\eqref{eq:tovN} is not explicitly used). 

This issue can be avoided by using Lemma~\ref{lemma:GREATonlyH} instead: in that bound, we preserve the {\Renyi} parameter $\widehat{\alpha}$ (which matches $\alpha$ up to order $O((\alpha-1)^2)$, as discussed in~\eqref{eq:hatmu}), at the price of a slightly more elaborate form for the conditional-entropy term, which instead takes the form of a linear combination $\sum_{\cS\cP}q(\cS\cP) H_{\widehat{\alpha}}(S|E\widetilde{E})$. 
Note that as stated in the lemma, there is no loss of tightness when using this bound in applications where the $\CS$ register is trivial, which is the case for many device-dependent protocols~\cite{MR23}. 
While this linear combination might be slightly harder to handle than the single $H_{\alpha'}(S\CS| \CP E \widetilde{E})_{\nu^\omega}$ term, we note for instance that if $S_\Omega$ is such that it implies entrywise constraints $\mbf{q} \geq \mbf{q}^\mathrm{min}$ for some tuple $\mbf{q}^\mathrm{min}$, then we could replace each $q(\cS\cP)$ factor in the linear combination (though not in the KL divergence term) with a constant lower bound $q^\mathrm{min}(\cS\cP)$, which might be simpler to handle. As above, we also have the option of reformulating the bounds using either or both of~\eqref{eq:tovN}--\eqref{eq:toHmineps} if desired, at the cost of the same potential suboptimalities, yielding bounds such as
\begin{align}\label{eq:Hmineps_onlyH}
\Hmin^\eps(S_1^n \CS_1^n | \CP_1^n E_n)_{\rho_{|\Omega}} &\geq \inf_{\mbf{q} \in S_\Omega} \inf_{\omega \in \dop{=}(\inQ)} \left( \frac{1}{\widehat{\alpha}-1}D\left(\mbf{q} \middle\Vert \bsym{\nu}^\omega_{\CS\CP}\right) + \sum_{\cS\cP\in\supp\left(\bsym{\nu}^\omega_{\CS\CP}\right)}q(\cS\cP) H(S|E\widetilde{E})_{\nu^\omega_{|\cS\cP}} \right) n
\nonumber\\&\qquad 
- g_\mathrm{cont}(\widehat{\alpha}) n
- \frac{\smf{\eps}}{\alpha - 1}
- \frac{\alpha}{\alpha-1} \log\frac{1}{p_\Omega},
\end{align}
where $g_\mathrm{cont}$ can be explicitly bounded as described below~\eqref{eq:tovN}.
Note however that the objective function in this version might potentially not be jointly convex; see Sec.~\ref{subsubsec:bndconvexity} later.

However, for device-independent protocols (where $\CS$ is usually not trivial), it is less clear whether Lemma~\ref{lemma:GREATonlyH} or~\ref{lemma:GREAT3Renyi} is better (of course, ideally one would obtain the tightest bound by directly computing the divergence-based formula in Theorem~\ref{th:GREAT}, but that expression is more complicated). We discuss this point further in the next section, after first explaining the concept of infrequent-sampling channels.

\begin{remark}
When using the above bounds (especially Lemma~\ref{lemma:GREAT3Renyi}), a point to keep in mind is that unlike the von Neumann entropy, the {\Renyi} entropy for $\alpha>1$ is not concave. Hence for a state of the form $\nu = \sum_z p(z) \nu_{|z}$, in general we would have $H_{\alpha}(S\CS| \CP E \widetilde{E})_{\nu} \not\geq \sum_z p(z) H_{\alpha}(S\CS| \CP E \widetilde{E})_{\nu_{|z}}$, and in fact we cannot even say that it is lower bounded by $p(z) H_{\alpha}(S\CS| \CP E \widetilde{E})_{\nu_{|z}}$ for arbitrary $z$ 
(unless we have additional properties of that $z$ value). 
In particular, this creates some inconveniences in analyzing infrequent-sampling channels (defined below), since it is harder to break up the entropy into test and generation ``contributions''. However, in some situations Fact~\ref{fact:classmix} can be used to work around this, possibly by introducing a conditioning register if necessary that ``labels'' the states in the mixture (note however that the resulting lower bounds would not simply be ``linear'' expressions like $\sum_z p(z) H_{\alpha}(S\CS| \CP E \widetilde{E})_{\nu_{|z}}$).
\end{remark}

We now elaborate further on two other topics of importance for applications, namely how to handle protocols with ``infrequent sampling'', and a caveat regarding the convexity of the various relaxed lower bounds we presented.

\subsubsection{Infrequent-sampling channels}
\label{subsubsec:infreqsamp}

In most applications of entropy accumulation,  $\EATchann$ has the structure of an ``infrequent-sampling channel'' as defined in~\cite{DF19}:
\begin{definition}\label{def:infreqsamp}
A channel $\EATchann$ with some classical registers $\CS \CP$ in its output is said to be an \term{infrequent-sampling channel} if there is a constant \term{test probability} value $\gamma\in(0,1)$ such that 
\begin{align}\label{eq:infreqsamp}
\EATchann = (1-\gamma)\EATchann^\mathrm{gen} + \gamma\EATchann^\mathrm{test},
\end{align}
for some \term{generation} and \term{test} channels $\EATchann^\mathrm{gen},\EATchann^\mathrm{test}$, where $\EATchann^\mathrm{gen}$ satisfies the property that it always sets $(\CS,\CP)=(\perp,\perp)$ for some special symbol $\perp$, while $\EATchann^\mathrm{test}$ never sets $(\CS,\CP)=(\perp,\perp)$.
\end{definition}
Note that this channel structure is equivalent to saying that $\EATchann^\mathrm{gen}$ is implemented with probability $1-\gamma$ and $\EATchann^\mathrm{test}$ with probability $\gamma$.
With this interpretation, the term \term{generation rounds} (resp.~\term{test rounds}) is often used to refer to the rounds in which $\EATchann^\mathrm{gen}_j$ (resp.~$\EATchann^\mathrm{test}_j$) was implemented.
We highlight that in Theorem~\ref{th:GREAT}, if all the GEATT channels $\EATchann_j$ are infrequent-sampling channels with the same $\gamma$, then their convex range also retains that structure, in the sense that every state in it has the form $(1-\gamma)\nu^\mathrm{gen} + \gamma\nu^\mathrm{test}$ where $\nu^\mathrm{gen}_{\CS\CP}(\perp,\perp) = 1$ and $\nu^\mathrm{test}_{\CS\CP}(\perp,\perp) = 0$.
Also, we can always construct the channel $\EATchann$ in that theorem such that it is an infrequent-sampling channel with the same $\gamma$, for instance via Lemma~\ref{lemma:convrange}.

For infrequent-sampling channels, if we only keep the ``generation component'' in the linear combination of entropies, we obtain a simple lower bound whenever $S$ is classical (so that the entropies are non-negative):
\begin{align}\label{eq:onlygen}
\sum_{\cS\cP\in\supp\left(\bsym{\nu}^\omega_{\CS\CP}\right)}q(\cS\cP) H_{\widehat{\alpha}}(S|E\widetilde{E})_{\nu^\omega_{|\cS\cP}} \geq q(\perp\perp) H_{\widehat{\alpha}}(S|E\widetilde{E})_{\nu^\omega_{|\perp\perp}}.
\end{align}
Importantly, the above bound is an equality for any protocol in which $S$ is set to some fixed trivial value in test rounds, or any protocol in which $\CS$ or $\CP$ contains a copy of $S$ in test rounds (e.g.~if the test-round data includes the raw ``fine-grained'' data, instead of a ``coarse-grained'' version such as phase error rate or CHSH winning frequency), since for such protocols we have $H_{\widehat{\alpha}}(S|E\widetilde{E})_{\nu_{|\cS\cP}}=0$ for any $\cS\cP$ values other than $\perp\perp$.
In particular, this is often the case in device-dependent protocols of practical interest~\cite{arx_KAG+24}, in order to simplify the public announcement structure. Recalling also that for such protocols the $\CS$ register can usually be taken to be trivial~\cite{MR23} and so the Lemma~\ref{lemma:GREATonlyH} bound is tight as well, we can conclude that in such scenarios we have the simpler formulation
\begin{align}\label{eq:simplehbound}
h_{\widehat{\alpha}} 
&= \inf_{\mbf{q} \in S_\Omega} \inf_{\omega \in \dop{=}(\inQ)} \left( \frac{1}{\widehat{\alpha}-1}D\left(\mbf{q} \middle\Vert \bsym{\nu}^\omega_{\CP}\right) + q(\perp) H_{\widehat{\alpha}}(S|E\widetilde{E})_{\nu^\omega_{|\perp}} \right),
\end{align}
with the objective function being jointly convex in $\mbf{q}$ and $\omega$. If needed, we could even further simplify it with the methods discussed above by replacing the $q(\perp)$ factor with some lower bound $q^\mathrm{min}(\perp)$ induced by $S_\Omega$, or converting to von Neumann entropy with~\eqref{eq:tovN}.

A drawback of the Lemma~\ref{lemma:GREATonlyH} bound is that for device-\emph{independent} security proofs, the $\CS$ register is usually {not} trivial~\cite{ARV19,LLR+21,ZFK20,TSB+22}. Then as mentioned above, if $\CS$ contains a copy of $S$ in test rounds then~\eqref{eq:onlygen} becomes an equality, which implies that in such scenarios, the Lemma~\ref{lemma:GREATonlyH} bound loses all the ``test-round entropy contributions''.\footnote{We highlight that this loss comes inherently from the Lemma~\ref{lemma:GREATonlyH} bound rather than the subsequent bound~\eqref{eq:onlygen}, since the latter is an equality in such scenarios.} 
In DI security proofs, preserving these contributions in the entropy-accumulation part of the security proof usually yields slight improvements in the finite-size keyrates~\cite{ARV19,TSB+22}, so the Lemma~\ref{lemma:GREATonlyH} bound is suboptimal in that respect. 
One could retain these contributions by using Lemma~\ref{lemma:GREAT3Renyi} instead, but as discussed above, the resulting loss in the {\Renyi} parameters is nontrivial, roughly doubling some finite-size effects. 
The question of which lemma yields better results in a given scenario may have to be resolved on a case-by-case basis (as an initial impression, it seems likely that Lemma~\ref{lemma:GREATonlyH} will be better in many cases due to the better {\Renyi} parameters, \emph{except} that if the test probability $\gamma$ is large, it could be worse because the ``lost'' test-round contribution is larger in that case). We discuss this slightly further in Sec.~\ref{subsec:DIRE}. 

In summary, if we are considering a security proof where the $\CS$ registers are trivial, then the Lemma~\ref{lemma:GREATonlyH} bound is tight and should be simpler to analyze than Theorem~\ref{th:GREAT}. Otherwise, in principle one should use the divergence-based formulas in Theorem~\ref{th:GREAT} itself to obtain the best finite-size keyrates, while Lemmas~\ref{lemma:GREATonlyH}--\ref{lemma:GREAT3Renyi} should be considered as simpler but slightly suboptimal alternatives.
In a companion work~\cite{arx_HTB24}, we show that directly tackling the Theorem~\ref{th:GREAT} formula is possible in theory for some DI scenarios, by applying recently developed variational formulas for {\Renyi} divergence (though the resulting SDPs become very large if we keep all the terms in the sum over $\cS\cP$). 

\newcommand{\newvec}{t}
For infrequent-sampling channels, we can also rewrite the $D\left(\mbf{q} \middle\Vert \bsym{\nu}^\omega_{\CS\CP}\right)$ term in various ways to more clearly highlight its dependencies on $\gamma$, and improve numerical stability at small $\gamma$. As the first approach, ordering the distributions $\mbf{q}$ and $\bsym{\nu}^\omega_{\CS\CP}$ such that the first component is the $\perp\perp$ probability, note that they can always be expressed in the form
\begin{align}
\mbf{q} =
\begin{pmatrix}
 1-\gamma_q \\
\gamma_q \breve{\mbf{q}}
\end{pmatrix}, \quad
\bsym{\nu}^\omega_{\CS\CP} =
\begin{pmatrix}
1-\gamma \\
\gamma \breve{\bsym{\nu}}^\omega_{\CS\CP}
\end{pmatrix}, 
\end{align}
where $\gamma_q \defvar 1-q(\perp\perp)$, and $\breve{\mbf{q}},\breve{\bsym{\nu}}^\omega_{\CS\CP}$ are normalized distributions (computed by dropping the first components from $\mbf{q},\bsym{\nu}^\omega_{\CS\CP}$ and then dividing by $\gamma_q,\gamma$ respectively). With this we can write an expression in which the dependency on $\gamma_q,\gamma$ is made more explicit:
\begin{align}\label{eq:rewriteKLterm}
D\left(\mbf{q} \middle\Vert \bsym{\nu}^\omega_{\CS\CP}\right) &=  (1-\gamma_q)\log\frac{1-\gamma_q}{1-\gamma}+\sum_{\cS\cP \neq \perp\perp} \gamma_q \breve{q}(\cS\cP)\log
\frac{\gamma_q \breve{q}(\cS\cP)}{\gamma\breve{\nu}^\omega(\cS\cP)}
\nonumber\\
&= (1-\gamma_q)\log\frac{1-\gamma_q}{1-\gamma} + \gamma_q\log\frac{\gamma_q}{\gamma} + \gamma_q D\left(\breve{\mbf{q}} \middle\Vert \breve{\bsym{\nu}}^\omega_{\CS\CP}\right)\nonumber\\
&=D\left(\mbf{r} \middle\Vert \mbf{s}\right) +  \gamma_q D\left(\breve{\mbf{q}} \middle\Vert \breve{\bsym{\nu}}^\omega_{\CS\CP}\right) \text{ where } 
\mbf{r} =
\begin{pmatrix}
	1-\gamma_q \\
	\gamma_q 
\end{pmatrix},\;
\mbf{s} =
\begin{pmatrix}
	1-\gamma \\
	\gamma
\end{pmatrix}.
\end{align}
However, this expression may still be inconvenient for numerical work, because the relation between $\mbf{q}$ and $\breve{\mbf{q}}$ is nonlinear, making it harder to convert the constraint $\mbf{q}\in S_\Omega$ to a constraint on $\breve{\mbf{q}}$. For such applications, we instead recommend the following approach: consider a new variable
\begin{align}
\mbf{\newvec} \defvar
\begin{pmatrix}
q(\perp\perp) \\
\frac{1}{\gamma}\mbf{q}_{\setminus\perp\perp}
\end{pmatrix}, 
\end{align}
where $\mbf{q}_{\setminus\perp\perp}$ simply means $\mbf{q}$ with the $q(\perp\perp)$ component dropped. With this we can reparametrize $D\left(\mbf{q} \middle\Vert \bsym{\nu}^\omega_{\CS\CP}\right)$ in terms of $\mbf{\newvec}$ and the normalized distribution $\breve{\bsym{\nu}}^\omega_{\CS\CP}$ from above:
\begin{align}\label{eq:rewriteKLterm2}
D\left(\mbf{q} \middle\Vert \bsym{\nu}^\omega_{\CS\CP}\right) 
&=  \newvec(\perp\perp)\log\frac{\newvec(\perp\perp)}{1-\gamma} + \gamma\sum_{\cS\cP \neq \perp\perp} \newvec(\cS\cP)\log
\frac{\newvec(\cS\cP)}{\breve{\nu}^\omega(\cS\cP)}.
\end{align}
Since $\mbf{\newvec}$ (resp.~$\breve{\bsym{\nu}}^\omega(\cS\cP)$) is an affine function of $\mbf{q}$ (resp.~$\bsym{\nu}^\omega(\cS\cP)$), this formulation automatically preserves all convexity properties\footnote{Alternatively, one can directly show that this expression is jointly convex in $(\mbf{\newvec},\breve{\bsym{\nu}}^\omega(\cS\cP))$ by using the fact that the function $x\log(x/y)$ is jointly convex on $(x,y)\in\mathbb{R}_{\geq0}^2$. However, this ignores any ``coupling'' with the prefactors of the entropy terms in e.g.~Lemma~\ref{lemma:GREATonlyH}.}, and it is easy to convert the constraint $\mbf{q}\in S_\Omega$ to a constraint that $\mbf{\newvec}$ is in some other convex set $S'_\Omega$ (basically, $S_\Omega$ with some components ``rescaled''). 
This yields reparametrizations such as
\begin{align}
&\inf_{\mbf{q} \in S_\Omega} \inf_{\omega \in \dop{=}(\inQ)} \left( \frac{1}{\widehat{\alpha}-1}D\left(\mbf{q} \middle\Vert \bsym{\nu}^\omega_{\CS\CP}\right) + \sum_{\cS\cP\in\supp\left(\bsym{\nu}^\omega_{\CS\CP}\right)}q(\cS\cP) H_{\widehat{\alpha}}(S|E\widetilde{E})_{\nu^\omega_{|\cS\cP}} \right) \nonumber\\
=& \inf_{\mbf{\newvec} \in S'_\Omega} \inf_{\omega \in \dop{=}(\inQ)} \left( \frac{\newvec(\perp\perp)}{\widehat{\alpha}-1} \log\frac{\newvec(\perp\perp)}{1-\gamma} + \frac{\gamma}{\widehat{\alpha}-1}\sum_{\cS\cP \neq \perp\perp} \newvec(\cS\cP)\log
\frac{\newvec(\cS\cP)}{\breve{\nu}^\omega(\cS\cP)} 
\right. \nonumber\\
&\qquad + \left.
\newvec(\perp\perp) H_{\widehat{\alpha}}(S|E\widetilde{E})_{\nu^\omega_{|\perp\perp}} + \gamma \sum_{\cS\cP\neq \perp\perp} \newvec(\cS\cP) H_{\widehat{\alpha}}(S|E\widetilde{E})_{\nu^\omega_{|\cS\cP}} \right).
\end{align}
This convex optimization should be more numerically stable at small $\gamma$, because all components in $\mbf{\newvec}$ and $\breve{\bsym{\nu}}^\omega(\cS\cP)$ have comparable orders of magnitude, independent of $\gamma$ (apart from an ``indirect'' dependency in that their optimal values change as $\gamma$ changes). 

The first expression~\eqref{eq:rewriteKLterm} does have the advantage of being more intuitive for theoretical analysis; for instance, it 
lets us verify that we can choose $\gamma$ to scale the same way as described in~\cite{DF19}. 
Specifically, suppose for simplicity that $\mbf{q}\in S_\Omega$ implies that $\gamma-\delta_\gamma \leq \gamma_q \leq \gamma+\delta'_\gamma$ and $\breve{\mbf{q}} \in \breve{S}_\Omega$ for some constants $\delta_\gamma,\delta'_\gamma$ and some set $\breve{S}_\Omega$, so we can relax the infimum over $\mbf{q}\in S_\Omega$ to an infimum over $\gamma_q$ and $\breve{\mbf{q}}$ satisfying those constraints.
Also, we can drop the $D\left(\mbf{r} \middle\Vert \mbf{s}\right)$ term as it is non-negative. With this, we obtain bounds on smooth min-entropy of the form
\begin{align}
\Hmin^\eps(S_1^n \CS_1^n | \CP_1^n E_n)_{\rho_{|\Omega}} &\geq \inf_{\breve{\mbf{q}} \in \breve{S}_\Omega} \inf_{\omega \in \dop{=}(\inQ)} \left( \frac{\gamma - \delta_\gamma}{\widehat{\alpha}-1} D\left(\breve{\mbf{q}} \middle\Vert \breve{\bsym{\nu}}^\omega_{\CS\CP}\right) + (1-\gamma-\delta'_\gamma) H(S|E\widetilde{E})_{\nu^\omega_{|\perp\perp}} \right) n
\nonumber\\&\qquad 
- (\widehat{\alpha}-1) \log^2\left(1+2 \dim(S\CS) \right) n
- \frac{\smf{\eps}}{\alpha - 1}
- \frac{\alpha}{\alpha-1} \log\frac{1}{p_\Omega}.
\end{align}
If we now pick $\alpha=1+\Theta(\sqrt{\gamma/n})$ where the implied constants depend only on $\dim(S\CS),\eps,p_\Omega$, then we have $\widehat{\alpha}=1+\Theta(\sqrt{\gamma/n})$ as well. Let us also suppose for simplicity that $\delta_\gamma,\delta'_\gamma$ can be chosen such that\footnote{This claim is in fact somewhat delicate. First note that for $n$ IID instances of a Bernoulli random variable with expectation $p_\mathrm{exp}$, if we write the observed success frequency as $Q_\mathrm{obs}$, then the probability that $|Q_\mathrm{obs}-p_\mathrm{exp}|$ exceeds some value $\delta$ can be shown to be at most $2e^{-\frac{\delta^2 n}{3p_\mathrm{exp}}}$ by a Chernoff bound argument. This means that (focusing only on the $\perp\perp$ term and ignoring contributions from other test-round outcomes) we can choose $\delta_\gamma = \delta'_\gamma = \sqrt{(3\gamma/n) \log(2/\eps_\mathrm{com})}$ while still ensuring an accept probability of at least $1-O(\eps_\mathrm{com})$ on honest IID behaviour (by applying the Chernoff bound argument with $p_\mathrm{exp}=\gamma$, viewing a single-round ``success'' as the event that $\perp\perp$ did \emph{not} occur). In that case if we pick the scaling of $\gamma$ with $n$ to be any function that goes to zero slower than $1/n$ (as is indeed the case in the subsequent analysis), we see that $\delta_\gamma, \delta'_\gamma$ would indeed become arbitrarily small compared to $\gamma$. However if we pick $\gamma \propto 1/n$ \emph{exactly}, this argument does not quite work, just as in~\cite{DF19}. We leave a more detailed analysis (perhaps preserving the $D\left(\mbf{r} \middle\Vert \mbf{s}\right)$ term to avoid invoking an explicit lower bound on $\gamma_q$) for future work.} they become smaller than, say, $\gamma/2$ at sufficiently large $n$.
Then the above bound simplifies to
\begin{align}
\Hmin^\eps(S_1^n \CS_1^n | \CP_1^n E_n)_{\rho_{|\Omega}} &\geq 
\inf_{\breve{\mbf{q}} \in \breve{S}_\Omega} \inf_{\omega \in \dop{=}(\inQ)} \left( (\gamma - \delta_\gamma)\, \Theta\!\left(\sqrt{\frac{n}{\gamma}}\right) D\left(\breve{\mbf{q}} \middle\Vert \breve{\bsym{\nu}}^\omega_{\CS\CP}\right) + (1-\gamma-\delta'_\gamma) H(S|E\widetilde{E})_{\nu^\omega_{|\perp\perp}} \right) n
\nonumber\\&\qquad 
- \Theta\left(\sqrt{\gamma n}\right)
- \Theta\left(\sqrt{\frac{n}{\gamma}}\right) \nonumber\\
&\geq 
\inf_{\breve{\mbf{q}} \in \breve{S}_\Omega} \inf_{\omega \in \dop{=}(\inQ)} \left( \Theta\!\left(\sqrt{\gamma n}\right) D\left(\breve{\mbf{q}} \middle\Vert \breve{\bsym{\nu}}^\omega_{\CS\CP}\right) + (1-1.5\gamma) H(S|E\widetilde{E})_{\nu^\omega_{|\perp\perp}} \right) n
\nonumber\\&\qquad 
- \Theta\left(\sqrt{\frac{n}{\gamma}}\right) 
\quad\text{at sufficiently large $n$ such that } \delta_\gamma,\delta'_\gamma<\gamma/2
,
\end{align}
where in the second line we trivially upper bounded the $\Theta(\sqrt{\gamma n})$ term by $\Theta(\sqrt{{n}/{\gamma}})$.
With this, we observe similarly to~\cite{DF19} that if we pick the scaling of $\gamma$ with $n$ to be any function that goes to zero slower than $1/n$, then the
$\Theta(\sqrt{{n}/{\gamma}})$ term grows slower than $\Theta(n)$; furthermore, the $\Theta(\sqrt{\gamma n})$ prefactor on the KL divergence ``penalty'' term will increase with $n$. Hence for any such scaling of $\gamma$ with $n$, we would have the same limiting behaviour as the IID case:
\begin{align}\label{eq:asymp}
\lim_{n\to\infty} \frac{1}{n} \Hmin^\eps(S_1^n \CS_1^n | \CP_1^n E_n)_{\rho_{|\Omega}} &\geq 
\inf_{
\substack{
\omega \in \dop{=}(\inQ) \\
\suchthat \breve{\bsym{\nu}}^\omega \in \breve{S}_\Omega
}
} H(S|E\widetilde{E})_{\nu^\omega_{|\perp\perp}} .
\end{align}
Note that just as in~\cite{DF19}, there are some technical issues if we take \emph{exactly} $\gamma = k/n$ for some constant $k$ (basically, the expected number of test rounds).
In particular, this causes some difficulties with the assumption that $\delta_\gamma,\delta'_\gamma<\gamma/2$ at large $n$; furthermore, even if that issue were resolved, we would end up with a slightly but strictly worse bound, similar to that observed in~\cite{DF19}:
\begin{align}
\lim_{n\to\infty} \frac{1}{n} \Hmin^\eps(S_1^n \CS_1^n | \CP_1^n E_n)_{\rho_{|\Omega}} &\geq 
\inf_{\breve{\mbf{q}} \in \breve{S}_\Omega} \inf_{\omega \in \dop{=}(\inQ)} \left( \Theta\left(\sqrt{k}\right) D\left(\breve{\mbf{q}} \middle\Vert \breve{\bsym{\nu}}^\omega_{\CS\CP}\right) + H(S|E\widetilde{E})_{\nu^\omega_{|\perp\perp}} - \Theta\left(\frac{1}{\sqrt{k}}\right) \right),
\end{align}
which could be arbitrarily close to~\eqref{eq:asymp} by picking large enough $k$, but will not be equal to it.

\subsubsection{Convexity of relaxed lower bounds}
\label{subsubsec:bndconvexity}

We warn that while we have proven the convexity of various expressions in Theorem~\ref{th:GREAT} and Lemmas~\ref{lemma:GREATonlyH}--\ref{lemma:GREAT3Renyi}, these convexity properties might not be preserved when applying the lower bounds we described above. If such properties are required for numerical techniques to tackle the resulting optimizations, one would need to separately prove that they hold, which may or may not be straightforward (while in many cases the objective functions are clearly convex in the individual variables, showing that \emph{joint} convexity holds is more subtle). For instance, if we convert the Lemma~\ref{lemma:GREATonlyH} bound to von Neumann entropy with~\eqref{eq:tovN} and keep only the ``generation component'' as in~\eqref{eq:onlygen}, we would need to consider the convexity of $q(\perp\perp) H(S|E\widetilde{E})_{\nu^\omega_{|\perp\perp}}$.\footnote{The KL divergence term is already convex, so it suffices to show convexity of this entropy term alone. Still, it is true that in principle there may be situations where the \emph{sum} of the KL divergence term and entropy term is convex even if the latter is not convex by itself --- considering this possibility might help in tackling the issue we shall shortly discuss regarding the full linear combination $\sum_{\cS\cP}q(\cS\cP) H(S|E\widetilde{E})_{\nu^\omega_{|\cS\cP}}$. However, we leave this more elaborate possibility for future work, if it is relevant.} Note that the $H(S|E\widetilde{E})_{\nu^\omega_{|\perp\perp}}$ term by itself is convex in $\omega$, by Remark~\ref{remark:convexity} applied to the $\EATchann^\mathrm{gen}$ channel in~\eqref{eq:infreqsamp} (a subtlety is that we need to use the fact that the test probability $\gamma$ is a \emph{fixed} constant in~\eqref{eq:infreqsamp}), so if e.g.~we replace the $q(\perp\perp)$ term with some fixed lower bound $q^\mathrm{min}(\perp\perp)$ induced by $S_\Omega$, we do obtain a convex optimization. 

However, that last replacement can potentially be avoided by a somewhat subtle argument. First, following the techniques in~\cite{WLC18} for device-dependent QKD, one can construct CP linear maps $\mathcal{Z},\mathcal{G}$ determined by $\EATchann^\mathrm{gen}$, such that $H(S|E\widetilde{E})_{\nu^\omega_{|\perp\perp}} = D\left(\mathcal{G}[\omega] \middle\Vert \mathcal{Z}\circ\mathcal{G}[\omega] \right)$.\footnote{If $\mathcal{Z},\mathcal{G}$ are not trace-preserving, care is needed to ensure consistency with the definition~\eqref{eq:umegaki_div} we chose here for Umegaki divergence.} This is convex on the set of positive semidefinite (not necessarily normalized) $\omega$, by the convexity of Umegaki divergence under the definition~\eqref{eq:umegaki_div} we used~\cite[Chapter~4.1.2, Property~(IXb)]{Tom16}.
Also, under that definition we see that a ``scale-invariance'' property $D\left(\mathcal{G}[\omega/t] \middle\Vert \mathcal{Z}\circ\mathcal{G}[\omega/t] \right) = D\left(\mathcal{G}[\omega] \middle\Vert \mathcal{Z}\circ\mathcal{G}[\omega] \right)$ holds for any $t>0$ (recalling $\mathcal{Z},\mathcal{G}$ are linear). We now use the fact that for any convex function $f(\omega)$, its \term{perspective} function $g(\omega,t) \defvar t f(\omega/t)$ (for $t>0$) is jointly convex in $\omega$ and $t$~\cite[Chapter~3.2.6]{BV04v8}. Therefore, if $f$ is any convex function with a scale-invariance property $f(\omega/t)=f(\omega)$, we can see that $tf(\omega) = tf(\omega/t) = g(\omega,t)$ is jointly convex in $\omega$ and $t$. Hence we conclude that $q(\perp\perp) D\left(\mathcal{G}[\omega] \middle\Vert \mathcal{Z}\circ\mathcal{G}[\omega] \right)$ is in fact jointly convex in $q(\perp\perp)$ and $\omega$. On the other hand, it is less clear how to generalize this argument to a linear combination such as $\sum_{\cS\cP}q(\cS\cP) H(S|E\widetilde{E})_{\nu^\omega_{|\cS\cP}}$; we leave this to be resolved in future work if necessary.

Similar considerations apply to the formula~\eqref{eq:rewriteKLterm} for the KL divergence term $D\left(\mbf{q} \middle\Vert \bsym{\nu}^\omega_{\CS\CP}\right)$. However, as noted there, we again recommend that numerical work should be based on the formula~\eqref{eq:rewriteKLterm2} instead, which automatically preserves all convexity properties.

\subsection{Proof of Theorem~\ref{th:GREAT}}
\label{subsec:GREATproof}

We first derive a straightforward consequence of Corollary~\ref{cor:QEScond}.
This result may be of some use in its own right (note that it does not require the convexity condition on $S_\Omega$ in Theorem~\ref{th:GREAT}), though it is more or less just a specialization of the Corollary~\ref{cor:QEScond} bound.

\begin{corollary}\label{cor:GREATfixedf}
Consider the same conditions and notation as in Theorem~\ref{th:GREAT}, except without requiring that $S_\Omega$ is convex. Then for any QES $f$ on $\CS \CP$, we have
\begin{align}\label{eq:GREATfixedf}
H^\uparrow_\alpha(S_1^n \CS_1^n | \CP_1^n E_n)_{\rho_{|\Omega}} 
&\geq 
\inf_{\mbf{q} \in S_\Omega} \inf_{\nu\in\Sigma} \left(H^f_{\widehat{\alpha}}(S \CS | \CP E \widetilde{E})_{\nu} + \mbf{f}\cdot\mbf{q} \right) n 
- \frac{\alpha}{\alpha-1} \log\frac{1}{p_\Omega} \nonumber\\
&= 
\inf_{\mbf{q} \in S_\Omega} \inf_{\omega \in \dop{=}(\inQ)} \left(H^f_{\widehat{\alpha}}(S \CS | \CP E \widetilde{E})_{\nu^\omega} + \mbf{f}\cdot\mbf{q} \right) n 
- \frac{\alpha}{\alpha-1} \log\frac{1}{p_\Omega},
\end{align}
where $\mbf{f}$ denotes the function $f$ viewed equivalently as a tuple in $\mathbb{R}^{|\alphCS \times \alphCP|}$. Furthermore, in the second line, $H^f_{\widehat{\alpha}}(S \CS | \CP E \widetilde{E})_{\nu^\omega}$ is convex in $\omega$, and hence the objective function in the infimum is jointly convex in $\omega$ and $\mbf{q}$.
\end{corollary}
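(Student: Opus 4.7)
The plan is to specialize Corollary~\ref{cor:QEScond} to the case where the same QES $f$ is used in every round and for every history, and then collapse the resulting per-round bounds using the rate-bounding property of $\EATchann$ and the frequency-distribution constraint on $S_\Omega$. Concretely, I would take $f_{|\cS_1^{j-1}\cP_1^{j-1}}=f$ for all $j$ and all $\cS_1^{j-1}\cP_1^{j-1}$, so that $f_\mathrm{full}(\cS_1^n\cP_1^n)=\sum_{j=1}^n f(\cS_j\cP_j)=n\,\mbf{f}\cdot\freq_{\cS_1^n\cP_1^n}$.

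Next I would bound the two summands in the second line of~\eqref{eq:QEScond} separately. For the first summand, by Definition~\ref{def:ratebndchann} applied with this fixed $f$, each $\kappa_{\cS_1^{j-1}\cP_1^{j-1}}\defvar \inf_{\nu\in\Sigma_j}H^f_{\widehat{\alpha}}(S_j\CS_j|\CP_j E_j\widetilde{E})_\nu$ is at least $\inf_{\nu\in\Sigma}H^f_{\widehat{\alpha}}(S\CS|\CP E\widetilde{E})_\nu$, hence
\[
\sum_{j=1}^n \min_{\cS_1^{j-1}\cP_1^{j-1}}\kappa_{\cS_1^{j-1}\cP_1^{j-1}} \;\ge\; n\inf_{\nu\in\Sigma}H^f_{\widehat{\alpha}}(S\CS|\CP E\widetilde{E})_\nu.
\]
For the second summand, the hypothesis on $S_\Omega$ ensures that $\freq_{\cS_1^n\cP_1^n}\in S_\Omega$ for every $\cS_1^n\cP_1^n\in\widetilde{\Omega}$, so
\[
\min_{\cS_1^n\cP_1^n\in\widetilde{\Omega}} f_\mathrm{full}(\cS_1^n\cP_1^n) \;\ge\; n\inf_{\mbf{q}\in S_\Omega}\mbf{f}\cdot\mbf{q}.
\]
Since the two infima are over independent variables $\nu$ and $\mbf{q}$, they combine into a single joint infimum, yielding the first line of~\eqref{eq:GREATfixedf}.

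For the second line, I would use Lemma~\ref{lemma:convexity}: the inequality~\eqref{eq:purification_proccesing} shows that restricting from arbitrary inputs $\omega_{\inQ\widetilde{E}}$ to purified inputs $\pf(\omega_{\inQ})$ can only decrease $H^f_{\widehat{\alpha}}(S\CS|\CP E\widetilde{E})$, and conversely any pure input is isometrically equivalent to $\pf(\omega_{\inQ})$ for some $\omega_{\inQ}$, so the two infima agree. Finally, for the joint convexity claim, Lemma~\ref{lemma:convexity} gives convexity of $\omega_{\inQ}\mapsto H^f_{\widehat{\alpha}}(S\CS|\CP E\widetilde{E})_{\nu^\omega}$ (applicable since $\widehat{\alpha}>1$), while $\mbf{f}\cdot\mbf{q}$ is linear in $\mbf{q}$ and the two variables are independent, so the sum is jointly convex on $\dop{=}(\inQ)\times S_\Omega$.

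There is no real obstacle here beyond identifying the right specialization; the only minor subtlety is checking that the independence of the two infima allows the clean combination into a single joint optimization, and tracking that the $\widehat{\alpha}>1$ regime (required for Lemma~\ref{lemma:convexity}'s convexity conclusion) is indeed in force thanks to $\alpha\in(1,2)$.
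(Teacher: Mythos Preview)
Your proposal is correct and follows essentially the same route as the paper's proof: specialize Corollary~\ref{cor:QEScond} to identical QES-s $f$ in every round, bound the frequency term via $S_\Omega$ and the per-round infima via the rate-bounding channel, then combine; the equality and convexity claims are handled by Lemma~\ref{lemma:convexity} exactly as you indicate.
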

\begin{proof}  
Simply set all the QES-s $f_{|\cS_1^{j-1} \cP_1^{j-1}}$ in the Corollary~\ref{cor:QEScond} statement to be equal to $f$. Then the bound~\eqref{eq:QEScond} becomes
\begin{align}
H^\uparrow_\alpha(S_1^n \CS_1^n | \CP_1^n E_n)_{\rho_{|\Omega}} &\geq \min_{\cS_1^n \cP_1^n \in \widetilde{\Omega}}\sum_{j=1}^n\left(f(\cS_j\cP_j)+ \inf_{\nu\in\Sigma_j} H^f_{\widehat{\alpha}}(S_j \CS_j | \CP_j E_j \widetilde{E})_{\nu}\right) 
- \frac{\alpha}{\alpha-1} \log\frac{1}{p_\Omega}\nonumber\\
&= \min_{\cS_1^n \cP_1^n \in \widetilde{\Omega}} \sum_{\cS\cP\in\alphCS\times\alphCP} f(\cS \cP) \freq_{\cS_1^n \cP_1^n}(\cS \cP) n 
+ \sum_{j=1}^n \inf_{\nu\in\Sigma_j} H^f_{\widehat{\alpha}}(S_j \CS_j | \CP_j E_j \widetilde{E})_{\nu} 
- \frac{\alpha}{\alpha-1} \log\frac{1}{p_\Omega}\nonumber\\
&\geq \inf_{\mbf{q} \in S_\Omega} \mbf{f}\cdot\mbf{q} \, n
+ \sum_{j=1}^n \inf_{\nu\in\Sigma_j} H^f_{\widehat{\alpha}}(S_j \CS_j | \CP_j E_j \widetilde{E})_{\nu}
- \frac{\alpha}{\alpha-1} \log\frac{1}{p_\Omega}\nonumber\\
&\geq \inf_{\mbf{q} \in S_\Omega} \inf_{\nu\in\Sigma}\left(H^f_{\widehat{\alpha}}(S \CS | \CP E \widetilde{E})_{\nu}+ \mbf{f}\cdot\mbf{q}\right)n
- \frac{\alpha}{\alpha-1} \log\frac{1}{p_\Omega},
\end{align}
where in the second line we rewrite the sum of $f(\cS_j\cP_j)$ values in terms of the frequency distribution $\freq_{\cS_1^n \cP_1^n}$, 
the third line holds by expressing the minimization over $\cS_1^n \cP_1^n \in \widetilde{\Omega}$ as an minimization over its induced frequency distributions and then relaxing the domain to $S_\Omega$ (using the first condition on $S_\Omega$),
and the fourth line follows simply from the fact that every state $\nu\in\Sigma_j$ also lies in the convex range $\Sigma$ by definition. (It may be worth noting that the third line is a valid bound even without introducing the convex range, so it may be of some potential use in contexts where the channels are not ``all basically the same'', though in that case it also seems suboptimal to choose all the QES-s equal.)

The second line in~\eqref{eq:GREATfixedf} follows straightforwardly from Lemma~\ref{lemma:convexity}; from that lemma, we also know that 
$H^f_{\widehat{\alpha}}(S \CS | \CP E \widetilde{E})_{\nu^\omega}$ is convex in $\omega$ as claimed,
since $\widehat{\alpha} > 1$ given $\alpha > 1$.
\end{proof}

Next, we derive a lemma that is an essential tool in the subsequent calculations. It is based on the concept of (Legendre-Fenchel) convex conjugates~\cite{BV04v8}:
\begin{definition}\label{def:conjugate}
(Convex conjugate) For a function $F:D\to\mathbb{R}\cup\{-\infty,+\infty\}$ where $D$ is a convex subset of $\mathbb{R}^k$, its \term{convex conjugate} is the function $F^*:\mathbb{R}^k\to\mathbb{R}\cup\{-\infty,+\infty\}$ given by
\begin{align}
F^*(\mbf{y}) \defvar \sup_{\mbf{x}\in D} \left(\mbf{x}\cdot\mbf{y} - F(\mbf{x})\right).
\end{align}
\end{definition}

Under this definition, we show the following result. 
The proof is a routine but somewhat lengthy calculation; we hence defer it to Appendix~\ref{app:someproofs}.
\begin{lemma}\label{lemma:Legendre_conjugate}
Consider any $\alpha\in (1,\infty)$ and any $\rho \in \dop{=}(QQ'\CS\CP)$ 
such that $\CS,\CP$ are classical with alphabets $\alphCS,\alphCP$. 
Then letting $\mathbb{P}_{\alphCS\alphCP}$ denote the set of probability distributions on $\alphCS \times \alphCP$, 
the pair of functions $\left(G_{\alpha,\rho}
,G_{\alpha,\rho}^*
\right)$ on $\mathbb{R}^{|\alphCS \times \alphCP|}$ defined as
\begin{align}\label{eq:GandGstar}
\begin{aligned}
&G_{\alpha,\rho}({\mbf{f}}) \coloneqq - H^{f}_\alpha(Q\CS|\CP Q')_{\rho}, \\
&G_{\alpha,\rho}^*(\bsym{\lambda}) \coloneqq 
\begin{cases}
\frac{1}{\alpha-1}D\left(\bsym{\lambda} \middle\Vert \bsym{\rho}_{\CS\CP}\right)-\sum_{\cS\cP\in\supp(\bsym{\rho}_{\CS\CP})}\lambda(\cS\cP)D_{\alpha}\left(\rho_{QQ' \land \cS \cP} \middle\Vert \id_Q\otimes\rho_{Q' \land \cP} \right) & \text{if } 
\bsym{\lambda} \in \mathbb{P}_{\alphCS\alphCP}
, \\
+\infty  & \text{otherwise,}
\end{cases}
\end{aligned}
\end{align}
are convex functions that are convex conjugates to each other, where in the $G_{\alpha,\rho}$ formula we recall that any tuple $\mbf{f}\in\mathbb{R}^{|\alphCS \times \alphCP|}$ is equivalent to a QES on $\CS\CP$, and in the $G^*_{\alpha,\rho}$ formula we recall that $\bsym{\rho}_{\CS\CP}$ denotes the distribution induced by $\rho$ on $\CS\CP$.\footnote{Note that in the $G^*_{\alpha,\rho}$ formula, the first case can still take value $+\infty$ for some values of $\bsym{\lambda}$, due to the $D\left(\bsym{\lambda} \middle\Vert \bsym{\rho}_{\CS\CP}\right)$ term. Specifically, this happens whenever $\supp(\bsym{\lambda}) \not\subseteq \supp(\bsym{\rho}_{\CS\CP})$.}
\end{lemma}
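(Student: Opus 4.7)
The strategy will be to verify directly that the formula given for $G_{\alpha,\rho}^*$ coincides with the Fenchel conjugate $\sup_{\mbf{f}}(\mbf{f}\cdot\bsym\lambda-G_{\alpha,\rho}(\mbf{f}))$. Once this is done, convexity of $G_{\alpha,\rho}^*$ is automatic, as it is then a pointwise supremum of affine functions of $\bsym\lambda$, and the reverse identity $G_{\alpha,\rho}=G_{\alpha,\rho}^{**}$ follows from the Fenchel-Moreau biconjugate theorem applied to $G_{\alpha,\rho}$, which will be seen to be convex and everywhere continuous (hence lower semicontinuous) on its finite-dimensional domain.

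The first step is to rewrite $G_{\alpha,\rho}$ so as to isolate its dependence on $\mbf{f}$. Using the last expression in Definition~\ref{def:QES} and setting
\begin{align*}
T(\cS\cP)\defvar\Tr\!\left[\left(\left(\rho_{Q'\land\cP}\right)^{\frac{1-\alpha}{2\alpha}}\rho_{QQ'\land\cS\cP}\left(\rho_{Q'\land\cP}\right)^{\frac{1-\alpha}{2\alpha}}\right)^{\alpha}\right]\!,
\end{align*}
one obtains
\begin{align*}
G_{\alpha,\rho}(\mbf{f})=\frac{1}{\alpha-1}\log\!\left(\sum_{\cS\cP\in\supp(\bsym\rho_{\CS\CP})}T(\cS\cP)\,2^{(\alpha-1)f(\cS\cP)}\right),
\end{align*}
noting that $T(\cS\cP)>0$ exactly on $\supp(\bsym\rho_{\CS\CP})$. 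Up to the positive prefactor $1/(\alpha-1)$, this is a standard log-sum-exp of affine functions of $\mbf{f}$, so $G_{\alpha,\rho}$ is smooth and convex on all of $\mathbb{R}^{|\alphCS\times\alphCP|}$. Since $\Tr\rho_{QQ'\land\cS\cP}=\rho(\cS\cP)$, one also has on $\supp(\bsym\rho_{\CS\CP})$ the identity $T(\cS\cP)=\rho(\cS\cP)\cdot 2^{(\alpha-1)D_\alpha(\rho_{QQ'\land\cS\cP}\Vert\id_Q\otimes\rho_{Q'\land\cP})}$, which will be used momentarily.

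Next, compute the supremum defining $G_{\alpha,\rho}^*(\bsym\lambda)$. In the ``generic'' case $\bsym\lambda\in\mathbb{P}_{\alphCS\alphCP}$ with $\supp(\bsym\lambda)\subseteq\supp(\bsym\rho_{\CS\CP})$, the first-order condition $\partial G_{\alpha,\rho}/\partial f(\cS\cP)=\lambda(\cS\cP)$ gives $T(\cS\cP)\,2^{(\alpha-1)f^*(\cS\cP)}=\lambda(\cS\cP)Z$ with $Z$ the full partition sum inside $\log$, so that substituting back and using $\sum\lambda(\cS\cP)=1$ to cancel $\log Z$ yields
\begin{align*}
G_{\alpha,\rho}^*(\bsym\lambda)=\frac{1}{\alpha-1}\sum_{\cS\cP}\lambda(\cS\cP)\bigl(\log\lambda(\cS\cP)-\log T(\cS\cP)\bigr).
\end{align*}
Substituting the above identity for $\log T(\cS\cP)$ and regrouping (adding and subtracting $\tfrac{1}{\alpha-1}\sum\lambda(\cS\cP)\log\rho(\cS\cP)$) reproduces exactly the stated formula $\tfrac{1}{\alpha-1}D(\bsym\lambda\Vert\bsym\rho_{\CS\CP})-\sum\lambda(\cS\cP)D_\alpha(\rho_{QQ'\land\cS\cP}\Vert\id_Q\otimes\rho_{Q'\land\cP})$.

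The main obstacle is handling the ``infeasible'' cases, where the supremum must equal $+\infty$ to match the second branch of $G_{\alpha,\rho}^*$ (noting that $D(\bsym\lambda\Vert\bsym\rho_{\CS\CP})=+\infty$ whenever $\supp(\bsym\lambda)\not\subseteq\supp(\bsym\rho_{\CS\CP})$). There are three such cases: (i) some $\lambda(\cS\cP)<0$, handled by sending $f(\cS\cP)\to-\infty$, which drives $\mbf{f}\cdot\bsym\lambda\to+\infty$ while only decreasing $G_{\alpha,\rho}$ (its log-argument loses a nonnegative term); (ii) $\lambda(\cS\cP)>0$ at some $(\cS,\cP)$ with $T(\cS\cP)=0$, handled by sending $f(\cS\cP)\to+\infty$, which leaves $G_{\alpha,\rho}$ unchanged while $\mbf{f}\cdot\bsym\lambda\to+\infty$; and (iii) $\sum\lambda(\cS\cP)\neq 1$, handled by probing along $\mbf{f}=c\mbf{1}$, so that $\mbf{f}\cdot\bsym\lambda-G_{\alpha,\rho}(\mbf{f})$ reduces to $c(\sum\lambda-1)-\tfrac{1}{\alpha-1}\log\sum_{\cS\cP}T(\cS\cP)$, unbounded as $c\to\pm\infty$. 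A minor limiting argument also applies in the generic case when $\supp(\bsym\lambda)\subsetneq\supp(\bsym\rho_{\CS\CP})$: there $f^*(\cS\cP)\to-\infty$ on the ``missing'' support, but the corresponding contributions to both $\mbf{f}^*\cdot\bsym\lambda$ and to $Z$ vanish in the limit, leaving the same final value.
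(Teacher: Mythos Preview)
Your proof is correct and follows essentially the same route as the paper's: rewrite $G_{\alpha,\rho}$ as a log-sum-exp in $\mbf{f}$ to establish convexity, compute the conjugate by solving the first-order stationarity condition in the generic case, handle the three degenerate cases (negative entries, support mismatch, wrong normalization) by pushing coordinates of $\mbf{f}$ to $\pm\infty$ or along $c\mbf{1}$, and appeal to the biconjugate theorem for the reverse direction. The paper's proof is organized slightly differently (it restricts early to $\supp(\bsym\rho_{\CS\CP})$ and works with the restricted variables $\tilde{\mbf{f}},\widetilde{\bsym\lambda}$), but the content is the same.
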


We now have the tools required to obtain Theorem~\ref{th:GREAT}, by proving this final key lemma. 
Note that for this lemma to be non-vacuous, it is important that there indeed exists a channel $\EATchann$ satisfying the required conditions, so that we can indeed define $\nu^\omega$ as in~\eqref{eq:GREATconvex}. Hence we again highlight that we can indeed give an explicit construction of this channel; see Lemma~\ref{lemma:convrange}.
\begin{lemma}\label{lemma:duality}
Consider the same conditions and notation as in Corollary~\ref{cor:GREATfixedf}. Let $r_{\widehat{\alpha}}(\mbf{f})$ denote the value of the infimum in the second line of~\eqref{eq:GREATfixedf} as a function of the QES $\mbf{f}$, i.e.
\begin{align}
r_{\widehat{\alpha}}(\mbf{f})
\defvar&\inf_{\mbf{q} \in S_\Omega} \inf_{\omega \in \dop{=}(\inQ)} \left(H^f_{\widehat{\alpha}}(S \CS | \CP E \widetilde{E})_{\nu^\omega} + \mbf{f}\cdot\mbf{q} \right).
\end{align} 
Then the optimization problem 
\begin{align}\label{eq:dualprob}
\sup_{\mbf{f}} r_{\widehat{\alpha}}(\mbf{f}),
\end{align}
is the Lagrange dual problem\footnote{In this lemma, Lagrange dual problems are defined in the sense described in e.g.~\cite{BV04v8}. Since the constraints are equality constraints, there is an arbitrary sign convention to pick when defining the dual variables; see the last line of~\eqref{eq:replaceG} below for the sign convention we used.} of the constrained optimization
\begin{align}\label{eq:constrainedopt}
\begin{gathered} 
\inf_{\mbf{q} \in S_\Omega} \inf_{\bsym{\lambda} \in \mathbb{P}_{\CS\CP}} 
\inf_{\omega \in \dop{=}(\inQ)}  \left( \frac{1}{\widehat{\alpha}-1}D\left(\bsym{\lambda} \middle\Vert \bsym{\nu}^\omega_{\CS\CP}\right)-\sum_{\cS\cP \in \supp\left(\bsym{\nu}^\omega_{\CS\CP}\right)} \lambda(\cS\cP)D_{\widehat{\alpha}}\left(\nu^\omega_{SE\widetilde{E} \land \cS\cP} \middle\Vert \id_S\otimes\nu^\omega_{E\widetilde{E} \land \cP} \right) \right) \\
\suchthat \quad \mbf{q}-\bsym{\lambda}=\mbf{0}
\end{gathered},
\end{align}
in which the objective function is jointly convex in $\bsym{\lambda}$ and $\omega$ (and $\mbf{q}$, trivially), with $\mathbb{P}_{\CS\CP}$ denoting the set of probability distributions on $\CS\CP$.

Furthermore, if $S_\Omega$ is convex, the values of~\eqref{eq:dualprob} and~\eqref{eq:constrainedopt} are equal.
\end{lemma}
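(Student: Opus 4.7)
The plan is to verify the three claims in turn: (i) that~\eqref{eq:dualprob} is the Lagrange dual of~\eqref{eq:constrainedopt}, (ii) joint convexity of the objective of~\eqref{eq:constrainedopt} in $(\bsym{\lambda},\omega)$, and (iii) strong duality when $S_\Omega$ is convex. The central tool throughout is Lemma~\ref{lemma:Legendre_conjugate}, which identifies the objective of~\eqref{eq:constrainedopt} (for $\bsym{\lambda}\in\mathbb{P}_{\CS\CP}$) exactly as the convex conjugate $G^*_{\widehat{\alpha},\nu^\omega}(\bsym{\lambda})$ of $G_{\widehat{\alpha},\nu^\omega}(\mbf{f})=-H^f_{\widehat{\alpha}}(S\CS|\CP E\widetilde{E})_{\nu^\omega}$.

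For (i), I would form the Lagrangian $L(\mbf{q},\bsym{\lambda},\omega,\mbf{f})=G^*_{\widehat{\alpha},\nu^\omega}(\bsym{\lambda})+\mbf{f}\cdot(\mbf{q}-\bsym{\lambda})$, with $\mbf{q}\in S_\Omega$, $\bsym{\lambda}\in\mathbb{P}_{\CS\CP}$, and $\omega\in\dop{=}(\inQ)$ kept as explicit feasibility restrictions. Separating the infima, the $\mbf{q}$-minimization yields $\inf_{\mbf{q}\in S_\Omega}\mbf{f}\cdot\mbf{q}$; the $\bsym{\lambda}$-minimization can be extended to all of $\mathbb{R}^{|\alphCS\times\alphCP|}$ without change (since $G^*_{\widehat{\alpha},\nu^\omega}=+\infty$ off $\mathbb{P}_{\CS\CP}$), and thus equals $-G^{**}_{\widehat{\alpha},\nu^\omega}(\mbf{f})=H^f_{\widehat{\alpha}}(S\CS|\CP E\widetilde{E})_{\nu^\omega}$ by the biconjugate identity guaranteed in Lemma~\ref{lemma:Legendre_conjugate}. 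Recombining with the outer $\omega$-infimum gives back exactly $r_{\widehat{\alpha}}(\mbf{f})$.

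For (ii), I would again invoke the conjugate representation, writing $G^*_{\widehat{\alpha},\nu^\omega}(\bsym{\lambda})=\sup_{\mbf{f}}\left[\mbf{f}\cdot\bsym{\lambda}+H^f_{\widehat{\alpha}}(S\CS|\CP E\widetilde{E})_{\nu^\omega}\right]$. For each fixed $\mbf{f}$ the inner expression is linear in $\bsym{\lambda}$ and convex in $\omega$ by Lemma~\ref{lemma:convexity} (which applies since $\widehat{\alpha}\in(1,\infty)$), and is therefore jointly convex in $(\bsym{\lambda},\omega)$; since pointwise suprema preserve joint convexity, the full objective is jointly convex as claimed.

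For (iii), I would apply Sion's minimax theorem to the equivalent primal formulation $\inf_{\mbf{q}\in S_\Omega,\,\omega\in\dop{=}(\inQ)}\sup_{\mbf{f}}\left[\mbf{f}\cdot\mbf{q}+H^f_{\widehat{\alpha}}(S\CS|\CP E\widetilde{E})_{\nu^\omega}\right]$ obtained by substituting the conjugate representation of the objective. The integrand is jointly convex and continuous in $(\mbf{q},\omega)$ for each fixed $\mbf{f}$, and concave and continuous in $\mbf{f}$ for each fixed $(\mbf{q},\omega)$ (concavity in $\mbf{f}$ follows from convexity of $G_{\widehat{\alpha},\nu^\omega}$ in Lemma~\ref{lemma:Legendre_conjugate}). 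The main obstacle is compactness of the $(\mbf{q},\omega)$-domain: while $\dop{=}(\inQ)$ is compact in finite dimensions, the set $S_\Omega$ need not be closed. I plan to handle this by passing to the closure $\overline{S_\Omega}$, which remains convex and is a closed subset of the probability simplex (hence compact), and by using lower semicontinuity of the convex objective to argue that the infima over $S_\Omega$ and $\overline{S_\Omega}$ coincide. With a compact convex $(\mbf{q},\omega)$-domain in hand, Sion's minimax swaps the inf and sup and yields the claimed equality of primal and dual values; carefully justifying this closure step is the principal technical hurdle.
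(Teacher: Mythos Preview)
Your proposal is essentially the same as the paper's proof: parts (i) and (ii) match the paper's argument via Lemma~\ref{lemma:Legendre_conjugate} and Lemma~\ref{lemma:convexity} almost verbatim, and your part (iii) via Sion's minimax is exactly the alternative the paper presents alongside its primary Clark--Duffin argument.

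One technical point on the closure step you flag as the main hurdle: lower semicontinuity of the primal objective $G^*_{\widehat{\alpha},\nu^\omega}(\mbf{q})$ does \emph{not} by itself guarantee that $\inf_{S_\Omega}=\inf_{\overline{S_\Omega}}$ (an lsc function can drop at boundary points). What does work is that the \emph{integrand} $\mbf{f}\cdot\mbf{q}+H^f_{\widehat{\alpha}}$ is continuous in $\mbf{q}$ (linear), so the dual value is unchanged under closure; but to also keep the primal value unchanged you need continuity of $G^*_{\widehat{\alpha},\nu^\omega}$ in $(\mbf{q},\omega)$ in the extended-real sense. The paper establishes this by taking the purifying function $\pf$ continuous (so $\omega\mapsto\nu^\omega$ is continuous) and then carefully handling the change of summation domain $\supp(\bsym{\nu}^\omega_{\CS\CP})$ at points where some $\nu^\omega(\cS\cP)$ vanishes, using the $D(\cdot\Vert\cdot)$ term to absorb discontinuities into the $+\infty$ value. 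Once you have extended-real continuity, both $\inf$'s over $S_\Omega$ and $\overline{S_\Omega}$ coincide and Sion's applies cleanly.
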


Before giving its proof, we present two critical consequences of this lemma. First, observe that the\footnote{In this paragraph we informally refer to ``the'' best choice of $f$, though this is technically a misnomer because it is necessarily non-unique --- given any QES $f$ in Corollary~\ref{cor:GREATfixedf}, inspecting the bound shows that $f+\kappa$ for any $\kappa\in\mathbb{R}$ yields an identical bound, since the $\kappa$ dependence ``cancels off''. (Loosely speaking, this non-uniqueness seems to arise from the implicit normalization constraint in the domains $S_\Omega,\mathbb{P}_{\CS\CP}$, though we leave further analysis for future work.) Furthermore, it might potentially be possible that the optimal $f$ is not attained, as we discuss later. However, this non-uniqueness or non-attainability does not have any effects on our actual proofs of Lemma~\ref{lemma:duality} or Theorem~\ref{th:GREAT}; we are merely highlighting a technicality in our informal statements here.} best choice of $f$ in Corollary~\ref{cor:GREATfixedf} is exactly the same thing as the optimal $\mbf{f}$ in the optimization~\eqref{eq:dualprob}. The above lemma gives us a method to find this $\mbf{f}$, even in situations where $S_\Omega$ does not satisfy the convexity condition in Theorem~\ref{th:GREAT}, by computing or approximating the optimal dual solution to~\eqref{eq:constrainedopt} (see~\cite{arx_GLT+22,arx_KAG+24} for possible numerical approaches) --- this approach can be used as a substitute for Theorem~\ref{th:GREAT} in such circumstances. 
(Also, if the {\Renyi} divergence terms in~\eqref{eq:constrainedopt} are difficult to handle, then we can slightly relax the optimization using the steps in the subsequent Lemma~\ref{lemma:GREATonlyH}--\ref{lemma:GREAT3Renyi} proofs to obtain versions with only {\Renyi} conditional entropies; it seems reasonable to assume the dual solutions for such versions will not be far from the optimal $f$.)
However, we must stress that in such a scenario, we do not have a guarantee that the values of the optimizations~\eqref{eq:dualprob} and~\eqref{eq:constrainedopt} are equal. Hence it would be important to actually substitute the final choice of $f$ into Corollary~\ref{cor:GREATfixedf} and compute the resulting value; it is \emph{not} safe to assume we can obtain a valid lower bound simply by evaluating the optimization~\eqref{eq:constrainedopt} on its own. (Note that while \term{weak duality}~\cite{BV04v8} always holds for Lagrange dual problems, 
it would be the statement that~\eqref{eq:dualprob} is a lower bound on~\eqref{eq:constrainedopt}, which is the wrong inequality direction for trying to lower-bound $H^\uparrow_\alpha(S_1^n \CS_1^n | \CP_1^n E_n)_{\rho_{|\Omega}}$.) 

The second important consequence is for the case where $S_\Omega$ is indeed convex, because in that case it immediately yields the desired proof of Theorem~\ref{th:GREAT}. Specifically, we simply need to optimize the Corollary~\ref{cor:GREATfixedf} bound over all $f$, which gives
\begin{align}
H^\uparrow_\alpha(S_1^n \CS_1^n | \CP_1^n E_n)_{\rho_{|\Omega}} 
&\geq 
\left(\sup_{\mbf{f}} r_{\widehat{\alpha}}(\mbf{f})\right) n 
- \frac{\alpha}{\alpha-1} \log\frac{1}{p_\Omega},
\end{align}
and set $h_{\widehat{\alpha}}$ to be equal to the value of the supremum over $\mbf{f}$, i.e.~precisely the value of the optimization~\eqref{eq:dualprob}. Then given $S_\Omega$ is convex, Lemma~\ref{lemma:duality} tells us that $h_{\widehat{\alpha}}$ is also equal to the value of the optimization~\eqref{eq:constrainedopt}, i.e.~basically that \term{strong duality}~\cite{BV04v8} holds (in the sense that the dual problems have the same value, though not necessarily in the sense that the optimum in either problem is obtained; we discuss this later at the end of the Lemma~\ref{lemma:duality} proof). By simply substituting the constraint in~\eqref{eq:constrainedopt} into the objective, we see that this simplifies to the desired expression~\eqref{eq:GREATconvex} for $h_{\widehat{\alpha}}$, with the claimed convexity properties. 

Technically, to complete the proof of Theorem~\ref{th:GREAT} there is one last remaining step: showing that~\eqref{eq:GREATconvex} matches the preceding expression~\eqref{eq:GREAT}. This is a straightforward data-processing argument, albeit slightly tedious. First, clearly the former is lower bounded by the latter, since (for each $\mbf{q}$) every feasible point $\omega_{\inQ}$ of the former yields a feasible point of the latter by setting $\nu = \nu^\omega$. 
For the reverse bound, take any feasible point $\nu\in\Sigma$ in~\eqref{eq:GREAT}, i.e.~$\nu = \EATchann\left[\omega_{\inQ \widetilde{E}}\right]$ for some $\omega_{\inQ \widetilde{E}}$. 
Then the reduced state $\omega_{\inQ}$ is a feasible point of~\eqref{eq:GREATconvex} with an objective value that is no larger, by a standard data-processing argument, as follows.
Since $\pf(\omega_{\inQ})$ is a purification of $\omega_{\inQ}$, there exists some channel $\mathcal{E}$ on $\widetilde{E}$ such that $\omega_{\inQ \widetilde{E}} = \mathcal{E}[\pf(\omega_{\inQ})]$. This channel $\mathcal{E}$ commutes with $\EATchann$, so by applying $\EATchann$ on both sides we see that $\nu = \mathcal{E}[\nu^\omega]$.
Since $\mathcal{E}$ does not act on $\CS\CP$, from this relation we get $\nu_{\CS\CP} = \nu^\omega_{\CS\CP}$ (by tracing out everything other than $\CS\CP$) and also $\nu_{\land\cS\cP}= \mathcal{E}[\nu^\omega_{\land\cS\cP}]$ (i.e.~$\mathcal{E}$ commutes with taking partial states on the $\CS\CP$ registers). 
Therefore we have $D\left(\mbf{q} \middle\Vert \bsym{\nu}_{\CS\CP}\right) = D\left(\mbf{q} \middle\Vert \bsym{\nu}^\omega_{\CS\CP}\right)$ and also 
\begin{align}
\forall \cS\cP \in \supp\left(\bsym{\nu}_{\CS\CP}\right), \quad
-D_{\widehat{\alpha}}\left(\nu_{SE\widetilde{E} \land \cS\cP} \middle\Vert \id_S\otimes\nu_{E\widetilde{E} \land \cP} \right) 
&= -D_{\widehat{\alpha}}\left(\mathcal{E}\left[\nu^\omega_{SE\widetilde{E} \land \cS\cP}\right] \middle\Vert \id_S\otimes \mathcal{E}\left[\nu^\omega_{E\widetilde{E} \land \cP}\right] \right) \nonumber\\
&\geq -D_{\widehat{\alpha}}\left(\nu^\omega_{SE\widetilde{E} \land \cS\cP} \middle\Vert \id_S\otimes\nu^\omega_{E\widetilde{E} \land \cP} \right) ,
\end{align}
by the data-processing property for {\Renyi} divergences (Fact~\ref{fact:DPI}). This yields the desired result.

We highlight that in our above analysis proving Theorem~\ref{th:GREAT} from Corollary~\ref{cor:GREATfixedf} and Lemma~\ref{lemma:duality}, everything was an \emph{equality}, i.e.~there is no loss of tightness in using Theorem~\ref{th:GREAT} (when $S_\Omega$ is convex) as compared to the Corollary~\ref{cor:GREATfixedf} bound optimized over all choices of $f$. Hence there seem to be few remaining ways in which Theorem~\ref{th:GREAT} could be sharpened; we return to this point in more detail in Sec.~\ref{sec:numerics}.

\begin{remark}\label{remark:optQES}
Lemma~\ref{lemma:duality} also suggests heuristic procedures for choosing QES-s to use in variable-length protocols as described in Sec.~\ref{subsec:QESapp}. For instance, if we simply want to use the same QES $f$ for every round, then it seems likely that the optimal dual solution to~\eqref{eq:constrainedopt} (with $S_\Omega$ being a singleton set containing only the honest behaviour, or perhaps some $\delta$-neighbourhood of it) should yield a good choice of $f$, though we do not attempt to prove this here. 
It also suggests an approach for the broader possibility of adaptive choices of QES-s $f_{|\cS_1^{j-1} \cP_1^{j-1}}$ in the manner described in Sec.~\ref{subsec:QESapp}. Specifically, for each $j$ at which we want to ``update'' the QES choice, we could consider the preceding rounds and use them to compute a loose estimate $\mbf{q}^\mathrm{est}$ of the upcoming behaviour of the device (via any suitable physical and/or statistical model), then take $f_{|\cS_1^{j-1} \cP_1^{j-1}}$ to be the dual solution of the constrained optimization~\eqref{eq:constrainedopt} with $S_\Omega$ being the singleton set $\{\mbf{q}^\mathrm{est}\}$ (or some small $\delta$-neighbourhood around it, perhaps chosen according to the number of remaining rounds). This is currently a highly heuristic idea and we do not have any analysis of whether it gives reasonable performance; however, it seems like a plausible approach in light of the properties we have derived above.
\end{remark}

We now wrap up by presenting the proof of Lemma~\ref{lemma:duality}, which basically consists of a sequence of convex-analysis transformations.

\begin{proof}
Following Lemma~\ref{lemma:Legendre_conjugate}, we write $H^f_{\widehat{\alpha}}(S \CS | \CP E \widetilde{E})_{\nu^\omega} = -G_{\widehat{\alpha},\nu^\omega}({\mbf{f}})$ (identifying the registers $Q$ and $Q'$ in that lemma with $S$ and $E\widetilde{E}$ here respectively). 
With this, we can now rewrite $r_{\widehat{\alpha}}(\mbf{f})$ as follows:
\begin{align}\label{eq:replaceG}
r_{\widehat{\alpha}}(\mbf{f})
=& \inf_{\mbf{q} \in S_\Omega} \inf_{\omega \in \dop{=}(\inQ)} \left(-G_{\widehat{\alpha},\nu^\omega}({\mbf{f}}) + \mbf{f}\cdot\mbf{q} \right)
\nonumber\\
=& \inf_{\mbf{q} \in S_\Omega} \inf_{\omega \in \dop{=}(\inQ)} \left(-
\sup_{\bsym{\lambda} \in \mathbb{P}_{\CS\CP}}
\left(
\bsym{\lambda}\cdot{\mbf{f}} -G^*_{\widehat{\alpha},\nu^\omega}(\bsym{\lambda})\right) + \mbf{f}\cdot\mbf{q} \right) 
\nonumber\\
=& \inf_{\mbf{q} \in S_\Omega} \inf_{\bsym{\lambda} \in \mathbb{P}_{\CS\CP}} \inf_{\omega \in \dop{=}(\inQ)}  \left(G^*_{\widehat{\alpha},\nu^\omega}(\bsym{\lambda}) + \left(\mbf{q}-\bsym{\lambda}\right)\cdot{\mbf{f}} \right),
\end{align}
where in the second line we invoked Lemma~\ref{lemma:Legendre_conjugate}; specifically, the fact that $G_{\widehat{\alpha},\nu^\omega}$ is itself the convex conjugate of $G_{\widehat{\alpha},\nu^\omega}^*$, and the fact that $G_{\widehat{\alpha},\nu^\omega}^*$ is infinite outside of $\mathbb{P}_{\CS\CP}$.
Now with the last line in this formula for $r_{\widehat{\alpha}}(\mbf{f})$, by noting that the $G^*_{\widehat{\alpha},\nu^\omega}(\bsym{\lambda})$ term is entirely independent of~$\mbf{f}$
and filling in its definition~\eqref{eq:GandGstar}, 
we see that the maximization problem~\eqref{eq:dualprob} is {precisely} the Lagrange dual problem of the constrained optimization~\eqref{eq:constrainedopt}, as claimed.

Turning to the case where $S_\Omega$ is convex, we would basically like to prove that strong duality holds under that condition.
First recall that as mentioned in Corollary~\ref{cor:GREATfixedf}, the quantity $H^f_{\widehat{\alpha}}(S \CS | \CP E \widetilde{E})_{\nu^\omega} = -G_{\widehat{\alpha},\nu^\omega}({\mbf{f}})$ is convex in $\omega$.
Combined with
the Legendre-conjugate formula~\eqref{eq:full_conjugate}, this implies that $G^*_{\widehat{\alpha},\nu^\omega}({\bsym{\lambda}})$ 
is jointly convex in $\bsym{\lambda}$ and $\omega$, since it is a supremum over a family of functions $\bsym{\lambda}\cdot{\mbf{f}}-G_{\widehat{\alpha},\nu^\omega}({\mbf{f}})$ that are each jointly convex in $\bsym{\lambda}$ and $\omega$.
With this, the objective function in~\eqref{eq:constrainedopt} is jointly convex in $\bsym{\lambda}$ and $\omega$ (and $\mbf{q}$, trivially), and the constraints are affine, so we can apply strong duality theorems from convex optimization theory.

However, the well-known Slater condition does not yield exactly the result we claim here, because it would require some strict-feasibility conditions on $S_\Omega$ (and there are further subtleties because the $D\left(\mbf{q} \middle\Vert \bsym{\nu}^\omega_{\CS\CP}\right)$ term is sometimes infinite, which affects the definition of strict feasibility for $\omega$).
To avoid this, we shall instead use the Clark-Duffin condition~\cite{Duff78},
which states that strong duality holds when the objective and constraints are continuous convex functions on a closed convex domain (in a normed\footnote{For our proofs, let us say continuity and closedness are defined with respect to the topology induced by, say, the $1$-norm.} space) and the feasible set is bounded. 

We first note that we can take the purifying function $\pf$ to be continuous without loss of generality\footnote{This is 
because we have argued above that the expressions~\eqref{eq:GREAT} and~\eqref{eq:GREATconvex} are equal for \emph{any} choice of purifying function, which means that the latter is independent of the choice of purifying function. This lets us focus on any specific choice without loss of generality; in particular, under the finite-dimensionality assumption we can pick the one shown in~\eqref{eq:pfexample} which is indeed continuous (this can be seen by e.g.~computing the fidelity between purifications of $\delta$-close states under that formula, and then applying Fuchs--van de Graaf to convert to $1$-norm distance).}, which implies that the mapping $\omega \to \nu^\omega$ is continuous since $\EATchann$ is just a linear map. In turn, this implies $G^*_{\widehat{\alpha},\nu^\omega}(\bsym{\lambda})$ is continuous with respect to $(\bsym{\lambda},\omega)$ (in an extended-real sense, accounting for the $D\left(\bsym{\lambda} \middle\Vert \bsym{\nu}^\omega_{\CS\CP}\right)$ term). There is a subtlety regarding the $\sum_{\cS\cP \in \supp\left(\bsym{\nu}^\omega_{\CS\CP}\right)} \lambda(\cS\cP) D_{\widehat{\alpha}}\left(\nu^\omega_{SE\widetilde{E} \land \cS\cP} \middle\Vert \id_S\otimes\nu^\omega_{E\widetilde{E} \land \cP} \right)$ term: while we have previously shown that all terms in the sum are finite (which ensures each individual term is continuous in $(\bsym{\lambda},\omega)$), there might appear to be issues caused by the change of summation domain whenever $\supp(\bsym{\nu}^\omega_{\CS\CP})$ changes. To address this, we note that this change of support can only happen at points with ${\nu}^\omega_{\CS\CP}(\cS\cP) = 0$ for one or more $\cS\cP$; however, for such points we either have that the corresponding $\lambda(\cS\cP)$ terms are also zero (which ensures that the sum changes ``continuously'' at such points) or we have $D\left(\bsym{\lambda} \middle\Vert \bsym{\nu}^\omega_{\CS\CP}\right) = +\infty$ and thus $G^*_{\widehat{\alpha},\nu^\omega}(\bsym{\lambda}) = +\infty$ (which suffices to ensure continuity in an extended-real sense).\footnote{An alternative approach, similar to the convexity argument above: after noting that $\nu^\omega$ is continuous in $\omega$, observe that $G_{\widehat{\alpha},\nu^\omega}({\mbf{f}}) = -H^f_{\widehat{\alpha}}(S \CS | \CP E \widetilde{E})_{\nu^\omega} = M - H_{\widehat{\alpha}}(D S \CS | \CP E \widetilde{E})_{\nu^\omega}$ for a suitable extending channel as defined in Lemma~\ref{lemma:createD}, and so $G_{\widehat{\alpha},\nu^\omega}({\mbf{f}})$ is continuous in $\omega$ (by the continuity of conditional {\Renyi} entropy with respect to the state). This implies that $G^*_{\widehat{\alpha},\nu^\omega}({\bsym{\lambda}})$ 
is lower semicontinuous with respect to $(\bsym{\lambda},\omega)$, since it is a supremum over a family of functions $\bsym{\lambda}\cdot{\mbf{f}}-G_{\widehat{\alpha},\nu^\omega}({\mbf{f}})$ that are each lower semicontinuous with respect to $(\bsym{\lambda},\omega)$. 
With this we have lower semicontinuity of $G^*_{\widehat{\alpha},\nu^\omega}(\bsym{\lambda}) + \left(\mbf{q}-\bsym{\lambda}\right)\cdot{\mbf{f}}$ with respect to $(\mbf{q},\bsym{\lambda},\omega)$, which suffices to apply the version of our subsequent argument based on Sion's minimax theorem (which only requires semicontinuity rather than continuity).
}
Now, since $G^*_{\widehat{\alpha},\nu^\omega}(\bsym{\lambda})$ is independent of $\mbf{q}$, this trivially gives us continuity over all $(\mbf{q},\bsym{\lambda},\omega) \in \mathbb{P}_{\CS\CP} \times \mathbb{P}_{\CS\CP} \times \dop{=}(\inQ)$ as well.

With this in mind, we can suppose $S_\Omega$ is closed without loss of generality, because the objective and constraints in~\eqref{eq:constrainedopt} (and similarly the objective in the last line of~\eqref{eq:replaceG}) are continuous over all $(\mbf{q},\bsym{\lambda},\omega) \in \mathbb{P}_{\CS\CP} \times \mathbb{P}_{\CS\CP} \times \dop{=}(\inQ)$, which allows us to freely switch between an infimum over $S_\Omega$ and an infimum over its closure 
without changing any optimal values.  
Then the optimization domain $S_\Omega \times \mathbb{P}_{\CS\CP} \times \dop{=}(\inQ)$ is a compact convex set (under the finite-dimensionality assumption), and the feasible set in~\eqref{eq:constrainedopt} is certainly a bounded set as well, since it is a subset of the domain. Now we can finally apply the Clark-Duffin condition to state that strong duality holds, yielding the desired result.

As some closing remarks, we first note that instead of the Clark-Duffin condition, our above arguments could also have allowed us to apply Sion's minimax theorem to exchange the optimizations in the last line of~\eqref{eq:replaceG} (since the minimization domain is compact, not just the feasible set), which yields the same final result. In fact, Sion's minimax theorem gives us a slightly more ``direct'' alternative proof that~\eqref{eq:dualprob} and~\eqref{eq:constrainedopt} are equal (though this would forgo the claims regarding Lagrange duality, which might be useful for numerical algorithms):
\begin{align}
\sup_{\mbf{f}} r_{\widehat{\alpha}}(\mbf{f})
=& \sup_{\mbf{f}} \inf_{\mbf{q} \in S_\Omega} \inf_{\omega \in \dop{=}(\inQ)} \left(-G_{\widehat{\alpha},\nu^\omega}({\mbf{f}}) + \mbf{f}\cdot\mbf{q} \right)
\nonumber\\
=& \inf_{\mbf{q} \in S_\Omega} \inf_{\omega \in \dop{=}(\inQ)} \sup_{\mbf{f}} \left(-G_{\widehat{\alpha},\nu^\omega}({\mbf{f}}) + \mbf{f}\cdot\mbf{q} \right) \text{ using Sion's minimax theorem}
\nonumber\\
=& \inf_{\mbf{q} \in S_\Omega} \inf_{\omega \in \dop{=}(\inQ)}  G^*_{\widehat{\alpha},\nu^\omega}(\mbf{q}) \text{ by definition of convex conjugates.}
\end{align}

Also, we highlight that while we have proven strong duality between the optimizations~\eqref{eq:dualprob} and~\eqref{eq:constrainedopt} in the sense that they have the same value, our above proof does not immediately yield more stringent versions of strong duality such as dual attainment, because the Clark-Duffin condition (unlike Slater's condition) does not ensure dual attainment. In fact, from the geometric interpretation of Slater's condition~\cite{BV04v8} we can see that the dual might not genuinely be attained in some ``boundary'' scenarios. For instance, in the BB84 protocol we discuss later in Sec.~\ref{subsec:BB84}, if we choose the ``threshold value'' $Q_\mathrm{thresh}$ to be \emph{exactly} zero when defining $S_\Omega$ in~\eqref{eq:SaccBB84} (ignoring the practical issues with such an accept condition, since in that case $S_\Omega$ consists only of a single point), then from the fact that the ``rate bound''~\eqref{eq:EURvN} has infinite derivative at $\nu_{\CP}(1)=0$, it can be argued that the corresponding dual is not attained. We leave a more extensive analysis of such aspects for future work, if it should become important.

Furthermore, while it is true that in our above proof we have relied on the finite-dimensionality assumption (to claim that the domain is compact), we believe that this should be removable in at least some situations. In Appendix~\ref{app:duality}, we present a modification of the above proof that ``isolates'' these dependencies to improve the prospect of removing them, as well as an alternative proof based on Slater's condition that does not require the finite-dimensionality assumption but instead requires additional (though easily satisfied) conditions on $S_\Omega$.
\end{proof}

The proofs of the simplified bounds in Lemmas~\ref{lemma:GREATonlyH}--\ref{lemma:GREAT3Renyi} are very similar, except with some looser inequalities to obtain simpler results; we defer them to Appendix~\ref{app:someproofs}.

\section{Variants}
\label{sec:variants}

\subsection{Original EAT}
\label{subsec:EAT}

In this section, we show that if we impose the Markov conditions from the original EAT in~\cite{DFR20}, we can (as expected) avoid the change of {\Renyi} parameters $\alpha\to\widehat{\alpha}$ arising from the GEAT, resulting in a bound that is genuinely identical in form to the QEF chaining property in~\cite{ZFK20} (albeit still slightly generalized to allow an $S$ register as discussed in Remark~\ref{remark:variants}). Due to the relation~\eqref{eq:hatmu}, this improvement in {\Renyi} parameters is unlikely to make much difference if $\alpha$ is close to $1$; however, it might perhaps be useful if for some reason the security proof necessarily involves $\alpha$ far from $1$.

Strictly speaking, the original EAT, the QEF analysis, and the GEAT differ in a subtle technical fashion regarding the scope of their constraints: the original EAT imposes the Markov condition only on the final state, the QEF analysis imposes it over a class of models, and the GEAT imposes the NS condition at the level of the channels. This makes it slightly difficult to make a strict comparison between their conditions; however, it does seem likely that the GEAT conditions are still the least restrictive in most applications. 
In presenting the following lemma, we simply follow the original EAT and impose the Markov condition on the state rather than the channels.

\begin{lemma}\label{lemma:EAT}
Let $\rho_{S_1^n T_1^n \CS_1^n \CP_1^n \widehat{E}}$ be a state of the form $\rho=\EATchann_n \circ \dots \circ \EATchann_1 [\omega^0_{R_0 \widehat{E}}]$ for some initial state $\omega^0 \in \dop{=}(R_0 \widehat{E})$ and some channels $\EATchann_j : R_{j-1} \to S_j T_j R_j \CS_j \CP_j$ such that the output registers $\CS_j \CP_j$ are always classical (for any input state). Suppose that for each $j$, $\rho$ satisfies the {Markov condition} 
\begin{align}\label{eq:Markov}
I(S_1^{j-1} \CS_1^{j-1} : T_j \CP_j | T_1^{j-1} \CP_1^{j-1} \widehat{E})_\rho = 0,
\end{align}
i.e.~$S_1^{j-1} \CS_1^{j-1} \leftrightarrow T_1^{j-1} \CP_1^{j-1} \widehat{E} \leftrightarrow T_j \CP_j$ forms a Markov chain. 
Then Theorems~\ref{th:QES} and~\ref{th:GREAT}, Corollaries~\ref{cor:QEScond} and~\ref{cor:GREATfixedf}, and Lemmas~\ref{lemma:GREATonlyH}--\ref{lemma:duality} all hold with the following minor modifications:
\begin{itemize}
\item Replace $\widehat{\alpha}$ with $\alpha$ throughout, and allow $\alpha\in(1,\infty)$ instead of $\alpha\in(1,2)$. 
\item Omit the condition that $\{\EATchann_j\}_{j=1}^n$ are GEATT channels.
\item Replace the registers with suitable counterparts, e.g.~$\mathbb{H}(S_1^n \CS_1^n | \CP_1^n E_n) \to \mathbb{H}(S_1^n \CS_1^n | \CP_1^n T_1^n \widehat{E})$, $\mathbb{H}(S_j \CS_j | \CP_j E_j \widetilde{E}) \to \mathbb{H}(S_j \CS_j | \CP_j T_j \widetilde{E})$, $\mathbb{H}(S \CS | \CP E \widetilde{E}) \to \mathbb{H}(S \CS | \CP T \widetilde{E})$ and so on, where $\mathbb{H}$ denotes any type of entropy.
\end{itemize}
\end{lemma}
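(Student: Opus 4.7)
The plan is to isolate where the shift $\alpha \to \widehat{\alpha}$ enters the proof of Theorem~\ref{th:QES} and replace that single step with a Markov-based chain rule that preserves the Rényi parameter. Tracing through that proof, the shift arises exclusively at the application of Fact~\ref{fact:GEATnotest} (the GEAT-without-testing chain rule of~\cite{MFSR24}) to obtain~\eqref{eq:GEATnotest}; all the surrounding manipulations that translate between standard Rényi entropies and QES-entropies via Lemma~\ref{lemma:createD_2} are insensitive to the Rényi parameter. Under the Markov condition~\eqref{eq:Markov}, the original EAT of~\cite{DFR20} provides an analogue of Fact~\ref{fact:GEATnotest} that yields $H_\alpha$ on both sides and is valid for all $\alpha \in (1,\infty)$, which is exactly what is needed to drive the entire pipeline without the $\widehat{\alpha}$ degradation.

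Concretely, I would construct the same auxiliary channels $\mathcal{N}_j$ as in the Theorem~\ref{th:QES} proof (pinch the past $\CS_1^{j-1}\CP_1^{j-1}$, copy $\CS_j$ to $\copyCS_j$, and generate $D_j$ from $\CS_1^j\CP_1^j$ via a read-and-prepare channel as in Lemma~\ref{lemma:createD_2}), and verify that the Markov condition is inherited by the extended state. Since each newly added register is classical and is a deterministic function of variables on the ``secret'' side of the existing Markov chain, the augmented chain $D_1^{j-1}\copyCS_1^{j-1}S_1^{j-1}\CS_1^{j-1} \leftrightarrow T_1^{j-1}\CP_1^{j-1}\widehat{E} \leftrightarrow T_j\CP_j$ follows from~\eqref{eq:Markov} by a routine conditional-independence argument. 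Applying the Markov-based Rényi chain rule of~\cite{DFR20} in place of Fact~\ref{fact:GEATnotest} then yields an additive lower bound with $\alpha$ on both sides; the remaining steps of bounding the left-hand side from above by $nM + H^{f_\mathrm{full}}_\alpha + n2^{-M/2}\log e$ and the right-hand side from below by $nM + \min\inf H^{f_{|\cdot}}_{\alpha}$, taking $M\to\infty$ so the $2^{-M/2}$ correction vanishes, and normalizing as in~\eqref{eq:chainQESnorm}, go through verbatim.

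All downstream statements—Corollary~\ref{cor:QEScond}, Theorem~\ref{th:GREAT}, Corollary~\ref{cor:GREATfixedf}, and Lemmas~\ref{lemma:GREATonlyH} through~\ref{lemma:duality}—are obtained from Theorem~\ref{th:QES} purely by convex-analytic reasoning (Lemma~\ref{lemma:extract_D}, convexity of QES-entropies, the Legendre-conjugate machinery of Lemma~\ref{lemma:Legendre_conjugate}, and the Clark--Duffin/Sion-type duality arguments), none of which cares whether the single-round term carries $\alpha$ or $\widehat{\alpha}$. They therefore inherit the improvement automatically after performing the register relabellings listed in the lemma statement. The main obstacle I anticipate is the Markov-preservation check for $\mathcal{N}_j$: in the GEAT setting this was handled by an operational NS argument, whereas here it reduces to a classical conditional-independence computation that, while elementary, must be spelled out carefully so that the $D_j$ registers (which depend on the current $\CP_j$ through the read-and-prepare map) are correctly placed on the secret side without breaking the Markov structure across rounds.
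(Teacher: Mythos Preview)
Your proposal is correct and matches the paper's approach: replace Fact~\ref{fact:GEATnotest} by the Markov-based chain rule \cite[Corollary~3.5]{DFR20}, verify via the chain rule for conditional mutual information that the extended state inherits the Markov condition (using that $D_1^{j-1}$ is determined by $\CS_1^{j-1}\CP_1^{j-1}$, i.e.\ by registers already on the secret and conditioning sides, not just the secret side as you wrote), and then carry the rest of the pipeline through unchanged. The one tweak you did not anticipate is that the paper also has each $\mathcal{N}_j$ emit a fresh copy $\copyCP_j$ of $\CP_j$, because in the EAT register layout $\CP_j$ must simultaneously play the role of a conditioning output ($B_2$) and be retained in the memory register for generating $D_{j+1}$.
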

\begin{proof}
We construct new channels $\mathcal{N}_j$ the same way as the Theorem~\ref{th:QES} proof, except that
we also have them output an extra copy of $\CP_j$ in a register $\copyCP_j$. 
Since the channels $\EATchann_j$ in this case have the form $\EATchann_j : R_{j-1} \to S_j T_j R_j \CS_j \CP_j$, we end up with channels $\mathcal{N}_j : R_{j-1} \CS_1^{j-1} \CP_1^{j-1}\to D_j \copyCS_j S_jT_j\copyCP_j R_j \CP_1^j \CS_1^j $. 
Again writing $\rho = \mathcal{N}_n \circ \dots \circ \mathcal{N}_1 [\omega^0]$, this is a state on registers $D_1^n S_1^nT_1^n \copyCS_1^n \copyCP_1^n \CS_1^n \CP_1^n \widetilde{E}$ that is an extension of the state $\rho$ in this lemma statement. We now show that the state produced by each channel $\mathcal{N}_j$ satisfies the following Markov condition:
\begin{align}\label{eq:Markov_Extension}
	I(S_1^{j-1}\copyCS_1^{j-1} D_1^{j-1}: T_j\copyCP_j | T_1^{j-1} \copyCP_1^{j-1} \widehat{E})_\rho = 0,
\end{align}  
To do this, we first use the chain rule for conditional mutual information~\cite[Proposition~7.7]{KW20} to obtain
\begin{align}
	&I(S_1^{j-1}\copyCS_1^{j-1} D_1^{j-1}: T_j\copyCP_j | T_1^{j-1} \copyCP_1^{j-1} \widehat{E})_\rho \nonumber\\
	=&I(S_1^{j-1}\CS_1^{j-1} D_1^{j-1}: T_j\CP_j | T_1^{j-1} \CP_1^{j-1} \widehat{E})_\rho \nonumber\\
	=& I(S_1^{j-1}\CS_1^{j-1} : T_j\CP_j | T_1^{j-1} \CP_1^{j-1} \widehat{E})_\rho +I(D_1^{j-1} : T_j\CP_j | S_1^{j-1} \CS_1^{j-1} T_1^{j-1} \CP_1^{j-1} \widehat{E})_\rho\nonumber\\
	=&I(D_1^{j-1} : T_j\CP_j | S_1^{j-1} \CS_1^{j-1} T_1^{j-1} \CP_1^{j-1} \widehat{E})_\rho\nonumber\\
	=&0,
\end{align}
where the second line holds since $\copyCS_1^{j-1}=\CS_1^{j-1}$, and $\copyCP_1^{j-1}=\CP_1^{j-1}$. The third line is due to the chain rule for conditional mutual information, the fourth line follows from the Markov condition~\eqref{eq:Markov}, and the last line holds by noting that $D_1^{j-1}$ is produced from $\CS_1^{j-1}\CP_1^{j-1}$. Thus, we conclude that this extended $\rho$ satisfies the Markov condition. We now apply~\cite[Corollary~3.5]{DFR20} by identifying the registers $A_1 A_2 B_1 B_2 R$ in that corollary with the registers here as follows:
\begin{itemize}
	\item $A_1 \leftrightarrow D_1^{j-1}\copyCS_1^{j-1}S_1^{j-1}$
	\item $A_2 \leftrightarrow D_j \copyCS_j S_j$
	\item $B_1 \leftrightarrow \copyCP_1^{j-1}T_1^{j-1} \widehat{E}$
	\item $B_2 \leftrightarrow \copyCP_j T_j$
	\item $R \leftrightarrow R_{j-1}\CS_1^j\CP_1^j$
\end{itemize}
With this identification, we obtain for each $j$:
\begin{align}\label{eq:originalEAT}
	H_\alpha(D_1^j S_1^j \copyCS_1^j | \copyCP_1^j T_1^j \widehat{E})_\rho 
	&\geq H_\alpha(D_1^{j-1} S_1^{j-1} \copyCS_1^{j-1} | \copyCP_1^{j-1} T_1^{j-1} \widehat{E})_\rho + \inf_{\nu'\in\Sigma'_j} H_{\alpha}(D_j S_j \copyCS_j | \copyCP_j T_1^j \widetilde{E})_{\nu'}.
\end{align}
Applying this for every $j$, we get:
\begin{align}\label{eq:originalEAT_chains}
		H_\alpha(D_1^j S_1^j \copyCS_1^j | \copyCP_1^j T_1^j \widehat{E})_\rho  \geq \sum_j \inf_{\nu'\in\Sigma'_j} H_{\alpha}(D_j S_j \copyCS_j | \copyCP_j T_1^j \widetilde{E})_{\nu'}.
\end{align}
Recall that by definition of $\mathcal{N}_j$, the state $\nu'$ contains registers $\CS_1^j\CP_1^j$ even though they do not appear in entropies in the above expression. Therefore we can apply the same argument as the bound~\eqref{eq:worstcasebnd} in the Theorem~\ref{th:QES} proof, lower-bounding the $H_{\alpha}(D_j S_j \copyCS_j | \copyCP_j T_1^j \widetilde{E})_{\nu'}$ term by conditioning on $\CS_1^{j-1} \CP_1^{j-1}$ and taking the worst-case value. Moreover, since  $\copyCS_j$ and $\copyCP_j$ are just a copy of $\CS_j$ and $\CP_j$, respectively (in particular $\copyCS_1^n=\CS_1^n$, and $\copyCP_1^n=\CP_1^n$), we arrive at  
\begin{align}\label{eq:EATnotest}
	H_\alpha(D_1^n S_1^n \CS_1^n | \CP_1^n T_1^n \widehat{E})_\rho 
	\geq 
	\sum_j \inf_{\nu'\in\Sigma'_j} \min_{\cS_1^{j-1} \cP_1^{j-1}} H_{\alpha}(D_j S_j \CS_j | \CP_j T_1^j \widetilde{E})_{\nu'_{|\cS_1^{j-1} \cP_1^{j-1}}}.
\end{align}
With this, we carry on the remainder of all the proofs in the same way, since none of the subsequent steps require the NS condition of GEATT channels.
\end{proof}

\subsection{Tensor-power channels}
\label{subsec:fweighted}

Here we briefly summarize some results from~\cite{inprep_weightentropy}, and then explain how our methods can be applied to their results. We begin by summarizing the original definition of $f$-weighted {\Renyi} entropies, which they introduced in their work, based on $H^\uparrow_\alpha$ rather than $H_\alpha$. (As mentioned in Remark~\ref{remark:variants}, a QES is basically the same thing as what~\cite{inprep_weightentropy} calls a ``tradeoff function'', up to a technicality about a ``normalization'' condition. We highlight that a tradeoff function in their terminology is not the same thing as a min-tradeoff function in the original EAT or GEAT, so those terms should not be conflated, though there are some analogies that can be drawn between them.)
We also remark that here we extend the definition to cover the $\alpha=\infty$ case, because unlike the case of $H^{f}_\alpha$-entropies, this $\alpha$ value might potentially be useful in the following results, as they would then involve $H^\uparrow_\infty$ rather than $H_\infty$ (the former is usually more well-behaved).

\begin{definition}\label{def:fweighted}
($H^{\uparrow,f}_\alpha$-entropies)
Let $\rho \in \dop{=}(\CP Q Q')$ be a state where $\CP$ is classical with alphabet $\alphCP$, and consider a QES on $\CP$ as in Definition~\ref{def:QES} (taking the $\CS$ register in that definition to be trivial). Given such a QES $f$ and a value $\alpha\in(0,1)\cup (1,\infty)$, we define
\begin{align}\label{eq:fweighteddefn}
H^{\uparrow,f}_\alpha(Q|\CP Q')_{\rho} \defvar
\frac{\alpha}{1-\alpha} \log \left( \sum_{\cP} \rho(\cP) \, 2^{\frac{1-\alpha}{\alpha} \left(-f(\cP) + H^\uparrow_\alpha(Q|Q')_{\rho_{|\cP}} \right) } \right) ,
\end{align}
where the sum is over all $\cP$ values such that $\rho(\cP)>0$. We extend this definition to $\alpha=\infty$ by taking the $\alpha\to\infty$ limit, yielding
\begin{align}
H^{\uparrow,f}_\infty(Q|\CP Q')_{\rho} \defvar
- \log \left( \sum_{\cP} \rho(\cP) \, 2^{f(\cP) - H^\uparrow_\infty(Q|Q')_{\rho_{|\cP}}} \right) .
\end{align}
\end{definition}

Since for trivial $\CS$ this matches our definition of $H^{f}_\alpha$-entropies except based on $H^\uparrow_\alpha$ instead of $H_\alpha$, we intuitively have the following straightforward relation, which may be of use in some situations (e.g.~the applications of $H^{f}_\alpha$-entropies discussed in Sec.~\ref{subsubsec:varlength}):
\begin{lemma}\label{lemma:QEStofweighted}
Let $\rho \in \dop{=}(\CP Q Q')$ be a state where $\CP$ is classical with alphabet $\alphCP$, and take any $\alpha\in(0,1)\cup (1,\infty)$. Then 
\begin{align}
H^{\uparrow,f}_\alpha(Q|\CP Q')_{\rho} \geq H^f_\alpha(Q|\CP Q')_{\rho}.
\end{align}
\end{lemma}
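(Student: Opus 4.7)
The plan is to reduce this lemma to the elementary inequality $H^\uparrow_\alpha \geq H_\alpha$ by applying the $D$-register construction of Lemma~\ref{lemma:createD} to both sides. The key observation is that the very same read-and-prepare channel that converts a QES-entropy into an ordinary $H_\alpha$ should also convert an $f$-weighted {\Renyi} entropy into an ordinary $H^\uparrow_\alpha$, so the two ``$f$-shifted'' quantities in the lemma statement reduce to the two {\Renyi} variants applied to the same extended state, and the lemma becomes immediate.

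Concretely, I would fix any $M > \max_\cP f(\cP)$ and invoke Lemma~\ref{lemma:createD} to pick a read-and-prepare channel $\mathcal{D}:\CP\to\CP D$ (with classical $D$) such that $H_\alpha(D)_{\rho_{|\cP}} = H^\uparrow_\alpha(D)_{\rho_{|\cP}} = M - f(\cP)$ for every $\cP$. With trivial $\CS$, Lemma~\ref{lemma:createD} already gives
\begin{align*}
H^f_\alpha(Q|\CP Q')_{\rho} = H_\alpha(DQ | \CP Q')_{\mathcal{D}[\rho]} - M.
\end{align*}
The main step is to establish the analogous identity
\begin{align*}
H^{\uparrow,f}_\alpha(Q|\CP Q')_{\rho} = H^\uparrow_\alpha(DQ | \CP Q')_{\mathcal{D}[\rho]} - M,
\end{align*}
by mirroring the calculation in~\eqref{eq:createD_proof} but using the $H^\uparrow_\alpha$ version of the classical-conditioning formula (equation~\eqref{eq:classmixHup}). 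For this to go through I need the one-round additivity $H^\uparrow_\alpha(DQ|Q')_{\sigma} = H^\uparrow_\alpha(Q|Q')_{\sigma} + H^\uparrow_\alpha(D)_{\sigma}$ whenever $\sigma_{DQQ'}=\sigma_D\otimes \sigma_{QQ'}$, which is the case here since the read-and-prepare channel produces $D$ independently of $QQ'$ conditioned on $\cP$. Substituting the value $M - f(\cP)$ for $H^\uparrow_\alpha(D)_{\rho_{|\cP}}$ inside~\eqref{eq:classmixHup} then reproduces exactly the sum defining $H^{\uparrow,f}_\alpha$, and the $-M$ drops out.

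Once both identities are in hand, the lemma is just the standard fact $H^\uparrow_\alpha(DQ|\CP Q')_{\mathcal{D}[\rho]} \geq H_\alpha(DQ|\CP Q')_{\mathcal{D}[\rho]}$, valid for all $\alpha\in(0,1)\cup(1,\infty)$, after cancelling the common $-M$ on both sides.

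The only non-routine step is verifying the product-state additivity $H^\uparrow_\alpha(DQ|Q')_\sigma = H^\uparrow_\alpha(Q|Q')_\sigma + H^\uparrow_\alpha(D)_\sigma$, but this follows from additivity of sandwiched {\Renyi} divergence over tensor products and restricting/expanding the supremum over $\sigma_{Q'}$ (so the $D$ tensor factor does not interact with the optimization). I do not expect any genuine obstacle; the proof is essentially a one-line consequence of Lemma~\ref{lemma:createD} combined with the elementary $H^\uparrow_\alpha \geq H_\alpha$ inequality, packaged through the extended state.
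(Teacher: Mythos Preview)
Your proposal is correct and follows essentially the same approach as the paper: extend the state via the read-and-prepare channel of Lemma~\ref{lemma:createD}, derive the $H^\uparrow_\alpha$ analogue of~\eqref{eq:createD} using the classical-conditioning formula~\eqref{eq:classmixHup} together with the product-state additivity (the paper phrases this as ``analogous calculations''), and then conclude via $H^\uparrow_\alpha \geq H_\alpha$. The only difference is that you spell out the additivity step explicitly, whereas the paper leaves it implicit by pointing to~\eqref{eq:classmixHup} and the equality $H_\alpha(D)_{\rho_{|\cP}} = H^\uparrow_\alpha(D)_{\rho_{|\cP}}$.
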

\begin{proof}
Extend the state with a read-and-prepare channel as described in Lemma~\ref{lemma:createD}. Then by analogous calculations (though simpler, since we can just use~\eqref{eq:classmixHup} rather than~\eqref{eq:classmixD}; note also we exploit the fact that $H_\alpha(D)_{\rho_{|\cP}} = H^\uparrow_\alpha(D)_{\rho_{|\cP}}$) we obtain
\begin{align}
H^\uparrow_\alpha(DQ|\CP Q')_{\rho} = M + H^{\uparrow,f}_\alpha(Q|\CP Q')_{\rho}.
\end{align}
Combining this with the existing result~\eqref{eq:createD} in that lemma and the general relation $H^\uparrow_\alpha(DQ|\CP Q')_{\rho} \geq H_\alpha(DQ|\CP Q')_{\rho}$, we obtain the desired result.
\end{proof}

As discussed in Eq.~\eqref{eq:PMstate} from the introduction, for device-dependent PM-QKD or EB-QKD we usually consider states of the form $\rho = \EATchann^{\otimes n}[\omega^0]$ for some suitable channel $\EATchann$, and the results in~\cite{inprep_weightentropy} were hence tailored towards such states. 
With this in mind, a key finding of~\cite{inprep_weightentropy} is the following critical result (basically, Theorem~\ref{th:QES} in this work is a version of this result under the conditions of the GEAT instead), which can be viewed as an ``exact reduction to IID''. We highlight that to obtain tight keyrate bounds from the source-replacement technique for generic PM protocols, one usually needs to impose a constraint on Alice's reduced state in the resulting EB protocol (see e.g.~\cite{arx_GLT+22,MR23,BGW+24} for further discussion), which is why the results in~\cite{inprep_weightentropy} and our subsequent theorems involve statements regarding such constraints\footnote{The registers $\qA$ in these statements are intended to include shield systems as well, if required by the security proof.} --- note that one can always obtain a statement without reduced-state constraints as a special case of these results, simply by setting $\qA$ to be trivial. 
\begin{fact}\label{fact:fweighted}
\cite{inprep_weightentropy} Consider any state of the form  $\rho_{S_1^n \CP_1^n T_1^n \widehat{E}} = \EATchann^{\otimes n} \left[\omega^0_{\qA_1^n \qB_1^n \widehat{E}}\right]$, for some channel  $\EATchann:\qA \qB \to S \CP T$ with classical $\CP$ and some state $\omega^0 \in \dop{=}(\qA_1^n \qB_1^n \widehat{E})$ satisfying $\omega^0_{\qA_1^n} = \sigma_{\qA}^{\otimes n}$ for some fixed state $\sigma_{\qA}$. 
Let $\Sigma$ denote the set of all states of the form $\EATchann\left[\omega_{\qA \qB \widetilde{E}}\right]$ for some initial state $\omega \in \dop{=}(\qA \qB \widetilde{E})$ satisfying $\omega_{\qA} = \sigma_{\qA}$, with $\widetilde{E}$ being a purifying register for $\qA \qB$. 
Let $f$ be any QES on $\CP$ (Definition~\ref{def:fweighted}), and define the following QES on $\CP_1^n$:
\begin{align}
f_\mathrm{full}(\cP_1^n) \defvar \sum_{j=1}^n f(\cP_j).
\end{align}
Then we have for any $\alpha\in(1,\infty]$:
\begin{align}\label{eq:fweightedbnd}
H^{\uparrow,f_\mathrm{full}}_\alpha(S_1^n | \CP_1^n T_1^n \widehat{E})_{\rho} \geq n \inf_{\nu \in \Sigma} H^{\uparrow,f}_\alpha(S | \CP T \widetilde{E})_{\nu} .
\end{align}
\end{fact}

We now note that given the above fact, we can build on it using the methods in this work to obtain the following result:
\begin{theorem}\label{th:fweighted}
Consider any state of the form  $\rho_{S_1^n \CP_1^n T_1^n \widehat{E}} = \EATchann^{\otimes n} \left[\omega^0_{\qA_1^n \qB_1^n \widehat{E}}\right]$, for some channel  $\EATchann:\qA \qB \to S \CP T$ with classical $\CP$ and some state $\omega^0 \in \dop{=}(\qA_1^n \qB_1^n \widehat{E})$ satisfying $\omega^0_{\qA_1^n} = \sigma_{\qA}^{\otimes n}$ for some fixed state $\sigma_{\qA}$.
Let $\Sigma$ denote the set of all states of the form $\EATchann\left[\omega_{\qA \qB \widetilde{E}}\right]$ for some initial state $\omega \in \dop{=}(\qA \qB \widetilde{E})$ satisfying $\omega_{\qA} = \sigma_{\qA}$, with $\widetilde{E}$ being a purifying register for $\qA \qB$. 
Suppose furthermore that $\rho$ is of the form $p_\Omega \rho_{|\Omega} + (1-p_\Omega) \rho_{|\overline{\Omega}}$ for some $p_\Omega \in (0,1]$ and normalized states $\rho_{|\Omega},\rho_{|\overline{\Omega}}$, and let  $S_\Omega$ be a convex set of probability distributions on the alphabet $\alphCP$, such that for all $\cP_1^n$ with nonzero probability in $\rho_{|\Omega}$, the frequency distribution $\freq_{\cP_1^n}$ lies in $S_\Omega$.
Let $\pf$ be any purifying function for $\qA\qB$ onto $\widetilde{E}$ (Definition~\ref{def:purify}) and write $\nu^\omega \defvar \EATchann\left[ \pf\left(\omega_{\qA\qB}\right)\right]$ for any $\omega\in\dop{=}(\qA\qB)$. Then the following bound holds for any $\alpha\in(1,\infty]$:
\begin{align}\label{eq:fweightedREAT}
\begin{gathered}
H^\uparrow_\alpha(S_1^n | \CP_1^n T_1^n  \widehat{E})_{\rho_{|\Omega}} \geq  n h^\uparrow_{\alpha}
- \frac{\alpha}{\alpha-1} \log\frac{1}{p_\Omega}, \\
\begin{aligned}
\text{where}\quad h^\uparrow_{\alpha} &= 
\inf_{\mbf{q} \in S_\Omega} \inf_{\nu\in\Sigma} \left( \frac{\alpha}{{\alpha}-1}D\left(\mbf{q} \middle\Vert \bsym{\nu}_{\CP}\right)+\sum_{\cP\in\supp(\bsym{\nu}_{\CP})}q(\cP)H^\uparrow_{{\alpha}}(S|T\widetilde{E})_{\nu_{|\cP}}  \right) \\
&=\inf_{\mbf{q} \in S_\Omega} 
\inf_{
\substack{
\omega \in \dop{=}(\qA\qB) \\
\suchthat \; \omega_{\qA} = \sigma_{\qA}
}}
\left( \frac{\alpha}{{\alpha}-1}D\left(\mbf{q} \middle\Vert \bsym{\nu}^\omega_{\CP}\right) + \sum_{\cP\in\supp\left(\bsym{\nu}^\omega_{\CP}\right)}q(\cP) H^\uparrow_{{\alpha}}(S|T\widetilde{E})_{\nu^\omega_{|\cP}} \right)
,
\end{aligned}
\end{gathered}
\end{align}
and the objective function in the last line is jointly convex in $\omega$ and $\mbf{q}$. In the above, $\bsym{\nu}_{\CP}$ and $\bsym{\nu}^\omega_{\CP}$ denote the distributions on $\CP$ induced by the states ${\nu}_{\CP}$ and ${\nu}^\omega_{\CP}$ (see~\eqref{eq:stateprobvec}).
\end{theorem}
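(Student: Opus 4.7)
The plan is to mirror the approach used for Theorem~\ref{th:GREAT} (via Corollary~\ref{cor:QEScond} and Lemma~\ref{lemma:duality}), but replacing QES-entropies with $f$-weighted {\Renyi} entropies and using Fact~\ref{fact:fweighted} in place of Theorem~\ref{th:QES}. First I would establish a handful of auxiliary properties of $H^{\uparrow,f}_\alpha$ that parallel Lemmas~\ref{lemma:createD},~\ref{lemma:classmixQES},~\ref{lemma:normalize}, and~\ref{lemma:convexity}: namely, (i) the normalization identity $H^{\uparrow,f+\kappa}_\alpha = H^{\uparrow,f}_\alpha - \kappa$, which is immediate from the definition; (ii) a ``conditioning on a classical labelling register'' identity $H^{\uparrow,f}_\alpha(S|\CP T\widetilde{E}\widetilde{F})_\rho = \frac{\alpha}{1-\alpha}\log\sum_i p(i)2^{\frac{1-\alpha}{\alpha}H^{\uparrow,f}_\alpha(S|\CP T\widetilde{E})_{\rho_{|i}}}$, obtained by combining the $f$-weighted definition with~\eqref{eq:classmixHup}; (iii) a $D$-register construction giving $H^\uparrow_\alpha(DS|\CP T\widetilde{E})_\rho = M + H^{\uparrow,f}_\alpha(S|\CP T\widetilde{E})_\rho$ (the key point being that $H^\uparrow_\alpha$ is additive on product states in the $Q$-register, so the read-and-prepare step adds exactly $H^\uparrow_\alpha(D)$ to each conditional entropy); and (iv) a data-processing/convexity statement for $\omega\mapsto H^{\uparrow,f}_\alpha(S|\CP T\widetilde{E})_{\nu^\omega}$, mimicking Lemma~\ref{lemma:convexity} with the same purification argument, now using (ii) in place of Lemma~\ref{lemma:classmixQES}.

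With these in hand, the main argument proceeds in three steps. Step one: given any tradeoff function $f$ with $\kappa \defvar \inf_{\nu\in\Sigma}H^{\uparrow,f}_\alpha(S|\CP T\widetilde{E})_\nu$, Fact~\ref{fact:fweighted} and the normalization identity give $H^{\uparrow,\hat f_\mathrm{full}}_\alpha(S_1^n|\CP_1^n T_1^n\widehat{E})_\rho \geq 0$ with $\hat f \defvar f+\kappa$. Step two: extend $\rho$ using the $D$-register construction (iii) for $\hat f_\mathrm{full}$, apply Lemma~\ref{lemma:extract_D} and the smoothing-by-conditioning inequality from~\cite[Lemma~B.5]{DFR20} (exactly as in the proof of Corollary~\ref{cor:QEScond}) to obtain
\begin{align*}
H^\uparrow_\alpha(S_1^n|\CP_1^n T_1^n\widehat{E})_{\rho_{|\Omega}} \geq \min_{\cP_1^n\in\widetilde\Omega}\hat f_\mathrm{full}(\cP_1^n) - \tfrac{\alpha}{\alpha-1}\log\tfrac{1}{p_\Omega} \geq n\left(\kappa + \inf_{\mbf{q}\in S_\Omega}\mbf{f}\cdot\mbf{q}\right) - \tfrac{\alpha}{\alpha-1}\log\tfrac{1}{p_\Omega},
\end{align*}
where the second inequality rewrites the sum in terms of the frequency distribution and uses the assumption that $\freq_{\cP_1^n}\in S_\Omega$. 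This yields the analogue of Corollary~\ref{cor:GREATfixedf}, valid for every $\mbf{f}$.

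Step three is to optimize over $\mbf{f}$ via Lagrange duality, as in Lemma~\ref{lemma:duality}. The core calculation here is the convex conjugate of $\mbf{f}\mapsto -H^{\uparrow,f}_\alpha(S|\CP T\widetilde{E})_{\nu^\omega}$: setting derivatives to zero in the log-sum-exp expression and solving as in Lemma~\ref{lemma:Legendre_conjugate}, the saddle point is at $f(\cP) = \tfrac{\alpha}{\alpha-1}\log\tfrac{\lambda(\cP)}{\nu^\omega(\cP)} + H^\uparrow_\alpha(S|T\widetilde{E})_{\nu^\omega_{|\cP}}$, and substituting back produces the conjugate
\begin{align*}
G^*_{\alpha,\nu^\omega}(\bsym\lambda) = \tfrac{\alpha}{\alpha-1}D(\bsym\lambda\Vert\bsym\nu^\omega_\CP) + \sum_{\cP\in\supp(\bsym\nu^\omega_\CP)}\lambda(\cP)H^\uparrow_\alpha(S|T\widetilde{E})_{\nu^\omega_{|\cP}}
\end{align*}
on probability distributions $\bsym\lambda$ (and $+\infty$ otherwise). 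This is precisely the objective in~\eqref{eq:fweightedREAT}. Invoking Lemma~\ref{lemma:convexity}'s analogue together with the Clark--Duffin strong duality argument (or, equivalently, Sion's minimax theorem) on the compact convex domain $S_\Omega\times\mathbb P_{\CP}\times\dop{=}(\inQ)$ lets us exchange $\sup_\mbf{f}$ with the inner infima, identifying the optimal bound with $h^\uparrow_\alpha$ as claimed. The restriction $\omega_\qA=\sigma_\qA$ in the last part of the theorem is inherited directly from the corresponding statement in Fact~\ref{fact:fweighted}, since the infimum in the definition of $\kappa$ is already over that restricted set, and the purification step in (iv) leaves $\omega_\qA$ unchanged.

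The main obstacle I expect is the strong-duality step: one needs to verify that the joint convexity and (extended-real) continuity arguments that worked for $G^*_{\widehat\alpha,\nu^\omega}$ in Lemma~\ref{lemma:duality} carry over when one replaces the sandwiched {\Renyi} divergences by $H^\uparrow_\alpha(S|T\widetilde E)_{\nu^\omega_{|\cP}}$. The potentially delicate point is the $\alpha=\infty$ case, where the log-sum-exp formula degenerates and one should treat it via the explicit limit form of the $f$-weighted entropy; however, the convex-conjugate calculation still goes through because the objective reduces to a linear combination of min-entropies plus a divergence term, and the conjugacy relation remains intact. Everything else should follow routinely from the GEAT-style machinery already established in the paper.
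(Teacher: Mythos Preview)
Your proposal is correct and follows essentially the same approach as the paper's proof, which simply states that the argument proceeds by the same chain as for Theorem~\ref{th:GREAT} (via analogues of Corollary~\ref{cor:QEScond}, Corollary~\ref{cor:GREATfixedf}, and the convex-analysis of Lemma~\ref{lemma:duality}), with the key observation that the different weights in the $f$-weighted definition~\eqref{eq:fweighteddefn} produce the prefactor $\tfrac{\alpha}{\alpha-1}$ rather than $\tfrac{1}{\widehat\alpha-1}$. Your explicit conjugate computation and your list of auxiliary properties (i)--(iv) in fact spell out more detail than the paper does.
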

Just as in Theorem~\ref{th:GREAT}, the set $S_\Omega$ in the above theorem is just a set containing all frequency distributions on $\CP_1^n$ ``compatible with'' $\Omega$, and the qualitative discussions in Sec.~\ref{subsec:intuition} are applicable for understanding and computing the first-order term $h^\uparrow_{\alpha}$ in the above bound.

\begin{proof}
Given that Fact~\ref{fact:fweighted} holds, the proof of Theorem~\ref{th:fweighted} follows by exactly the same chain of arguments as the previous sections: by suitably extending the state with a read-and-prepare channel, from Fact~\ref{fact:fweighted} we obtain analogues of Corollary~\ref{cor:QEScond} and Corollary~\ref{cor:GREATfixedf} (except with $H^{\uparrow,f}_\alpha$ rather than $H^f_{\widehat{\alpha}}$ in the single-round terms), and then we apply the convex-analysis transformations described in Sec.~\ref{subsec:GREATproof} (except that since now work with $H^{\uparrow,f}_\alpha$-entropies rather than $H^{f}_\alpha$-entropies, 
the ``weights'' in the log-mean-exponential follow~\eqref{eq:fweighteddefn}
rather than~\eqref{eq:QESdefn}, and hence we arrive at a prefactor of $\frac{\alpha}{{\alpha}-1}$ rather than $\frac{1}{{\alpha}-1}$ on the ``penalty term'' in $h^\uparrow_{\alpha}$).
\end{proof}

Note that by similar reasoning as in Remark~\ref{remark:optQES}, the dual of the optimization problem 
\begin{align}
\begin{gathered}
\inf_{\mbf{q} \in S_\Omega} \inf_{\bsym{\lambda} \in \mathbb{P}_{\CP}} 
\inf_{\substack{
\omega \in \dop{=}(\qA\qB) \\
\suchthat \; \omega_{\qA} = \sigma_{\qA}
}}
\left( \frac{\alpha}{{\alpha}-1}D\left(\bsym{\lambda} \middle\Vert \bsym{\nu}^\omega_{\CP}\right) + \sum_{\cP\in\supp\left(\bsym{\nu}^\omega_{\CP}\right)}\lambda(\cP) H^\uparrow_{{\alpha}}(S|T\widetilde{E})_{\nu^\omega_{|\cP}} \right) \\
\suchthat \quad \mbf{q}-\bsym{\lambda}=\mbf{0}
\end{gathered}
,
\end{align}
for some suitable choice of $S_\Omega$, should provide a good choice of QES in Fact~\ref{fact:fweighted} for variable-length protocols.

We can again reformulate the bounds using either or both of~\eqref{eq:tovN}--\eqref{eq:toHmineps} if desired, at the cost of the same potential suboptimalities discussed in Remark~\ref{remark:relaxations}. This yields bounds such as
\begin{align}\label{eq:Hmineps_fweighted}
\Hmin^\eps(S_1^n | \CP_1^n T_1^n E_n)_{\rho_{|\Omega}} &\geq \inf_{\mbf{q} \in S_\Omega} 
\inf_{\substack{
\omega \in \dop{=}(\qA\qB) \\
\suchthat \; \omega_{\qA} = \sigma_{\qA}
}} 
\left( \frac{\alpha}{{\alpha}-1}D\left(\mbf{q} \middle\Vert \bsym{\nu}^\omega_{\CP}\right) + \sum_{\cP\in\supp\left(\bsym{\nu}^\omega_{\CP}\right)}q(\cP) H(S|T\widetilde{E})_{\nu^\omega_{|\cP}} \right) n
\nonumber\\&\qquad 
- g_\mathrm{cont}(\alpha) n
- \frac{\smf{\eps}}{\alpha - 1}
- \frac{\alpha}{\alpha-1} \log\frac{1}{p_\Omega},
\end{align}
where $g_\mathrm{cont}$ can be explicitly bounded as described below~\eqref{eq:tovN}, and we again have the same caveats regarding convexity as in Sec.~\ref{subsubsec:bndconvexity}.
We also note that if (starting from Fact~\ref{fact:fweighted}) instead of following the proof steps described above, we follow the proof steps in one of the various existing (G)EAT-with-testing proofs such as~\cite{DFR20,DF19,LLR+21,MFSR24}, then we would obtain exactly the same formulas in the final bounds on $H^\uparrow_\alpha(S_1^n | \CP_1^n T_1^n  \widehat{E})_{\rho_{|\Omega}}$ as those works.
This means that all (achievable) finite-size keyrates computed using the previous GEAT or EAT bounds were in fact {also} valid under the~\cite{inprep_weightentropy} model, i.e.~without the ``single-signal interaction'' structure previously required for the GEAT analysis.

One limitation of the results described in this section is that they require the same measurement channel $\EATchann$ and reduced state $\sigma_{\qA}$ in each round --- a concern might be that in practical implementations, the measurements could differ across rounds due to slight imperfections. To resolve this, we would like to generalize Fact~\ref{fact:fweighted} to a bound of the form
\begin{align}
H^{\uparrow,f_\mathrm{full}}_\alpha(S_1^n | \CP_1^n T_1^n \widehat{E})_{\EATchann_1 \otimes \EATchann_2 \otimes \dots \EATchann_n [\omega^0]} \stackrel{?}{\geq} \sum_j \inf_{\nu \in \Sigma_j} H^{\uparrow,f}_\alpha(S_j | \CP_j T_j \widetilde{E})_{\nu} ,
\end{align}
analogous to Theorem~\ref{th:QES} but with reduced-state constraints. 
We address this possibility in a follow-up work, showing that such a bound indeed holds. This also allows us to obtain better keyrates in situations where we have \emph{a priori} knowledge of how the behaviour differs across rounds; moreover, via a further generalization to channels applied in sequence (rather than tensor product), we can even apply the various ``adaptive'' approaches described in Sec.~\ref{subsec:QESapp}.

\section{Tightness of bounds and numerical examples}
\label{sec:numerics}

\newcommand{\qhon}{Q_{\mathrm{hon}}} 
\newcommand{\lkey}{\ell_\mathrm{key}}
\newcommand{\lEC}{\lambda_\mathrm{EC}}
\newcommand{\ecom}{\eps^\mathrm{com}}
\newcommand{\esecure}{\eps^\mathrm{secure}}
\newcommand{\esecret}{\eps^\mathrm{secret}}
\newcommand{\eEV}{\eps_\mathrm{EV}}
\newcommand{\ecEC}{\eps^\mathrm{com}_\mathrm{EC}}
\newcommand{\ecAT}{\eps^\mathrm{com}_\mathrm{a}}
\newcommand{\eAT}{\eps_\mathrm{a}}
\newcommand{\ePA}{\eps_\mathrm{PA}}
\newcommand{\es}{\eps_\mathrm{s}}
\newcommand{\esmall}{\eps_\mathrm{small}}

In this section, we only used heuristic numerical methods when evaluating the various minimizations that appear in e.g.~our lower bounds on $h_{\widehat{\alpha}}$. This means that strictly speaking, our numerical results are not guaranteed to be certified lower bounds on the keyrates. However, in all examples in this section, the minimizations we evaluate can be written using at most $3$ optimization variables, and are convex in all the individual variables, hence we believe that there should not be too much of a risk that we have significantly over-estimated the true keyrates.

\subsection{Remaining steps for improvement}
\label{subsec:tightness}

An interesting feature of our Theorem~\ref{th:GREAT} proof is that for $S_\Omega$ satisfying the theorem conditions and $\alpha$ close to $1$, it seems there are not many steps that could be improved --- almost all the bounds are nearly saturated (simultaneously) by IID states, except for the Lemma~\ref{lemma:extract_D} bound and a ``relaxation'' from $\widetilde{\Omega}$ to $S_\Omega$. Explicitly listing them in order: first, Theorem~\ref{th:QES} is basically tight for IID states up to the ``higher-order'' change $\alpha \to \widehat{\alpha}$, as discussed below the theorem statement (moreover, this change is in fact entirely avoided in all the variants presented in Sec.~\ref{sec:variants}). Next, to obtain Corollary~\ref{cor:QEScond} the only important bounds we applied were\footnote{While we also applied~\cite[Lemma~B.5]{DFR20} to account for conditioning on $\Omega$, note that when applying the {\Renyi} privacy amplification theorem of~\cite{Dup23}, the $\frac{\alpha}{\alpha-1} \log\frac{1}{\pr{\Omega}}$ term essentially does not affect the final keyrates --- see e.g.~\cite{Dup23,arx_KAG+24} or Sec.~\ref{subsec:BB84}--\ref{subsec:DIRE} below.} $H^\uparrow_\alpha \geq H_\alpha$ and Lemma~\ref{lemma:extract_D}; however, by a converse bound $H_{2-\frac{1}{\alpha}} \geq H^\uparrow_\alpha$~\cite[Corollary~4]{TBH14} we can conclude that the former is again basically tight up to a ``higher-order'' change in {\Renyi} parameter (since $2-\frac{1}{\alpha} = 1+\mu+O(\mu^2)$ for $\alpha = 1+\mu$). We then simplified this to the Corollary~\ref{cor:GREATfixedf} bound by focusing on the special case where all QES-s are the same in each round and relaxing the minimization over $\widetilde{\Omega}$ to one over $S_\Omega$; there is perhaps some possibility that the former choice is not optimal, but there do not seem to be obviously better options for scenarios where all rounds are ``basically the same''. Finally, recall that in proceeding from Corollary~\ref{cor:GREATfixedf} (optimized over $f$) to Theorem~\ref{th:GREAT}, all the remaining steps held with \emph{equality} by Lemma~\ref{lemma:duality}, so there was no loss of tightness there. 
(Though when applying our result in a security proof, perhaps there remains a question of how tight the {\Renyi} privacy amplification theorem is. Still, this seems to be something of a distinct consideration, and we leave it for separate work.)

Hence it seems the largest potential source of suboptimality might be Lemma~\ref{lemma:extract_D} and possibly the relaxation from $\widetilde{\Omega}$ to $S_\Omega$. This is consistent with some numerical observations we make in Remark~\ref{remark:deltaball} later, where it seems one obtains significantly better bounds when choosing the conditioning event to mostly only contain ``typical sequences'' produced in the IID case, rather than a more coarse-grained version that includes other sequences (and hence worsens the value of $\max_{\cS_1^n \cP_1^n \in \widetilde{\Omega}} H_\alpha(D)_{\rho_{| \cS_1^n \cP_1^n}}$). However, we highlight that for such a choice of conditioning event, Lemma~\ref{lemma:extract_D} would only be ``loose up to typicality'' (in that it takes the worst-case value over the typical set rather than some sort of average-case value), which is still a fairly sharp bound, and it seems hard to improve it much further in the presence of finite-size effects.

In light of this, we expect that Theorem~\ref{th:GREAT} should yield strictly better bounds than previous GEAT or EAT results. This can be more rigorously formalized with the following argument. First note that if in Corollary~\ref{cor:GREATfixedf} we extend $\EATchann$ with a read-and-prepare channel as described in Lemma~\ref{lemma:createD} (for {\Renyi} parameter $\widehat{\alpha}$), we have
\begin{align}
H^f_{\widehat{\alpha}}(S \CS | \CP E \widetilde{E})_{\nu} + \mbf{f}\cdot\mbf{q}
&= H_{\widehat{\alpha}}(D S \CS | \CP E \widetilde{E})_{\nu} - M + 
\mbf{f}\cdot\mbf{q}
\nonumber\\
&= H_{\widehat{\alpha}}(D S \CS | \CP E \widetilde{E})_{\nu} -
\sum_{\cS\cP\in\alphCS\times\alphCP} H_{\widehat{\alpha}}(D)_{\nu_{|\cS \cP}} q(\cS \cP)
\nonumber\\
&\geq H_{\widehat{\alpha}}(D S \CS | \CP E \widetilde{E})_{\nu} -
\sum_{\cS\cP\in\alphCS\times\alphCP} H_\alpha(D)_{\nu_{|\cS \cP}} q(\cS \cP)
,
\end{align}
and so the bound in that corollary is at least as good as
\begin{align}\label{eq:rephrasedbnd}
H^\uparrow_\alpha(S_1^n \CS_1^n | \CP_1^n E_n)_{\rho_{|\Omega}} &\geq \inf_{\mbf{q} \in S_\Omega} \inf_{\nu\in\Sigma} \left(
H_{\widehat{\alpha}}(D S \CS | \CP E \widetilde{E})_{\nu} -
\sum_{\cS\cP\in\alphCS\times\alphCP} H_{\alpha}(D)_{\nu_{|\cS \cP}} q(\cS \cP)
\right) n 
- \frac{\alpha}{\alpha-1} \log\frac{1}{p_\Omega}.
\end{align}
This is the same as a bound that appears in the middle of the GEAT-with-testing proof of~\cite{MFSR24} (specifically, Eq.~(4.7) combined with the first two inequalities in step~(iii) on the following page; here we have presented the optimization domain with $S_\Omega$ instead of an event $\Omega$ on $\CS_1^n \CP_1^n$ but this is just a slight generalization), apart from a technical issue that we have chosen the $D$ register entropies in a slightly different fashion, but this should not make a significant difference.
However, recall that in our proof of Theorem~\ref{th:GREAT} we showed that the bound in that theorem is \emph{exactly equal} to the Corollary~\ref{cor:GREATfixedf} bound (which is slightly tighter than~\eqref{eq:rephrasedbnd}) with the best choice of $f$. 
Therefore, the results in~\cite{MFSR24}, which were obtained by proceeding onwards from~\eqref{eq:rephrasedbnd} but with some further relaxations, should not yield a better bound than Theorem~\ref{th:GREAT}. Analogous results hold for the EAT bounds derived in e.g.~\cite{DFR20,DF19,LLR+21}.

\subsection{BB84}
\label{subsec:BB84}

As a preliminary example, we apply Theorem~\ref{th:GREAT} to derive keyrates for an entanglement-based implementation of the BB84 protocol, as follows.
In this protocol, the registers $S_j$ store Alice's ``secret'' data and the registers $\CP_j$ store ``public'' data from test rounds, similar to our above formulations; we design the protocol such that the $\CS_j$ registers are trivial.
Note that this is a ``fully qubit'' version of the BB84 protocol (in order to compare to~\cite{TL17,LXP+21} which studied the same protocol), i.e.~Alice prepares single-qubit states in each round and Bob performs single-qubit Pauli measurements, and there are no photon losses or dark counts. We address more realistic optical BB84 protocols in a separate work, using techniques developed in~\cite{arx_KAG+24}.
\begin{algorithm}[H]
\caption{EB-BB84 protocol outline}
\begin{algorithmic}[1]
\State For each round $j \in \{1,2,\dots,n\}$, perform the following steps:
\begin{algsubstates}
\State Alice and Bob each receive a qubit register. Then via public communication, with probability $1-\gamma$ (independently in each round) they jointly declare the round is a generation round, and otherwise it is a test round. Alice also independently generates a uniformly random ``symmetrization bit'' $F_j \in \{0,1\}$ and publicly announces it.
\State If it is a generation round, they both measure in the $Z$ basis, 
XOR their outcomes with $F_j$,
and store the resulting values in registers $S_j$ and $\widetilde{S}_j$ for Alice and Bob respectively. Also, they jointly set $\CP_j=\perp$. 
\State If it is a test round, they both measure in the $X$ basis, 
XOR their outcomes with $F_j$ 
and publicly announce the resulting values, then jointly set $\CP_j=0$ if their outcomes matched and $\CP_j=1$ otherwise. Also, Alice sets $S_j=0$ and Bob sets $\widetilde{S}_j = \texttt{test}$.
\end{algsubstates}
\State Perform an \term{acceptance test}, in which Alice and Bob abort if $\freq_{\cP_1^n}$ lies outside some predetermined set $S_\Omega$.
\State Perform one-way \term{error correction} in which Alice sends Bob a bitstring of length $\lEC$, which he uses together with $\widetilde{S}_1^n$ to produce a guess $\mbf{S}^{\mathrm{guess}}$ for Alice's string $S_1^n$. Then perform \term{error verification}, in which Alice sends Bob a 2-universal hash of $S_1^n$ (together with the choice of hash function), who compares it with the hash of $\mbf{S}^{\mathrm{guess}}$ and aborts if they do not match.
\State If neither of the above steps aborted, produce final keys of length $\lkey$ by performing \term{privacy amplification}, in which Alice chooses a 2-universal function $\{0,1\}^n \to \{0,1\}^{\lkey}$ and publicly announces it, then Alice and Bob apply it to $S_1^n$ and $\mbf{S}^{\mathrm{guess}}$ respectively.
\end{algorithmic}
\end{algorithm}
We defer to~\cite{TL17,MR23} for details on the error correction and privacy amplification steps.
There are many modifications that can be made to the above outline (e.g.~here we have followed~\cite[Sec.~3]{TL17} and~\cite[Protocol~I]{LXP+21} and allowed Alice and Bob to jointly decide the basis to measure in, but for practical protocols the parties may need to \emph{independently} choose measurement bases, so the test/generation labelling is different; see~\cite{arx_GLT+22,arx_KAG+24}), but we do not pursue these details further for this example.

As our goal in this example is to provide some comparison to the results in~\cite{TL17} (and an improved finite-size analysis subsequently derived in~\cite{LXP+21}), we proceed in analogy to those works and choose some ``QBER threshold'' value $Q_\mathrm{thresh}$, then define the set $S_\Omega$ to be\footnote{Since the test rounds are only measured in the $X$ basis, in this formula we could instead view $Q_\mathrm{thresh}$ as only being the ``phase error rate'', rather than QBER with respect to multiple bases. However, since the error-correction term $\lEC$ instead depends on error rates in generation rounds (i.e.~the $Z$-basis error rate), for simplicity in this analysis we shall view $Q_\mathrm{thresh}$ as a single QBER parameter that characterizes error rates in any basis, so we can use it in the formula~\eqref{eq:ECterm} for $\lEC$.}
\begin{align}\label{eq:SaccBB84}
S_\Omega = \left\{ \mbf{q} \in \mathbb{P}_{\CP} \;\middle|\; q_{\CP}(1) \leq \gamma Q_\mathrm{thresh} \right\}.
\end{align}
Following those works, we do not include an in-depth analysis of the required honest QBER value such that the honest protocol accepts with high probability; we note however in Remark~\ref{remark:deltaball} below that a careful analysis of this aspect can improve the keyrates from our formulas for actual protocol implementations.
Furthermore, following~\cite{TL17,LXP+21} we note that (without affecting any security properties of the protocol, only the probability that it aborts in the honest case) in the error correction step we can plausibly use the choice
\begin{align}\label{eq:ECterm}
\lEC &= \xi_\mathrm{EC} (1-\gamma) \binh(Q_\mathrm{thresh})n,
\end{align}
where $\xi_\mathrm{EC} > 1$ is a heuristically chosen value (see Fig.~\ref{fig:oldSerfling}--\ref{fig:newSerfling} for specific choices) that describes the finite-size efficiency of the error-correction procedure, and the factor of $1-\gamma$ accounts for the generation-round probability. 
(Strictly speaking, one could in fact use the better value $\lEC = \xi_\mathrm{EC} H(S_j|\widetilde{S}_j)_\mathrm{hon} n 
= \xi_\mathrm{EC} (1-\gamma) \binh(\qhon)n,$
where $H(S_j|\widetilde{S}_j)_\mathrm{hon}$ denotes the value in the \emph{honest} implementation, rather than the threshold accept value; we avoid this for a fairer comparison to~\cite{TL17,LXP+21}.
We also gloss over the subtlety that here $\widetilde{S}_j$ is not a binary random variable since it is set to $\texttt{test}$ with probability $\gamma$, so existing results for LDPCs may not immediately apply; the above formula is an approximate heuristic anyway.)

\begin{remark}\label{remark:deltaball}
Some numerical experimenting indicates that if we consider that there should be some honest IID behaviour of the protocol and $S_\Omega$ must be chosen to accept it with high probability, then we find a rough trend that Theorem~\ref{th:GREAT} seems to yield much tighter bounds if $S_\Omega$ is instead chosen to constrain \emph{all} the entries of $\mbf{q}$ in a small neighbourhood of the honest behaviour, i.e.
\begin{align}
S_\Omega = \left\{ \mbf{q} \in \mathbb{P}_{\CP} \;\middle|\; q_{\CP}(\cP) \in I_{\cP} \;\forall\; \cP\in\alphCP \right\},
\end{align}
for some small intervals $I_{\cP}$ (even after accounting for the fact that the resulting upper bound on $q_{\CP}(1)$ must be looser than the single-term version in~\eqref{eq:SaccBB84}, to achieve the same bound on honest abort probability). This seems related to the observation that Lemma~\ref{lemma:extract_D} is potentially the loosest step in our proof, and choosing $S_\Omega$ such that it only ``captures'' the typical IID sequences helps to reduce the value of $\max_{\cS_1^n \cP_1^n \in \widetilde{\Omega}} H_\alpha(D)_{\rho_{| \cS_1^n \cP_1^n}}$ there.
Furthermore, in such cases we find that (denoting the desired bound on honest abort probability as $\ecom$, the \term{completeness parameter}) it usually seems better to choose the intervals $I_{\cP}$ such that each term causes an abort with probability at most $\ecom/3$ (i.e.~we distribute the ``abort-probability contributions'' evenly), rather than using the same interval width for all terms.
Note also that while these intervals can be simply chosen using e.g.~the Chernoff bound, better results can be obtained by instead using the binomial-distribution analysis described in~\cite{LLR+21,arx_KAG+24} (either by using inbuilt software functions that can evaluate binomial-distribution tail bounds, or relaxing them to normal-distribution tail bounds as described in~\cite{LLR+21}). 
\end{remark}

For the above protocol, we can construct GEATT channels $\EATchann_j$ in the fashion described in~\cite{MR23}, with the $E_j$ register in each round storing all the side-information Eve collects and updates, including both her quantum side-information and the public announcements from each round. These channels are infrequent-sampling channels with the same $\gamma$, hence we can construct a suitable infrequent-sampling channel $\EATchann$ for use in Theorem~\ref{th:GREAT}, as discussed below Definition~\ref{def:infreqsamp}. For this channel $\EATchann$, we can apply an EUR for von Neumann entropy~\cite{BCC+10} to lower-bound $H(S|E\widetilde{E})_{\nu_{|\perp}}$ for any $\nu\in\Sigma$ (i.e.~any state that could be produced by $\EATchann \otimes \idmap_{\widetilde{E}}$). Specifically, if we let $X,\widetilde{X}$ denote registers that store Alice and Bob's symmetrized\footnote{As in, after they have XOR'd their outcomes with the symmetrization bit $F$. Note that strictly speaking, in order to apply EURs to the values $S,X,\widetilde{X}$ (which are produced after symmetrization, i.e.~not the raw outcomes of $X$ or $Z$ measurements), we are implicitly applying a standard argument~\cite{RGK05} that the same overall state (including all side-information) could instead have been produced by just taking the raw outcomes of measurements on some other initial state, essentially by ``commuting'' the symmetrization with the measurements by re-expressing it as a rotation on the pre-measurement qubits. An alternative option could be to omit the symmetrization step and instead use e.g.~Fano's inequality to write $H(X|\widetilde{X})_{\nu_{|\CP\neq\perp}} \leq \binh\left(\pr{X \neq \widetilde{X}|\CP\neq\perp}\right)$ for the purposes of the bound~\eqref{eq:EURvN} below, but this would require more steps to extend to the {\Renyi} EUR bound~\eqref{eq:EURRenyi}.} $X$-measurement outcomes conditioned on $\CP\neq\perp$ 
(i.e.~we are in the ``test'' component of the infrequent-sampling channel), 
then 
\begin{align}\label{eq:EURvN}
H(S|E\widetilde{E})_{\nu_{|\perp}} \geq 1 - H(X|\widetilde{X})_{\nu_{|\CP\neq\perp}} = 1 - \binh\left(\pr{X \neq \widetilde{X}|\CP\neq\perp}\right) = 1 - \binh\left(\frac{\nu_{\CP}(1)}{\gamma}\right),
\end{align}
where $H(X|\widetilde{X})_{\nu_{|\CP\neq\perp}} = \binh\left(\pr{X \neq \widetilde{X}|\CP\neq\perp}\right)$ holds because the marginal distributions of $X,\widetilde{X}$ conditioned on $\CP\neq\perp$ are uniform, and
in the last step we recall that $\CP$ is set to $\perp$ with probability $1-\gamma$ and otherwise is set to $1$ if and only if $X \neq \widetilde{X}$. Combined with~\cite[Lemma~B.9]{DFR20}, this gives (note that here we have $\dim(S)=2$, because we designed the protocol such that $S$ is still bit-valued even in a test round):
\begin{align}\label{eq:BB84vN}
\forall \alpha\in \left(1,1+\frac{1}{\log\left(5\right)}\right), \quad
H_{\widehat{\alpha}}(S|E\widetilde{E})_{\nu_{|\perp}} 
\geq 
1 - \binh\left(\frac{\nu_{\CP}(1)}{\gamma}\right) - (\widehat{\alpha}-1)\log^2\left(5\right).
\end{align}
Again, one could instead use~\cite[Corollary~IV.2]{DF19} here for tighter but slightly more elaborate bounds; we omit this for ease of presentation.

Without loss of generality, we can assume that once the $n$ measurement steps have been completed, Eve no longer acts on her register $E_n$  (by replacing any operation she does afterwards with its Stinespring isometry).
Let $\rho$ denote the state produced in the protocol just before privacy amplification, and let $\Omega$ denote the event that both\footnote{Note that this event is a stricter condition than just the acceptance test accepting (which would be the event $\freq_{\cP_1^n} \in S_\Omega$), so it remains the case that every distribution $\cP_1^n$ with nonzero probability in the conditional state $\rho_{|\Omega}$ satisfies $\freq_{\cP_1^n} \in S_\Omega$, i.e.~the first condition on $S_\Omega$ in Theorem~\ref{th:GREAT} indeed holds. Also note that we are implicitly exploiting the fact that in Theorem~\ref{th:GREAT}, $\Omega$ does not have to be an event defined entirely on the $\CP_1^n$ registers, as discussed in Corollary~\ref{cor:QEScond}.} the acceptance test and error verification accepted.
Then since $S_\Omega$ as defined in~\eqref{eq:SaccBB84} is convex, we can apply Theorem~\ref{th:GREAT} together with Lemma~\ref{lemma:GREATonlyH} and the bound~\eqref{eq:onlygen} for infrequent-sampling channels (which is tight for this protocol, i.e.~the equality~\eqref{eq:simplehbound} holds) to obtain
\begin{align}
H^\uparrow_\alpha(S_1^n | \CP_1^n E_n)_{\rho_{|\Omega}} 
&\geq \inf_{\mbf{q} \in S_\Omega} \inf_{\bsym{\nu}_{\CP}\in\mathbb{P}_{\CP}} \left( q(\perp) 
\left(1 - \binh\left(\frac{\nu_{\CP}(1)}{\gamma}\right) - (\widehat{\alpha}-1)\log^2\left(5\right)\right)
+ \frac{1}{\widehat{\alpha}-1}D\left(\mbf{q} \middle\Vert \bsym{\nu}_{\CP}\right) 
\right)n \nonumber\\
&\qquad 
- \frac{\alpha}{\alpha-1} \log\frac{1}{p_\Omega},
\end{align}
where we have relaxed the optimization over $\nu\in\Sigma$ by noting that the bound~\eqref{eq:BB84vN} in fact only depends on the classical distribution $\bsym{\nu}_{\CP}$. Note that since the domains of $\mbf{q},\bsym{\nu}_{\CP}$ enforce that they are normalized, and also the value of $\nu_{\CP}(\perp)$ is fixed as $1-\gamma$ by the infrequent-sampling structure, there are in fact only $3$ independent variables in the above minimization.

Then by applying~\eqref{eq:toHmineps} to convert the above bound to smooth min-entropy and following the proof structure in~\cite{MR23} or~\cite{arx_KAG+24}, one can show that if we choose any values\footnote{The parameter $\eEV$ here is denoted as $\eps_\mathrm{KV}$ in~\cite{MR23}; we have used different notation simply because we refer to the relevant step as ``error verification'' instead of ``key validation''.} $\es,\eAT,\eEV,\ePA\in(0,1)$ and set
\begin{align}\label{eq:lkeyBB84vN}
\lkey &= 
\left\lfloor \inf_{\mbf{q} \in S_\Omega} \inf_{\bsym{\nu}_{\CP}\in\mathbb{P}_{\CP}} \left( q(\perp) 
\left(1 - \binh\left(\frac{\nu_{\CP}(1)}{\gamma}\right) - (\widehat{\alpha}-1)\log^2\left(5\right)\right) 
+ \frac{1}{\widehat{\alpha}-1}D\left(\mbf{q} \middle\Vert \bsym{\nu}_{\CP}\right) 
\right)n
\right. 
\nonumber\\&\qquad 
\left.
- \frac{\alpha}{\alpha-1} \log\frac{1}{\eAT} - \frac{1}{\alpha - 1}\log
\frac{2}{\es^2} - \lEC - \ceil{\log\frac{1}{\eEV}} - 2\log\frac{1}{\ePA} 
\right\rfloor
,
\end{align}
the protocol will be \term{$\esecure$-secure} (see~\cite{TL17,MR23} for full definitions, or~\cite{arx_PR21} under the term \term{soundness} instead) with\footnote{To obtain this result we have basically added the \term{correctness} and \term{secrecy} parameters from~\cite{MR23}, except that the secrecy parameter in that work is rescaled by a factor of $2$ as compared to~\cite{arx_PR21,TL17}, and so we have first adjusted it accordingly. We have also removed the dependence of the secrecy parameter in that work on $\eEV$, because for our protocol we perform the acceptance test directly on the $\CP_1^n$ registers rather than a guess for it; see~\cite{arx_KAG+24}.} 
\begin{align}
\esecure=\max\left\{\frac{\ePA}{2} + 2\es, \eAT\right\} + \eEV.
\end{align}
Note that since the optimal choice of $\alpha$ in~\eqref{eq:lkeyBB84vN} is often close to $1$, the effects of $\eAT,\es$ in that bound are significantly larger than those of $\ePA,\eEV$. Hence for our numerical calculations, given some desired value of $\esecure$ we use the heuristic choice of setting
\begin{align}
\ePA = \eEV = \esmall, \qquad \eAT = \esecure - \esmall, \qquad \es = \frac{1}{2}\left(\esecure - \frac{3}{2}\esmall\right), 
\end{align}
where $\esmall$ is a parameter we choose to maximize $\lkey$. Specifically, substituting the above expressions into~\eqref{eq:lkeyBB84vN}, given any fixed choice of $\alpha$ we can find the best $\esmall$ by differentiating with respect to $\esmall$ (note that the infimum term is independent of $\esmall$ and can be entirely ignored for this purpose), which yields the explicit solution
\begin{align}
\esmall = \frac{8 + 17(\alpha-1) - \sqrt{64+56(\alpha-1)+(\alpha-1)^2}}{2(9+12(\alpha-1))} \esecure 
\approx 
\frac{3}{4}(\alpha-1) 
\esecure.
\end{align}

On the other hand, a tighter bound can be obtained by directly using EURs for {\Renyi} entropies to bound $h_{\widehat{\alpha}}$. Specifically, by using the fact that $H_{\widehat{\alpha}} \geq H_{\frac{1}{2-\widehat{\alpha}}}^\uparrow$~\cite[Corollary~4]{TBH14} and then applying the EUR from~\cite[Theorem 11]{MDS+13}, we have
\begin{align}\label{eq:EURRenyi}
H_{\widehat{\alpha}}(S|E\widetilde{E})_{\nu_{|\perp}}
\geq H^\uparrow_{\frac{1}{2-\widehat{\alpha}}}(S|E\widetilde{E})_{\nu_{|\perp}}
\geq  1 - H^{\uparrow}_{\beta}(X|\widetilde{X})_{\nu_{|\CP\neq\perp}}, \text{ where } \beta=\frac{1}{\widehat{\alpha}}.
\end{align}
Note that if we write $\widehat{\alpha} = 1+\widehat{\mu}$, then $\beta =  1/(1+\widehat{\mu}) = 1 - \widehat{\mu} + O(\widehat{\mu}^2)$.
The $H^{\uparrow}_{\beta}(X|\widetilde{X})_{\nu_{|\CP\neq\perp}}$ term is a purely classical {\Renyi} entropy and hence coincides with the entropy from~\cite{Arimoto77}. Again using the fact that the symmetrized bits $X,\widetilde{X}$ have uniform marginal distributions, we can explicitly calculate
\begin{align}\label{eq:HbetaXbasis}
H^{\uparrow}_{\beta}(X|\widetilde{X})_{\nu_{|\CP\neq\perp}} &= \frac{\beta}{1-\beta} \log \left( \sum_{i=0}^{1} \left(\sum_{j=0}^{1} \left(\frac{\nu_{\CP}(i \oplus j)}{2\gamma}\right)^\beta\right)^{1/\beta}
\right)\nonumber \\
&= \frac{1}{1-\beta} \log \left( \left( 1- \frac{\nu_{\CP}(1)}{\gamma}\right)^\beta + \left(\frac{\nu_{\CP}(1)}{\gamma}\right)^\beta\right).
\end{align}
For $\alpha$ somewhat further away from $1$, the above bound has an important advantage over the previous bound~\eqref{eq:BB84vN} in that as $\nu_{\CP}(1) \to 0$, it converges towards the tight bound $H_{\widehat{\alpha}}(S|E\widetilde{E})_{\nu_{|\perp}} \geq  1$, unlike~\eqref{eq:BB84vN} which yields a lower bound that remains of the form $1-\Theta(\widehat{\alpha}-1)$. 
With this bound, we conclude that the same security levels as described above can be achieved by instead setting
\begin{align}\label{eq:lkeyBB84EUR}
\lkey &= 
\left\lfloor
\inf_{\mbf{q} \in S_\Omega} \inf_{\bsym{\nu}_{\CP}\in\mathbb{P}_{\CP}} \left( q(\perp) 
\left(
1 - \frac{1}{1-\beta} \log \left( \left( 1- \frac{\nu_{\CP}(1)}{\gamma}\right)^\beta + \left(\frac{\nu_{\CP}(1)}{\gamma}\right)^\beta\right)
\right) 
+ \frac{1}{\widehat{\alpha}-1}D\left(\mbf{q} \middle\Vert \bsym{\nu}_{\CP}\right) 
\right)n
\right.
\nonumber\\&\qquad 
\left.
- \frac{\alpha}{\alpha-1} \log\frac{1}{\eAT} - \frac{1}{\alpha - 1}\log
\frac{2}{\es^2} - \lEC - \ceil{\log\frac{1}{\eEV}} - 2\log\frac{1}{\ePA} 
\right\rfloor
.
\end{align}

In the above bounds, we have converted to smooth min-entropy rather than directly using the {\Renyi} privacy amplification theorem of~\cite{Dup23}, in order to achieve a ``fairer'' comparison to~\cite{TL17,LXP+21}. However, our approach is certainly compatible with the latter since it proceeds by first bounding the {\Renyi} entropy, and this typically yields better finite-size keyrates~\cite{arx_GLT+22}. 
Furthermore, this gives us a ``fully {\Renyi}'' security proof, maximally exploiting the results we have developed in this work. We hence also perform some calculations for this approach. Specifically, using the single-round {\Renyi} entropy bounds~\eqref{eq:EURRenyi}--\eqref{eq:HbetaXbasis} together with the {\Renyi} privacy amplification theorem, we can conclude that if we set~\cite{arx_KAG+24}
\begin{align}\label{eq:lkeyBB84fullRenyi}
\lkey &= 
\left\lfloor
\inf_{\mbf{q} \in S_\Omega} \inf_{\bsym{\nu}_{\CP}\in\mathbb{P}_{\CP}} \left( q(\perp) 
\left(
1 - \frac{1}{1-\beta} \log \left( \left( 1- \frac{\nu_{\CP}(1)}{\gamma}\right)^\beta + \left(\frac{\nu_{\CP}(1)}{\gamma}\right)^\beta\right)
\right) 
+ \frac{1}{\widehat{\alpha}-1}D\left(\mbf{q} \middle\Vert \bsym{\nu}_{\CP}\right) 
\right)n
\right.
\nonumber\\&\qquad 
\left.
- \lEC - \ceil{\log\frac{1}{\eEV}} - \frac{\alpha}{\alpha-1} \log\frac{1}{\ePA} + 2
\right\rfloor
,
\end{align}
then the protocol will be {$\esecure$-secure} with the much simpler security parameter
\begin{align}
\esecure=\ePA + \eEV.
\end{align}
Optimizing the choice of $\ePA$ and $\eEV$ for a desired $\esecure$ and fixed $\alpha$ yields the following explicit solution
(in this case, it is $\ePA$ that contributes much more significantly to the key length than $\eEV$, due to the behaviour of the {\Renyi} privacy amplification theorem):
\begin{align}
\eEV = \frac{\alpha-1}{2\alpha-1} \esecure, 
\qquad
\ePA = \frac{\alpha}{2\alpha-1} \esecure.
\end{align}

\begin{figure}
\centering
\includegraphics[width=0.65\textwidth]{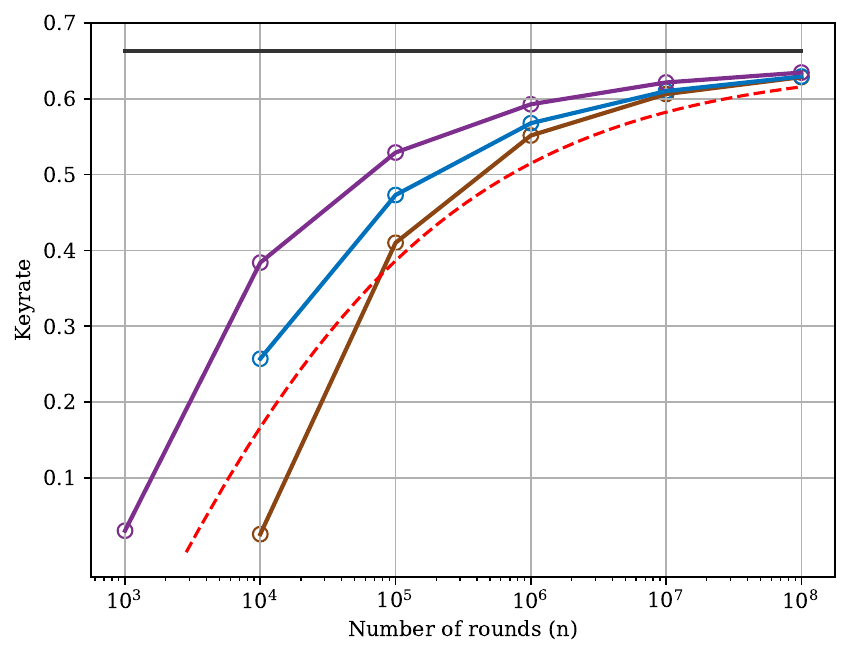}
\caption{Keyrates for EB-BB84 from our approach, for an example with QBER threshold $Q_\mathrm{thresh} = 0.025$, error-correction efficiency $\xi_\mathrm{EC}=1.1$, and security parameter $\esecure = 10^{-10}$. The brown, blue, and purple curves correspond to the keyrates given by the formulas~\eqref{eq:lkeyBB84vN}, \eqref{eq:lkeyBB84EUR} and~\eqref{eq:lkeyBB84fullRenyi}, respectively, where we used heuristic numerical methods to evaluate the (fairly simple) minimizations in those formulas. For comparison, the dashed red curve is the corresponding result from~\cite[Fig.~7]{TL17} (based on smooth-entropy EURs), and the black horizontal line is the asymptotic rate. 
It can be seen that our results are always an improvement over that work, except for the case of the suboptimal formula~\eqref{eq:lkeyBB84vN} at small $n$.
We highlight that the formula~\eqref{eq:lkeyBB84EUR} already performs better everywhere despite also proceeding via a smooth min-entropy bound (instead of {\Renyi} privacy amplification), i.e.~this indicates we genuinely obtained a better bound on smooth min-entropy as compared to~\cite{TL17}.
The choices of $S_\Omega$, $\lEC$ and epsilon parameters we used in our formulas are described in the main text. 
We roughly optimized the choices of $\gamma$ and $\alpha$ by parametrizing them as $\gamma = 10^{-x}$ and $\alpha=1+10^{-y}$, then taking the best result computed in a grid of values over $x\in[0,2.5]$ and $y\in[\log_{10}\sqrt{n} - 2, \log_{10}\sqrt{n} + 2]$ (the latter being motivated by the scaling analysis at the end of Sec.~\ref{subsubsec:infreqsamp}); we leave a more refined approach for future applications.}
\label{fig:oldSerfling}
\end{figure}

\begin{figure}
\centering
\includegraphics[width=0.65\textwidth]{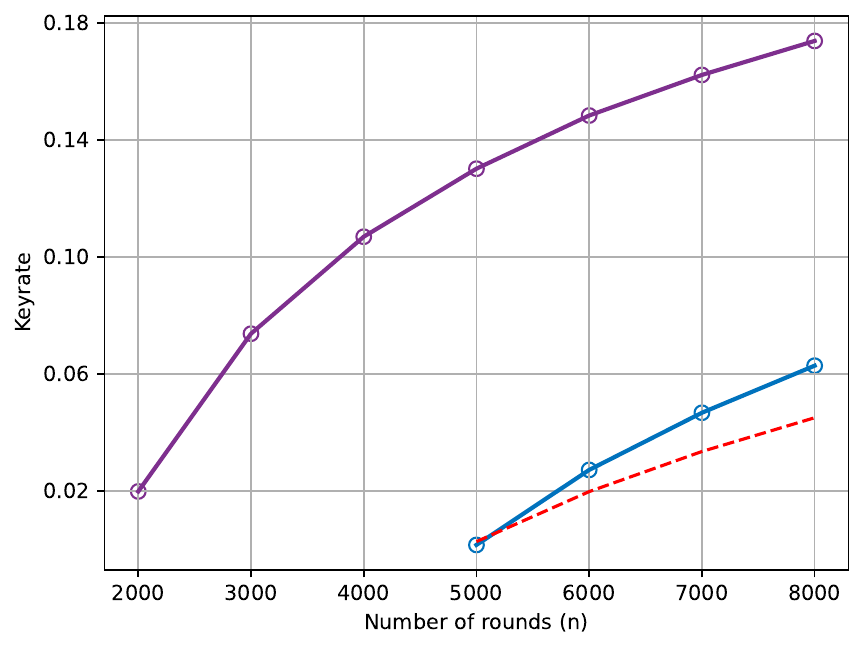}
\caption{Similar to Fig.~\ref{fig:oldSerfling}, except that the parameter choices are $Q_\mathrm{thresh} = 0.0451$, $\xi_\mathrm{EC}=1.19$, $\esecure = 10^{-10}$, and the dashed red curve shows the corresponding data points from~\cite{LXP+21} instead (which improved the finite-size analysis in~\cite{TL17}). Here there is no curve shown for our loosest bound~\eqref{eq:lkeyBB84vN}, as we could not obtain positive keyrates from it for the $n$ values in this range, consistent with the observation in Fig.~\ref{fig:oldSerfling} that it performs less well at very small $n$.
These results suggest that our best bound on the smooth min-entropy itself (i.e.~that used in~\eqref{eq:lkeyBB84EUR}) is still better than~\cite{LXP+21} at most $n$ values, though the advantage is not that large. However, in the end we can achieve much higher actual keyrates by instead using the fully {\Renyi} formula~\eqref{eq:lkeyBB84fullRenyi}. }
\label{fig:newSerfling}
\end{figure}

In Fig.~\ref{fig:oldSerfling}--\ref{fig:newSerfling}, we plot the keyrates $\frac{\lkey}{n}$ obtained from the formulas~\eqref{eq:lkeyBB84vN}, \eqref{eq:lkeyBB84EUR} and~\eqref{eq:lkeyBB84fullRenyi} for some simple examples. Comparing them to the corresponding results in~\cite[Fig.~7]{TL17}, we find that while the loosest bound~\eqref{eq:lkeyBB84vN} is slightly worse at small $n$, we have better keyrates everywhere else as compared to the results in that work.
On the other hand, compared to the tighter finite-size analysis in~\cite[Fig.~1]{LXP+21} (which was based on improving the Serfling-bound analysis in~\cite{TL17}), we see that while \eqref{eq:lkeyBB84EUR} and~\eqref{eq:lkeyBB84fullRenyi} also performed better in almost all cases (except $n=5000$ for the former), our loosest bound~\eqref{eq:lkeyBB84vN} is worse.

These results overall suggest that in terms of bounding the smooth min-entropy itself, our best approach for doing so can even outperform the improvement achieved in~\cite{LXP+21} over~\cite{TL17}, but the difference is not very large. However, since our techniques can instead directly bound the {\Renyi} entropy, for the final purpose of keyrate computations we can still significantly outperform~\cite{LXP+21} by applying the {\Renyi} privacy amplification theorem~\cite{Dup23}. (It is true that technically our protocol implements infrequent-sampling rather than sampling-without-replacement as in~\cite{TL17,LXP+21}; however, we believe the comparison is still fairly reasonable.) An important question is whether this improvement still holds in the presence of photon loss; we leave this for future work.

We remark that in fact, using Theorem~\ref{th:GREAT} for this proof was slightly suboptimal: our bound for the~\cite{inprep_weightentropy} model (Theorem~\ref{th:fweighted}) is slightly tighter, as previously discussed.
However, we chose to present our results based on Theorem~\ref{th:GREAT} to demonstrate that we can get keyrates comparable to the smooth-entropy EUR approach in~\cite{TL17,LXP+21} even with that slight suboptimality. (We briefly note though that replacing~\eqref{eq:lkeyBB84vN} with its tighter Theorem~\ref{th:fweighted} analogue still does not seem to allow it to certify positive keyrates in the Fig.~\ref{fig:newSerfling} parameter regime, though that bound is fairly loose in any case.)
Also note that here we considered an EB protocol, which does not require a ``single-signal interaction'' condition when applying the GEAT; when considering PM protocols instead, one should anyway use the~\cite{inprep_weightentropy} model (Theorem~\ref{th:fweighted}) to avoid requiring this condition.

\subsection{DIRE from collision-entropy accumulation}
\label{subsec:DIRE}

In~\cite{LLR+21}, an experiment was performed to demonstrate DIRE, with a security proof using the EAT (under the original Markov conditions as described in Sec.~\ref{subsec:EAT} here). To briefly summarize, each round consists of an infrequent-sampling channel (Definition~\ref{def:infreqsamp}) such that in test rounds, the CHSH game is played and the ``secret test register'' $\CS_j$ is set to $0$ for a loss and $1$ for a win; in generation rounds, fixed inputs are supplied to their devices and the outputs are recorded, with $\CS_j$ being set to $\perp$. The ``public test registers'' $\CP_j$ are trivial in this analysis. The secret register $S_j$ for the proof technically consists of outputs from both Alice and Bob's devices; however, for the purposes of the single-round analysis we follow that work and simply lower-bound its entropy by only considering Alice's output. 

As the~\cite{LLR+21} analysis was based on the previous EAT, they considered only the von Neumann entropy in the single-round analysis. We could of course do the same with our results using the methods discussed above (and we again find improved finite-size performance compared to previous results, which we present in a companion work~\cite{arx_HTB24}). However, in this section we demonstrate an alternative perspective that given a bound on the single-round min-entropy or collision entropy, we can ``accumulate'' it to obtain a bound on the overall $H^\uparrow_2$ entropy, on which we can directly apply a ``traditional'' privacy amplification result~\cite[Theorem~5.5.1]{rennerthesis}. 
(A collision-entropy analysis was also used in~\cite{MvDR+19}, but it was under an IID assumption.)

\begin{remark}
Readers interested in results more applicable to DIQKD may refer to the companion work~\cite{arx_HTB24}. For this example here, we have chosen randomness expansion rather than QKD because in the former, \emph{any} $\Omega(n)$ lower bound on the overall $H^\uparrow_2(S_1^n \CS_1^n | T_1^n \widehat{E})_{\rho_{|\Omega}}$ entropy is enough to eventually produce a nonzero key length. In contrast, for QKD there is an error-correction term $\lEC$ that remains significant at large $n$, so an $\Omega(n)$ lower bound on $H^\uparrow_2(S_1^n \CS_1^n | T_1^n \widehat{E})_{\rho_{|\Omega}}$ would still need to have a sufficiently large implicit constant in order to produce a nonzero key length, even asymptotically.
\end{remark}

Specifically, in~\cite{PAM+10,MPA11} a bound was derived on $H^\uparrow_\infty(S \CS | \CP T \widehat{E})_{\nu_{|\perp}}$ as a function of CHSH winning probability.
Noting that $H_2 \geq H^\uparrow_\infty$~\cite[Corollary~4]{TBH14}, we can write\footnote{It would have been cleaner if we could directly ``accumulate'' $H^\uparrow_\infty$; however, it can be seen that our bounds for the EAT or GEAT scenarios involve $H_\alpha$ rather than $H^\uparrow_\alpha$ in the single-round terms, and therefore we cannot directly make use of a bound on single-round $H^\uparrow_\infty$ without first converting it to $H_2$. (For the~\cite{inprep_weightentropy} scenario though, the bounds we presented in Sec.~\ref{subsec:fweighted} are indeed based on single-round $H^\uparrow_\alpha$, which would indeed allow us to directly ``accumulate'' min-entropy.) Though in any case, for the CHSH game in particular,~\cite{MvDR+19} showed that the bound~\eqref{eq:DIbndH2} is in fact tight for \emph{both} $H_2$ and $H^\uparrow_\infty$, so there is no loss of tightness in considering the former instead for this particular protocol.}
\begin{align}\label{eq:DIbndH2}
\begin{gathered}
H_2(S \CS | \CP T \widehat{E})_{\nu_{|\perp}} \geq H^\uparrow_\infty(S \CS | \CP T \widehat{E})_{\nu_{|\perp}} \geq f_\mathrm{CHSH}\left(\frac{\nu_{\CS}(1)}{\gamma}\right), \\
\text{where}\quad f_\mathrm{CHSH}(w) \defvar 1-\log\left( 1+ \sqrt{2 - \frac{(8 w - 4)^2}{4}} \right).
\end{gathered}
\end{align}
Applying the Lemma~\ref{lemma:EAT} version of Theorem~\ref{th:GREAT} with the Lemma~\ref{lemma:GREATonlyH} bound (these choices are to ensure the {\Renyi} parameter does not change further), we obtain a form of ``collision entropy accumulation'', albeit with $H^\uparrow_2$ of the final state rather than $H_2$:
\begin{align}\label{eq:DIRE_h2}
\begin{gathered}
H^\uparrow_2(S_1^n \CS_1^n | T_1^n \widehat{E})_{\rho_{|\Omega}} 
\geq n h_2 
- \log\frac{1}{p_\Omega}, \\
\text{where}\quad h_2 = \inf_{\mbf{q} \in S_\Omega} \inf_{\bsym{\nu}_{\CS}\in\mathbb{P}_{\CS}} \left( q(\perp) 
f_\mathrm{CHSH}\left(\frac{\nu_{\CS}(1)}{\gamma}\right)
+ D\left(\mbf{q} \middle\Vert \bsym{\nu}_{\CS}\right) 
\right).
\end{gathered}
\end{align}
With this we can apply a ``traditional'' privacy amplification theorem for unsmoothed $H^\uparrow_2$~\cite[Theorem~5.5.1]{rennerthesis}, from which we conclude we can obtain an $\esecret$-secret key with length given by
\begin{align}\label{eq:lkeyDIRE_H2}
\lkey = \floor{n h_2 - 2\log\frac{1}{\esecret} + 2
}.
\end{align}
(Technically, to obtain the above we have used the fact that for this protocol we have $H^\uparrow_2(S_1^n | T_1^n \widehat{E})_{\rho_{|\Omega}} = H^\uparrow_2(S_1^n \CS_1^n | T_1^n \widehat{E})_{\rho_{|\Omega}} $ by~\cite[Lemma~B.7]{DFR20}, because $\CS_1^n$ in this protocol can be ``projectively reconstructed'' from $S_1^n T_1^n$. Also, the log-probability term does not appear in the final key length formula because it simply becomes an appropriate prefactor in the secrecy definition; see~\cite{Dup23,arx_KAG+24}.)

The above formula for $\lkey$ is a linear expression with only an $O(1)$ finite-size correction, so we might hope that it could perform better at small $n$, where the $O(\sqrt{n})$ finite-size corrections of the previous EAT have larger relative effects.
Evaluating the value of $h_2$ in the above formula for the parameters used in the main experiment of~\cite{LLR+21}, we find the result $h_2 = 4.47\times10^{-9}$, which appears fairly small. However, we note that to achieve the secrecy parameter of $\esecret = 3.09 \times 10^{-12}$ that was chosen in that work, the minimum number of rounds required for nontrivial key length (i.e.~$\lkey \geq 1$) with this formula is then $n_\mathrm{min}=1.685\times10^{10}$, which is indeed a slight improvement over the value  $n_\mathrm{min}=8.951\times10^{10}$ found in that work. (On the other hand, if we consider the larger value $n=1.3824\times10^{11}$ used in the actual~\cite{LLR+21} experiment, the formula~\eqref{eq:lkeyDIRE_H2} yields a much smaller $\lkey$ than the analysis in that work --- this is somewhat expected since the rate asymptotically converges to $h_2$ rather than the single-round von Neumann entropy.)

As mentioned in previous sections, the above approach has the drawback that $h_2$ is a constant independent of $n$, because the ``penalty term'' in~\eqref{eq:DIRE_h2} does not change with $n$. This implies in particular that it does not converge to $\inf_{\bsym{\nu}_{\CS} \in S_\Omega} \nu_{\CS}(\perp)
f_\mathrm{CHSH}\left(\frac{\nu_{\CS}(1)}{\gamma}\right)$ even at large $n$.
To overcome this issue, we can fine-tune further by noting that since the {\Renyi} entropies are monotone in $\alpha$, the lower bound in~\eqref{eq:DIbndH2} also holds for any $H_\alpha$ with $\alpha \leq 2$, and therefore we could invoke our bounds with any such $\alpha$ instead. Together with the {\Renyi} privacy amplification theorem~\cite{Dup23}\footnote{Again, we could instead convert to $\Hmin^{\es}$ and apply the corresponding privacy amplification theorem (under the current state of results in privacy amplification, this may be necessary if the protocol uses Trevisan's extractor instead of 2-universal hashing). However, in that case we find no improvement in $n_\mathrm{min}$ over~\eqref{eq:lkeyDIRE_H2}, though we can at least still get about $n_\mathrm{min} = 3 \times 10^{10}$, which is still better than~\cite{LLR+21}. The issue basically seems to be that for security proofs based on our $\Hmin^{\es}$ bound rather than our $H^\uparrow_\alpha$ bound, one has to pick a ``threshold'' value $\eAT$ and split the analysis into cases where $p_\Omega$ is above or below $\eAT$; this introduces additional finite-size corrections, and the resulting $\lkey$ formula then fails to reduce to~\eqref{eq:lkeyDIRE_H2} in the $\alpha\to2$ limit.}, this means we could get an $\esecret$-secret key by picking
\begin{align}\label{eq:lkeyDIRE_Halpha}
\lkey &= \left\lfloor\inf_{\mbf{q} \in S_\Omega} \inf_{\bsym{\nu}_{\CS}\in\mathbb{P}_{\CS}} \left( q(\perp) 
f_\mathrm{CHSH}\left(\frac{\nu_{\CS}(1)}{\gamma}\right) 
+ \frac{1}{\alpha-1} D\left(\mbf{q} \middle\Vert \bsym{\nu}_{\CS}\right) \right) n - \frac{\alpha}{\alpha-1}\log\frac{1}{\esecret} + 2 \right\rfloor.
\end{align}
where $\alpha\in(1,2]$ can be optimized over. 
Note that in this approach, the final {\Renyi} entropy $H^\uparrow_\alpha(S_1^n \CS_1^n | T_1^n \widehat{E})_{\rho_{|\Omega}}$ that we bounded may have a very different {\Renyi} parameter from the single-round bound; however, it has the advantage that at large $n$, the ``first-order constant'' will indeed converge to $\inf_{\bsym{\nu}_{\CS} \in S_\Omega} \nu_{\CS}(\perp)
f_\mathrm{CHSH}\left(\frac{\nu_{\CS}(1)}{\gamma}\right)$ (e.g.~by taking $\alpha = 1+\Theta(1/\sqrt{n})$).
By optimizing $\alpha$ in this formula~\eqref{eq:lkeyDIRE_Halpha}, we find that $\lkey \geq 1$ can be achieved even at e.g.~$n=10^{10}$.

In fact, with this perspective we can also apply the Lemma~\ref{lemma:GREAT3Renyi} bound instead, by setting $\alpha'=2$ and hence requiring $\frac{{\alpha}}{{\alpha}-1} = 2 + \frac{\alpha''}{\alpha''-1}$ (if we write $\alpha=1+\mu$, this translates to $\alpha'' = 1 + \frac{\mu}{1-2\mu} = 1 + \mu + O(\mu^2)$; note that this means we have to restrict to $\mu<1/2$ i.e.~$\alpha<3/2$). Due to symmetries of the protocol, the bound~\eqref{eq:DIbndH2} also holds for the state $\nu$ without conditioning on $\cS=\perp$ (this fact was implicitly used in~\cite{LLR+21}). With this we can pick
\begin{align}
\lkey &= \left\lfloor\inf_{\mbf{q} \in S_\Omega} \inf_{\bsym{\nu}_{\CS}\in\mathbb{P}_{\CS}} \left(
f_\mathrm{CHSH}\left(\frac{\nu_{\CS}(1)}{\gamma}\right) 
+ \frac{\alpha''}{\alpha''-1} D\left(\mbf{q} \middle\Vert \bsym{\nu}_{\CS}\right) \right) n - \frac{\alpha}{\alpha-1}\log\frac{1}{\esecret} + 2 \right\rfloor.
\end{align}
We find that at $n\sim10^{10}$, this bound is worse than~\eqref{eq:lkeyDIRE_Halpha}. However, when $n$ gets closer to the values used in the~\cite{LLR+21} experiment, it sometimes performs better than~\eqref{eq:lkeyDIRE_Halpha} (though still somewhat worse than the results in~\cite{LLR+21} based on single-round von Neumann entropy, but surprisingly, not by a large extent). This behaviour essentially seems to be because we have fixed the value of $\alpha'$ in this case, so the other {\Renyi} parameter $\alpha''$ remains equal to $\alpha$ up to order $O((\alpha-1)^2)$, hence more or less avoiding the ``loss'' in {\Renyi} parameter. Hence at large $n$ (where the optimal $\alpha$ becomes close to $1$), it can sometimes outperform~\eqref{eq:lkeyDIRE_Halpha} because it does not have the $q(\perp)$ prefactor. 
However, roughly speaking we do not expect this to be the case if we were to instead work with von Neumann entropy (via~\eqref{eq:tovN}) in this regime, since in that case the {\Renyi} parameters would have the worse scaling $\alpha',\alpha'' = 1 + 2\mu + O(\mu^2)$ as discussed previously, assuming we choose $\alpha'=\alpha''$.\footnote{In fact another potential question is whether we would actually get better $n_\mathrm{min}$ values via this von Neumann entropy approach; we leave this question for future work. One point that may be worth highlighting is that similar to the advantage of~\eqref{eq:HbetaXbasis} over~\eqref{eq:BB84vN} in the previous section, working with the collision-entropy bound~\eqref{eq:DIbndH2} ensures that the optimization for the ``first-order term'' always returns a strictly positive value, unlike the relaxation to von Neumann entropy via the continuity bound~\eqref{eq:tovN}, which may yield a negative value if $\alpha'$ is not sufficiently close to $1$. Hence it seems likely that the collision-entropy approach will indeed be better for small $n$, where the optimal {\Renyi} parameters are further from~$1$.}

Finally, we highlight that numerical exploration again suggests that rather than using the accept condition in the~\cite{LLR+21} experiment (which only constrains the frequency of $\cS=0$), our bounds yield better results if we instead follow the approach in Remark~\ref{remark:deltaball} (while preserving the same bound on honest abort probability). 

\section{Conclusion and future work}
\label{sec:conclusion}

In summary, in this work we have found a connection between entropy accumulation and QEFs, yielding families of {\Renyi} entropy bounds that are suitable for use in both variable-length and fixed-length protocols. Our results for the latter case have the important advantage of not requiring affine min-tradeoff functions, and furthermore we find that they give a significant improvement in practice over previous EAT or GEAT bounds. Since our approach proceeds via {\Renyi} entropies, it also unlocks a variety of options for security proofs based on any {\Renyi} entropies, including ``fully {\Renyi}'' security proofs (without an IID assumption) that approach the ``correct'' asymptotic behaviour by taking $\alpha \to 1$.

Certainly, one question for future consideration would be whether the bounds could be further tightened. As discussed in Sec.~\ref{subsec:tightness}, for fixed-length protocols, it seems many steps in the proof are tight, apart from Lemma~\ref{lemma:extract_D}. It would be interesting to see if improvements could be found on that step. However, we highlight again that the current bound is already tight enough to outperform the smoothed-entropy EURs in at least some contexts. For variable-length protocols, it may be worth exploring how the results would compare against a suitable adaptation of the variable-length security proofs in~\cite{TTL24} to handle non-IID attacks (other than by using the postselection technique~\cite{CKR09,arx_NTZ+24}). Roughly speaking, the approach in that work relies on constructing a statistical estimator for the final {\Renyi} entropy, in a somewhat different sense from the ``log-mean-exponential'' nature of our $H^{f}_\alpha$-entropy analysis. While that work only constructed such estimators for the IID scenario, it seems that in principle there should be no fundamental obstruction to constructing such estimators in a non-IID fashion as well, and resolving that issue might yield an interesting alternative to this approach.\footnote{For readers familiar with~\cite{TTL24}, we highlight that in principle, Theorem~\ref{th:GREAT} in this work could be used to bound the {\Renyi} entropy conditioned on any particular frequency distribution observed on the $\CS_1^n\CP_1^n$ registers, which roughly satisfies the requirements for applying the analysis in~\cite{TTL24}, \emph{except} that the bound has an explicit dependence on $p_\Omega$. Due to this, a direct calculation along the same lines as in that work would result in the final security parameter being multiplied by the number of possible frequency distributions. This is somewhat undesirable; however, in principle this is described by a combinatorial coefficient  
that ``only'' increases polynomially in $n$. This may be tolerable in practice, as suggested by proofs based on the postselection technique that introduce such polynomial factors~\cite{CKR09,arx_NTZ+24}, albeit based on the dimensions of the quantum systems rather than the classical outcomes (the approach suggested here would hence be superior to the postselection technique for DI protocols, or other contexts where the dimension dependence is improved via this approach).}

Additionally, in~\cite{DFR20} and~\cite{MFSR24}, they also used the concept of entropy accumulation to obtain upper bounds on the final smooth max-entropy, via bounds on the {\Renyi} entropies for $\alpha<1$ (or via a duality argument, in the latter work). 
Such bounds can be useful in, for instance, studying one-shot distillable entanglement~\cite{AB19}. Our proof techniques can also be used to obtain similar results, though we found a number of interesting technical subtleties. We present a detailed discussion of these results in Appendix~\ref{app:Hmaxversion}.

Another consideration is the fact that in current DIQKD security proofs, the handling of the test-round registers is highly inconvenient --- since they do not fulfil the NS condition, one has to include them in the $\CS_1^n$ registers on the ``left side'' of the conditioning, then shift them to the ``right side'' via chain rules~\cite{ARV19,TSB+22,arx_CT23}. Ideally, to resolve this issue, one would like to have a version of entropy accumulation that does not require the test-round registers to explicitly appear in the entropy terms (or be ``reconstructible'' from the registers that appear, in the sense of~\cite{MFSR24}). It does not seem straightforward how to do so with the proof approaches thus far, but this remains an interesting open question that should be further explored. Failing that, it may be interesting to consider the question of whether there is a more natural way to handle the $\CS_1^n$ registers for variable-length protocols rather than simply shifting them onto the ``right side'' of the conditioning (especially in the case of DIRE), perhaps by using some other structure in $H^{f}_\alpha$-entropies.

\begin{remark}\label{remark:secretC}
For fixed-length protocols using infrequent-sampling channels (Definition~\ref{def:infreqsamp}) though, we can at least propose an improved method for handling the $\CS_1^n$ registers as compared to previous works such as~\cite{ARV19,MvDR+19,TSB+22,arx_GLT+22}. Specifically, let us suppose the protocol satisfies the following properties (here we focus on the GEAT model, but analogous statements hold for the EAT model):
\begin{itemize}
\item The $E_n$ register at the end of the protocol contains a copy of some classical registers $T_1^n$ that record whether each round is a test or generation round.
\item Conditioned on the event $\Omega$, the fraction of test rounds is at most some fixed constant $\gamma^\mathrm{max}$. 
\end{itemize}
These properties are usually easy to satisfy when designing a protocol (the second one by choosing the accept condition to inherently impose the required constraint on the test-round fraction --- this usually does not affect the completeness parameter significantly, since the honest behaviour is IID). 
The results in this work give bounds on e.g.~$H^\uparrow_\alpha(S_1^n \CS_1^n | \CP_1^n E_n)_{\rho_{|\Omega}}$, but the main quantity of interest in e.g.~DIQKD security proofs is usually something like $H^\uparrow_\alpha(S_1^n | \CP_1^n E_n)_{\rho_{|\Omega}}$ (possibly with additional conditioning registers). We now observe that given the above two properties, we can easily relate these quantities as follows.

First let us suppose that there exist classical registers $Z_j$ with a common alphabet $\mathcal{Z}$, such that $Z_j$ takes a fixed value in generation rounds, and $\CS_1^n$ can be ``projectively reconstructed'' from $S_1^n Z_1^n \CP_1^n E_n$ (in the sense of~\cite[Lemma~B.7]{DFR20}) --- this can be trivially fulfilled by the simple choice $Z_j=\CS_j$, but the subsequent bounds are more flexible if we allow the possibility that they could be e.g.~some public announcements. Then for any $\alpha\in[1/2,\infty]$, 
we have:
\begin{align}
H^\uparrow_\alpha(S_1^n | \CP_1^n E_n)_{\rho_{|\Omega}} &\geq H^\uparrow_\alpha(S_1^n | Z_1^n \CP_1^n E_n)_{\rho_{|\Omega}} \nonumber\\
&\geq H^\uparrow_\alpha(S_1^n Z_1^n | \CP_1^n E_n)_{\rho_{|\Omega}} - H^\uparrow_0(Z_1^n | T_1^n)_{\rho_{|\Omega}} \nonumber\\
&\geq H^\uparrow_\alpha(S_1^n Z_1^n | \CP_1^n E_n)_{\rho_{|\Omega}} - 
\max_{
{t_1^n \suchthat\;
\rho_{|\Omega}(t_1^n) > 0}} H_0(Z_1^n)_{\rho_{|t_1^n}}\nonumber\\
&\geq H^\uparrow_\alpha(S_1^n Z_1^n | \CP_1^n E_n)_{\rho_{|\Omega}} - \gamma^\mathrm{max} n \log|\mathcal{Z}|,
\end{align}
where in the last line we can replace $|\mathcal{Z}|$ with $|\mathcal{Z}|-1$ if the alphabet $\mathcal{Z}$ contains a particular symbol (say, $\perp$) that never occurs in test rounds. 
In the above, the second line is\footnote{We could have obtained a similar result by instead using the chain rule from~\cite{Dup15} here to extract a $H^\uparrow_{1/2}(Z_1^n | T_1^n)_{\rho_{|\Omega}}$ term, but this would have a (fairly minor) disadvantage of causing a small ``higher-order'' change in {\Renyi} parameter, comparable to~\eqref{eq:hatmu}. Note that even with this, our approach would still differ slightly from~\cite{arx_GLT+22} due to how we bound the $H^\uparrow_{1/2}(Z_1^n | T_1^n)_{\rho_{|\Omega}}$ term in the subsequent steps.} proven in Appendix~\ref{app:H0chain} (see~\eqref{eq:H0chain2}, noting that $E_n$ contains the register $T_1^n$), 
the third line follows from the fact that since $Z_1^n T_1^n$ are classical we can (see~\cite[Eq.~(5.27)]{Tom16}) write $H^\uparrow_0(Z_1^n | T_1^n)_{\rho_{|\Omega}}$ in the form given in~\cite[Definition~3.1.2 and~3.1.4]{rennerthesis}, 
and the fourth line holds because given the properties we imposed, we see that conditioned on any $t_1^n$ value with nonzero probability in $\rho_{|\Omega}$, there are at most $\gamma^\mathrm{max}n$ positions in the string $Z_1^n$ that are not set to a fixed value. The final line can then be straightforwardly bounded using our main results in this work by observing that
\begin{align}
H^\uparrow_\alpha(S_1^n Z_1^n | \CP_1^n E_n)_{\rho_{|\Omega}} = H^\uparrow_\alpha(S_1^n Z_1^n \CS_1^n | \CP_1^n E_n)_{\rho_{|\Omega}} \geq H^\uparrow_\alpha(S_1^n \CS_1^n | \CP_1^n E_n)_{\rho_{|\Omega}},
\end{align}
where the equality holds by~\cite[Lemma~B.7]{DFR20}, and the inequality holds since $Z_1^n$ are classical.

The above chain of computations serves to bound both $H^\uparrow_\alpha(S_1^n | \CP_1^n E_n)_{\rho_{|\Omega}}$ and $H^\uparrow_\alpha(S_1^n | Z_1^n \CP_1^n E_n)_{\rho_{|\Omega}}$, so one can use whichever is more convenient in a DIQKD security proof. The latter is useful if for instance the $Z_1^n$ registers represent public announcements that do not fulfill the NS or Markov conditions; see e.g.~\cite{arx_HTB24}.
We believe that this bound should be not only simpler but also tighter than the previous approaches in~\cite{ARV19,MvDR+19,TSB+22,arx_GLT+22}, which were based on chain rules that changed the smoothing parameter or {\Renyi} parameter.\footnote{Furthermore, in our bound the subtracted term simply has the form $\gamma^\mathrm{max} n \log|\mathcal{Z}| $ where $\gamma^\mathrm{max}$ only needs to be chosen to be sufficiently large for the \emph{honest} IID behavior to accept with high probability. In comparison, the subtracted terms in~\cite{ARV19,MvDR+19,TSB+22,arx_GLT+22} were instead of roughly the form $\gamma n \log|\mathcal{Z}| + O(\sqrt{n})$ where the $O(\sqrt{n})$ term accommodated potential non-IID behavior. It seems likely that the former value is smaller, though we do not aim to prove this rigorously. (We briefly highlight however that~\cite{arx_CT23} introduced an approach in which the subtracted term was also roughly of the form $\gamma^\mathrm{max} n \log|\mathcal{Z}| $, though there was still a change of smoothing parameter.)}

However, for variable-length protocols this technique does not work directly, because we do not condition on an ``acceptance event'' and so it seems less straightforward to constrain the state support in the subtracted entropic term. As discussed previously, one approach might be to use Lemma~\ref{lemma:QES3Renyi}, and then apply the EAT to bound the resulting $H_{\alpha''}(Z_1^n | T_1^n)_{\rho}$ term on the \emph{unconditioned} state (noting that for an infrequent-sampling channel, the single-round $H_{\alpha''}(Z_j | T_j)_\nu$ entropy is bounded even without ``testing''), but the bound might be slightly worse.
\end{remark}

We also note that in this work we have focused on presenting fairly simple example applications, such as fully qubit BB84 and a simple DIRE scenario. In separate works, we apply these results to more sophisticated protocols such as decoy-state QKD and DIQKD.

\section*{Acknowledgements}

We are very grateful to Peter Brown and Thomas van Himbeeck for discussions of their work on $H^{\uparrow,f}_\alpha$-entropies --- our presentation in Sec.~\ref{sec:QES} in terms of $H^{f}_\alpha$-entropies was formulated based on their approach, though we developed the overall proof structure that led to Theorem~\ref{th:GREAT} mostly independently. 
We also thank Ashutosh Marwah for providing us with the proof of Lemma~\ref{lemma:H0chain}, as well as Fr\'{e}d\'{e}ric Dupuis, Hao Hu, Tony Metger, Renato Renner, Martin Sandfuchs, and Ramona Wolf for helpful discussions.
A.A.\ and E.T.\ conducted research at the Institute for Quantum Computing, at the University of Waterloo, which is supported by Innovation, Science, and Economic Development Canada. Support was also provided by NSERC under the Discovery Grants Program, Grant No. 341495. 
T.H.\ acknowledges support from the Marshall and Arlene Bennett Family Research Program, the Minerva foundation with funding from the Federal German Ministry for Education and Research and 
the Israel Science Foundation (ISF), and the Directorate for Defense Research and Development (DDR\&D), grant No. 3426/21.

\appendix

\section{Comparison to previous GEATT definition}
\label{app:compareGEAT}

Here we outline some differences between the conditions in~\cite{MFSR24} and our approach. In that work, only a single ``testing register'' $C_j$ was produced in each round, and the bounds they obtained were on entropies of the form $\mathbb{H}(S_1^n | E_n)$ (for various entropies $\mathbb{H}$) for the final state, under a condition that $C_1^n$ can be ``projectively reconstructed'' from $S_1^n E_n$ in that state. In this work, we have two such registers $\CS_j$ and $\CP_j$ in each round, and we instead take the approach of obtaining bounds on the entropies of the form $\mathbb{H}(S_1^n \CS_1^n | \CP_1^n E_n)$, where $\CS_1^n \CP_1^n$ are explicitly involved in the final bound but we do not impose the projective reconstruction condition. However, note that if $\CS_1^n$ and $\CP_1^n$ can separately be projectively reconstructed from $S_1^n$ and $E_n$ respectively (which should be basically equivalent to the projective reconstruction property for $C_1^n$ in that work, because it requires $C_1^n$ to be a deterministic function of outcomes of separate projective measurements on $S_1^n$ and $E_n$, which can therefore be viewed as $\CS_1^n$ and $\CP_1^n$), then $\mathbb{H}(S_1^n \CS_1^n | \CP_1^n E_n) = \mathbb{H}(S_1^n | E_n)$ for any entropy with ``reasonable'' data-processing properties, and so our results should be basically equivalent in those aspects. 
Similar considerations hold when considering the single-round entropies $\mathbb{H}(S_j \CS_j | \CP_j E_j)$ as well. (Strictly speaking,~\cite{MFSR24} does not state that the projective reconstruction property is required for individual rounds. However, a close inspection of the proof shows that in fact it is implicitly required to obtain the bound just before Eq.~(4.10) in their proof, because their argument is based on the argument in~\cite{DF19} which requires this property in single rounds as well.)

\section{Detailed proofs}
\label{app:someproofs}

\subsection{Proof of Theorem~\ref{th:QES}}

Let $M>0$ be any value such that $M - f_{|\cS_1^{j-1} \cP_1^{j-1}}(\cS_j \cP_j) > \frac{M}{2} > 0$ for all the $f_{|\cS_1^{j-1} \cP_1^{j-1}}(\cS_j \cP_j)$ in the theorem statement (for all $j$). 
Now for each $j$, define a read-and-prepare channel $\mathcal{D}_j:\CS_1^j \CP_1^j \to \CS_1^j \CP_1^j D_j$ of the form described in Lemma~\ref{lemma:createD_2}, so that the state it prepares on $D_j$ satisfies
\begin{align}\label{eq:Dj_entropy}
\forall \alpha \in[0,\infty], \quad 
H_\alpha(D_j)_{\rho_{|\cS_1^{j} \cP_1^{j}}} 
&\in \left[M - f_{|\cS_1^{j-1} \cP_1^{j-1}}(\cS_j \cP_j)  , M - f_{|\cS_1^{j-1} \cP_1^{j-1}}(\cS_j \cP_j) + 2^{-\frac{M}{2}}\log e \right].
\end{align}

Now let $\mathcal{N}_j: R_{j-1} E_{j-1} \CS_1^{j-1} \CP_1^{j-1} \to D_j \copyCS_j S_j R_j E_j \CS_1^j \CP_1^j$ denote a channel that does the following (note that its input registers differ from $\EATchann_j$ by also including $\CS_1^{j-1} \CP_1^{j-1}$, and its output registers also include $\CS_1^{j-1} \CP_1^{j-1}$ and two additional registers $\copyCS_j , D_j$):
\begin{enumerate}
\item Apply $\EATchann_j \otimes \mathcal{P}_j$, where $\mathcal{P}_j$ is a pinching channel on $\CS_1^{j-1} \CP_1^{j-1}$ (in its classical basis). 
\item Copy the classical register $\CS_j$ onto another classical register $\copyCS_j$.
\item Generate a $D_j$ register by applying the above read-and-prepare channel $\mathcal{D}_j$ on $\CS_1^j \CP_1^j$.
\end{enumerate}
Note that the third step can indeed be implemented as a read-and-prepare channel for \emph{any} input state, even if it is not originally classical on $\CS_1^{j-1} \CP_1^{j-1}$, because the pinching channel in the first step forces the resulting state to be classical on those registers. These channels $\mathcal{N}_j$ form a valid sequence of GEAT channels (without testing) according to Definition~\ref{def:GEATchann_notest}, by identifying the notation in these channels with the notation in that definition as follows: for the input registers,
\begin{itemize}
\item $R_{j-1} \CS_1^{j-1} \leftrightarrow R_{j-1}$,
\item $E_{j-1} \CP_1^{j-1} \leftrightarrow E_{j-1}$,
\end{itemize}
and for the output registers,
\begin{itemize}
\item $D_j \copyCS_j S_j  \leftrightarrow S_j $,
\item $R_j \CS_1^j \leftrightarrow R_j$,
\item $E_j \CP_1^j \leftrightarrow E_j$.
\end{itemize}
To verify that the NS condition of Definition~\ref{def:GEATchann_notest} is satisfied by this identification, we note that by letting $\mathcal{R}_j$ be the channel in Definition~\ref{def:GEATTchann}, and defining $\mathcal{R}'_j \defvar \mathcal{R}_j \otimes \mathcal{P}'_j$ where $\mathcal{P}'_j$ is a pinching channel on $\CP_1^{j-1}$ (in its classical basis), we have
\begin{align}\label{eq:NSforN}
\Tr_{D_j \copyCS_j S_j R_j \CS_1^j} \circ \mathcal{N}_j 
&= \Tr_{S_j \CS_1^j R_j} \circ  (\EATchann_j \otimes \mathcal{P}_j) \nonumber\\
&= \left(\Tr_{S_j \CS_j R_j} \circ \EATchann_j\right) \otimes \left(\Tr_{\CS_1^{j-1}} \circ \mathcal{P}_j\right) \nonumber\\
&= \left( \mathcal{R}_j \circ \Tr_{R_{j-1}} \right) \otimes \left(\mathcal{P}'_j \circ \Tr_{\CS_1^{j-1}}  \right) \nonumber\\
&= \mathcal{R}'_j \circ \Tr_{R_{j-1} \CS_1^{j-1}},
\end{align} 
as desired. In the above, the first line holds because the partial trace $\Tr_{D_j \copyCS_j}$ removes all registers generated by $\mathcal{N}_j$ after applying $\EATchann_j \otimes \mathcal{P}_j$, the second line is just a channel regrouping, the third line holds by the NS condition on $\EATchann_j$ in Definition~\ref{def:GEATTchann} of GEATT channels (and collapsing part of the pinching channel with the partial trace), and the last line is again a channel regrouping.

These channels $\mathcal{N}_j$ have the critical property that the state $\mathcal{N}_n \circ \dots \circ \mathcal{N}_1 [\omega^0]$ would be identical to $\EATchann_n \circ \dots \circ \EATchann_1 [\omega^0]$ on all registers that are present in the latter.\footnote{While $\mathcal{N}_j$ performs a pinching channel on $\CS_1^{j-1} \CP_1^{j-1}$ in contrast to $\EATchann_j$ which acts as identity on those registers, this does not make a difference for states produced by applying those channels in sequence, because $\mathcal{N}_{j-1} \circ \dots \circ \mathcal{N}_1 [\omega^0]$ is always already classical on $\CS_1^{j-1} \CP_1^{j-1}$. Similarly, the second and third steps performed by $\mathcal{N}_j$ do not disturb the classical registers they act on.}
Therefore, we can now write $\rho = \mathcal{N}_n \circ \dots \circ \mathcal{N}_1 [\omega^0]$ without danger of ambiguity with respect to the state $\rho$ in the theorem statement, by viewing the former as just an extension of the latter.
Then according to~\cite[Lemma~3.6]{MFSR24} (Fact~\ref{fact:GEATnotest} stated above), since $\mathcal{N}_j$ are a sequence of GEAT channels, we have 
\begin{align}\label{eq:GEATnotest}
H_\alpha(D_1^n S_1^n \copyCS_1^n | \CP_1^n E_n)_\rho \geq \sum_j \inf_{\nu'\in\Sigma'_j} H_{\widehat{\alpha}}(D_j S_j \copyCS_j | \CP_1^j E_j \widetilde{E})_{\nu'},
\end{align}
where $\Sigma'_j$ denotes the set of all states that could be produced by $\mathcal{N}_j$ acting on some initial state $\omega' \in \dop{=}(R_{j-1} E_{j-1} \CS_1^{j-1} \CP_1^{j-1} \widetilde{E})$.\footnote{In this step, let us take $\widetilde{E}$ to be of large enough dimension to be a purifying register for the input registers in the $\mathcal{N}_j$ scenario as well; this can be achieved without loss of generality by expanding its dimension as necessary.}
However, recalling that the $\copyCS_j$ register produced by each $\mathcal{N}_j$ channel is always just a copy of the $\CS_j$ register (and also that in the final state, we still have $\copyCS_1^n = \CS_1^n$ because the subsequent channels do not disturb the classical $\CS_j$ registers), this is equivalent to
\begin{align}\label{eq:GEATbound_notest}
H_\alpha(D_1^n S_1^n \CS_1^n | \CP_1^n E_n)_\rho \geq \sum_j \inf_{\nu'\in\Sigma'_j} H_{\widehat{\alpha}}(D_j S_j \CS_j | \CP_1^j E_j \widetilde{E})_{\nu'}.
\end{align}

We shall now bound the terms on the right-hand-side in terms of quantities involving only the original channels $\EATchann_j$. Consider any $j$ and take any state $\nu'_{D_j \copyCS_j S_j R_j E_j \CS_1^j \CP_1^j \widetilde{E}} = \mathcal{N}_j\left[\omega'_{R_{j-1} E_{j-1} \CS_1^{j-1} \CP_1^{j-1} \widetilde{E}}\right]$ (for some $\omega'$) in the corresponding infimum. 
This state is classical on $\CS_1^{j-1} \CP_1^{j-1}$; furthermore, recalling that $\mathcal{N}_j$ always begins by applying a pinching channel on $\CS_1^{j-1} \CP_1^{j-1}$, we can also take the input state $\omega'$ to be classical on $\CS_1^{j-1} \CP_1^{j-1}$ without loss of generality. Letting $\nu'_{|\cS_1^{j-1} \cP_1^{j-1}}$ and $\omega'_{|\cS_1^{j-1} \cP_1^{j-1}}$ denote the respective states conditioned on\footnote{For strict technical accuracy in the following steps, these conditional states should be interpreted to still include the registers $\CS_1^{j-1} \CP_1^{j-1}$, though with those registers simply taking the fixed value $\cS_1^{j-1} \cP_1^{j-1}$.} $\CS_1^{j-1} \CP_1^{j-1} = \cS_1^{j-1} \cP_1^{j-1}$, it is straightforward to see that we also have (since $\mathcal{N}_j$ does not disturb $\CS_1^{j-1} \CP_1^{j-1}$ if it is already classical):
\begin{align}\label{eq:omega_conditioned}
\nu'_{|\cS_1^{j-1} \cP_1^{j-1}} = \mathcal{N}_j\left[\omega'_{|\cS_1^{j-1} \cP_1^{j-1}} \right],
\end{align} 
a property we will need later in our proof.
For now, we observe that since $\nu'$ is a mixture of the conditional states $\nu'_{|\cS_1^{j-1} \cP_1^{j-1}}$, we can lower bound it by the ``worst-case'' term in the mixture, as follows:
\begin{align}\label{eq:worstcasebnd}
	H_{\widehat{\alpha}}(D_j S_j \CS_j | \CP_1^j E_j \widetilde{E})_{\nu'}
	&\geq H_{\widehat{\alpha}}(D_j S_j \CS_j | \CS_1^{j-1} \CP_1^j E_j \widetilde{E})_{\bar{\nu}'}\nonumber\\
	&\geq \min_{\cS_1^{j-1} \cP_1^{j-1}} H_{\widehat{\alpha}}(D_j S_j \CS_j | \CP_j E_j \widetilde{E})_{\bar{\nu}'_{|\cS_1^{j-1} \cP_1^{j-1}}},
\end{align}
where the first inequality is by data-processing (Fact~\ref{fact:DPI}) and the second inequality holds by considering Fact~\ref{fact:classmix}. (Alternatively, one can directly apply quasi-convexity of {\Renyi} divergence, as described in~\cite[Proposition~7.35]{KW20}, which implies quasi-concavity of the corresponding conditional entropy.)
Now consider any particular value of $\cS_1^{j-1} \cP_1^{j-1}$, and for brevity in the upcoming calculation, let us write $\sigma \defvar \nu'_{|\cS_1^{j-1} \cP_1^{j-1}}$. This is classical on $\CS_j \CP_j$, so we can follow a similar chain of calculations as in~\eqref{eq:createD_proof} to obtain
\begin{align}\label{eq:1rndreduction}
& H_{\widehat{\alpha}}(D_j S_j \CS_j | \CP_j E_j \widetilde{E})_{\sigma} \nonumber\\
=& \frac{1}{1-\widehat{\alpha}} \log \left( \sum_{\cS_j \cP_j} \sigma(\cS_j \cP_j)^{\widehat{\alpha}} \sigma(\cP_j)^{1-{\widehat{\alpha}}} \, 2^{(1-\widehat{\alpha}) \left(H_{\widehat{\alpha}}\left(D_j\right)_{\sigma_{| \cS_j \cP_j}} - D_{\widehat{\alpha}}\left(\sigma_{S_j E_j \widetilde{E} | \cS_j \cP_j} \middle\Vert \id_{S_j} \otimes \sigma_{E_j \widetilde{E} | \cP} \right)\right) } \right) \nonumber \\
\geq& \frac{1}{1-\widehat{\alpha}} \log \left( \sum_{\cS_j \cP_j} \sigma(\cS_j \cP_j)^{\widehat{\alpha}} \sigma(\cP_j)^{1-{\widehat{\alpha}}} \, 2^{(1-\widehat{\alpha}) \left(M - f_{|\cS_1^{j-1} \cP_1^{j-1}}(\cS_j \cP_j) - D_{\widehat{\alpha}}\left(\sigma_{S_j E_j \widetilde{E} | \cS_j \cP_j} \middle\Vert \id_{S_j} \otimes \sigma_{E_j \widetilde{E} | \cP} \right)\right) } \right) \nonumber \\
=& M + H^{f_{|\cS_1^{j-1} \cP_1^{j-1}}}_{\widehat{\alpha}}(S_j \CS_j | \CP_j E_j \widetilde{E})_{\sigma} ,
\end{align}
where the inequality in the third line holds because the expression is monotone increasing with respect to the $H_{\widehat{\alpha}}\left(D_j\right)_{\sigma_{| \cS_j \cP_j}}$ terms (for any $\widehat{\alpha}\in(0,1)\cup (1,\infty)$), so we can apply the lower bound in~\eqref{eq:Dj_entropy}. (More precisely, in order to apply that lower bound, we are recalling that $\sigma \defvar \nu'_{|\cS_1^{j-1} \cP_1^{j-1}} = \mathcal{N}_j\left[\omega'_{|\cS_1^{j-1} \cP_1^{j-1}} \right]$ by~\eqref{eq:omega_conditioned}, and therefore the $D_j$ register in $\sigma$ has indeed been produced by a read-and-prepare channel satisfying~\eqref{eq:Dj_entropy}, acting on a state in which $\CS_1^{j-1} \CP_1^{j-1} = \cS_1^{j-1} \cP_1^{j-1}$.)

Critically (for a fixed $\cS_1^{j-1} \cP_1^{j-1}$), the only registers involved in this lower bound are the registers $S_j \CS_j \CP_j E_j \widetilde{E}$ that can be produced by $\EATchann_j$ (with an extension to an identity channel on $\widetilde{E}$).\footnote{While the QES $f_{|\cS_1^{j-1} \cP_1^{j-1}}$ does depend on $\cS_1^{j-1} \cP_1^{j-1}$, its value is fixed by the initial choices in the theorem, not by the state.} Furthermore, recall that $\nu'_{|\cS_1^{j-1} \cP_1^{j-1}}$ can be produced by $\mathcal{N}_j$ according to~\eqref{eq:omega_conditioned}, so we can write
\begin{align}
H^{f_{|\cS_1^{j-1} \cP_1^{j-1}}}_{\widehat{\alpha}}(S_j \CS_j | \CP_j E_j \widetilde{E})_{\nu'_{|\cS_1^{j-1} \cP_1^{j-1}}} &= H^{f_{|\cS_1^{j-1} \cP_1^{j-1}}}_{\widehat{\alpha}}(S_j \CS_j | \CP_j E_j \widetilde{E})_{\mathcal{N}_j\left[\omega'_{|\cS_1^{j-1} \cP_1^{j-1}} \right]} \nonumber\\
&= H^{f_{|\cS_1^{j-1} \cP_1^{j-1}}}_{\widehat{\alpha}}(S_j \CS_j | \CP_j E_j \widetilde{E})_{\EATchann_j\left[\omega\right]},
\end{align}
where $\omega$ is just $\omega'_{|\cS_1^{j-1} \cP_1^{j-1}}$ with $\CS_1^{j-1} \CP_1^{j-1}$ traced out. This is now only a function of $\EATchann_j$ evaluated on some input state $\omega_{R_{j-1} E_{j-1} \widetilde{E}}$, and the specified QES values.
Putting together the above, we conclude
\begin{align}\label{eq:final1rndbnd}
\inf_{\nu'\in\Sigma'_j} H_{\widehat{\alpha}}(D_j S_j \CS_j | \CP_1^j E_j \widetilde{E})_{\nu'}
\geq M + \min_{\cS_1^{j-1} \cP_1^{j-1}} \inf_{\nu\in\Sigma_j}  H^{f_{|\cS_1^{j-1} \cP_1^{j-1}}}_{\widehat{\alpha}}(S_j \CS_j | \CP_j E_j \widetilde{E})_{\nu},
\end{align}
giving us a lower bound on the right-hand-side of~\eqref{eq:GEATbound_notest} that only involves the original GEATT channels $\EATchann_j$.

Let us now turn to the left-hand-side of~\eqref{eq:GEATbound_notest}.
Again performing calculations analogous to~\eqref{eq:createD_proof}, we obtain
\begin{align}\label{eq:finalfullbnd}
&H_\alpha(D_1^n S_1^n \CS_1^n | \CP_1^n E_n )_\rho \nonumber\\
=& \frac{1}{1-\alpha} \log \left( \sum_{\cS_1^n \cP_1^n} \rho(\cS_1^n \cP_1^n)^\alpha \rho(\cP_1^n)^{1-\alpha} \, 2^{(1-\alpha) \left(H_\alpha\left(D_1^n\right)_{\rho_{| \cS_1^n \cP_1^n}} - D_\alpha\left(\rho_{S_j E_n | \cS_1^n \cP_1^n} \middle\Vert \id_{S_j} \otimes \rho_{E_n | \cP_1^n} \right)\right) } \right) \nonumber \\
=& \frac{1}{1-\alpha} \log \left( \sum_{\cS_1^n \cP_1^n} \rho(\cS_1^n \cP_1^n)^\alpha \rho(\cP_1^n)^{1-\alpha} \, 2^{(1-\alpha) \left(\sum_j H_\alpha(D_j)_{\rho_{|
\cS_1^j \cP_1^j
}} - D_\alpha\left(\rho_{S_j E_n | \cS_1^n \cP_1^n} \middle\Vert \id_{S_j} \otimes \rho_{E_n | \cP_1^n} \right)\right) } \right)\nonumber\\
\leq& \frac{1}{1-\alpha} \log \left( \sum_{\cS_1^n \cP_1^n} \rho(\cS_1^n \cP_1^n)^\alpha \rho(\cP_1^n)^{1-\alpha} \, 2^{(1-\alpha) \left(
\sum_j \left(M - f_{|\cS_1^{j-1} \cP_1^{j-1}}(\cS_j \cP_j) + 2^{-\frac{M}{2}}\log e\right)
- D_\alpha\left(\rho_{S_j E_n | \cS_1^n \cP_1^n} \middle\Vert \id_{S_j} \otimes \rho_{E_n | \cP_1^n} \right)\right) } \right)\nonumber\\
=& \frac{1}{1-\alpha} \log \left( \sum_{\cS_1^n \cP_1^n} \rho(\cS_1^n \cP_1^n)^\alpha \rho(\cP_1^n)^{1-\alpha} \, 2^{(1-\alpha) \left(nM - f_\mathrm{full}(\cS_1^n \cP_1^n) + n2^{-\frac{M}{2}}\log e - D_\alpha\left(\rho_{S_j E_n | \cS_1^n \cP_1^n} \middle\Vert \id_{S_j} \otimes \rho_{E_n | \cP_1^n} \right)\right) } \right)\nonumber\\
=& nM + H^{f_\mathrm{full}}_\alpha(S_1^n \CS_1^n | \CP_1^n E_n )_\rho + n 2^{-\frac{M}{2}}\log e .
\end{align}
In the above, the third line holds because each $D_j$ register is produced by independently applying a read-and-prepare channel on $\CS_1^j \CP_1^j$, and therefore we have $\rho_{D_1^n|\cS_1^n \cP_1^n}=\bigotimes_{j=1}^n \rho_{D_j|
\cS_1^j \cP_1^j
}$. 
The fourth line holds because the expression is again monotone increasing with respect to all the $H_{\widehat{\alpha}}\left(D_j\right)$ terms (for any $\alpha\in(0,1)\cup (1,\infty)$), so we can bound it using the upper bound in~\eqref{eq:Dj_entropy}.
The fifth line is simply substituting the definition of $f_{\mathrm{full}}(\cS_1^n \cP_1^n)$ in \eqref{fullQES}. 

Substituting~\eqref{eq:final1rndbnd} and~\eqref{eq:finalfullbnd} into~\eqref{eq:GEATbound_notest}, we see that the $nM$ terms can be cancelled off, yielding the conclusion
\begin{align}
H^{f_\mathrm{full}}_\alpha(S_1^n \CS_1^n | \CP_1^n E_n)_\rho \geq \sum_j \min_{\cS_1^{j-1} \cP_1^{j-1}}  \inf_{\nu\in\Sigma_j} H^{f_{|\cS_1^{j-1} \cP_1^{j-1}}}_{\widehat{\alpha}}(S_j \CS_j | \CP_j E_j \widetilde{E})_{\nu} - n2^{-\frac{M}{2}}\log e .
\end{align}
However, since this bound holds for {arbitrary} (sufficiently large) $M$, we can take the $M\to\infty$ limit so the $2^{-\frac{M}{2}}$ term vanishes, yielding the desired final bound~\eqref{eq:chainQES}.

To obtain~\eqref{eq:chainQESnorm}, note that by Lemma~\ref{lemma:normalize}, each QES $\hat{f}_{|\cS_1^{j-1} \cP_1^{j-1}}$ as defined in~\eqref{eq:fullQESnorm} satisfies
\begin{align}
\inf_{\nu\in\Sigma_j} H^{\hat{f}_{|\cS_1^{j-1} \cP_1^{j-1}}}_{\widehat{\alpha}}(S_j \CS_j | \CP_j E_j \widetilde{E})_{\nu} 
= \inf_{\nu\in\Sigma_j} H^{f_{|\cS_1^{j-1} \cP_1^{j-1}}}_{\widehat{\alpha}}(S_j \CS_j | \CP_j E_j \widetilde{E})_{\nu} - \kappa_{\cS_1^{j-1} \cP_1^{j-1}} = 0,
\end{align}
and thus by simply applying~\eqref{eq:chainQES} for these QES-s instead, we obtain~\eqref{eq:chainQESnorm}.

\subsection{Proof of Lemma~\ref{lemma:Legendre_conjugate}}

For brevity in this proof, let us use $\tilde{\mbf{f}},\widetilde{\bsym{\lambda}},\widetilde{\bsym{\rho}}_{\CS\CP}$ respectively to denote the restrictions of ${\mbf{f}},{\bsym{\lambda}},\bsym{\rho}_{\CS\CP}$ to $\supp(\bsym{\rho}_{\CS\CP})$. In particular, this means $\widetilde{\bsym{\rho}}_{\CS\CP}$ is a distribution on $\supp(\bsym{\rho}_{\CS\CP})$ with full support. Also, let $\widetilde{\mbf{D}}_\alpha$ denote the tuple of values
$\left\{D_\alpha\left(\rho_{QQ' \land \cS\cP} \middle\Vert \id_{Q} \otimes \rho_{Q' \land \cP} \right)\right\}_{\cS \cP \in \supp(\bsym{\rho}_{\CS\CP})}$
(note that every term in this tuple is finite, as argued below Definition~\ref{def:QES}).

We first note that by slightly rewriting the formula for $H^{f}_\alpha(Q\CS|\CP Q')_{\rho}$, we obtain
\begin{align}\label{eq:G_to_logsumexp}
G_{\alpha,\rho}({\mbf{f}}) &= \frac{1}{(\alpha-1)\ln2} \ln \left( \sum_{\cS \cP \in \supp(\bsym{\rho}_{\CS\CP})} \rho(\cS \cP) e^{(\alpha-1)(\ln2) \left({f}(\cS \cP) + D_\alpha\left(\rho_{QQ' \land \cS\cP} \middle\Vert \id_{Q} \otimes \rho_{Q' \land \cP} \right)\right) } \right) \nonumber\\
&= \frac{1}{(\alpha-1)\ln2} \operatorname{logsumexp} \left( 
\ln\left(\widetilde{\bsym{\rho}}_{\CS\CP}\right) + (\alpha-1)(\ln2) \left(\tilde{\mbf{f}} + \widetilde{\mbf{D}}_\alpha \right)
\right)
,
\end{align} 
where $\operatorname{logsumexp}$ denotes the base-$e$ log-sum-exponential function (see e.g.~\cite{BV04v8}), and $\ln\left(\widetilde{\bsym{\rho}}_{\CS\CP}\right)$ simply denotes the elementwise logarithm of the distribution $\widetilde{\bsym{\rho}}_{\CS\CP}$ (which is well-defined since this is a full-support distribution). Since the log-sum-exponential function is convex~\cite{BV04v8} and we have $\alpha>1$, the last expression is a convex function of $\tilde{\mbf{f}}$, 
which implies $G_{\alpha,\rho}({\mbf{f}})$ is a convex function of $\mbf{f}$ as claimed (noting that it is constant with respect to any $\mbf{f}$ components outside of $\tilde{\mbf{f}}$).

Noting that $G_{\alpha,\rho}$ is finite over all of $\mathbb{R}^{|\alphCS \times \alphCP|}$, the fact that it is convex implies that taking the convex conjugate twice returns the original function~\cite{BV04v8}. Hence the remainder of the proof is focused only on proving that its convex conjugate is indeed the function $G_{\alpha,\rho}^*$ defined in~\eqref{eq:GandGstar} --- once that has been shown, they will be convex conjugates of each other as claimed; also, convex conjugates are always convex, so $G_{\alpha,\rho}^*$ will be convex as claimed.

Now, by Definition~\ref{def:conjugate}, $G_{\alpha,\rho}^*$ is the convex conjugate of $G_{\alpha,\rho}({\mbf{f}})$ iff the following relation holds:
\begin{align}\label{eq:full_conjugate}
G_{\alpha,\rho}^*(\bsym{\lambda}) = \sup_{\mbf{f}\in\mathbb{R}^{|\alphCS\times\alphCP|}}
\left(\bsym{\lambda}\cdot{\mbf{f}}-G_{\alpha,\rho}({\mbf{f}})\right).
\end{align}
We first note that for any $\bsym{\lambda}$ such that $\supp(\bsym{\lambda}) \not\subseteq \supp(\bsym{\rho}_{\CS\CP})$, i.e.~there exists some $\cS\cP \notin \supp(\bsym{\rho}_{\CS\CP})$ such that $\lambda(\cS\cP) \neq 0$, then the above optimization takes the value $+\infty$. This is because as noted in~\eqref{eq:G_to_logsumexp}, the $G_{\alpha,\rho}({\mbf{f}})$ term would be independent of the corresponding $f(\cS\cP)$ value, and hence we could just take either $f(\cS\cP) \to -\infty$ or $f(\cS\cP) \to +\infty$ (depending on the sign of $\lambda(\cS\cP)$) to make the above optimization diverge to $+\infty$. Note that this value indeed matches the claimed formula for $G_{\alpha,\rho}^*$ in~\eqref{eq:GandGstar}, because for any $\bsym{\lambda}$ such that $\supp(\bsym{\lambda}) \not\subseteq \supp(\bsym{\rho}_{\CS\CP})$, either we have $\bsym{\lambda}\in\mathbb{P}_{\alphCS\alphCP}$ in which case the formula gives value $+\infty$ due to the $D\left(\bsym{\lambda} \middle\Vert \bsym{\rho}_{\CS\CP}\right)$ term, or we have $\bsym{\lambda}\notin\mathbb{P}_{\alphCS\alphCP}$ in which case the formula gives value $+\infty$ directly. 

Hence in the remainder of our analysis, we only need to consider $\bsym{\lambda}$ such that $\supp(\bsym{\lambda})\subseteq \supp(\bsym{\rho}_{\CS\CP})$, i.e.~$\bsym{\lambda}$ is only supported on $\supp(\bsym{\rho}_{\CS\CP})$, in which case we have 
${\bsym{\lambda}}\cdot{\mbf{f}} = \widetilde{\bsym{\lambda}}\cdot\tilde{\mbf{f}}$. 
Also, note that by~\eqref{eq:G_to_logsumexp}, we can write $G_{\alpha,\rho}({\mbf{f}}) = \widetilde{G}_{\alpha,\rho}(\tilde{\mbf{f}})$ where $\widetilde{G}_{\alpha,\rho}$ simply denotes the last line of~\eqref{eq:G_to_logsumexp}. 
Hence for such $\bsym{\lambda}$, we see the defining optimization for $G_{\alpha,\rho}^*$ reduces to
\begin{align}\label{eq:simplified_conjugate}
G_{\alpha,\rho}^*(\bsym{\lambda}) &= \sup_{\mbf{f}\in\mathbb{R}^{|\alphCS\times\alphCP|}}
\left(\widetilde{\bsym{\lambda}}\cdot\tilde{\mbf{f}}-\widetilde{G}_{\alpha,\rho}(\tilde{\mbf{f}})\right) \nonumber\\
&= \sup_{\mbf{g}\in\mathbb{R}^{\left|\supp(\bsym{\rho}_{\CS\CP})\right|}}\left(\widetilde{\bsym{\lambda}}\cdot\mbf{g}-\widetilde{G}_{\alpha,\rho}(\mbf{g})\right),
\end{align}
since the objective only depends on the restricted values $\tilde{\mbf{f}}$.
\newcommand{\tempg}{g} 

Since $\widetilde{G}_{\alpha,\rho}$ is convex as previously discussed, the above optimization is a concave maximization problem, and hence any stationary point yields a global maximum. 
By differentiating the objective function, we see that for any $\mbf{g}\in\mathbb{R}^{\left|\supp(\bsym{\rho}_{\CS\CP})\right|}$, it is a stationary point if and only if it satisfies
\begin{align}\label{eq:stationary}
\forall \cS\cP\in\supp(\bsym{\rho}_{\CS\CP}), \quad 
\lambda(\cS\cP) &= 
\frac{\rho(\cS \cP) e^{(\alpha-1)(\ln2) \left(\tempg(\cS \cP) + D_\alpha\left(\rho_{QQ' \land \cS\cP} \middle\Vert \id_{Q} \otimes \rho_{Q' \land \cP} \right)\right) }}{\sum_{\cS' \cP' \in \supp(\bsym{\rho}_{\CS\CP})} \rho(\cS' \cP') e^{(\alpha-1)(\ln2) \left(\tempg(\cS' \cP') + D_\alpha\left(\rho_{QQ' \land \cS'\cP'} \middle\Vert \id_{Q} \otimes \rho_{Q' \land \cP'} \right)\right) } }.
\end{align}
Note however that these equations do not necessarily always have a solution, so it would not be fully correct to simply substitute these equations into the objective function. To resolve this, we perform the following case analysis, under the implicit understanding that in all cases we are also taking $\supp(\bsym{\lambda})\subseteq \supp(\bsym{\rho}_{\CS\CP})$ as discussed above.

\begin{itemize}[leftmargin=*]
\item $\bsym{\lambda}\in\mathbb{P}_{\alphCS\alphCP}$ and has full support on $\supp(\bsym{\rho}_{\CS\CP})$ (i.e.~$\lambda(\cS\cP) > 0$ for all $\cS\cP\in\supp(\bsym{\rho}_{\CS\CP})$): For this case, $\log(\lambda(\cS\cP)/\rho(\cS\cP))$ is finite for all $\cS\cP\in\supp(\bsym{\rho}_{\CS\CP})$. Hence if we set 
\begin{align}\label{eq:fsoln} 
\forall \cS\cP\in\supp(\bsym{\rho}_{\CS\CP}), \quad 
\tempg(\cS \cP) =  \frac{1}{\alpha-1}\log\frac{\lambda(\cS\cP)}{\rho(\cS\cP)} - D_\alpha\left(\rho_{QQ' \land \cS \cP}\middle\Vert \id_Q\otimes\rho_{Q' \land \cP}\right),
\end{align}
then we see this is a valid solution to the above conditions~\eqref{eq:stationary}, noting that for this $\mbf{\tempg}$ the denominator in those conditions is simply
\begin{align}\label{eq:soln_norm}
\sum_{\cS' \cP' \in \supp(\bsym{\rho}_{\CS\CP})} \rho(\cS' \cP') e^{(\alpha-1)(\ln2) \left(\tempg(\cS' \cP') + D_\alpha\left(\rho_{QQ' \land \cS'\cP'} \middle\Vert \id_{Q} \otimes \rho_{Q' \land \cP'} \right)\right) } = \sum_{\cS' \cP' \in \supp(\bsym{\rho}_{\CS\CP})} \lambda(\cS' \cP') = 1,
\end{align}
where the last equality holds since $\bsym{\lambda}\in\mathbb{P}_{\alphCS\alphCP}$ and we also supposed $\supp(\bsym{\lambda})\subseteq \supp(\bsym{\rho}_{\CS\CP})$.
By substituting this stationary point into~(\ref{eq:simplified_conjugate}) we obtain
\begin{align}\label{eq:conjugate_function}
G^*_{\alpha,\rho}(\bsym{\lambda})&=\sum_{\cS\cP\in\supp(\bsym{\rho}_{\CS\CP})}\lambda(\cS\cP)\left(\frac{1}{\alpha-1}\log\frac{\lambda(\cS\cP)}{\rho(\cS\cP)} - D_\alpha\left(\rho_{QQ' \land \cS \cP}\middle\Vert \id_Q\otimes\rho_{Q' \land \cP}\right) \right)-\widetilde{G}_{\alpha,\rho}(\mbf{g})\nonumber\\
&=\frac{1}{\alpha-1}D\left(\bsym{\lambda} \middle\Vert \bsym{\rho}_{\CS\CP}\right)-\sum_{\cS\cP\in\supp(\bsym{\rho}_{\CS\CP})}\lambda(\cS\cP)D_\alpha\left(\rho_{QQ' \land \cS \cP}\middle\Vert \id_Q\otimes\rho_{Q' \land \cP}\right),
\end{align}
where in the second line we used the definition of KL divergence and the fact that for this $\mbf{\tempg}$ we have $\widetilde{G}_{\alpha,\rho}(\mbf{g}) = 0$ due to~\eqref{eq:soln_norm}.

\item $\bsym{\lambda} \in \mathbb{P}_{\alphCS\alphCP}$ and does not have full support on $\supp(\bsym{\rho}_{\CS\CP})$: The reasoning is basically the same as the preceding case, except that for every $\cS\cP\in\supp(\bsym{\rho}_{\CS\CP})$ such that $\lambda(\cS\cP)=0$, we note that the supremum with respect to that $\tempg(\cS\cP)$ term in~\eqref{eq:simplified_conjugate} is approached by taking the $\tempg(\cS\cP)\to-\infty$ limit (because $\widetilde{G}_{\alpha,\rho}$ is monotone increasing with respect to each $\tempg(\cS\cP)$ value). In that limit, their contribution to the sum in the definition of $\widetilde{G}_{\alpha,\rho}$ becomes trivial, since 
\begin{align}
\lim_{\tempg(\cS\cP)\to-\infty} e^{(\alpha-1)(\ln2) \left(\tempg(\cS \cP) + D_\alpha\left(\rho_{QQ' \land \cS\cP} \middle\Vert \id_{Q} \otimes \rho_{Q' \land \cP} \right)\right) } = 0.
\end{align}
Therefore we can basically set those values in the sum to zero,
and restrict our attention to the supremum over the remaining $\mbf{g}$ components, which then lets us apply essentially the same analysis as the preceding case.

\item $\bsym{\lambda} \notin \mathbb{P}_{\alphCS\alphCP}$ and there is at least one $\cS\cP\in\supp(\bsym{\rho}_{\CS\CP})$ with $\lambda(\cS\cP) < 0$: For such cases, note that as the corresponding $\tempg(\cS\cP)$ term decreases towards $-\infty$, the objective goes to $+\infty$, because the $\lambda(\cS\cP) \tempg(\cS\cP)$ term goes to $+\infty$ and the $-\widetilde{G}_{\alpha,\rho}(\tempg)$ term is monotone increasing.
Therefore $G_{\alpha,\rho}^*(\bsym{\lambda}) = +\infty$ for such $\bsym{\lambda}$, as claimed in~\eqref{eq:GandGstar}.

\item $\bsym{\lambda} \notin \mathbb{P}_{\alphCS\alphCP}$ and $\lambda(\cS\cP) \geq 0$ for all $\cS\cP\in\supp(\bsym{\rho}_{\CS\CP})$: With the implicit condition $\supp(\bsym{\lambda})\subseteq \supp(\bsym{\rho}_{\CS\CP})$, this is only possible by having $\sum_{\cS\cP\in\supp(\bsym{\rho}_{\CS\CP})} \lambda(\cS\cP) \neq 1$. Consider the choice of $\mbf{g}$ given by taking some $k\in\mathbb{R}$ and setting $\tempg(\cS\cP)=k$ for all $\cS\cP\in\supp(\bsym{\rho}_{\CS\CP})$. Then, the last line of~(\ref{eq:simplified_conjugate}) becomes:
\begin{align}
	G_{\alpha,\rho}^*(\bsym{\lambda})=\sup_{k\in \mathbb{R}}\left(k\left(\sum_{\cS\cP\in\supp(\bsym{\rho}_{\CS\CP})} \lambda(\cS\cP)-1\right)+\frac{1}{(\alpha-1)\ln2} \operatorname{logsumexp} \left(	\ln\left(\widetilde{\bsym{\rho}}_{\CS\CP}\right) + (\alpha-1)(\ln2) \widetilde{\mbf{D}}_\alpha
	\right)\right),
\end{align}
Since the $\operatorname{logsumexp}$ term is independent of $k$, we see that if we have $\sum_{\cS\cP\in\supp(\bsym{\rho}_{\CS\CP})} \lambda(\cS\cP) > 1$ (resp.~$\sum_{\cS\cP\in\supp(\bsym{\rho}_{\CS\CP})} \lambda(\cS\cP) < 1$), then as $k\rightarrow+\infty$ (resp.~$k\rightarrow -\infty$) the optimization increases towards $+\infty$, and thus $G_{\alpha,\rho}^*(\bsym{\lambda})=+\infty$.
\end{itemize}
These cases are exhaustive and hence yield the desired result.

\subsection{Proof of Lemma~\ref{lemma:GREATonlyH}}

Lemma~\ref{lemma:GREATonlyH} has almost the same proof as Theorem~\ref{th:GREAT}, except we introduce a single inequality $H_{\widehat{\alpha}}(D S \CS | \CP E \widetilde{E})_{\nu^\omega} \geq 
H_{\widehat{\alpha}}(D S | \CS \CP E \widetilde{E})_{\nu^\omega}$ into the proof, which simplifies the final results. 

We again begin with the second line of the bound~\eqref{eq:GREATfixedf}. 
Taking any read-and-prepare channel of the form described in Lemma~\ref{lemma:createD}, extending any $\nu^\omega$ in the infimum with this channel yields
\begin{align}\label{eq:movingCS}
H^f_{\widehat{\alpha}}(S \CS | \CP E \widetilde{E})_{\nu^\omega} 
&= H_{\widehat{\alpha}}(D S \CS | \CP E \widetilde{E})_{\nu^\omega} - M \nonumber\\
&\geq 
H_{\widehat{\alpha}}(D S | \CS \CP E \widetilde{E})_{\nu^\omega} - M \nonumber\\
&=\frac{1}{1-\widehat{\alpha}} \log \left( \sum_{\cS\cP\in\supp\left(\bsym{\nu}^\omega_{\CS\CP}\right)} \nu^\omega(\cS\cP) 2^{(1-\widehat{\alpha}) H_{\widehat{\alpha}}(D S | E \widetilde{E})_{\nu^\omega_{|\cS\cP}} } \right) - M \nonumber\\
&=\frac{1}{1-\widehat{\alpha}} \log \left( \sum_{\cS\cP\in\supp\left(\bsym{\nu}^\omega_{\CS\CP}\right)} \nu^\omega(\cS\cP) 2^{(1-\widehat{\alpha}) \left(H_{\widehat{\alpha}}(D)_{\nu^\omega_{|\cS\cP}} + H_{\widehat{\alpha}}(S | E \widetilde{E})_{\nu^\omega_{|\cS\cP}}\right) } \right)-M\nonumber\\
&=\frac{1}{1-\widehat{\alpha}} \log \left( \sum_{\cS\cP\in\supp\left(\bsym{\nu}^\omega_{\CS\CP}\right)} \nu^\omega(\cS\cP) 2^{(1-\widehat{\alpha}) \left(-f(\cS\cP) + H_{\widehat{\alpha}}(S | E \widetilde{E})_{\nu^\omega_{|\cS\cP}}\right) } \right), 
\end{align}
where the first line is due to Eq.~(\ref{eq:createD}), the second line is an application of \cite[Proposition~5.5]{Tom16} by noting $\nu^\omega_{\CS\CP E \widetilde{E}}$ is separable 
across the registers ${\CS}$ and ${\CP E \widetilde{E}}$, the third line follows from Fact~\ref{fact:classmix}, the fourth line holds since $\nu^\omega_{DSE\widetilde{E}_{|\cS\cP}}=\nu^\omega_{D_{|\cS\cP}}\otimes\nu^\omega_{SE\widetilde{E}_{|\cS\cP}}$, and the last follows from Eq.~(\ref{eq:D_entropy}). Furthermore, note that $H_{\widehat{\alpha}}(D S | \CS \CP E \widetilde{E})_{\nu^\omega} - M$ is convex in $\omega$ (by Remark~\ref{remark:convexity}) and all lines after that point are equalities; therefore, the last line is also convex in $\omega$. 

With this, we can apply exactly the same analysis as in the rest of the Theorem~\ref{th:GREAT} proof, replacing $H^f_{\widehat{\alpha}}(S \CS | \CP E \widetilde{E})_{\nu^\omega}$ with the lower bound given in the last line above, except that we would modify the definition of $G_{\widehat{\alpha},\nu^\omega}$ to 
\begin{align}
G_{\widehat{\alpha},\nu^\omega}({\mbf{f}}) \defvar - \frac{1}{1-\widehat{\alpha}} \log \left( \sum_{\cS\cP\in\supp\left(\bsym{\nu}^\omega_{\CS\CP}\right)} \nu^\omega(\cS\cP) 2^{(1-\widehat{\alpha}) \left(-f(\cS\cP) + H_{\widehat{\alpha}}(S | E \widetilde{E})_{\nu^\omega_{|\cS\cP}}\right) } \right),
\end{align}
and carry out the subsequent calculations with $-H_{\widehat{\alpha}}(S | E \widetilde{E})_{\nu^\omega_{|\cS\cP}}$ in place of $D_{\widehat{\alpha}}\left(\nu^\omega_{SE\widetilde{E} \land \cS\cP} \middle\Vert \id_S\otimes\nu^\omega_{E\widetilde{E} \land \cP} \right)$. 

To show that equality holds for the case where $\CS$ is trivial, we could note that the only point in the above calculations that differs from Sec.~\ref{subsec:GREATproof} is the inequality in the second line of~\eqref{eq:movingCS}, which is immediately redundant if $\CS$ is trivial. Alternatively, we could directly prove equality by simply noting that without the $\CS$ registers, we can replace all the divergence terms in the sum with conditional entropies by writing
\begin{align}
- D_{\widehat{\alpha}}\left(\nu_{SE\widetilde{E} \land \cP} \middle\Vert \id_S\otimes\nu_{E\widetilde{E} \land \cP}\right) = -D_{\widehat{\alpha}}\left(\nu_{SE\widetilde{E}|\cP} \middle\Vert \id_S\otimes\nu_{E\widetilde{E}|\cP}\right) = H_{\widehat{\alpha}}(S|E\widetilde{E})_{\nu_{|\cP}},
\end{align}
where the first equality holds because the {\Renyi} divergence remains invariant if both arguments are multiplied by a common strictly positive factor (see Definition~\ref{def:sandwiched divergence}), and recalling that all terms in the sum have $\nu(\cP)>0$.

\subsection{Proof of Lemma~\ref{lemma:GREAT3Renyi}}

Lemma~\ref{lemma:GREATonlyH} is again proven via similar ideas, with the bound instead relaxed using the chain rule of~\cite{Dup15}.

Again, begin with the second line of the bound~\eqref{eq:GREATfixedf}. Now instead consider a read-and-prepare channel $\CS\CP \to \CS\CP D$ of the form described in Lemma~\ref{lemma:createD_2}, so that for any $\alpha\in[0,\infty]$ the bound~\eqref{eq:D_entropy_2} holds, i.e.~all {\Renyi} entropy values of the $D$ register (conditioned on $\CS\CP=\cS\cP$) lie in the interval $\left[M - f(\cS\cP)  , M - f(\cS\cP) + 2^{-{M}/{2}}\log e \right]$.
If we extend any $\nu^\omega$ in the infimum with this channel, we have
\begin{align}\label{eq:QES_Renyiversion}
H^f_{\widehat{\alpha}}(S \CS | \CP E \widetilde{E})_{\nu^\omega}
&\geq H_{\widehat{\alpha}}(D S \CS | \CP E \widetilde{E})_{\nu^\omega} - M -2^{-\frac{M}{2}}\log e \nonumber\\
&\geq H_{\alpha'}(S \CS | \CP E \widetilde{E})_{\nu^\omega} + H_{\alpha''}^\uparrow(D| S \CS \CP E \widetilde{E})_{\nu^\omega} - M -2^{-\frac{M}{2}}\log e ,
\end{align}
where the first line holds by~\eqref{eq:createD_2}, and the second line is a special case of Proposition~7 in \cite{Dup15}. Before proceeding further we note that the second term in the last line of Eq.~(\ref{eq:QES_Renyiversion}) can be bounded as:
\begin{align}\label{eq:Dentropy_Renyiversion}
	H_{\alpha''}^\uparrow(D| S \CS \CP E \widetilde{E})_{\nu^\omega}&=\frac{\alpha''}{1-\alpha''}\log\left(\sum_{\cS\cP\in\supp\left(\bsym{\nu}^\omega_{\CS\CP}\right)}\nu^\omega(\cS\cP)2^{\left(\frac{1-\alpha''}{\alpha''}\right)H^\uparrow_{\alpha''}(D| S E \widetilde{E})_{\nu^\omega_{|\cS\cP}}}\right)\nonumber\\
	&=\frac{\alpha''}{1-\alpha''}\log\left(\sum_{\cS\cP\in\supp\left(\bsym{\nu}^\omega_{\CS\CP}\right)}\nu^\omega(\cS\cP)2^{\left(\frac{1-\alpha''}{\alpha''}\right)H_{\alpha''}(D)_{\nu^\omega_{|\cS\cP}}}\right)\nonumber\\
	&\geq M + \frac{\alpha''}{1-\alpha''}\log\left(\sum_{\cS\cP\in\supp\left(\bsym{\nu}^\omega_{\CS\CP}\right)}\nu^\omega(\cS\cP)2^{-\left(\frac{1-\alpha''}{\alpha''}\right)f(\cS\cP)}\right),
\end{align}
where the first line is due to Fact~\ref{fact:classmix}, the second line holds since  $\nu^\omega_{DSE\widetilde{E}_{|\cS\cP}}=\nu^\omega_{D_{|\cS\cP}}\otimes\nu^\omega_{SE\widetilde{E}_{|\cS\cP}}$ and $H^\uparrow_{\alpha''}(D)_{\nu^\omega_{|\cS\cP}}=H_{\alpha''}(D)_{\nu^\omega_{|\cS\cP}}$ when there is no conditioning register, and the third line follows from the lower bound in~\eqref{eq:D_entropy_2}. Combining Eq.~(\ref{eq:Dentropy_Renyiversion}) with Eq.~(\ref{eq:QES_Renyiversion}) we have:
\begin{align}\label{eq:QES_boundRenyi}
	H^f_{\widehat{\alpha}}(S \CS | \CP E \widetilde{E})_{\nu^\omega}
	\geq  H_{\alpha'}(S \CS | \CP E \widetilde{E})_{\nu^\omega} + \frac{\alpha''}{1-\alpha''}\log\left(\sum_{\cS\cP\in\supp\left(\bsym{\nu}^\omega_{\CS\CP}\right)}\nu^\omega(\cS\cP)2^{-\left(\frac{1-\alpha''}{\alpha''}\right)f(\cS\cP)}\right) - 2^{-\frac{M}{2}}\log e.
\end{align}
Since the above bound holds for {arbitrary} (sufficiently large) $M$, we can take the $M\to\infty$ limit so the $2^{-\frac{M}{2}}$ term vanishes. Taking this into account and using the bound in Eq.~(\ref{eq:QES_boundRenyi}), our desired result follows by applying exactly the same analysis as in the rest of the Theorem~\ref{th:GREAT} proof, except that we replace $G_{\widehat{\alpha},\rho}({\mbf{f}})$ with 
\begin{align}\label{eq:G3Renyi}
	G_{\alpha',\alpha'',\nu^\omega}(\mbf{f}) \defvar  - H_{\alpha'}(S \CS | \CP E \widetilde{E})_{\nu^\omega} - \frac{\alpha''}{1-\alpha''}\log\left(\sum_{\cS\cP\in\supp\left(\bsym{\nu}^\omega_{\CS\CP}\right)}\nu^\omega(\cS\cP)2^{-\left(\frac{1-\alpha''}{\alpha''}\right)f(\cS\cP)}\right).
\end{align}
Note that the above expression is concave with respect to $\omega$, as we would require for those remaining proof steps. This follows by observing that $H_{\alpha'}(S \CS | \CP E \widetilde{E})_{\nu^\omega}$ is convex in $\omega$ by Remark~\ref{remark:convexity}, and we have $\alpha''>1$ so $ - \frac{\alpha''}{1-\alpha''}\log(\cdot)$ is a concave function, while the values $\nu^\omega(\cS\cP)$ are affine functions of $\omega$.\footnote{Pedantically, to ensure there are no issues involving the dependence of the summation domain in~\eqref{eq:G3Renyi} on $\omega$, we should first note that we can extend the summation domain to the full alphabet $\alphCS\times\alphCP$ without changing the value of $G_{\alpha',\alpha'',\nu^\omega}(\mbf{f})$ (in this case, there are no divergence terms in the sum and so we do not encounter any technical issues).}
(To make the correspondence with the Theorem~\ref{th:GREAT} proof even more explicit, we can rewrite the above expression as 
\begin{align}
	G_{\alpha',\alpha'',\nu^\omega}(\mbf{f}) = -\frac{1}{1-\widetilde{\alpha}''}\log\left(\sum_{\cS\cP\in\supp\left(\bsym{\nu}^\omega_{\CS\CP}\right)}\nu^\omega(\cS\cP)2^{\left(1-\widetilde{\alpha}''\right)\left(-f(\cS\cP) + H_{\alpha'}(S \CS | \CP E \widetilde{E})_{\nu^\omega}\right)}\right),
\end{align}
where $\widetilde{\alpha}''\coloneqq\frac{2\alpha''-1}{\alpha''}$, so we are carrying out the remaining calculations with $\widetilde{\alpha}''$ in place of $\widehat{\alpha}$ and a single value $H_{\alpha'}(S \CS | \CP E \widetilde{E})_{\nu^\omega}$ in place of  all the $-D_{\widehat{\alpha}}\left(\nu^\omega_{SE\widetilde{E} \land \cS\cP} \middle\Vert \id_S\otimes\nu^\omega_{E\widetilde{E} \land \cP} \right)$ terms.)

\section{Alternative proofs of strong duality}
\label{app:duality}

\newcommand{\Csupp}{\mathcal{C}_{\EATchann}}

Here we provide an alternative proof of Lemma~\ref{lemma:duality} that provides options for removing the finite-dimensionality assumption on $\inQ$ (at least, for the purposes of proving this lemma in isolation --- other steps such as the fundamental GEAT bound~\eqref{eq:GEATnotest} still currently require that assumption, due to the use of a de Finetti argument in its proof), possibly by imposing extra conditions on $S_\Omega$. 

We begin by removing some ``extraneous'' degrees of freedom from the optimizations. Specifically, given the channel $\EATchann$, let us introduce a notion of its {``supporting alphabet''}\footnote{An alternative approach for the purposes of our analysis would have been to simply restrict the alphabet of the registers $\CS\CP$ to $\Csupp$ rather than $\alphCS\times\alphCP$, since the values outside $\Csupp$ will never occur and thus have no ``physical relevance''. However, this would the slightly unpleasant side-effect that the resulting alphabet is not guaranteed to have a Cartesian-product form. 
} $\Csupp \subseteq \alphCS \times \alphCP$, which we define as the set of values $\cS \cP$ that can be produced with nonzero probability by some input state to $\EATchann$:
\begin{align}\label{eq:Csupp}
\Csupp \defvar \left\{ (\cS,\cP) \in \alphCS \times \alphCP \;\middle|\; \nu^\omega(\cS \cP) > 0 \text{ for some } \omega\in\dop{=}(\inQ) \right\}.
\end{align}
We will write $\mathbb{P}_{\Csupp}$ to denote distributions on $\Csupp$.
Basically, our first goal is to ``remove some dependencies'' on terms corresponding to $\cS\cP\notin\Csupp$.

Observe that for any $\bsym{\lambda}$ such that $\supp(\bsym{\lambda}) \not\subseteq \Csupp$, by definition of $\Csupp$ 
we have $\supp(\bsym{\lambda}) \not\subseteq \supp(\bsym{\nu}_{\CS\CP}^\omega)$ for \emph{every} $\omega$.
This means we can exclude any such $\bsym{\lambda}$ from the infimum in the last line of~\eqref{eq:replaceG} without changing its value, because for such $\bsym{\lambda}$ we have $G^*_{\widehat{\alpha},\nu^\omega}(\bsym{\lambda}) = +\infty$ for all $\omega$ (due to the $D\left(\bsym{\lambda} \middle\Vert \bsym{\nu}^\omega_{\CS\CP}\right)$ term). For convenience, let us introduce the notation  $\bsym{\mu}_{\oplus \mbf{0}}$ for any distribution $\bsym{\mu} \in \mathbb{P}_{\Csupp}$ to mean the corresponding distribution in $\mathbb{P}_{\CS\CP}$ obtained by padding $\bsym{\mu}$ with zero entries. With this, we can carry on from the last line of~\eqref{eq:replaceG} to obtain:
\begin{align}\label{eq:rfromW}
r_{\widehat{\alpha}}(\mbf{f}) =& \inf_{\mbf{q} \in S_\Omega} \inf_{\bsym{\mu} \in \mathbb{P}_{\Csupp}} \inf_{\omega \in \dop{=}(\inQ)}  \left(G^*_{\widehat{\alpha},\nu^\omega}(\bsym{\mu}_{\oplus \mbf{0}}) + \left(\mbf{q}-\bsym{\mu}_{\oplus \mbf{0}}\right)\cdot{\mbf{f}} \right) \nonumber\\
=& \inf_{\mbf{q} \in S_\Omega} \inf_{\bsym{\mu} \in \mathbb{P}_{\Csupp}} \left(J_{\widehat{\alpha}}(\bsym{\mu}) + \left(\mbf{q}-\bsym{\mu}_{\oplus \mbf{0}}\right)\cdot{\mbf{f}} \right) ,
\end{align}
where we introduce a new function $J_{\widehat{\alpha}}$ that computes the infimum over the only term involving $\omega$:
\begin{align}\label{eq:Jdefn}
J_{\widehat{\alpha}}(\bsym{\mu}) \defvar \inf_{\omega \in \dop{=}(\inQ)}  G^*_{\widehat{\alpha},\nu^\omega}(\bsym{\mu}_{\oplus \mbf{0}}).
\end{align}
With this, from the last line of~\eqref{eq:rfromW} we see that $\sup_{\mbf{f}} r_{\widehat{\alpha}}(\mbf{f})$ is the dual problem of
\begin{align}\label{eq:constrainedoptJ}
\begin{gathered} 
\inf_{\mbf{q} \in S_\Omega} \inf_{\bsym{\mu} \in \mathbb{P}_{\Csupp}} J_{\widehat{\alpha}}(\bsym{\mu})  \\
\suchthat \quad \mbf{q}-\bsym{\mu}_{\oplus \mbf{0}}=\mbf{0}
\end{gathered},
\end{align}
and to obtain our desired result we shall show strong duality holds for this problem. 

To do so, we first note that the objective $J_{\widehat{\alpha}}(\bsym{\mu})$ has the following critical properties with respect to $\bsym{\mu}$:
\begin{itemize}
\item It is convex over $\bsym{\mu} \in \mathbb{P}_{\Csupp}$, because $G^*_{\widehat{\alpha},\nu^\omega}(\bsym{\mu}_{\oplus \mbf{0}})$ is jointly convex in $(\bsym{\mu},\omega)$, and for a {jointly} convex function of two variables, taking the infimum with respect to one variable over a convex set preserves convexity in the other variable~\cite[Chapter~3.2.5]{BV04v8} (as long as the infimum is not $-\infty$ everywhere, which is easily seen to be true for $G^*_{\widehat{\alpha},\nu^\omega}$).
\item It is finite over $\bsym{\mu} \in \mathbb{P}_{\Csupp}$, because we can construct a feasible point $\omega$ in~\eqref{eq:Jdefn} with finite objective value: by definition of $\Csupp$, for each $\cS\cP\in\Csupp$ we can find some $\omega$ such that $\bsym{\nu}_{\CS\CP}^\omega(\cS\cP) > 0$; by taking a mixture of such cases, we can obtain an $\omega$ such that $\bsym{\nu}_{\CS\CP}^\omega$ has full support on $\Csupp$ (by linearity of $\EATchann$), which means $D\left(\bsym{\mu}_{\oplus \mbf{0}} \middle\Vert \bsym{\nu}_{\CS\CP}^\omega\right)$ and hence also $G^*_{\widehat{\alpha},\nu^\omega}(\bsym{\mu}_{\oplus \mbf{0}})$ is finite for this $\omega$.\footnote{Alternative option: since the mapping $\omega \to {\nu}^\omega_{\CS\CP}$ is a quantum-to-classical channel, one can show it must be of the form $\omega \to \sum_{\cS\cP} \tr{\Gamma_{\cS\cP} \omega} \pure{\cS\cP}$ for some POVM elements $\Gamma_{\cS\cP}$; furthermore, we must have $\Gamma_{\cS\cP} \neq 0$ for all $\cS\cP \notin \Csupp$. With this we can show that any full-support $\omega_{\inQ}$ yields a distribution $\bsym{\nu}^\omega_{\CS\CP}$ with full support on $\Csupp$ (at least, assuming countable dimension; the argument should generalize further in some fashion but we do not consider this further here), giving the desired result.} (The $D_{\widehat{\alpha}}\left(\nu^\omega_{SE\widetilde{E} \land \cS\cP} \middle\Vert \id_S\otimes\nu^\omega_{E\widetilde{E} \land \cP} \right)$ terms in the sum are all finite as previously discussed, so they pose no issues.)
\item It is continuous over $\bsym{\mu} \in \mathbb{P}_{\Csupp}$ as long as we chose $\pf$ to be a continuous function, because in that case $G^*_{\widehat{\alpha},\nu^\omega}(\bsym{\mu}_{\oplus \mbf{0}})$ is continuous in $(\bsym{\mu}, \omega)$ (as discussed in the main text), and we are taking its infimum over a compact set $\dop{=}(\inQ)$. (Note that while $G^*_{\widehat{\alpha},\nu^\omega}(\bsym{\mu}_{\oplus \mbf{0}})$ was only continuous in an extended-real sense, $J_{\widehat{\alpha}}(\bsym{\mu})$ is finite everywhere by the preceding point, and hence continuous in the standard sense.)
\end{itemize}

Since $J_{\widehat{\alpha}}(\bsym{\mu})$ is entirely independent of $\mbf{q}$, the above properties trivially give (joint) convexity, finiteness and continuity over $(\mbf{q},\bsym{\mu}) \in \mathbb{P}_{\CS\CP} \times \mathbb{P}_{\Csupp}$ as well.
With this we can again suppose $S_\Omega$ is closed without loss of generality, since the continuity property over $\mathbb{P}_{\CS\CP} \times \mathbb{P}_{\Csupp}$ allows us to switch between $S_\Omega$ and its closure.
With these properties, we have strong duality by the Clark-Duffin condition (or Sion's minimax theorem), after which we obtain the desired final result by ``reversing'' all the above transformations:
\begin{align}
\sup_{\mbf{f}} r_{\widehat{\alpha}}(\mbf{f}) 
&= \begin{gathered} 
\inf_{\mbf{q} \in S_\Omega} \inf_{\bsym{\mu} \in \mathbb{P}_{\Csupp}} J_{\widehat{\alpha}}(\bsym{\mu})  \\
\suchthat \quad \mbf{q}-\bsym{\mu}_{\oplus \mbf{0}}=\mbf{0}
\end{gathered} \nonumber\\
&= \begin{gathered} 
\inf_{\mbf{q} \in S_\Omega} \inf_{\bsym{\mu} \in \mathbb{P}_{\Csupp}} \inf_{\omega \in \dop{=}(\inQ)}  G^*_{\widehat{\alpha},\nu^\omega}(\bsym{\mu}_{\oplus \mbf{0}}) \\
\suchthat \quad \mbf{q}-\bsym{\mu}_{\oplus \mbf{0}}=\mbf{0}
\end{gathered} \nonumber\\
&=\begin{gathered} 
\inf_{\mbf{q} \in S_\Omega} \inf_{\bsym{\lambda} \in \mathbb{P}_{\CS\CP}} 
\inf_{\omega \in \dop{=}(\inQ)} G^*_{\widehat{\alpha},\nu^\omega}(\bsym{\lambda}) \\
\suchthat \quad \mbf{q}-\bsym{\lambda} = \mbf{0},
\end{gathered}
\end{align}
where the last line holds because as discussed previously, any $\bsym{\lambda}$ such that $\supp(\bsym{\lambda}) \not\subseteq \Csupp$ yields $G^*_{\widehat{\alpha},\nu^\omega}(\bsym{\lambda}) = +\infty$ for {every} $\omega$. 

\begin{remark}\label{remark:duality}
In this proof of Lemma~\ref{lemma:duality}, the \emph{only} point we invoked the finite-dimensionality condition was in showing that $J_{\widehat{\alpha}}$ is continuous (where we used compactness of $\dop{=}(\inQ)$) --- all other properties, such as convexity and finiteness, were true even for infinite-dimensional $\inQ$. Hence this approach also provides an alternative method to obtain the desired result without requiring finite-dimensionality of $\inQ$, if we instead impose additional conditions on $S_\Omega$ in order to use Slater's condition, which requires no continuity properties. Specifically, suppose we additionally require that the relative interior of $S_\Omega$ contains a distribution $\mbf{q}^\star$ such that $q^\star(\cS\cP) > 0$ if and only if $\cS\cP\in\Csupp$ (i.e.~$\supp(\mbf{q}^\star) = \Csupp$). In that case, if we let $\bsym{\mu}^\star \in \mathbb{P}_{\Csupp}$ be the distribution $\mbf{q}^\star$ with the terms outside $\Csupp$ removed, we see that $(\mbf{q}^\star, \bsym{\mu}^\star)$ is a relative interior point of the domain in the constrained optimization~\eqref{eq:constrainedoptJ} (in which the objective $J_{\widehat{\alpha}}(\bsym{\mu})$ is finite\footnote{Here we did not simply try to apply Slater's condition to the original constrained optimization~\eqref{eq:constrainedopt}, because it would require the existence of some $\omega$ in the relative interior of the \term{effective domain} of $G^*_{\widehat{\alpha},\nu^\omega}(\bsym{\lambda})$ (i.e.~the domain on which it takes finite values), which is more subtle in the infinite-dimensional case. Though for the finite-dimensional case, in our above proof of the finiteness of $J_{\widehat{\alpha}}$, we did argue that $G^*_{\widehat{\alpha},\nu^\omega}(\bsym{\lambda})$ is finite over all full-support $\omega$, and hence any full-support $\omega$ yields such an interior point.} everywhere), and it satisfies the equality constraint. This suffices to invoke Slater's condition to claim that strong duality holds for that optimization, yielding the desired result without any finite-dimensionality conditions.\footnote{However, this proof method also still does not immediately certify dual attainment, despite using Slater's condition --- this is because when transforming the optimizations we only focused on preserving the optimal value, rather than other properties such as dual attainment. We leave a more extensive analysis of such aspects for future work, if it should become important.}

We also highlight that the interior-point requirement stated above should hold in any practical protocol, because $S_\Omega$ would need to contain a ``tolerance interval'' for each outcome value $\cS\cP\in\Csupp$. A rigorous formulation of this claim is as follows. Given that $S_\Omega$ is convex, the interior-point requirement holds whenever the following conditions are fulfilled: 
\begin{enumerate}
\item For every distribution $\mbf{q} \in S_\Omega$, we have $\supp(\mbf{q}) \subseteq \Csupp$.
\item There exists some ``reference'' distribution $\mbf{q}^\mathrm{ref}$ with $\supp(\mbf{q}^\mathrm{ref}) \subseteq \Csupp$, and some $\delta>0$, such that $S_\Omega$ contains all distributions $\mbf{q}$ with $|q(\cS\cP)-q^\mathrm{ref}(\cS\cP)| \leq \delta$ for all $\cS\cP\in\Csupp$.\footnote{For the $\cS\cP\notin\Csupp$ terms, there is no need to include any ``$\delta$-tolerances'' here, because zero-probability events (on a finite alphabet) would literally never occur. There is a technical issue that the supporting alphabets induced by $\EATchann_j$ versus $\EATchann$ might be different in principle, but we do not expect this situation to arise in practice.}
\end{enumerate}
To prove this suffices, simply construct the desired relative interior point by taking $\mbf{q}^\mathrm{ref}$ and perturbing it if necessary to make it full-support on $\Csupp$; it is straightforward to show such a perturbation always exists while keeping it in the relative interior of $S_\Omega$ (under the first condition listed above, which ensures the $\cS\cP\notin\Csupp$ terms do not ``contribute'' to the affine hull of $S_\Omega$). 
Also, any practical protocol should satisfy these conditions, because it would need to accept with nontrivial probability on some honest behaviour --- assuming for simplicity that this honest behaviour is IID (though various non-IID scenarios can also be considered), this necessarily implies $S_\Omega$ must contain all frequency 
distributions of the form just described, with $\mbf{q}^\mathrm{ref}$ being the honest single-round distribution and $\delta$ a small but strictly positive value.

An alternative prospect would be to try proving that $J_{\widehat{\alpha}}$ is continuous even for infinite-dimensional $\inQ$.
We believe this seems to be a plausible property, though proving this claim seems slightly subtle because in that case $\dop{=}(\inQ)$ is not closed, 
and taking the infimum over such a set does not generically preserve continuity. A more detailed analysis of the continuity properties of $J_{\widehat{\alpha}}$ may be able to resolve this point. In the process it might be able to remove the use of continuity of $\pf$ (which is also slightly subtle in the infinite-dimensional case because the formula~\eqref{eq:pfexample} does not directly generalize), because in fact we really only need continuity properties of the ``subsequent'' quantities $D\left(\bsym{\mu}_{\oplus \mbf{0}} \middle\Vert \bsym{\nu}_{\CS\CP}^\omega\right)$ and $D_{\widehat{\alpha}}\left(\nu_{SE\widetilde{E} \land \cS\cP} \middle\Vert \id_S\otimes\nu_{E\widetilde{E} \land \cP} \right)$.
\end{remark}

\section{Replacing distinct channels with a single channel}
\label{app:directsum}

We show here that the convex range of any (finite) set of channels can indeed be equivalently described using the range of a single channel (as in, the set of its possible output states), via a simple direct-sum construction. Note that this construction is ``generic'' and does not use e.g.~the NS conditions of GEATT channels. 

\begin{lemma}\label{lemma:convrange}
Let $\{\EATchann_j\}_{j=1}^n$ be any set of channels $\EATchann_j:\inQ_j \to \outQ'_j \outQ''_j \dots $ (for some finite number of output registers $\outQ'_j, \outQ''_j, \dots $), such that the registers $\outQ'_j$ (resp.~$\outQ''_j, \dots$) can all be embedded in a common register $\outQ'$ (resp.~$\outQ'', \dots$). 
Define $\inQ$ to be a register with Hilbert space 
$\mathcal{H}_{\inQ} = \bigoplus_{j=1}^{n} \mathcal{H}_{\inQ_j}$, and write $\outQ_j \defvar \outQ'_j \outQ_j'' \dots$ and $\outQ \defvar \outQ' \outQ'' \dots$ for brevity.

Define a channel $\EATchann: \inQ \to \outQ$ as follows: it first performs a projective measurement where outcome $j$ corresponds to the projector onto $\mathcal{H}_{\inQ_j}$, then conditioned on the outcome value $j$, implements the channel $\EATchann_j$ and embeds the output in $\outQ$. Then the convex range $\Sigma$ of $\{\EATchann_j\}_{j=1}^n$ is equal to the range of $\EATchann$, i.e.~we have
$\Sigma = \left\{ \EATchann \left[\omega_{\inQ}\right] \;\middle|\; \omega \in \dop{=}(\inQ) \right\}$.
Furthermore, for any register $\widetilde{E}$, the convex range of $\{\EATchann_j \otimes \idmap_{\widetilde{E}} \}_{j=1}^n$ is equal to the range of $\EATchann \otimes \idmap_{\widetilde{E}}$. 
\end{lemma}
The last statement involving $\idmap_{\widetilde{E}}$ is for applications in statements such as Theorem~\ref{th:GREAT}, where we would like to preserve the structure of the channels acting as the identity on $\widetilde{E}$.
\begin{proof}
It is easy to see that $\Sigma \subseteq \left\{ \EATchann \left[\omega_{\inQ}\right] \;\middle|\; \omega \in \dop{=}(\inQ) \right\}$, since by definition every state in $\Sigma$ has the form $\sum_j p_j \EATchann_j \left[\omega^{(j)}_{\inQ_j}\right]$ for some probability distribution $\mbf{p}$ and states $\omega^{(j)}_{\inQ_j}$, and can hence be generated by $\EATchann$ from the input state $\bigoplus_j p_j \omega^{(j)}_{\inQ_j}$. To see the reverse containment, note that the initial projective measurement in $\EATchann$ collapses any input state into the form $\bigoplus_j p_j \omega^{(j)}_{\inQ_j}$, and thus the final state produced by $\EATchann$ has the form $\sum_j p_j \EATchann_j \left[\omega^{(j)}_{\inQ_j}\right]$, which again lies in $\Sigma$ by definition. 

A similar argument yields the analogous statement for $\{\EATchann_j \otimes \idmap_{\widetilde{E}} \}_{j=1}^n$ and $\EATchann \otimes \idmap_{\widetilde{E}}$. Here however, when proving the reverse containment we use the fact that although the initial projective measurement does not act on $\widetilde{E}$, it still collapses any input state into the form $\sum_j p_j \omega^{(j)}_{\inQ \widetilde{E}}$ where $p_j \omega^{(j)}_{\inQ \widetilde{E}} = (P_j \otimes \id)\omega_{\inQ \widetilde{E}} (P_j \otimes \id)$, with $P_j$ being the projector onto $\mathcal{H}_{\inQ_j}$. This ensures each $\omega^{(j)}_{\inQ \widetilde{E}}$ is a state such that $\omega^{(j)}_{\inQ}$ is supported on $\mathcal{H}_{\inQ_j}$, which suffices for the proof to carry through.
\end{proof}

Previous versions of this work were instead based on the concept of a ``rate-bounding channel'', defined as follows.
In this definition, as in Lemma~\ref{lemma:convexity}, we denote the input space for the rate-bounding channel as a single register $\inQ$ that does not necessarily have to match the GEATT channel inputs, and we have not included a ``memory register'' $R$ in its output of the rate-bounding channel --- again, this is because these properties are not needed for our proofs.
\begin{definition}\label{def:ratebndchann} (Rate-bounding channel)
Let $\{\EATchann_j\}_{j=1}^n$ be a sequence of GEATT channels where all $\CS_j$ (resp.~$\CP_j$) are isomorphic to a single register $\CS$ (resp.~$\CP$) with alphabet $\alphCS$ (resp.~$\alphCP$). A \term{rate-bounding channel} for $\{\EATchann_j\}_{j=1}^n$ is a channel $\EATchann: \inQ \to S E \CS \CP$ such that for any QES $f$ on $\CS\CP$ and any $\alpha\in(1,\infty)$, 
we have
\begin{align}\label{eq:ratebndchann}
\min_j \inf_{\nu\in\Sigma_j} H^f_\alpha(S_j \CS_j | \CP_j E_j \widetilde{E})_{\nu} \geq \inf_{\nu\in\Sigma} H^f_\alpha(S \CS | \CP E \widetilde{E})_{\nu},
\end{align}
where $\Sigma_j$ denotes the set of all states of the form $\EATchann_j\left[\omega_{R_{j-1} E_{j-1} \widetilde{E}}\right]$ for some initial state $\omega \in \dop{=}(R_{j-1} E_{j-1} \widetilde{E})$, and analogously $\Sigma$ denotes the set of all states of the form $\EATchann\left[\omega_{\inQ \widetilde{E}}\right]$ for some initial state $\omega \in \dop{=}(\inQ \widetilde{E})$, with $\widetilde{E}$ being a register of large enough dimension to serve as a purifying register for any of the $R_j E_j$ registers or the $\inQ$ register. 
\end{definition}

Theorem~\ref{th:GREAT} and similar statements all still hold if we instead take $\EATchann$ to be a rate-bounding channel, rather than one that generates the convex range. In fact, such a formulation with rate-bounding channels is potentially more general in scope, because it only requires the channel $\EATchann$ to ``characterize'' the $H^{f}_\alpha$-entropy values rather than the actual states in the convex range. However, we believe the formulation with the convex range is simpler to follow and also sufficient to cover most applications. (We also note that a previous version of this manuscript contained an incorrect statement that in any sequence of GEATT channels, one of the channels would form a rate-bounding channel --- that statement had a quantifier ordering issue, namely that the bound~\eqref{eq:ratebndchann} must hold for all $f$ rather than a specific $f$. However, the construction of rate-bounding channels as presented in Lemma~\ref{lemma:ratechann} below was indeed valid, but more elaborate than simply taking one of the channels in the sequence.)

The direct-sum construction in Lemma~\ref{lemma:convrange} also suffices to construct a rate-bounding channel, as follows. 
Again, this construction preserves infrequent-sampling structure as long as $\gamma$ is the same in all rounds (since the projection onto $\mathcal{H}_{R_{j-1} E_{j-1}}$ commutes with the classical ``test/generation random choice''), and should cause no loss of tightness if we do not have \emph{a priori} knowledge of how the channels in different rounds might behave differently (since it ensures that~\eqref{eq:ratebndchann} holds with equality).
\begin{lemma}\label{lemma:ratechann}
Let $\{\EATchann_j\}_{j=1}^n$ be a sequence of GEATT channels where all $S_j$ (resp.~$\CS_j, \CP_j$) are isomorphic to a single register $S$ (resp.~$\CS, \CP$), so that each channel is isomorphic to a channel $\EATchann'_j:R_{j-1} E_{j-1} \to S R_j E_j \CS \CP$.
Define $\inQ$ to be a register with Hilbert space 
$\mathcal{H}_{\inQ} = \bigoplus_{j=1}^{n} \mathcal{H}_{R_{j-1} E_{j-1}}$, and define $E$ to be a register with Hilbert space
$\mathcal{H}_E = \bigoplus_{j=1}^n \mathcal{H}_{E_j}$. 

Define a channel $\EATchann: \inQ \to S E \CS \CP$ as follows: it first performs a projective measurement where outcome $j$ corresponds to the projector onto $\mathcal{H}_{R_{j-1} E_{j-1}}$, then conditioned on the outcome value $j$, implements the channel $\EATchann'_j$ and traces out the $R_j$ register (while embedding the $E_j$ register in $E$).
Then $\EATchann$ is a valid rate-bounding channel, i.e.~it satisfies the defining condition~\eqref{eq:ratebndchann}; furthermore, the bound in that condition becomes an equality for this $\EATchann$.
\end{lemma}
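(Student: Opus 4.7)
The plan is to exploit the direct-sum structure to show that $\EATchann$ is essentially a classical mixture of the channels $\EATchann'_j$, where the block decomposition of $E$ encodes the ``label'' of which channel was applied, for free. First I would observe that, given any input $\omega_{\inQ\widetilde{E}}$, the initial projective measurement decoheres it into blocks: letting $\pi_j$ denote the projector onto $\mathcal{H}_{R_{j-1}E_{j-1}}$ inside $\mathcal{H}_\inQ$, the post-measurement state has the form $\sum_j p(j) \pure{j}_J \otimes \omega^{(j)}_{R_{j-1}E_{j-1}\widetilde{E}}$, where $p(j)=\tr{(\pi_j\otimes\id)\omega}$, the $\omega^{(j)}$ are the normalized conditional inputs, and $J$ is a (notional) classical register recording the outcome. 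Applying $\EATchann'_j$ to the $R_{j-1}E_{j-1}$ block, tracing out $R_j$, and embedding $E_j\hookrightarrow E$ via the direct sum produces $\sum_j p(j) \pure{j}_J \otimes \nu^{(j)}_{S E_j \CS\CP\widetilde{E}}$, where $\nu^{(j)}\defvar\Tr_{R_j}\EATchann'_j[\omega^{(j)}]$ lives in the $j$-th block of $E$. Since the subspaces $\mathcal{H}_{E_j}\subseteq\mathcal{H}_E$ are mutually orthogonal, $J$ is a deterministic function of $E$, hence can be appended to the conditioning system without changing any entropy.

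Next, by Lemma~\ref{lemma:classmixQES} applied to the classical register $J$,
\[
H^f_\alpha(S\CS|\CP E J\widetilde{E})_{\EATchann[\omega]} = \frac{1}{1-\alpha}\log\Bigl(\sum_j p(j) \, 2^{(1-\alpha)H^f_\alpha(S\CS|\CP E\widetilde{E})_{(\EATchann[\omega])_{|j}}}\Bigr),
\]
which is bounded below by $\min_j H^f_\alpha(S\CS|\CP E\widetilde{E})_{(\EATchann[\omega])_{|j}}$ for any $\alpha\in(1,\infty)$ (by monotonicity of the log-mean-exponential). Each conditional state is, up to relabelling, the output $\nu^{(j)}=\Tr_{R_j}\EATchann'_j[\omega^{(j)}]$, which is a valid member of $\Sigma_j$ after partial trace (the entropy does not involve $R_j$, so the partial trace is harmless by Lemma~\ref{lemma:DPI}). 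Thus, for every input $\omega$,
\[
H^f_\alpha(S\CS|\CP E\widetilde{E})_{\EATchann[\omega]} \geq \min_j \inf_{\nu\in\Sigma_j} H^f_\alpha(S_j\CS_j|\CP_j E_j\widetilde{E})_{\nu},
\]
establishing the rate-bounding inequality~\eqref{eq:ratebndchann}.

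For the matching equality, I would construct an explicit saturating input: let $j^\star$ attain the outer minimum and let $\omega^\star_{R_{j^\star-1}E_{j^\star-1}\widetilde{E}}$ approach the infimum in $\Sigma_{j^\star}$, then take $\omega$ to be $\omega^\star$ supported entirely in the $j^\star$ block of $\mathcal{H}_\inQ$. The initial measurement returns $j^\star$ deterministically, so $\EATchann[\omega]$ coincides (via the block embedding $\mathcal{H}_{E_{j^\star}}\hookrightarrow\mathcal{H}_E$, which is an isometry) with $\Tr_{R_{j^\star}}\EATchann'_{j^\star}[\omega^\star]$, and the isometric-invariance clause of Lemma~\ref{lemma:DPI} gives $H^f_\alpha(S\CS|\CP E\widetilde{E})_{\EATchann[\omega]} = H^f_\alpha(S_{j^\star}\CS_{j^\star}|\CP_{j^\star}E_{j^\star}\widetilde{E})_{\EATchann'_{j^\star}[\omega^\star]}$, as desired.

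The main obstacle will be the bookkeeping around the direct-sum embedding and the ``phantom'' register $J$: although it is intuitively clear that the block decomposition of $E$ carries a classical label for free, writing this cleanly requires identifying the appropriate isometry $\mathcal{H}_E \to \bigoplus_j \mathcal{H}_{E_j}\otimes\mathcal{H}_J$ (with the image being a classically-labelled direct sum) and invoking isometric invariance of QES-entropy to justify introducing $J$ into the conditioning system. Once this is set up, all remaining steps are just data-processing and Lemma~\ref{lemma:classmixQES}.
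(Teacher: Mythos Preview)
Your proof is substantively correct and matches the paper's approach (mixture decomposition via Lemma~\ref{lemma:DPI} and Lemma~\ref{lemma:classmixQES} for one direction, block embedding for the other). However, you have the two directions swapped. The defining rate-bounding condition~\eqref{eq:ratebndchann} reads $\min_j \inf_{\Sigma_j} \geq \inf_{\Sigma}$; this is what your \emph{second} argument establishes, since embedding any $\omega^\star \in \dop{=}(R_{j^\star-1}E_{j^\star-1}\widetilde{E})$ into the $j^\star$-block of $\inQ$ yields a feasible point of $\Sigma$ with the same QES-entropy, hence $\inf_\Sigma \leq \inf_{\Sigma_{j^\star}}$ for each $j^\star$. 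Your \emph{first} argument (decomposing $\EATchann[\omega]$ as a classical mixture over the block label $J$ and taking the worst-case term) instead shows $\inf_\Sigma \geq \min_j \inf_{\Sigma_j}$, which is the reverse inequality needed to upgrade~\eqref{eq:ratebndchann} to an equality. Both pieces are present and correct; just relabel which one proves which claim.
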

\begin{proof}
It is easy to see that~\eqref{eq:ratebndchann} holds, since this $\EATchann$ contains a ``copy'' of each $\EATchann_j$ (formally: every feasible point on the left-hand-side of~\eqref{eq:ratebndchann} straightforwardly yields a feasible point on the right-hand-side with the same value).
To see that the reverse inequality also holds, again note that the initial projective measurement in $\EATchann$ produces a state that is a classical mixture of states supported on the $\mathcal{H}_{R_{j-1} E_{j-1}}$ spaces, so we can lower-bound its $H^{f}_\alpha$-entropy using the ``worst-case'' term in the mixture (formally: using Lemma~\ref{lemma:DPI} and Lemma~\ref{lemma:classmixQES} the same way as in~\eqref{eq:worstcasebnd}).
\end{proof}

\section{The \texorpdfstring{$\alpha<1$ regime}{max-entropy version}}
\label{app:Hmaxversion}

Similar to~\cite{DFR20,MFSR24}, for the $\alpha<1$ regime we can instead derive \emph{upper} bounds on the global {\Renyi} entropy $H^\uparrow_{\alpha}(S_1^n \CS_1^n | \CP_1^n E_n)_{\rho_{|\Omega}}$. This in turn yields upper bounds on the global smooth max-entropy $\Hmax^\eps(S_1^n \CS_1^n | \CP_1^n E_n)_{\rho_{|\Omega}}$ (e.g.~via~\cite[Lemma~B.10]{DFR20} which includes a smooth max-entropy version of the bound~\eqref{eq:toHmineps}, in the opposite direction).

Some results in this section are more naturally stated in terms of Petz divergences and entropies, defined as follows. Note what while we technically state definitions for $\alpha\in[0,\infty]$, the Petz divergences are not ``well-behaved'' outside of $\alpha \in [0,2]$; e.g.~they may not satisfy data-processing.

\begin{definition}\label{def:Petz divergence}
For any $\rho,\sigma\in\Pos(A)$ with $\tr{\rho}\neq0$, and $\alpha\in(0,1)\cup (1,\infty)$, the Petz divergence between $\rho$, $\sigma$ is defined as:
\begin{align}
    \bar{D}_\alpha(\rho\Vert\sigma)=\begin{cases}
    \frac{1}{\alpha-1}\log\frac{\tr{ \rho^\alpha \sigma^{1-\alpha}}}{\tr{\rho}} &\left(\alpha < 1\ \wedge\ \rho\not\perp\sigma\right)\vee \left(\supp(\rho)\subseteq\supp(\sigma)\right) \\ 
    +\infty & \text{otherwise},
    \end{cases}  
\end{align}
where for $\alpha>1$ the $\sigma^{1-\alpha}$ term is defined via the Moore-Penrose pseudoinverse if $\sigma$ is not full-support~\cite{Tom16}.
The above definition is extended to $\alpha \in \{0,1,\infty\}$ by taking the respective limits.
Similar to the sandwiched {\Renyi} divergences, for the $\alpha=1$ case it also reduces to the Umegaki divergence, and for classical states it also reduces to the classical {\Renyi} divergences.
\end{definition}

\begin{definition} \label{def:Petz_condent}
For any bipartite state $\rho\in\dop{=}(AB)
$, and $\alpha\in[0,\infty]$, we define the following two Petz conditional entropies:
\begin{align}
    &\bar{H}_\alpha(A|B)_\rho=-\bar{D}_\alpha(\rho_{AB}\Vert\id_A\otimes\rho_B)\notag\\
    &\bar{H}_\alpha^\uparrow(A|B)_\rho=\sup_{\sigma_B\in\dop{=}(B)}-\bar{D}_\alpha(\rho_{AB}\Vert\id_A\otimes\sigma_B).
\end{align}
For $\alpha=1$, both the above values coincide and are equal to the von Neumann entropy.
\end{definition}

For $\alpha\in[0,2]$, the Petz divergences (and hence entropies) satisfy data-processing\footnote{The data-processing inequality for Petz entropies is usually only stated for $\alpha\in(0,2]$, but here we extend it to $\alpha=0$ by noting that \emph{by definition} the value of $\bar{D}_0$ is given by the $\alpha\to0$ limit, and thus we indeed have $\bar{D}_0(\rho\Vert\sigma) - \bar{D}_0(\mathcal{E}[\rho]\Vert\mathcal{E}[\sigma]) = \lim_{\alpha\to0} (\bar{D}_\alpha(\rho\Vert\sigma) - \bar{D}_\alpha(\mathcal{E}[\rho]\Vert\mathcal{E}[\sigma])) \geq 0$.} as in Fact~\ref{fact:DPI}; similarly, for $\alpha\in(0,1)\cup (1,2)$ they obey the relations in Fact~\ref{fact:classmix} when conditioned on classical registers (technically, the relations for the cases of $\bar{D}_\alpha$ and $\bar{H}_\alpha$ hold more broadly over all of $\alpha\in(0,1)\cup (1,\infty)$). 
We define a Petz version of $f$-weighted entropies via the obvious analogue (also discussed in~\cite{ZFK20}, apart from the small differences mentioned in Remark~\ref{remark:variants}): 
\begin{definition}\label{def:QESP}
Let $\rho \in \dop{=}(\CS \CP Q Q')$ be a state where $\CS$ and $\CP$ are classical with alphabets $\alphCS$ and $\alphCP$ respectively. Given a QES $f$ on $\CS \CP$ and a value $\alpha\in(0,1)\cup (1,\infty)$, we define
\begin{align}\label{eq:QESPdefn}
\bar{H}^{f}_\alpha(Q\CS|\CP Q')_{\rho} &\defvar \frac{1}{1-\alpha} \log \left( \sum_{\cS \cP} \rho(\cS \cP)^\alpha \rho(\cP)^{1-\alpha} \, 2^{(1-\alpha) \left(-f(\cS \cP) - \bar{D}_\alpha\left(\rho_{QQ'|\cS\cP} \middle\Vert \id_{Q} \otimes \rho_{Q'|\cP} \right)\right) } \right) \nonumber\\
&=\frac{1}{1-\alpha} \log \left( \sum_{\cS \cP} \rho(\cS \cP) 2^{(1-\alpha) \left(-f(\cS \cP) - \bar{D}_\alpha\left(\rho_{QQ' \land \cS\cP} \middle\Vert \id_{Q} \otimes \rho_{Q' \land \cP} \right)\right) } \right) \nonumber\\
&= \frac{1}{1-\alpha} \log \left( \sum_{\cS \cP}  
2^{-(1-\alpha)f(\cS \cP)} 
\Tr \left[\left(
\rho_{QQ' \land \cS\cP} \right)^{\alpha}\left(
\rho_{Q' \land \cP}\right)^{1-\alpha} \right]
\right)
,
\end{align}
where the sum is over all $\cS\cP$ values such that $\rho(\cS\cP)>0$, and we leave some tensor factors of identity implicit in the last expression.
\end{definition}

We now present the $\alpha<1$ version of Theorem~\ref{th:QES}. Note that this statement is slightly simpler in that it does not need the additional purifying registers $\widetilde{E}$ in the single-round terms, because by data-processing, the maximum in those terms is always attainable by a state that is ``trivial'' on that register. Also, our current version of this statement has the slight restriction that while the QES choices can be different in each round, they cannot explicitly depend on the $\cS_1^{j-1} \cP_1^{j-1}$ values from previous rounds, unlike Theorem~\ref{th:QES} --- this is due to a subtle difficulty regarding the NS conditions. We believe that this issue could be overcome in principle, but we defer further discussion of this point until Remark~\ref{remark:QESP} later.

\begin{theorem}\label{th:dualQESP}
Let $\{\EATchann_j\}_{j=1}^n$ be a sequence of channels $R_{j-1} E_{j-1} \to S_j R_j E_j \CS_j \CP_j$ 
such that the output registers $\CS_j \CP_j$ are always classical (for any input state), and let $\rho \in \dop{=}(S_1^n \CS_1^n \CP_1^n E_n R_n)$ be a state of the form $\rho=\EATchann_n \circ \dots \circ \EATchann_1 [\omega^0]$
(leaving some identity channels implicit) 
for some initial state $\omega^0 \in \dop{=}(R_0 E_0)$.
Suppose that each $\EATchann_j$ has some Stinespring dilation $V_j : R_{j-1} E_{j-1} \to S_j R_j E_j \CS_j \CP_j F_j$ (with ``environment'' system $F_j$), such that $V_j$ satisfies the following NS condition:\footnote{Qualitatively, this states that $V_j$ does not signal from $E_{j-1}$ to $R_j \CS_j \CP_j F_j$. Also note that the choice of Stinespring dilation can be arbitrary here, due to isometric equivalence of purifications.}
\begin{align}\label{eq:dualNS}
\exists \text{ a channel } \mathcal{R}_j: R_{j-1} \to  R_j \CS_j \CP_j F_j \text{ such that } \Tr_{S_j E_j} \circ V_j = \mathcal{R}_j \circ \Tr_{E_{j-1}}.
\end{align}
For each $j$, let $f_{|j}$ be a QES 
on registers $\CS_j \CP_j$.
Define the following QES on $\CS_1^n \CP_1^n$:
\begin{align}
f_\mathrm{full}(\cS_1^n \cP_1^n) \defvar \sum_{j=1}^n f_{|j}(\cS_j \cP_j).
\end{align}
Take any $\alpha \in (2/3,1)$
and let $\widetilde{\alpha}=\frac{3\alpha-2}{2\alpha-1}$.
Then we have
\begin{align}\label{eq:dualchainQESP}
\begin{gathered}
\bar{H}^{f_\mathrm{full}}_\alpha(S_1^n \CS_1^n | \CP_1^n E_n)_\rho \leq \sum_j \kappa_{j}, \\ 
\text{where}\quad \kappa_{j} \defvar \sup_{\nu\in\Sigma_j} \bar{H}^{f_{|j}}_{\widetilde{\alpha}}(S_j \CS_j | \CP_j E_j)_{\nu},
\end{gathered}
\end{align}
where $\Sigma_j$ denotes the set of all states of the form $\EATchann_j\left[\omega_{R_{j-1} E_{j-1}}\right]$ for some initial state $\omega \in \dop{=}(R_{j-1} E_{j-1})$.

Consequently, if we instead define the following ``normalized'' QES on $\CS_1^n \CP_1^n$:
\begin{align}\label{eq:dualchainQESPnorm}
\hat{f}_\mathrm{full}(\cS_1^n \cP_1^n) \defvar \sum_{j=1}^n \hat{f}_{|j}(\cS_j \cP_j), \quad\text{where}\quad \hat{f}_{|j}(\cS_j \cP_j) \defvar f_{|j}(\cS_j \cP_j) + \kappa_{j},
\end{align}
then
\begin{align}
H^{\hat{f}_\mathrm{full}}_\alpha(S_1^n \CS_1^n | \CP_1^n E_n)_\rho \leq 0.
\end{align}
\end{theorem}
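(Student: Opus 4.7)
The plan is to dualize the proof of Theorem~\ref{th:QES}, replacing the sandwiched chain rule with its Petz counterpart from~\cite{MFSR24} and reversing the direction of every key inequality. For each round $j$ I would build a read-and-prepare channel $\mathcal{D}_j\colon \CS_j \CP_j \to \CS_j \CP_j D_j$ as in Lemma~\ref{lemma:createD_2}, i.e.~one that reads $\cS_j \cP_j$ and prepares a uniform mixture on $D_j$ with support size $\lceil 2^{M - f_{|j}(\cS_j \cP_j)}\rceil$. Because uniform classical mixtures have the same Rényi entropy under every parameter, including Petz, this single channel simultaneously satisfies $\bar{H}_\alpha(D_j)_{\rho|\cS_j \cP_j}\in \bigl[M - f_{|j}(\cS_j \cP_j),\, M - f_{|j}(\cS_j \cP_j)+2^{-M/2}\log e\bigr]$ for every $\alpha\in[0,\infty]$. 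Define the extended channels $\mathcal{N}_j \defvar \mathcal{D}_j \circ \EATchann_j$.

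\textbf{Core chain rule.} The next step is to verify that $\mathcal{N}_j$ inherits the dilation-NS property~\eqref{eq:dualNS}: since $\mathcal{D}_j$ is classical and touches only $\CS_j \CP_j$, its Stinespring dilation merely coheres a copy into an auxiliary environment that can be absorbed into $F_j$, so the existing factorization $\Tr_{S_j E_j}\circ V_j = \mathcal{R}_j \circ \Tr_{E_{j-1}}$ carries through verbatim after appending $D_j$ to the $\mathcal{R}_j$ side. Having done this, I would invoke the Petz analogue of Fact~\ref{fact:GEATnotest}, i.e.~the dual GEAT chain rule of~\cite{MFSR24}, to obtain, for $\alpha\in(2/3,1)$ and $\widetilde{\alpha}=(3\alpha-2)/(2\alpha-1)$,
\begin{align}
\bar{H}_\alpha(D_1^n S_1^n \CS_1^n \mid \CP_1^n E_n)_{\rho} \;\leq\; \sum_{j=1}^n \sup_{\nu'\in\Sigma'_j} \bar{H}_{\widetilde{\alpha}}(D_j S_j \CS_j \mid \CP_j E_j)_{\nu'},
\end{align}
where $\Sigma'_j$ is the image of $\mathcal{N}_j$. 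Note that the allowed range $\alpha\in(2/3,1)$ is exactly what ensures $\widetilde{\alpha}\in(0,1)$, so both sides sit in the Petz-friendly regime.

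\textbf{Conversion to QESP-entropies and obstacles.} On the LHS, I would run a Petz analogue of the calculation~\eqref{eq:finalfullbnd} to deduce $\bar{H}_\alpha(D_1^n S_1^n \CS_1^n \mid \CP_1^n E_n)_\rho \geq nM + \bar{H}^{f_\mathrm{full}}_\alpha(S_1^n \CS_1^n \mid \CP_1^n E_n)_\rho$, using the lower end of the $D_j$-entropy interval together with the monotonicity of the log-mean-exponential in the appropriate direction for $\alpha<1$; this is where the sign conventions must be checked carefully since several inequalities flip versus the Theorem~\ref{th:QES} proof. On the RHS, a Petz version of~\eqref{eq:1rndreduction} gives $\bar{H}_{\widetilde{\alpha}}(D_j S_j \CS_j \mid \CP_j E_j)_{\nu'} \leq M + \bar{H}^{f_{|j}}_{\widetilde{\alpha}}(S_j \CS_j \mid \CP_j E_j)_{\nu'} + 2^{-M/2}\log e$, where (as in Theorem~\ref{th:QES}) the sup over $\nu'\in\Sigma'_j$ reduces to a sup over $\nu\in\Sigma_j$ because $\nu'$ is $\mathcal{N}_j$ applied to an input state with no $\CS_1^{j-1}\CP_1^{j-1}$ to condition on. Substituting and cancelling the $nM$ terms, then sending $M\to\infty$ to kill the residual $n\,2^{-M/2}\log e$ correction, yields~\eqref{eq:dualchainQESP}; the normalized statement then follows by applying Lemma~\ref{lemma:normalize}-style normalization to each $f_{|j}$. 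The main technical obstacle I anticipate is a clean verification that the Stinespring-dilation NS condition is indeed preserved under extension by $\mathcal{D}_j$ and that the dual Petz chain rule of~\cite{MFSR24} can be applied with exactly the parameter $\widetilde{\alpha}$ stated; a secondary obstacle, highlighted by the restriction that $f_{|j}$ cannot depend on $\cS_1^{j-1}\cP_1^{j-1}$, is that the pinching-channel trick used in Theorem~\ref{th:QES} to carry past testing values forward does not obviously commute with the dilation-NS condition, which is why I would defer adaptive QES choices to the companion remark foreshadowed by Remark~\ref{remark:QESP}.
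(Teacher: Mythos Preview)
Your scaffolding (extend by $D_j$ registers, apply a chain rule, convert to QESP-entropies, send $M\to\infty$) is right, but the core step is a genuine gap: you posit a ``Petz analogue of Fact~\ref{fact:GEATnotest}'' from~\cite{MFSR24} that directly yields $\bar H_\alpha(D_1^n S_1^n\CS_1^n\mid\CP_1^n E_n)\le\sum_j\sup\bar H_{\widetilde\alpha}(\cdots)$, and you yourself flag its applicability with the stated parameter as the ``main technical obstacle''. The paper does not invoke any such ready-made Petz chain rule; it \emph{derives} one via a duality argument, and this is precisely why the hypothesis~\eqref{eq:dualNS} is formulated at the level of a Stinespring dilation rather than the channel.

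Concretely, the paper purifies the full sequence by setting $W_j\defvar(U_j\circ V_j)\otimes\idmap_{\CP_1^{j-1}F_1^{j-1}G_1^{j-1}}$ (with $U_j$ a dilation of $\mathcal D_j$), and then uses the duality $\bar H^\uparrow_\alpha(A|B)_\rho=-H_{1/\alpha}(A|C)_\rho$ on the resulting pure state to trade the Petz entropy conditioned on $\CP_1^n E_n$ for a \emph{sandwiched} entropy with $\beta=1/\alpha\in(1,2)$ conditioned on the complementary registers $R_n F_1^n G_1^n$. The role of~\eqref{eq:dualNS} is that, in this dual picture, the isometries $W_j$ satisfy the ordinary GEAT non-signalling condition with ``memory'' $E_{j-1}\CP_1^{j-1}$ and ``side-information'' $R_{j-1}F_1^{j-1}G_1^{j-1}$, so the sandwiched Fact~\ref{fact:GEATnotest} applies there. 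Dualizing back yields $\bar H^\uparrow_\alpha\le\sum_j\sup\bar H^\uparrow_{2-1/\alpha}$; only after the two further conversions $\bar H_\alpha\le\bar H^\uparrow_\alpha$ and $\bar H^\uparrow_{2-1/\alpha}\le\bar H_{2-1/(2-1/\alpha)}$ from~\cite[Corollary~4]{TBH14} does the specific parameter $\widetilde\alpha=(3\alpha-2)/(2\alpha-1)$ emerge. Your NS check that $U_j\circ V_j$ inherits~\eqref{eq:dualNS} is correct as far as it goes, but it never reaches the point: you do not identify that this dilation-level condition is consumed by passing to the complementary registers and applying the \emph{sandwiched} chain rule there, which is the actual content of the proof.
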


Note that again, if we write $\alpha = 1-\mu$ for some $\mu>0$ then
\begin{align}\label{eq:tildemu}
\widetilde{\alpha} = \frac{1-3\mu}{1-2\mu} = 1 - \frac{\mu}{1-2\mu} = 1 - \mu - O(\mu^2),
\end{align}
so $\widetilde{\alpha}$ is slightly ``further from $1$'' than $\alpha$, but only by a ``higher-order'' amount. The NS condition might appear slightly elaborate, but as a simple example we highlight that as observed in~\cite{MFSR24}, it is automatically fulfilled whenever the action of $\EATchann_j$ is to first apply some arbitrary channel $\widetilde{\mathcal{M}}_j : R_{j-1} \to S_j \CS_j \CP_j T_j R_j$, then incorporate $\CP_j T_j$ into the side-information $E_j$ without ``disturbing'' the state already on the register $E_{j-1}$.\footnote{Formally, this means we suppose that $E_j$ is isomorphic to $\CP_j T_j E_{j-1}$, so we can define an ``identity channel'' $\idmap_{\CP_j T_j E_{j-1} \to E_j}$, and say that $\EATchann_j$ has the form $\idmap_{\CP_j T_j E_{j-1} \to E_j} \circ \widetilde{\mathcal{M}}_j$ for some $\widetilde{\mathcal{M}}_j : R_{j-1} \to S_j \CS_j \CP_j T_j R_j$.} This fulfills the NS condition~\eqref{eq:dualNS} without requiring any further structure (e.g.~Markov conditions) on the channels. We now present the proof of the above theorem.

\begin{proof}
Again, let $M>0$ be any value such that $M - f_{|j}(\cS_j \cP_j) > \frac{M}{2} > 0$ for all $f_{|j}(\cS_j \cP_j)$ in the theorem statement; then for each $j$, define a read-and-prepare channel $\mathcal{D}_j:\CS_j \CP_j \to \CS_j \CP_j D_j$ of the form described in Lemma~\ref{lemma:createD_2}, so that the state it prepares on $D_j$ always satisfies
\begin{align}
\forall \alpha \in[0,\infty], \quad 
H_\alpha(D_j)_{\rho_{|\cS_{j} \cP_{j}}} 
&\in \left[M - f_{|j}(\cS_j \cP_j)  , M - f_{|j}(\cS_j \cP_j) + 2^{-\frac{M}{2}}\log e \right].
\end{align}
Consider the channels $\mathcal{N}_j : R_{j-1} E_{j-1} \to D_j S_j R_j E_j \CS_j \CP_j$ defined by $\mathcal{N}_j \defvar \mathcal{D}_j \circ \EATchann_j$. (This is the same idea as the Theorem~\ref{th:QES} proof, except that since we now do not ``track'' the past $\CS_1^{j-1} \CP_1^{j-1}$ values, we can define these channels much more simply.) Again, view $\rho$ in the theorem statement as the reduced state of an extended version $\rho = \mathcal{N}_n \circ \dots \circ \mathcal{N}_1 [\omega^0]$.
Our goal will be to prove the following bound (where again $\Sigma'_j$ is the set of possible output states of $\mathcal{N}_j$, though here we do not need the purifying system $\widetilde{E}$):
\begin{align}\label{eq:dualGEAT_D}
\bar{H}^\uparrow_\alpha(D_1^n S_1^n \CS_1^n | \CP_1^n E_n)_\rho \leq \sum_j \sup_{\nu'\in\Sigma'_j} \bar{H}^\uparrow_{2-\frac{1}{\alpha}}(D_j S_j \CS_j | \CP_1^j E_j )_{\nu'},
\end{align}
because the remainder of the proof will then quickly follow the same way as before, up to some adjustments in {\Renyi} parameters.

Let $U_j: \CS_j \CP_j \to \CS_j \CP_j D_j G_j$ be a Stinespring dilation of $\mathcal{D}_j$, in which case $U_j \circ V_j : R_{j-1} E_{j-1} \to D_j S_j R_j E_j \CS_j \CP_j F_j G_j$ is a Stinespring dilation of $\mathcal{N}_j$. We extend this trivially to another isometry $W_j \defvar (U_j \circ V_j) \otimes \idmap_{\CP_1^{j-1} F_1^{j-1} G_1^{j-1}}$, i.e.~an isometry with input and output registers given by
\begin{align}
W_j: R_{j-1} E_{j-1} \CP_1^{j-1} F_1^{j-1} G_1^{j-1} \to D_j S_j R_j E_j \CS_j \CP_1^j F_1^j G_1^j
.
\end{align}
The purpose of constructing this extension is to ensure that the output registers can be exactly partitioned into registers that will appear on the ``left side of the conditioning'' in the desired bound (i.e.~$D_j S_j\CS_j$) and registers that will appear in the input registers of the subsequent $W_j$ channel (i.e.~$R_j E_j \CP_1^j F_1^j G_1^j$), which is a structure we will need later in the proof.

With this, we can view the state $\rho$ in the above bounds as the reduced state of a \emph{pure} state 
$
\ket{\rho}_{D_1^n S_1^n R_n E_n \CS_1^n \CP_1^n F_1^n G_1^n \widetilde{E}_0} \defvar W_n \circ \dots \circ W_1 \ket{\omega^0}_{R_0 E_0 \widetilde{E}_0}
$,
where $\ket{\omega^0}_{R_0 E_0 \widetilde{E}_0}$ is a purification of $\omega^0_{R_0 E_0}$. Similarly, any $\nu' \in \Sigma'_j$ can be viewed as the reduced state of a pure state
$
\ket{\nu'}_{D_j S_j R_j E_j \CS_j 
\CP_1^j F_1^j G_1^j \widetilde{E}} \defvar W_j \ket{\omega}_{R_{j-1} E_{j-1} \CP_1^{j-1} F_1^{j-1} G_1^{j-1} \widetilde{E}}
$,
where $\ket{\omega}_{R_{j-1} E_{j-1} \CP_1^{j-1} F_1^{j-1} G_1^{j-1} \widetilde{E}}$ is a pure state (for some register $\widetilde{E}$ of sufficiently large dimension). Then by a duality relation~\cite[Theorem~2]{TBH14} 
between the {\Renyi} entropies $\bar{H}^\uparrow_\alpha$ and $H_{1/\alpha}$ 
(for all $\alpha \in [0,\infty]$), we have:
\begin{align}\label{eq:Renyiduality}
\begin{gathered}
\bar{H}^\uparrow_\alpha(D_1^n S_1^n \CS_1^n | \CP_1^n E_n  \widetilde{E}_0)_\rho = -H_\beta(D_1^n S_1^n \CS_1^n | R_n F_1^n G_1^n)_\rho, \text{ where } \beta = \frac{1}{\alpha}, \\
\bar{H}^\uparrow_{2-\frac{1}{\alpha}}(D_j S_j \CS_j | \CP_1^j E_j )_{\nu'} = -H_{\widehat{\beta}}(D_j S_j \CS_j | R_j F_1^j G_1^j \widetilde{E})_{\nu'}, \text{ where } \widehat{\beta} = \frac{1}{2-\frac{1}{\alpha}} = \frac{1}{2-\beta}.
\end{gathered}
\end{align}
Note that since $\alpha \in (1/2,1)$, we have $\beta \in (1,2)$. We also highlight that we placed the purifications $\widetilde{E}_0$ and $\widetilde{E}$ on opposite sides in the above two equations, to simplify some subsequent arguments.

Hence our task basically reduces to relating $H_\beta(D_1^n S_1^n \CS_1^n | R_n F_1^n G_1^n)_\rho$ and $H_{\widehat{\beta}}(D_j S_j \CS_j | R_j F_1^j G_1^j \widetilde{E})_{\nu'}$. To do so, we shall show that $W_j$ form a sequence of GEAT channels (Definition~\ref{def:GEATchann_notest}) satisfying NS conditions between suitable registers. Recall each $W_j$ channel has input registers $R_{j-1} E_{j-1} \CP_1^{j-1} F_1^{j-1} G_1^{j-1}$, and partition them into ``memory'' registers $E_{j-1} \CP_1^{j-1}$ (which do not appear in these entropy terms) and ``side-information'' registers $R_{j-1} F_1^{j-1} G_1^{j-1}$ (which appear in the conditioning registers of these entropy terms). 
Critically, the NS condition on $V_j$ implies that the $W_j$ channels are non-signalling from $E_{j-1} \CP_1^{j-1}$ to $R_j F_1^j G_1^j$; more precisely, if we define a channel 
$\widetilde{\mathcal{R}}_j \defvar \left(\Tr_{D_j \CS_j \CP_j} \circ\, U_j \circ \mathcal{R}_j\right) \otimes \idmap_{F_1^{j-1} G_1^{j-1}}$
where $\mathcal{R}_j$
is the channel in the theorem condition~\eqref{eq:dualNS}, then this is a channel $\widetilde{\mathcal{R}}_j: R_{j-1} F_1^{j-1} G_1^{j-1} \to R_j F_1^j G_1^j$ such that (leaving various identity channels implicit):
\begin{align}
\Tr_{D_j S_j \CS_j E_j \CP_1^j} \circ W_j &= \Tr_{D_j S_j \CS_j E_j \CP_1^j} \circ\, U_j \circ V_j \nonumber\\
&= \Tr_{D_j \CS_j \CP_j} \circ\, U_j \circ \Tr_{S_j E_j} \circ V_j \circ \Tr_{\CP_1^{j-1}} \nonumber\\
&= \Tr_{D_j \CS_j \CP_j} \circ\, U_j \circ \mathcal{R}_j \circ \Tr_{E_{j-1}} \circ \Tr_{\CP_1^{j-1}} \nonumber\\
&= \widetilde{\mathcal{R}}_j \circ \Tr_{E_{j-1} \CP_1^{j-1}} ,
\end{align}
where in the second line we commute some partial traces with channels that act purely as identity on the corresponding registers (namely, $U_j$ acts as identity on $S_j E_j \CP_1^{j-1}$ and $V_j$ acts as identity on $\CP_1^{j-1}$), in the third line we apply the NS condition~\eqref{eq:dualNS} imposed in the theorem, and in the fourth line we substitute the definition of $\widetilde{\mathcal{R}}_j$.
Hence $W_j$ form a sequence of GEAT channels, and with this we can finally apply~\cite[Lemma~3.6]{MFSR24} (Fact~\ref{fact:GEATnotest} stated above) to conclude that (since $\beta\in(1,2)$ and $\widehat{\beta}=1/(2-\beta)$)
\begin{align}
H_\beta(D_1^n S_1^n \CS_1^n | R_n F_1^n G_1^n)_\rho \geq \sum_j \inf_{\nu'\in\Sigma''_j} H_{\widehat{\beta}}(D_j S_j \CS_j | R_j F_1^j G_1^j \widetilde{E})_{\nu'},
\end{align}
where $\Sigma''_j$ denotes the set of all states of the form $W_j \ket{\omega}_{R_{j-1} E_{j-1} \CP_1^{j-1} F_1^{j-1} G_1^{j-1} \widetilde{E}}$ for some $\omega$.

Substituting the duality relations~\eqref{eq:Renyiduality}, and noting that the data-processing inequality implies $\bar{H}^\uparrow_\alpha(D_1^n S_1^n \CS_1^n | \CP_1^n E_n)_\rho \leq \bar{H}^\uparrow_\alpha(D_1^n S_1^n \CS_1^n | \CP_1^n E_n \widetilde{E}_0)_\rho$, we get the desired bound~\eqref{eq:dualGEAT_D}. To get the claimed results in the theorem, we first use the inequalities $\bar{H}_\alpha \leq \bar{H}^\uparrow_\alpha$ and $\bar{H}^\uparrow_\alpha \leq \bar{H}_{2-\frac{1}{\alpha}}$~\cite[Corollary~4]{TBH14} on the left- and right-hand-sides respectively, to obtain
\begin{align}
\bar{H}_\alpha(D_1^n S_1^n \CS_1^n | \CP_1^n E_n)_\rho &\leq \sum_j \sup_{\nu'\in\Sigma'_j} \bar{H}_{\widetilde{\alpha}}(D_j S_j \CS_j | \CP_1^j E_j )_{\nu'}, \text{ where } \widetilde{\alpha} = 2-\frac{1}{2-\frac{1}{\alpha}} = \frac{3\alpha-2}{2\alpha-1} \nonumber\\
&\leq \sum_j \sup_{\nu'\in\Sigma'_j} \bar{H}_{\widetilde{\alpha}}(D_j S_j \CS_j | \CP_j E_j )_{\nu'}, 
\end{align}
where the second line holds by data-processing.
These entropies can then be related to the (Petz) $f$-weighted entropies $\bar{H}^{f_\mathrm{full}}_\alpha(S_1^n \CS_1^n | \CP_1^n E_n)_\rho$ and $ \bar{H}^{f_{|j}}_{\widetilde{\alpha}}(S_j \CS_j | \CP_j E_j)_{\nu}$ the same way as in Lemma~\ref{lemma:createD_2}, noting that Petz divergences also satisfy the relations in Fact~\ref{fact:classmix}. This yields the claimed bound~\eqref{eq:dualchainQESP} by taking $M\to\infty$. The other claimed bound~\eqref{eq:dualchainQESPnorm} is also obtained a similar way to before: by Lemma~\ref{lemma:normalize}, each $\hat{f}_{|j}$ satisfies
\begin{align}
\sup_{\nu\in\Sigma_j} \bar{H}^{\hat{f}_{|j}}_{\widetilde{\alpha}}(S_j \CS_j | \CP_j E_j)_{\nu}
= \sup_{\nu\in\Sigma_j} \bar{H}^{f_{|j}}_{\widetilde{\alpha}}(S_j \CS_j | \CP_j E_j)_{\nu} - \kappa_{j}
 = 0,
\end{align}
hence applying~\eqref{eq:dualchainQESP} to the QES-s $\hat{f}_{|j}$ yields~\eqref{eq:dualchainQESPnorm}.
\end{proof}

We now state the analogue of Theorem~\ref{th:GREAT}. {Here we present it in terms of a rate-bounding channel (except one that instead upper-bounds the entropies) to have slightly wider scope as compared to presenting it using the convex range; however, from the discussion in Appendix~\ref{app:directsum} we see both approaches would be very similar.} (In either case, it is unnecessary to consider the additional purifying register $\widetilde{E}$, for the reasons discussed earlier.)

\begin{theorem}\label{th:dualGREAT}
Consider a state $\rho$ and channels $\{\EATchann_j\}_{j=1}^n$ fulfilling the conditions described in Theorem~\ref{th:dualQESP}, such that furthermore all $\CS_j$ (resp.~$\CP_j$) are isomorphic to a single register $\CS$ (resp.~$\CP$) with alphabet $\alphCS$ (resp.~$\alphCP$). Let $\EATchann: \inQ \to S E \CS \CP$ be a channel such that for any QES $f$ on $\CS\CP$ and any $\alpha\in(0,1)$,
we have
\begin{align}
\max_j \sup_{\nu\in\Sigma_j} \bar{H}^f_\alpha(S_j \CS_j | \CP_j E_j)_{\nu} \leq \sup_{\nu\in\Sigma} \bar{H}^f_\alpha(S \CS | \CP E)_{\nu},
\end{align}
where $\Sigma_j$ denotes the set of all states of the form $\EATchann_j\left[\omega_{R_{j-1} E_{j-1}}\right]$ for some initial state $\omega \in \dop{=}(R_{j-1} E_{j-1})$, and analogously $\Sigma$ denotes the set of all states of the form $\EATchann\left[\omega_{\inQ}\right]$ for some initial state $\omega \in \dop{=}(\inQ)$. 

Take any $\alpha \in (3/4,1)$
and let $\overline{\alpha}=\frac{4\alpha-3}{3\alpha-2}$.
Suppose furthermore that $\rho = p_\Omega \rho_{|\Omega} + (1-p_\Omega) \rho_{|\overline{\Omega}}$ for some $p_\Omega \in (0,1]$ and normalized states $\rho_{|\Omega},\rho_{|\overline{\Omega}}$. 
Let $S_\Omega$ be a convex set of probability distributions on the alphabet $\alphCS \times \alphCP$, such that for all $\cS_1^n \cP_1^n$ with nonzero probability in $\rho_{|\Omega}$, the frequency distribution $\freq_{\cS_1^n \cP_1^n}$ lies in $S_\Omega$.
Then letting $\bsym{\sigma}_{\CS\CP}$ denote the distribution on $\CS\CP$ induced by any state $\sigma_{\CS\CP}$, we have 
\begin{align}\label{eq:dualGREAT}
\begin{gathered}
H^\uparrow_{\alpha}(S_1^n \CS_1^n | \CP_1^n E_n)_{\rho_{|\Omega}} \leq n \bar{h}_{\overline{\alpha}} 
+ 
\frac{\alpha}{1-\alpha} 
\log\frac{1}{p_\Omega},\\
\text{where}\quad \bar{h}_{\overline{\alpha}} = \sup_{\mbf{q} \in S_\Omega} \sup_{\nu\in\Sigma} \left( -\frac{1}{1-\overline{\alpha}}D\left(\mbf{q} \middle\Vert \bsym{\nu}_{\CS\CP}\right) - \sum_{\cS\cP\in \supp(\bsym{\nu}_{\CS\CP})}q(\cS\cP)\bar{D}_{\overline{\alpha}}\left(\nu_{SE \land \cS\cP} \middle\Vert \id_S\otimes\nu_{E \land \cP} \right) \right),
\end{gathered}
\end{align}
and the objective function in the above supremum is jointly concave in $\nu$ and $\mbf{q}$. Furthermore, for any $\alpha',\alpha'' \in (1/2,1)$ such that $\frac{\overline{\alpha}}{\overline{\alpha}-1} = \frac{\alpha'}{\alpha'-1} + \frac{\alpha''}{\alpha''-1}$, we have
\begin{align}\label{eq:dualGREAT3Renyi}
\bar{h}_{\overline{\alpha}} &\leq \sup_{\mbf{q} \in S_\Omega} \sup_{\nu\in\Sigma} \left( -\frac{\alpha''}{1-\alpha''}D\left(\mbf{q} \middle\Vert \bsym{\nu}_{\CS\CP}\right) + H_{\alpha'}(S\CS| \CP E)_{\nu} \right) ,
\end{align}
and the objective function in the above supremum is jointly concave in $\nu$ and $\mbf{q}$.
\end{theorem}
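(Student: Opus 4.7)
The plan is to derive Theorem~\ref{th:dualGREAT} from Theorem~\ref{th:dualQESP} along essentially parallel lines to how Theorem~\ref{th:GREAT} was derived from Theorem~\ref{th:QES}, with the key differences that we (i) work with Petz rather than sandwiched \Renyi{} divergences at the accumulated level in order to use their data-processing properties in the $\alpha<1$ regime, and (ii) reverse the direction of every monotonicity argument. As the first step, I would specialize Theorem~\ref{th:dualQESP} to the case where all of the $f_{|j}$ coincide with a single QES $f$. The rate-bounding hypothesis then replaces $\max_j\sup_{\nu\in\Sigma_j}\bar{H}^f_{\widetilde\alpha}(S_j\CS_j|\CP_j E_j)_\nu$ by the single quantity $\sup_{\nu\in\Sigma}\bar{H}^f_{\widetilde\alpha}(S\CS|\CP E)_\nu$, yielding a ``dual-QES'' analogue of Corollary~\ref{cor:GREATfixedf} after adding $\mbf{f}\cdot\mbf{q}\,n$ to both sides and using $f_\mathrm{full}(\cS_1^n\cP_1^n)=\sum_j f(\cS_j\cP_j)$.

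Next I would convert this into an upper bound on $H^\uparrow_\alpha(S_1^n\CS_1^n|\CP_1^n E_n)_{\rho_{|\Omega}}$. Extend $\rho$ with a read-and-prepare channel $\mathcal{D}$ of the Lemma~\ref{lemma:createD_2} form that encodes the (normalized) $\hat{f}_\mathrm{full}$ value in registers $D_1^n$. This converts the QESP-entropy bound into a Petz conditional-entropy bound on the extended state. I would then prove a ``dual'' version of Lemma~\ref{lemma:extract_D}, which instead of the upper bound $H^\uparrow_\alpha(DS\CS|\CP E)_\rho \leq H^\uparrow_\alpha(S\CS|\CP E)_\rho + \max_{\cS\cP}H_\alpha(D)_{\rho_{|\cS\cP}}$ gives the lower bound $H^\uparrow_\alpha(DS\CS|\CP E)_\rho \geq H^\uparrow_\alpha(S\CS|\CP E)_\rho + \min_{\cS\cP}H_\alpha(D)_{\rho_{|\cS\cP}}$; this should follow by running the proof of Lemma~\ref{lemma:extract_D} with the sign of $(\alpha-1)$ flipped, and replacing $\min$ with $\max$ inside the $\frac{1}{1-\alpha}\log(\cdot)$. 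Combining this with the standard passage from $\rho$ to $\rho_{|\Omega}$ (the $\alpha<1$ analogue of~\cite[Lemma~B.5]{DFR20}, which supplies the $\frac{\alpha}{1-\alpha}\log\tfrac{1}{p_\Omega}$ term) yields the dual analogue of Corollary~\ref{cor:QEScond}.

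With the dual analogue of Corollary~\ref{cor:GREATfixedf} in hand, I would then optimize over $f$ by a concave-conjugate argument mirroring Lemma~\ref{lemma:Legendre_conjugate} and Lemma~\ref{lemma:duality}. The function $\mbf{f}\mapsto -\bar{H}^f_{\widetilde\alpha}(S\CS|\CP E)_\nu$ is now \emph{concave} in $\mbf{f}$ (because of the sign of $\frac{1}{1-\widetilde\alpha}$ in the $\widetilde\alpha<1$ regime), and its concave conjugate can be computed exactly as in Lemma~\ref{lemma:Legendre_conjugate} modulo signs, producing the Petz-divergence-based expression in~\eqref{eq:dualGREAT}. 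Taking the infimum over $f$ and using the Clark--Duffin or Sion minimax argument to exchange infimum and supremum then yields $\bar{h}_{\overline\alpha}$ in the form claimed, with joint concavity of the objective in $(\nu,\mbf{q})$ following from joint concavity of Petz divergence (equivalently, applying Remark~\ref{remark:convexity} with the direction of the inequality reversed for $\alpha''<1$). Finally,~\eqref{eq:dualGREAT3Renyi} would be proved by an $\alpha<1$ chain-rule relaxation analogous to Lemma~\ref{lemma:GREAT3Renyi}: extend by $\mathcal{D}$, apply the appropriate chain rule from~\cite{Dup15} to split $\bar{H}_{\overline\alpha}(DS\CS|\CP E)\le H_{\alpha'}(S\CS|\CP E)+\bar{H}_{\alpha''}(D|\CS\CP)$ with $\frac{\overline\alpha}{\overline\alpha-1}=\frac{\alpha'}{\alpha'-1}+\frac{\alpha''}{\alpha''-1}$, and absorb the $D$-entropy term into the $\alpha''$-weighted KL divergence.

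The main obstacle I anticipate is the dual extract-$D$ inequality together with tracking the \Renyi{} parameter transitions. Going from the Petz QESP-entropy $\bar{H}^{f_\mathrm{full}}_\alpha$, which is what Theorem~\ref{th:dualQESP} directly controls, to the sandwiched $H^\uparrow_\alpha$ on the post-selected state $\rho_{|\Omega}$ is delicate because for $\alpha\in(1/2,1)$ the Petz entropy is \emph{smaller} than the sandwiched entropy, so one cannot simply apply $\bar{H}_\alpha\le H^\uparrow_\alpha$ in the desired direction; instead one must accept a further shift of the \Renyi{} parameter (precisely the $\alpha\to\overline\alpha$ change via $\widetilde\alpha$ and the chain-rule relaxation) to make the inequalities align. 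Verifying joint concavity (rather than merely separate concavity) of the divergence-based objective in $(\nu,\mbf{q})$, together with the continuity/closedness conditions needed to invoke strong duality in the concave setting, is a secondary technical issue that should follow by the same perspective-function argument used in Sec.~\ref{subsubsec:bndconvexity} with signs suitably adjusted.
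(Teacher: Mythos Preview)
Your high-level strategy is correct and mirrors the paper's, but there is a concrete gap in the \Renyi-parameter bookkeeping that prevents your route from reaching the stated $\overline{\alpha}$. You propose to start from Theorem~\ref{th:dualQESP}'s \emph{conclusion}, which controls $\bar{H}^{f_\mathrm{full}}_\alpha$ (Petz, non-uparrow) on the left and $\bar{H}^f_{\widetilde{\alpha}}$ on the right, and then convert the left-hand side to $H^\uparrow_\alpha$. But for $\alpha<1$ one has $H^\uparrow_\alpha \geq \bar{H}_\alpha$, so the conversion goes the wrong way; the tightest usable relation is $H^\uparrow_\alpha \leq \bar{H}^\uparrow_{2-1/\alpha} \leq \bar{H}_{\widetilde{\alpha}}$, which costs \emph{two} further shifts. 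Applying Theorem~\ref{th:dualQESP} at parameter $\widetilde{\alpha}$ then yields a single-round term at $\widetilde{\widetilde{\alpha}}=(5\alpha-4)/(4\alpha-3)$, one shift worse than the claimed $\overline{\alpha}=(4\alpha-3)/(3\alpha-2)$. Your attribution of the $\alpha\to\overline{\alpha}$ shift to ``the chain-rule relaxation'' is also off: the chain rule from~\cite{Dup15} is used only for the looser bound~\eqref{eq:dualGREAT3Renyi}, not for~\eqref{eq:dualGREAT}.

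The paper avoids this extra loss by \emph{not} starting from Theorem~\ref{th:dualQESP}'s final bound, but from the intermediate inequality~\eqref{eq:dualGEAT_D} inside its proof, which is stated for $\bar{H}^\uparrow$ on both sides. One then applies $H^\uparrow_\alpha \leq \bar{H}^\uparrow_{2-1/\alpha}$ on the left (one shift), invokes~\eqref{eq:dualGEAT_D} at parameter $2-1/\alpha$ to get $\bar{H}^\uparrow_{\widetilde{\alpha}}$ on the right (second shift), and finally uses $\bar{H}^\uparrow_{\widetilde{\alpha}} \leq \bar{H}_{\overline{\alpha}}$ (third shift), landing exactly at $\overline{\alpha}$. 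The remaining steps (dual extract-$D$, conditioning via~\cite[Lemma~B.5]{DFR20}, the convex-conjugate/duality argument, and the~\cite{Dup15} chain rule for~\eqref{eq:dualGREAT3Renyi}) then go through essentially as you outline. One small simplification you missed for the concavity claim: since $\bar{H}_{\overline{\alpha}}$ with $\overline{\alpha}<1$ is genuinely concave in the state, no purifying-function or perspective-function trick is needed here, unlike the $\alpha>1$ case.
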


Again, if we write $\alpha = 1-\mu$ for some $\mu>0$ then
\begin{align}\label{eq:barmu}
\overline{\alpha} = \frac{1-4\mu}{1-3\mu} = 1 - \frac{\mu}{1-3\mu} = 1 - \mu - O(\mu^2),
\end{align}
so $\overline{\alpha}$ is slightly ``further from $1$'' than $\alpha$, but only by a ``higher-order'' amount. (In fact, more precisely the $O(\mu^2)$ terms in~\eqref{eq:hatmu},~\eqref{eq:tildemu},~\eqref{eq:barmu} have the forms $\mu^2 + O(\mu^3)$, $2\mu^2 + O(\mu^3)$, $3\mu^2 + O(\mu^3)$ respectively, as one might perhaps expect by counting the number of {\Renyi} parameter changes in the various proofs.) Hence the KL divergence term has a negative prefactor in this case, and thus can be interpreted as a ``penalty function'' in the concave maximization that defines $\bar{h}_{\overline{\alpha}}$, just as described in Sec.~\ref{subsec:intuition}. Furthermore, if the $\CS_j$ registers are trivial then we again have a similar result to Lemma~\ref{lemma:GREATonlyH}:
\begin{align}
\bar{h}_{\overline{\alpha}} = \sup_{\mbf{q} \in S_\Omega} \sup_{\nu\in\Sigma} \left( -\frac{1}{1-\overline{\alpha}}D\left(\mbf{q} \middle\Vert \bsym{\nu}_{\CP}\right) + \sum_{\cP\in \supp(\bsym{\nu}_{\CP})}q(\cP)\bar{H}_{\overline{\alpha}}(S|\CP E)_\nu \right),
\end{align}
simply because $\bar{D}_{\overline{\alpha}}\left(\nu_{SE \land \cP} \middle\Vert \id_S\otimes\nu_{E \land \cP} \right) = - \bar{H}_{\overline{\alpha}}(S|\CP E)_\nu$. Perhaps somewhat curiously though, if the $\CS_j$ registers are nontrivial then it seems less straightforward to get an analogue of that lemma --- essentially, the data-processing inequality is in the ``wrong direction'' for an analogous proof to carry through. Still, we highlight that since the required conditions on the channels are in fact quite minimal (as discussed below Theorem~\ref{th:dualQESP} and in~\cite{MFSR24}), for practical applications it might be possible to focus on the case where $\CS_j$ are trivial. Alternatively, one could work with the looser bound~\eqref{eq:dualGREAT3Renyi} (which is analogous to Lemma~\ref{lemma:GREAT3Renyi}), possibly using the ``projective reconstruction'' property in~\cite[Lemma~B.7]{DFR20} to add or remove the $\CS_j$ registers as necessary. We leave a detailed resolution of this question for future work.

\begin{proof}
The proof is again analogous to our proof of Theorem~\ref{th:GREAT}; we highlight the key steps. It is slightly more convenient to start from the intermediate bound~\eqref{eq:dualGEAT_D} in the preceding proof, which we use to obtain a slightly looser bound
\begin{align}
H^\uparrow_\alpha(D_1^n S_1^n \CS_1^n | \CP_1^n E_n)_\rho 
&\leq \bar{H}^\uparrow_{2-\frac{1}{\alpha}}(D_1^n S_1^n \CS_1^n | \CP_1^n E_n)_\rho \nonumber\\
&\leq \sum_j \sup_{\nu'\in\Sigma'_j} \bar{H}^\uparrow_{\widetilde{\alpha}}(D_j S_j \CS_j | \CP_1^j E_j )_{\nu'}, \text{ where } \widetilde{\alpha} = 2-\frac{1}{2-\frac{1}{\alpha}} = \frac{3\alpha-2}{2\alpha-1} \nonumber\\
&\leq \sum_j \sup_{\nu'\in\Sigma'_j} \bar{H}_{\overline{\alpha}}(D_j S_j \CS_j | \CP_1^j E_j )_{\nu'}, \text{ where } \overline{\alpha} = 2-\frac{1}{\widetilde{\alpha}} = \frac{4\alpha-3}{3\alpha-2} \nonumber\\
&\leq \sum_j \sup_{\nu'\in\Sigma'_j} \bar{H}_{\overline{\alpha}}(D_j S_j \CS_j | \CP_j E_j )_{\nu'},
\end{align}
where in the first and third lines we used the bounds $H^\uparrow_\alpha \leq \bar{H}^\uparrow_{2-\frac{1}{\alpha}}$ and $\bar{H}^\uparrow_\alpha \leq \bar{H}_{2-\frac{1}{\alpha}}$~\cite[Corollary~4]{TBH14} respectively, the second line is~\eqref{eq:dualGEAT_D}, and the fourth line is by data-processing.

We relate the $H^\uparrow_\alpha(D_1^n S_1^n \CS_1^n | \CP_1^n E_n)_\rho $ term to $H^\uparrow_\alpha(S_1^n \CS_1^n | \CP_1^n E_n)_{\rho_{|\Omega}}$ in similar fashion to the Corollary~\ref{cor:QEScond} proof, letting $\widetilde{\Omega}$ denote the set of all $\cS_1^n \cP_1^n$ values such that $\freq_{\cS_1^n \cP_1^n} \in S_\Omega$:
\begin{align}
H^\uparrow_\alpha(S_1^n \CS_1^n | \CP_1^n E_n)_{\rho_{|\Omega}} &\leq 
H^\uparrow_\alpha(D_1^n S_1^n \CS_1^n | \CP_1^n E_n)_{\rho_{|\Omega}} - \min_{\cS_1^n \cP_1^n \in \widetilde{\Omega}} H_\alpha(D_1^n)_{\rho_{| \cS_1^n \cP_1^n}}\nonumber\\	
&\leq 
H^\uparrow_\alpha(D_1^n S_1^n \CS_1^n | \CP_1^n E_n)_{\rho} - \min_{\cS_1^n \cP_1^n \in \widetilde{\Omega}} H_\alpha(D_1^n)_{\rho_{| \cS_1^n \cP_1^n}} + \frac{\alpha}{1-\alpha}\log\frac{1}{p_\Omega} \nonumber\\ 
&\leq 
H^\uparrow_\alpha(D_1^n S_1^n \CS_1^n | \CP_1^n E_n)_{\rho} + \max_{\cS_1^n \cP_1^n \in \widetilde{\Omega}} -H_\alpha(D_1^n)_{\rho_{| \cS_1^n \cP_1^n}} + \frac{\alpha}{1-\alpha}\log\frac{1}{p_\Omega},
\end{align}	
where the first line holds by similar arguments to the Lemma~\ref{lemma:extract_D} proof, and the second line is~\cite[Lemma~B.5]{DFR20} in the $\alpha<1$ regime. 

Thus by picking all the QES-s $f_{|j}$ to be equal to a single QES $f$, and relating the entropies of the $D_j$ registers to the QES values the same way as in Lemma~\ref{lemma:createD_2}, followed by taking the $M\to\infty$ limit, we get an analogue of Corollary~\ref{cor:GREATfixedf}:
\begin{align}
H^\uparrow_\alpha(S_1^n \CS_1^n | \CP_1^n E_n)_{\rho_{|\Omega}} &\leq 
n \sup_{\nu\in\Sigma} \bar{H}^f_{\overline{\alpha}}(S \CS | \CP E )_{\nu} + \max_{\cS_1^n \cP_1^n \in \widetilde{\Omega}} \sum_{j=1}^n f(\cS_j\cP_j) + \frac{\alpha}{1-\alpha}\log\frac{1}{p_\Omega}, \nonumber\\
&\leq 
\sup_{\mbf{q} \in S_\Omega} \sup_{\nu\in\Sigma} \left(\bar{H}^f_{\overline{\alpha}}(S \CS | \CP E )_{\nu} + \mbf{f}\cdot\mbf{q} \right) n + \frac{\alpha}{1-\alpha}\log\frac{1}{p_\Omega},
\end{align}	

The best upper bound would be given by minimizing over the choice of QES $f$ on the right-hand-side; it hence only remains to show that this yields the claimed formula for $\bar{h}_{\overline{\alpha}}$,
analogous to the strong duality proof in Lemma~\ref{lemma:duality}. To do so, we define a sign-flipped version of $G_{\alpha,\rho}$ from Lemma~\ref{lemma:Legendre_conjugate} (to avoid having to introduce and discuss ``concave conjugates''), with Petz instead of sandwiched divergences:
\begin{align}
&\widehat{G}_{\alpha,\rho}({\mbf{f}}) \defvar 
\bar{H}^{f}_\alpha(Q\CS|\CP Q')_{\rho}.
\end{align}
From the log-sum-exponential formulation~\eqref{eq:G_to_logsumexp} (with a signflip) we see that $\widehat{G}_{\alpha,\rho}({\mbf{f}})$ is again convex in $\mbf{f}$ since $\alpha<1$; by analogous calculations we find that its convex conjugate is
\begin{align}
&\widehat{G}_{\alpha,\rho}^*(\bsym{\lambda}) \defvar 
\begin{cases}
\frac{1}{1-\alpha}D\left(-\bsym{\lambda} \middle\Vert \bsym{\rho}_{\CS\CP}\right)-\sum_{\cS\cP\in\supp(\bsym{\rho}_{\CS\CP})}\lambda(\cS\cP)\bar{D}_{\alpha}\left(\rho_{QQ' \land \cS \cP} \middle\Vert \id_Q\otimes\rho_{Q' \land \cP} \right) & \text{if } 
-\bsym{\lambda} \in \mathbb{P}_{\alphCS\alphCP}
, \\
+\infty  & \text{otherwise.}
\end{cases}
\end{align}
Again, since $\widehat{G}_{\alpha,\rho}$ is a convex function with domain $\mathbb{R}^{|\alphCS \times \alphCP|}$, this means $\widehat{G}_{\alpha,\rho}$ is also the convex conjugate of $\widehat{G}_{\alpha,\rho}^*$, thus we have $\widehat{G}_{\alpha,\rho}(\mbf{f}) = \sup_{\bsym{\lambda} \in \mathbb{R}^{|\alphCS \times \alphCP|}}
\left(
\bsym{\lambda}\cdot{\mbf{f}} -\widehat{G}^*_{\alpha,\rho}(\bsym{\lambda})\right)$. With this we again write
\begin{align}
&\bar{H}^f_{\overline{\alpha}}(S \CS | \CP E )_{\nu} + \mbf{f}\cdot\mbf{q} \nonumber\\
=&\, \widehat{G}_{\overline{\alpha},\nu}({\mbf{f}}) + \mbf{f}\cdot\mbf{q} 
\nonumber\\
=& \sup_{\bsym{\lambda} \in \mathbb{R}^{|\alphCS \times \alphCP|}}\left(
\bsym{\lambda}\cdot{\mbf{f}} -\widehat{G}^*_{\overline{\alpha},\nu}(\bsym{\lambda}) + \mbf{f}\cdot\mbf{q} \right) \nonumber\\
=& \sup_{\bsym{\lambda}' \in \mathbb{P}_{\CS\CP}}\left(
-\widehat{G}^*_{\overline{\alpha},\nu}(-\bsym{\lambda}') + \mbf{f}\cdot \left(\mbf{q} -\bsym{\lambda}' \right)  \right) \text{ via a reparametrization } \bsym{\lambda}' = -\bsym{\lambda} .
\end{align}

Furthermore, by Lemma~\ref{lemma:createD} (adapted to the Petz case) we know there exists a read-and-prepare channel that can extend any state $\nu$ so that we have $\bar{H}^f_{\overline{\alpha}}(S \CS | \CP E )_{\nu} = \bar{H}_{\overline{\alpha}}(D S \CS | \CP E )_{\nu} - M$ for some fixed $M$; it thus follows from the convexity of Petz divergences for $\alpha\in(0,1]$ that $\bar{H}^f_{\overline{\alpha}}(S \CS | \CP E )_{\nu}$ is concave in $\nu$. (Here we do not need to exploit purifying functions, because in this {\Renyi} parameter regime, $\bar{H}_{\overline{\alpha}}$ is genuinely concave with respect to the state.) 
This lets us proceed similarly to the Lemma~\ref{lemma:duality} proof: we see $\widehat{G}^*_{\overline{\alpha},\nu}(\bsym{\lambda})$ 
is jointly convex in $(\bsym{\lambda} , \nu)$, since by definition it is a supremum over a family of functions $\bsym{\lambda}\cdot{\mbf{f}}-\widehat{G}_{\overline{\alpha},\nu}({\mbf{f}})$ that are each jointly convex in $(\bsym{\lambda} , \nu)$.
Therefore $-\widehat{G}^*_{\overline{\alpha},\nu}(-\bsym{\lambda}') + \mbf{f}\cdot \left(\mbf{q} -\bsym{\lambda}' \right)$ is jointly concave in $(\mbf{q} , \nu , \bsym{\lambda}')$, which allows us to apply the same strong duality arguments to conclude that 
\begin{align}\label{eq:dualduality}
&\inf_{\mbf{f}}
\sup_{\mbf{q} \in S_\Omega} \sup_{\nu\in\Sigma} \sup_{\bsym{\lambda}' \in \mathbb{P}_{\CS\CP}}\left(
-\widehat{G}^*_{\overline{\alpha},\nu}(-\bsym{\lambda}') + \mbf{f}\cdot \left(\mbf{q} -\bsym{\lambda}' \right)  \right)  \nonumber\\
=& \sup_{\mbf{q} \in S_\Omega} \sup_{\nu\in\Sigma}
-\widehat{G}^*_{\overline{\alpha},\nu}(-\mbf{q}) 
.
\end{align}
This gives the first claimed result~\eqref{eq:dualGREAT}, by just substituting the definition of $-\widehat{G}^*_{\overline{\alpha},\nu}(-\mbf{q})$. To get the looser bound~\eqref{eq:dualGREAT3Renyi}, we perform similar calculations except with the relaxation
\begin{align}
H_{\overline{\alpha}}(D S \CS | \CP E)_{\nu} \leq H_{\alpha'}(S \CS | \CP E )_{\nu} + H_{\alpha''}^\uparrow(D| S \CS \CP E )_{\nu},
\end{align}
which is a special case of~\cite[Proposition~8]{Dup15}; this lets us apply the same arguments as the Lemma~\ref{lemma:GREAT3Renyi} proof.
\end{proof}

\begin{remark}\label{remark:QESP}
In the Theorem~\ref{th:dualQESP} proof, the obstruction to allowing the QES-s to depend on the past $\CS_j \CP_j$ values is that in that case, it seems we would need to define the $\mathcal{N}_j$ channels the same way as the Theorem~\ref{th:QES} proof, in which case they act ``nontrivially'' on the $\CS_1^{j-1}\CP_1^{j-1}$ registers (albeit only in a read-and-prepare fashion). However, to apply Fact~\ref{fact:GEATnotest} it seems we would then need an NS condition from $E_{j-1} \CS_1^{j-1} \CP_1^{j-1}$ to $R_j F_1^j G_1^j$, which would not be satisfied since this construction of the $\mathcal{N}_j$ channels would nontrivially signal from the $\CS_1^{j-1}\CP_1^{j-1}$ registers to the Stinespring dilation $F_j$. A different arrangement of the conditioning registers might resolve this issue, but we leave this for future work.\footnote{For instance, it seems potentially useful to instead consider $H_{\widehat{\beta}}(D_j S_j \copyCS_j | \CS_1^{j-1} \CP_1^{j-1} R_j F_1^j G_1^j \widetilde{E})_{\nu'}$ in~\eqref{eq:Renyiduality}, then try to lower-bound it with $\min_{\cS_1^{j-1} \cP_1^{j-1}} H_{\widehat{\beta}}(D_j S_j \copyCS_j | R_j F_1^j G_1^j \widetilde{E})_{\nu'_{|\cS_1^{j-1} \cP_1^{j-1}}}$, but the issue is that $\CS_1^{j-1} \CP_1^{j-1}$ might not be classical in the state $\nu'_{D_j S_j \copyCS_j \CS_1^{j-1} \CP_1^{j-1} R_j F_1^j G_1^j \widetilde{E}}$ and hence $\nu'_{|\cS_1^{j-1} \cP_1^{j-1}}$ is less easily defined.}

As a purely informal non-rigorous remark, however, we note that the role of the NS condition in deriving upper bounds on e.g.~$\bar{H}^{f_\mathrm{full}}_\alpha(S_1^n \CS_1^n | \CP_1^n E_n)_\rho$ appears to be a way to enforce that the channels do not ``destroy'' information in the conditioning systems, since it would be impossible to derive useful upper bounds if they do so. More specifically, if the Stinespring dilation of a channel does not ``signal out of'' some registers, this informally seems to imply that it ``preserves the information'' in those registers, as we would want. However, we note that the NS condition seems to be ``too strong'' as a method to enforce this information-preserving property --- for instance, if the $\mathcal{D}_j$ read-and-prepare channels acted on $\CS_1^{j-1}\CP_1^{j-1}$, this would still preserve the information on them, despite violating the NS condition in the form described above. Hence the NS condition may not be the optimal way to approach this proof (or perhaps this obstacle can be overcome by finding an appropriate Stinespring dilation that correctly enforces this information-preserving property via the NS condition).

Separately, we also observe that our above proofs involved multiple changes in {\Renyi} parameter (albeit only ``higher-order'' changes), to convert between sandwiched and Petz entropies, or between $\bar{H}_\alpha$ and $\bar{H}^\uparrow_{\alpha}$. It would be convenient if the argument could be streamlined to avoid some of these conversions, for instance by defining some notion of $\bar{H}^{\uparrow,f}_\alpha(Q\CS|\CP Q')$. However, as noted in Remark~\ref{remark:variants}, care would be needed regarding how to define the optimization over the second arguments of the divergence terms in that case.

Finally, we could again perform the above proof under the Markov conditions of the original EAT instead, so we could use~\cite[Corollary~3.5]{DFR20} in place of~\cite[Lemma~3.6]{MFSR24} (i.e.~Fact~\ref{fact:GEATnotest}). In that case we expect there would be no changes of {\Renyi} parameter, since that statement directly bounds the sandwiched entropies $H_\alpha$ (basically,~\eqref{eq:originalEAT} holds in the opposite direction, with a supremum instead of infimum), and we could replace~\cite[Lemma~B.5]{DFR20} with~\cite[Lemma~B.6]{DFR20} (which only works for the $\alpha<1$ regime)  to avoid needing a conversion to $H^\uparrow_\alpha$ when conditioning on $\Omega$. It might also be possible that this could allow the QES-s to depend on the past $\CS_1^{j-1}\CP_1^{j-1}$ values, by incorporating them into the conditioning registers in such a way that the Markov conditions hold. However, we defer this to be resolved in future work if it should become important --- we believe that the conditions in Theorem~\ref{th:dualQESP} are sufficiently less restrictive that it would be of more general use than a version with the Markov condition, even with the aforementioned drawbacks.
\end{remark}

\section{A one-shot chain rule}
\label{app:H0chain}

We thank Ashutosh Marwah for providing us with the proof of this claim. For our calculations here, we briefly make use of the Petz entropies as presented in Definition~\ref{def:Petz_condent}.

\begin{lemma}\label{lemma:H0chain}
For any $\rho \in \dop{=}(A_1 A_2 B)$ and $\alpha\in[1/2,\infty]$, we have
\begin{align}\label{eq:H0chain1}
H^\uparrow_\alpha(A_1 | A_2 B)_\rho
&\geq H^\uparrow_\alpha(A_1 A_2 | B)_\rho - \bar{H}^\uparrow_0(A_2 | B)_\rho \nonumber\\
&\geq H^\uparrow_\alpha(A_1 A_2 | B)_\rho - H^\uparrow_0(A_2 | B)_\rho.
\end{align}
Consequently, for any $\rho \in \dop{=}(A_1 A_2 B_1 B_2)$ and $\alpha\in[1/2,\infty]$, we have
\begin{align}\label{eq:H0chain2}
H^\uparrow_\alpha(A_1 | A_2 B_1 B_2)_\rho
&\geq H^\uparrow_\alpha(A_1 A_2 | B_1 B_2)_\rho - \bar{H}^\uparrow_0(A_2 | B_2)_\rho \nonumber\\
&\geq H^\uparrow_\alpha(A_1 A_2 | B_1 B_2)_\rho - H^\uparrow_0(A_2 | B_2)_\rho.
\end{align}
\end{lemma}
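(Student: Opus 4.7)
The plan is to prove the equivalent form
\[
H^\uparrow_\alpha(A_1 A_2 | B)_\rho \leq H^\uparrow_\alpha(A_1 | A_2 B)_\rho + \bar{H}^\uparrow_0(A_2 | B)_\rho
\]
by exhibiting a specific candidate state for the supremum defining $H^\uparrow_\alpha(A_1 | A_2 B)_\rho$, obtained by compressing an optimizer for the left-hand side onto the support of $\rho_{A_2 B}$. Concretely, let $\Pi$ denote the support projector of $\rho_{A_2 B}$, so that $\rho_{A_1 A_2 B} = (I_{A_1} \otimes \Pi) \rho_{A_1 A_2 B} (I_{A_1} \otimes \Pi)$ (since tracing out $A_1$ cannot shrink the support). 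Let $\tau_B^\star$ attain (or approximately attain) the supremum in $H^\uparrow_\alpha(A_1 A_2 | B)_\rho$, and define $q \defvar \Tr[\Pi (I_{A_2} \otimes \tau_B^\star)]$ together with $\sigma_{A_2 B} \defvar q^{-1}\, \Pi (I_{A_2} \otimes \tau_B^\star) \Pi$, which is a bona fide normalized state on $A_2 B$ (we may assume $q > 0$ by perturbing $\tau_B^\star$ to full support if necessary).

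The key step is to apply Fact~\ref{fact:DPI} to the 2-pinching channel $\mathcal{E}(X) \defvar (I_{A_1} \otimes \Pi) X (I_{A_1} \otimes \Pi) + (I_{A_1} \otimes \Pi^\perp) X (I_{A_1} \otimes \Pi^\perp)$, which is CPTP. Since $\rho$ is supported on $I_{A_1}\otimes \Pi$, we have $\mathcal{E}(\rho) = \rho$; for $\sigma = I_{A_1 A_2} \otimes \tau_B^\star$, $\mathcal{E}(\sigma)$ is block diagonal across $\Pi$ and $\Pi^\perp$, so by the block-diagonal form of $D_\alpha$ combined with the fact that $\rho$ has no weight in the $\Pi^\perp$ block, data processing gives
\[
D_\alpha\bigl(\rho\,\big\Vert\, I_{A_1 A_2} \otimes \tau_B^\star\bigr) \geq D_\alpha\bigl(\rho\,\big\Vert\, I_{A_1} \otimes \Pi (I_{A_2} \otimes \tau_B^\star) \Pi \bigr) = D_\alpha\bigl(\rho\,\big\Vert\, I_{A_1} \otimes \sigma_{A_2 B}\bigr) - \log q,
\]
where the last equality uses the scaling identity $D_\alpha(\rho \Vert c M) = D_\alpha(\rho \Vert M) - \log c$. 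Rearranging and taking the supremum over conditioning states in the definition of $H^\uparrow_\alpha(A_1 | A_2 B)_\rho$, we conclude
\[
H^\uparrow_\alpha(A_1 A_2 | B)_\rho \leq H^\uparrow_\alpha(A_1 | A_2 B)_\rho + \log q.
\]
Since $\tau_B^\star$ is merely some choice of state on $B$, the quantity $\log q = \log \Tr[\Pi (I_{A_2} \otimes \tau_B^\star)]$ is upper bounded by $\sup_{\sigma_B} \log \Tr[\Pi (I_{A_2} \otimes \sigma_B)] = \bar{H}^\uparrow_0(A_2 | B)_\rho$, which yields the first inequality in~\eqref{eq:H0chain1}. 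The second inequality follows from $\bar{H}^\uparrow_0(A_2 | B)_\rho \leq H^\uparrow_0(A_2 | B)_\rho$, which is a consequence of the standard comparison $\tilde{D}_\alpha \leq \bar{D}_\alpha$ between sandwiched and Petz divergences (passed to the sup over $\sigma_B$ at $\alpha = 0$).

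The extension to~\eqref{eq:H0chain2} is then immediate: apply~\eqref{eq:H0chain1} with $B$ replaced by $B_1 B_2$, and combine with the data-processing bound $\bar{H}^\uparrow_0(A_2 | B_1 B_2)_\rho \leq \bar{H}^\uparrow_0(A_2 | B_2)_\rho$, which holds because the Petz divergence satisfies data processing in the range $\alpha \in [0, 2]$, and in particular at $\alpha = 0$ (with the supremum formulation transferring the direction appropriately, as in Fact~\ref{fact:DPI}). The main conceptual obstacle is the pinching trick in the key inequality: one must recognize that $\rho$'s support being contained in $I_{A_1}\otimes\Pi$ is exactly what is needed to make the 2-pinching act as the identity on the first argument while compressing the second argument to a scalar multiple of a normalized candidate for the $H^\uparrow_\alpha(A_1 | A_2 B)$ sup. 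The rest is bookkeeping with the scaling property of $D_\alpha$ and the definition of $\bar{H}^\uparrow_0$.
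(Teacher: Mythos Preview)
Your proof is correct and takes a genuinely different, more elementary route than the paper's. The paper proceeds by purifying $\rho$ onto a register $C$, invoking a chain rule from \cite{arx_MD23} of the form $H^\uparrow_\beta(A_1 A_2|C) \geq H^\uparrow_\beta(A_1|C) + H_\beta(A_2|A_1 C)_\nu$ for a carefully constructed auxiliary state $\nu$, then showing $H_\beta(A_2|A_1 C)_\nu \geq H_\infty(A_2|A_1 C)_\rho$ via the SDP characterization of $D_\infty$, and finally converting everything back via the duality relations $H^\uparrow_\alpha \leftrightarrow H^\uparrow_\beta$ and $H_\infty \leftrightarrow \bar{H}^\uparrow_0$. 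Your argument avoids purification, duality, and the external chain rule entirely: you stay on the ``primal'' side, recognize that $\bar{H}^\uparrow_0(A_2|B)_\rho = \sup_{\sigma_B}\log\Tr[\Pi_{\rho_{A_2B}}(\id_{A_2}\otimes\sigma_B)]$, and extract exactly this quantity by pinching the second argument of $D_\alpha$ onto $\supp(\rho_{A_2B})$ (using only Fact~\ref{fact:DPI} and the scaling identity for $D_\alpha$). Both approaches handle the extension to~\eqref{eq:H0chain2} identically via data-processing for $\bar{H}^\uparrow_0$. Your approach is shorter and self-contained; the paper's approach has the minor advantage of making the connection to the chain rules in \cite{arx_MD23} explicit, which may be of independent interest.

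One small remark: your parenthetical about perturbing $\tau_B^\star$ to full support is unnecessary. Whenever $D_\alpha(\rho\Vert \id_{A_1 A_2}\otimes\tau_B) < +\infty$ (which is the only case relevant for the supremum), one automatically has $q = \Tr[\Pi(\id_{A_2}\otimes\tau_B)] > 0$, since $q=0$ forces $\rho_{A_2B}\perp(\id_{A_2}\otimes\tau_B)$ and hence $\rho_{A_1 A_2 B}\perp(\id_{A_1 A_2}\otimes\tau_B)$, making the divergence infinite for all $\alpha$. So the argument goes through for every $\tau_B$ that contributes nontrivially to the supremum, without any limiting procedure.
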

If the registers in the $\bar{H}^\uparrow_0$ and $H^\uparrow_0$ terms are classical, then there is no difference between those versions of the bounds, since those values are then equal as mentioned previously.

\begin{proof}
Let $\beta\in[1/2,\infty]$ be such that $1/\alpha + 1/\beta = 2$, and purify $\rho$ onto some register $C$. Focusing on the registers $A_1 A_2 C$, by~\cite[Corollary~C.3]{arx_MD23}\footnote{Strictly speaking that result is only stated for $\beta\in(0,\infty)$, but we can first suppose that $\alpha>1/2$ so that $\beta<\infty$, then to obtain our desired result for $\alpha=1/2$, we can take the $\alpha\to1/2$ limit at the end, exploiting continuity of the sandwiched {\Renyi} divergences with respect to $\alpha$ (\cite[Corollary~4.2]{Tom16} together with the fact that convex functions are continuous on the interior of their domain).
} (and also implicitly observed in~\cite[Eq.~(44)]{DFR20}) we have the following chain rule, for any $\sigma \in \dop{=}(C)$ such that $H^\uparrow_\beta(A_1 | C)_\rho = -D_\beta(\rho_{A_1 C} \Vert \id_{A_1} \otimes \sigma_C)$:
\begin{align}
\begin{aligned}
H^\uparrow_\beta(A_1 A_2 | C)_\rho \geq H^\uparrow_\beta(A_1 | C)_\rho + H_\beta(A_2 | A_1 C)_\nu,
\end{aligned}
\end{align}
where $\nu \in \dop{=}(A_2 A_1 C)$ is a state defined by:
\begin{align}
\begin{gathered}
\nu_{A_2 A_1 C} \defvar \nu_{A_1 C}^{1/2} \rho_{A_1 C}^{-1/2} \rho_{A_2 A_1 C} \rho_{A_1 C}^{-1/2} \nu_{A_1 C}^{1/2} , \\
\text{where } \nu_{A_1 C} \defvar \frac{\left(\rho_{A_1 C}^{1/2} 
\sigma_C^{\frac{1-\beta}{\beta}} \rho_{A_1 C}^{1/2}\right)^\beta}{\tr{\left(\rho_{A_1 C}^{1/2}  \sigma_C^{\frac{1-\beta}{\beta}} \rho_{A_1 C}^{1/2}\right)^\beta}} \in \dop{=}(A_1 C),
\end{gathered}
\end{align}
leaving some tensor factors of identity implicit for brevity, and defining $\rho_{A_1 C}^{-1/2}$ via the Moore-Penrose pseudoinverse if $\rho_{A_1 C}$ is not full-support. Note that from the definition of $\nu_{A_1 C}$ we immediately see that $\ker(\rho_{A_1 C}) \subseteq \ker(\nu_{A_1 C})$ and therefore 
\begin{align}\label{eq:supportnu}
\supp(\nu_{A_1 C}) \subseteq \supp(\rho_{A_1 C}),
\end{align}
a property we will soon need.

Importantly, this state $\nu$ satisfies
\begin{align}
H_\beta(A_2 | A_1 C)_\nu \geq H_\infty(A_2 | A_1 C)_\nu \geq H_\infty(A_2 | A_1 C)_\rho,
\end{align}
where the first inequality is simply by monotonicity in $\beta$, and the second inequality follows from the SDP characterization of $D_\infty$~(\cite{Datta09} and \cite[Proposition~4.3]{Tom16}), as follows: 
\begin{align}
D_\infty(\rho_{A_2 A_1 C} \Vert \id_{A_2} \otimes \rho_{A_1 C}) 
&= \inf \left\{ \lambda \;\middle|\; \rho_{A_2 A_1 C} \leq 2^\lambda\, \id_{A_2} \otimes \rho_{A_1 C} \right\} \nonumber\\
&= \inf \left\{ \lambda \;\middle|\; \rho_{A_1 C}^{-1/2} \rho_{A_2 A_1 C} \rho_{A_1 C}^{-1/2} \leq 2^\lambda\, \id_{A_2} \otimes \id_{\supp(\rho_{A_1 C})} \right\} \nonumber\\
&\geq \inf \left\{ \lambda \;\middle|\; \nu_{A_1 C}^{1/2} \rho_{A_1 C}^{-1/2} \rho_{A_2 A_1 C} \rho_{A_1 C}^{-1/2} \nu_{A_1 C}^{1/2} \leq 2^\lambda\, \id_{A_2} \otimes \nu_{A_1 C} \right\} \nonumber\\
&= \inf \left\{ \lambda \;\middle|\; \nu_{A_2 A_1 C} \leq 2^\lambda\, \id_{A_2} \otimes \nu_{A_1 C} \right\} \nonumber\\
&= D_\infty(\nu_{A_2 A_1 C} \Vert \id_{A_2} \otimes \nu_{A_1 C}) ,
\end{align}
where $\id_{\supp(\rho_{A_1 C})}$ denotes the projector on $\supp(\rho_{A_1 C})$. In the above, the second and third lines hold because $M \geq N \implies L^\dagger M L \geq L^\dagger N L$ for any operator $L$, and thus
\begin{align}
&\rho_{A_2 A_1 C} \leq 2^\lambda\, \id_{A_2} \otimes \rho_{A_1 C} \nonumber\\
\iff&\, \rho_{A_1 C}^{-1/2} \rho_{A_2 A_1 C} \rho_{A_1 C}^{-1/2} \leq 2^\lambda\, \id_{A_2} \otimes \id_{\supp(\rho_{A_1 C})}, \nonumber\\
\implies&\, \nu_{A_1 C}^{1/2} \rho_{A_1 C}^{-1/2} \rho_{A_2 A_1 C} \rho_{A_1 C}^{-1/2} \nu_{A_1 C}^{1/2} \leq 2^\lambda\, \id_{A_2} \otimes \nu_{A_1 C},
\end{align}
where the first implication is bidirectional because we have both $\rho_{A_1 C}^{-1/2} \rho_{A_1 C} \rho_{A_1 C}^{-1/2} = \id_{\supp(\rho_{A_1 C})}$ and $\rho_{A_1 C}^{1/2}  \id_{\supp(\rho_{A_1 C})} \rho_{A_1 C}^{1/2} = \rho_{A_1 C}$ (and also $\rho_{A_1 C}^{1/2} \rho_{A_1 C}^{-1/2} \rho_{A_2 A_1 C} \rho_{A_1 C}^{-1/2} \rho_{A_1 C}^{1/2} = \rho_{A_2 A_1 C}$), while the second implication is only in one direction because we can only be sure that $\nu_{A_1 C}^{1/2} \id_{\supp(\rho_{A_1 C})} \nu_{A_1 C}^{1/2} = \nu_{A_1 C}$ (due to~\eqref{eq:supportnu}), not $\nu_{A_1 C}^{-1/2} \nu_{A_1 C} \nu_{A_1 C}^{-1/2} \stackrel{?}{=} \id_{\supp(\rho_{A_1 C})}$.\footnote{Technically, for $\beta>1$ (i.e.~$\alpha<1$) we do have $\supp(\rho_{A_1 C}) \subseteq \supp(\id_{A_1} \otimes \sigma_C)$, since by definition the state $\sigma$ satisfies $D_\beta(\rho_{A_1 C} \Vert \id_{A_1} \otimes \sigma_C) \leq D_\beta(\rho_{A_1 C} \Vert \id_{A_1} \otimes \rho_C) < +\infty$. From the definition of $\nu_{A_1 C}$ it can then be seen that in fact $\supp(\rho_{A_1 C})$ and $\supp(\nu_{A_1 C})$ are equal in this case, and so the implication indeed holds in both directions and we obtain the equality $H_\infty(A_2 | A_1 C)_\nu = H_\infty(A_2 | A_1 C)_\rho$. However, this argument seems less straightforward to generalize to the $\beta<1$ regime, where $D_\beta(\rho_{A_1 C} \Vert \id_{A_1} \otimes \sigma_C) < +\infty$ does not imply $\supp(\rho_{A_1 C}) \subseteq \supp(\id_{A_1} \otimes \sigma_C)$; in fact, in that regime we generally have $\supp(\sigma_{C}) \subseteq \supp(\rho_C)$ instead~\cite[Sec.~III.B]{MDS+13}.}

Therefore we can conclude
\begin{align}
H^\uparrow_\beta(A_1 A_2 | C)_\rho \geq H^\uparrow_\beta(A_1 | C)_\rho + H_\infty(A_2 | A_1 C)_\rho,
\end{align}
and by applying the duality relations $H^\uparrow_\beta(A_1 A_2 | C)_\rho = -H^\uparrow_\alpha(A_1 A_2 | B)_\rho$, $ H^\uparrow_\beta(A_1 | C)_\rho = -H^\uparrow_\alpha(A_1 | A_2 B)_\rho$ for $H^\uparrow_\alpha$~\cite{Beigi13,MDS+13} and $H_\infty(A_2 | A_1 C)_\rho = -\bar{H}^\uparrow_0(A_2 | B)_\rho$ between sandwiched and Petz entropies~\cite{TBH14}, the above is equivalent to
\begin{align}
H^\uparrow_\alpha(A_1 A_2 | B)_\rho \leq H^\uparrow_\alpha(A_1 | A_2 B)_\rho + \bar{H}^\uparrow_0(A_2 | B)_\rho.
\end{align}

This yields our claimed bounds in~\eqref{eq:H0chain1} (where the second line follows from the generic relation $H^\uparrow_\alpha \geq \bar{H}^\uparrow_\alpha$ between sandwiched and Petz entropies~\cite[Eq.~(4.88)]{Tom16}). To get our claimed bounds in~\eqref{eq:H0chain2}, we simply set $B\defvar B_1B_2$ in the first line of~\eqref{eq:H0chain1} and apply the data-processing inequality $\bar{H}^\uparrow_0(A_2 | B_1 B_2)_\rho \leq \bar{H}^\uparrow_0(A_2 | B_2)_\rho$ for Petz entropies (see the statements below Definition~\ref{def:Petz_condent}), then again apply $H^\uparrow_\alpha \geq \bar{H}^\uparrow_\alpha$. (We do not directly obtain the second line in~\eqref{eq:H0chain2} by data-processing on $H^\uparrow_0$, because it was shown in~\cite{BFT17} that the sandwiched entropies do not satisfy data-processing for  $\alpha\in(0,1/2)$; that work does not explicitly specify the $\alpha=0$ case but the construction may generalize accordingly.)
\end{proof}

\printbibliography

@misc{inprep_weightentropy,
title = {{Tight and general finite-size security of quantum key distribution}},
note = {Presented at QCrypt 2023 and QIP 2024, manuscript in preparation.},
author = {Thomas van Himbeeck and Peter Brown},
year = {2025}
}

@article{Arimoto77,
  author = {Arimoto, S.},
  title = {{Information measures and capacity of order alpha for discrete memoryless channels}},
  year = {1977},
  journal = {Topics in Information Theory},
  volume = {17},
  pages = {41--52},
}

@article{ARV19,
  doi = {10.1137/18m1174726},
  url = {https://doi.org/10.1137/18m1174726},
  year = {2019},
  month = jan,
  publisher = {Society for Industrial {\&} Applied Mathematics ({SIAM})},
  volume = {48},
  number = {1},
  pages = {181--225},
  author = {Rotem Arnon-Friedman and Renato Renner and Thomas Vidick},
  title = {Simple and Tight Device-Independent Security Proofs},
  journal = {{SIAM} Journal on Computing}
}

@inproceedings{BB84,
	title = {{Quantum cryptography: Public key distribution and coin tossing}},
	booktitle = {{Proceedings of International
	Conference on Computers, Systems and Signal Processing, Bangalore, India, 1984}},
	publisher = {IEEE Press, New York},
	author = {C. H. Bennett and G. Brassard},
	year = {1984},
	pages = {175},
	doi = {10.1016/j.tcs.2011.08.039}
}

@article{BBM92,
  title = {{Quantum cryptography without Bell’s theorem}},
  volume = {68},
  ISSN = {0031-9007},
  url = {http://dx.doi.org/10.1103/PhysRevLett.68.557},
  DOI = {10.1103/physrevlett.68.557},
  number = {5},
  journal = {Physical Review Letters},
  publisher = {American Physical Society (APS)},
  author = {Bennett,  Charles H. and Brassard,  Gilles and Mermin,  N. David},
  year = {1992},
  month = feb,
  pages = {557–559}
}

@article{BCC+10,
  title = {{The uncertainty principle in the presence of quantum memory}},
  volume = {6},
  ISSN = {1745-2481},
  url = {http://dx.doi.org/10.1038/nphys1734},
  DOI = {10.1038/nphys1734},
  number = {9},
  journal = {Nature Physics},
  publisher = {Springer Science and Business Media LLC},
  author = {Berta,  Mario and Christandl,  Matthias and Colbeck,  Roger and Renes,  Joseph M. and Renner,  Renato},
  year = {2010},
  month = jul,
  pages = {659--662}
}

@article{Beigi13,
  title = {{Sandwiched R{\'{e}}nyi divergence satisfies data processing inequality}},
  volume = {54},
  ISSN = {1089-7658},
  url = {http://dx.doi.org/10.1063/1.4838855},
  DOI = {10.1063/1.4838855},
  number = {12},
  journal = {Journal of Mathematical Physics},
  publisher = {AIP Publishing},
  author = {Beigi,  Salman},
  year = {2013},
  month = dec 
}

@article{BFT17,
  title = {{On variational expressions for quantum relative entropies}},
  volume = {107},
  ISSN = {1573-0530},
  url = {http://dx.doi.org/10.1007/s11005-017-0990-7},
  DOI = {10.1007/s11005-017-0990-7},
  number = {12},
  journal = {Letters in Mathematical Physics},
  publisher = {Springer Science and Business Media LLC},
  author = {Berta,  Mario and Fawzi,  Omar and Tomamichel,  Marco},
  year = {2017},
  month = sep,
  pages = {2239–2265}
}

@article{BGW+24,
  doi = {10.22331/q-2024-07-18-1418},
  url = {https://doi.org/10.22331/q-2024-07-18-1418},
  title = {{Security of discrete-modulated continuous-variable quantum key distribution}},
  author = {B{\"{a}}uml, Stefan and Pascual-Garc{\'{i}}a, Carlos and Wright, Victoria and Fawzi, Omar and Ac{\'{i}}n, Antonio},
  journal = {{Quantum}},
  issn = {2521-327X},
  publisher = {{Verein zur F{\"{o}}rderung des Open Access Publizierens in den Quantenwissenschaften}},
  volume = {8},
  pages = {1418},
  month = jul,
  year = {2024}
}

@book{BV04v8, 
place={Cambridge}, 
title={{Convex Optimization}}, 
DOI={10.1017/CBO9780511804441}, 
publisher={Cambridge University Press}, 
author={Boyd, Stephen and Vandenberghe, Lieven}, 
year={2004}
}

@article{CKR09,
  title = {{Postselection Technique for Quantum Channels with Applications to Quantum Cryptography}},
  author = {Christandl, Matthias and K\"{o}nig, Robert and Renner, Renato},
  journal = {Physical Review Letters},
  volume = {102},
  issue = {2},
  pages = {020504},
  numpages = {4},
  year = {2009},
  month = jan,
  publisher = {American Physical Society},
  doi = {10.1103/PhysRevLett.102.020504},
  url = {https://link.aps.org/doi/10.1103/PhysRevLett.102.020504}
}

@article{arx_CT23,
	author = {Enrique Cervero-Mart\'{i}n and Marco Tomamichel},
	title = {{Device independent security of quantum key distribution from monogamy-of-entanglement games}},
	year = {2023},
	journal = {arXiv:2312.04079v1 [quant-ph]},
	url = {https://arxiv.org/abs/2312.04079v1}
}

@ARTICLE{Datta09,
  author={Datta, Nilanjana},
  journal={IEEE Transactions on Information Theory}, 
  title={{Min- and Max-Relative Entropies and a New Entanglement Monotone}}, 
  year={2009},
  volume={55},
  number={6},
  pages={2816-2826},
  doi={10.1109/TIT.2009.2018325}}

@article{DF19,
	title = {{Entropy accumulation with improved second-order term}},
	issn = {0018-9448},
	doi = {10.1109/TIT.2019.2929564},
	journal = {{IEEE Transactions on Information Theory}},
	author = {Dupuis, F. and Fawzi, O.},
	year = {2019},
	pages = {1--1}
}

@article{DFR20,
  doi = {10.1007/s00220-020-03839-5},
  url = {https://doi.org/10.1007/s00220-020-03839-5},
  year = {2020},
  month = sep,
  publisher = {Springer Science and Business Media {LLC}},
  volume = {379},
  number = {3},
  pages = {867--913},
  author = {Fr{\'{e}}d{\'{e}}ric Dupuis and Omar Fawzi and Renato Renner},
  title = {Entropy Accumulation},
  journal = {Communications in Mathematical Physics}
}

@article{Duff78,
  title = {{Clark’s Theorem on linear programs holds for convex programs}},
  volume = {75},
  ISSN = {1091-6490},
  url = {http://dx.doi.org/10.1073/pnas.75.4.1624},
  DOI = {10.1073/pnas.75.4.1624},
  number = {4},
  journal = {Proceedings of the National Academy of Sciences},
  publisher = {Proceedings of the National Academy of Sciences},
  author = {Duffin,  R. J.},
  year = {1978},
  month = apr,
  pages = {1624–1626}
}

@article{Dup15,
  doi = {10.1063/1.4907981},
  url = {https://doi.org/10.1063/1.4907981},
  year = {2015},
  publisher = {{AIP} Publishing},
  volume = {56},
  number = {2},
  pages = {022203},
  author = {Fr{\'{e}}d{\'{e}}ric Dupuis},
  title = {{Chain rules for quantum R{\'{e}}nyi entropies}},
  journal = {Journal of Mathematical Physics}
}

@article{FL12,
  title = {{Symmetries in quantum key distribution and the connection between optimal attacks and optimal cloning}},
  volume = {85},
  ISSN = {1094-1622},
  url = {http://dx.doi.org/10.1103/PhysRevA.85.052310},
  DOI = {10.1103/physreva.85.052310},
  number = {5},
  journal = {Physical Review A},
  publisher = {American Physical Society (APS)},
  author = {Ferenczi,  Agnes and L\"{u}tkenhaus,  Norbert},
  year = {2012},
  month = may 
}

@article{FL13,
  title = {{Monotonicity of a relative R\'{e}nyi entropy}},
  volume = {54},
  ISSN = {1089-7658},
  url = {http://dx.doi.org/10.1063/1.4838835},
  DOI = {10.1063/1.4838835},
  number = {12},
  journal = {Journal of Mathematical Physics},
  publisher = {AIP Publishing},
  author = {Frank,  Rupert L. and Lieb,  Elliott H.},
  year = {2013},
  month = dec 
}

@article{arx_GLT+22,
	author = {Ian George and Jie Lin and van Himbeeck, Thomas and Kun Fang and Norbert L\"{u}tkenhaus},
	title = {{Finite-Key Analysis of Quantum Key Distribution with Characterized Devices Using Entropy Accumulation}},
	year = {2022},
	journal = {arXiv:2203.06554v1 [quant-ph]},
	url = {https://arxiv.org/abs/2203.06554v1}
}

@article{arx_HTB24,
	author = {Thomas A. Hahn and Ernest Y.-Z. Tan and Peter Brown},
	title = {{Bounds on Petz-R\'{e}nyi Divergences and their Applications for Device-Independent Cryptography}},
	year = {2024},
	journal = {arXiv:2408.12313v1 [quant-ph]},
	url = {https://arxiv.org/abs/2408.12313v1}
}

@article{JK25,
author = {Jain, Rahul and Kundu, Srijita},
title = {{A Direct Product Theorem for Quantum Communication Complexity with Applications to Device-Independent Cryptography}},
journal = {SIAM Journal on Computing},
volume = {54},
number = {4},
pages = {964-1020},
year = {2025},
doi = {10.1137/23M1549353},
URL = {https://doi.org/10.1137/23M1549353}
}

@article{JMS20,
   title={{Parallel Device-Independent Quantum Key Distribution}},
   volume={66},
   ISSN={1557-9654},
   url={http://dx.doi.org/10.1109/TIT.2020.2986740},
   DOI={10.1109/tit.2020.2986740},
   number={9},
   journal={IEEE Transactions on Information Theory},
   publisher={Institute of Electrical and Electronics Engineers (IEEE)},
   author={Jain, Rahul and Miller, Carl A. and Shi, Yaoyun},
   year={2020},
   pages={5567--5584}
}

@article{KW20,
	author = {Khatri, Sumeet and Wilde, Mark M.},
	title = {{Principles of Quantum Communication Theory: A Modern Approach}},
	url = {https://arxiv.org/abs/2011.04672},
	journal = {arXiv:2011.04672 [quant-ph]},
	year = {2020},
	month = nov,
	doi = {10.48550/arXiv.2011.04672}
}

@article{LLR+21,
  doi = {10.1038/s41567-020-01147-2},
  url = {https://doi.org/10.1038/s41567-020-01147-2},
  year = {2021},
  month = feb,
  publisher = {Springer Science and Business Media {LLC}},
  volume = {17},
  number = {4},
  pages = {448--451},
  author = {Wen-Zhao Liu and Ming-Han Li and Sammy Ragy and Si-Ran Zhao and Bing Bai and Yang Liu and Peter J. Brown and Jun Zhang and Roger Colbeck and Jingyun Fan and Qiang Zhang and Jian-Wei Pan},
  title = {{Device-independent randomness expansion against quantum side information}},
  journal = {Nature Physics}
}

@article{arx_MD23,
	author = {Marwah, Ashutosh and Dupuis, Fr{\'e}d{\'e}ric},
	date = {2024/08/23},
	date-added = {2025-08-08 20:33:16 +0200},
	date-modified = {2025-08-08 20:33:16 +0200},
	doi = {10.1007/s00220-024-05074-8},
	id = {Marwah2024},
	isbn = {1432-0916},
	journal = {Communications in Mathematical Physics},
	number = {9},
	pages = {211},
	title = {{Smooth Min-entropy Lower Bounds for Approximation Chains}},
	url = {https://doi.org/10.1007/s00220-024-05074-8},
	volume = {405},
	year = {2024},
	bdsk-url-1 = {https://doi.org/10.1007/s00220-024-05074-8}}

@article{MFSR24,
  title = {{Generalised Entropy Accumulation}},
  volume = {405},
  ISSN = {1432-0916},
  url = {http://dx.doi.org/10.1007/s00220-024-05121-4},
  DOI = {10.1007/s00220-024-05121-4},
  number = {11},
  journal = {Communications in Mathematical Physics},
  publisher = {Springer Science and Business Media LLC},
  author = {Metger,  Tony and Fawzi,  Omar and Sutter,  David and Renner,  Renato},
  year = {2024},
  month = oct 
}

@article{MPA11,
  title = {{Secure device-independent quantum key distribution with causally independent measurement devices}},
  volume = {2},
  ISSN = {2041-1723},
  url = {http://dx.doi.org/10.1038/ncomms1244},
  DOI = {10.1038/ncomms1244},
  number = {1},
  journal = {Nature Communications},
  publisher = {Springer Science and Business Media LLC},
  author = {Masanes,  Lluís and Pironio,  Stefano and Acín,  Antonio},
  year = {2011},
  month = mar 
}

@article{MO14,
  title = {{Quantum Hypothesis Testing and the Operational Interpretation of the Quantum R\'{e}nyi Relative Entropies}},
  volume = {334},
  ISSN = {1432-0916},
  url = {http://dx.doi.org/10.1007/s00220-014-2248-x},
  DOI = {10.1007/s00220-014-2248-x},
  number = {3},
  journal = {Communications in Mathematical Physics},
  publisher = {Springer Science and Business Media LLC},
  author = {Mosonyi,  Milán and Ogawa,  Tomohiro},
  year = {2014},
  month = dec,
  pages = {1617–1648}
}

@article{MvDR+19,
	doi = {10.1088/2058-9565/ab2819},
	url = {https://doi.org/10.1088%2F2058-9565%2Fab2819},
	year = 2019,
	publisher = {{IOP} Publishing},
	volume = {4},
	number = {3},
	pages = {035011},
	author = {G. Murta and S. B. van Dam and J. Ribeiro and R. Hanson and S. Wehner},
	title = {{Towards a realization of device-independent quantum key distribution}},
	journal = {Quantum Science and Technology}
}

@book{NC10,
  author = {Michael A. Nielsen and Isaac L. Chuang},
  title = {{Quantum Computation and Quantum Information}},
  subtitle = {{10th Anniversary Edition}},
  year = {2010},
  publisher = {Cambridge University Press, New York}
}

@article{PAM+10,
  doi = {10.1038/nature09008},
  url = {https://doi.org/10.1038/nature09008},
  year = {2010},
  month = apr,
  publisher = {Springer Science and Business Media {LLC}},
  volume = {464},
  number = {7291},
  pages = {1021--1024},
  author = {S. Pironio and A. Ac{\'{\i}}n and S. Massar and A. Boyer de la Giroday and D. N. Matsukevich and P. Maunz and S. Olmschenk and D. Hayes and L. Luo and T. A. Manning and C. Monroe},
  title = {{Random numbers certified by Bell's theorem}},
  journal = {Nature}
}

@article{PM13,
  title = {{Security of practical private randomness generation}},
  author = {Pironio, Stefano and Massar, Serge},
  journal = {Physical Review A},
  volume = {87},
  issue = {1},
  pages = {012336},
  numpages = {10},
  year = {2013},
  month = jan,
  publisher = {American Physical Society},
  doi = {10.1103/PhysRevA.87.012336},
  url = {https://link.aps.org/doi/10.1103/PhysRevA.87.012336}
}

@article{arx_PR21,
   title={{Security in Quantum Cryptography}},
   volume={94},
   ISSN={1539-0756},
   url={http://dx.doi.org/10.1103/RevModPhys.94.025008},
   DOI={10.1103/revmodphys.94.025008},
   number={2},
   journal={Reviews of Modern Physics},
   publisher={American Physical Society (APS)},
   author={Portmann, Christopher and Renner, Renato},
   year={2022},
   month=jun }

@thesis{rennerthesis,
    title    = {{Security of Quantum Key Distribution}},
    school   = {ETH Z\"urich},
    author   = {Renner, Renato},
    year     = {2005},
	doi = {10.3929/ethz-a-005115027},
    url = {https://doi.org/10.3929/ethz-a-005115027}
}

@article{RGK05,
  title = {{Information-theoretic security proof for quantum-key-distribution protocols}},
  author = {Renner, Renato and Gisin, Nicolas and Kraus, Barbara},
  journal = {Physical Review A},
  volume = {72},
  issue = {1},
  pages = {012332},
  numpages = {17},
  year = {2005},
  publisher = {American Physical Society},
  doi = {10.1103/PhysRevA.72.012332},
  url = {https://link.aps.org/doi/10.1103/PhysRevA.72.012332}
}

@article{TBH14,
  title = {{Relating different quantum generalizations of the conditional R\'{e}nyi entropy}},
  volume = {55},
  ISSN = {1089-7658},
  url = {http://dx.doi.org/10.1063/1.4892761},
  DOI = {10.1063/1.4892761},
  number = {8},
  journal = {Journal of Mathematical Physics},
  publisher = {AIP Publishing},
  author = {Tomamichel,  Marco and Berta,  Mario and Hayashi,  Masahito},
  year = {2014},
  month = aug 
}

@article{TL17,
  doi = {10.22331/q-2017-07-14-14},
  url = {https://doi.org/10.22331/q-2017-07-14-14},
  title = {{A largely self-contained and complete security proof for quantum key  distribution}},
  author = {Tomamichel, Marco and Leverrier, Anthony},
  journal = {{Quantum}},
  issn = {2521-327X},
  publisher = {{Verein zur F{\"{o}}rderung des Open Access Publizierens in den Quantenwissenschaften}},
  volume = {1},
  pages = {14},
  year = {2017}
}

@book{Tom16,
  doi = {10.1007/978-3-319-21891-5},
  url = {https://doi.org/10.1007/978-3-319-21891-5},
  year = {2016},
  publisher = {Springer International Publishing},
  author = {Marco Tomamichel},
  title = {{Quantum Information Processing with Finite Resources}}
}

@article{TSB+22,
  doi = {10.22331/q-2022-12-22-880},
  url = {https://doi.org/10.22331/q-2022-12-22-880},
  title = {Improved {DIQKD} protocols with finite-size analysis},
  author = {Tan, Ernest Y.-Z. and Sekatski, Pavel and Bancal, Jean-Daniel and Schwonnek, Ren{\'{e}} and Renner, Renato and Sangouard, Nicolas and Lim, Charles C.-W.},
  journal = {{Quantum}},
  issn = {2521-327X},
  publisher = {{Verein zur F{\"{o}}rderung des Open Access Publizierens in den Quantenwissenschaften}},
  volume = {6},
  pages = {880},
  month = dec,
  year = {2022}
}

@article{arx_Vid17,
	title = {{Parallel DIQKD from parallel repetition}},
	url = {https://arxiv.org/abs/1703.08508},
	journal = {arXiv:1703.08508 [quant-ph]},
	author = {Thomas Vidick},
	year = {2017}
}

@article{WLC18,
  title={{Reliable numerical key rates for quantum key distribution}},
  author={Winick, A. and L{\"u}tkenhaus, N. and Coles, P. J.},
  journal={Quantum},
  volume={2},
  pages={77},
  year={2018},
  doi={10.22331/q-2018-07-26-77}
}

@article{ZFK20,
  title = {{Efficient randomness certification by quantum probability estimation}},
  author = {Zhang, Yanbao and Fu, Honghao and Knill, Emanuel},
  journal = {Physical Review Research},
  volume = {2},
  issue = {1},
  pages = {013016},
  numpages = {26},
  year = {2020},
  month = jan,
  publisher = {American Physical Society},
  doi = {10.1103/PhysRevResearch.2.013016},
  url = {https://link.aps.org/doi/10.1103/PhysRevResearch.2.013016}
}

@article{MR23,
	author = {Metger, Tony and Renner, Renato},
	title = {{Security of quantum key distribution from generalised entropy accumulation}},
	journal = {Nat. Commun.},
	volume = {14},
	number = {5272},
	pages = {1--13},
	year = {2023},
	month = aug,
	issn = {2041-1723},
	publisher = {Nature Publishing Group},
	doi = {10.1038/s41467-023-40920-8}
}

@INPROCEEDINGS{MFSR22,
	author={Metger, Tony and Fawzi, Omar and Sutter, David and Renner, Renato},
	booktitle={2022 IEEE 63rd Annual Symposium on Foundations of Computer Science (FOCS)}, 
	title={Generalised entropy accumulation}, 
	pages={844-850},
	year={2022},
	doi={10.1109/FOCS54457.2022.00085}
}

@article{AB19,
  title = {{Device-independent certification of one-shot distillable entanglement}},
  volume = {21},
  ISSN = {1367-2630},
  url = {http://dx.doi.org/10.1088/1367-2630/aafef6},
  DOI = {10.1088/1367-2630/aafef6},
  number = {3},
  journal = {New Journal of Physics},
  publisher = {IOP Publishing},
  author = {Arnon-Friedman,  Rotem and Bancal,  Jean-Daniel},
  year = {2019},
  month = mar,
  pages = {033010}
}

@article{MDS+13,
	author = {M{\ifmmode\ddot{u}\else\"{u}\fi}ller-Lennert, Martin and Dupuis, Fr{\ifmmode\acute{e}\else\'{e}\fi}d{\ifmmode\acute{e}\else\'{e}\fi}ric and Szehr, Oleg and Fehr, Serge and Tomamichel, Marco},
	title = {{On quantum R{\ifmmode\acute{e}\else\'{e}\fi}nyi entropies: A new generalization and some properties}},
	journal = {Journal of Mathematical Physics},
	volume = {54},
	number = {12},
	year = {2013},
	month = dec,
	issn = {0022-2488},
	publisher = {AIP Publishing},
	doi = {10.1063/1.4838856},
	url = {https://doi.org/10.1063/1.4838856}
}

@article{TTL24,
  title = {{Security proof for variable-length quantum key distribution}},
  author = {Tupkary, Devashish and Tan, Ernest Y.-Z. and L\"utkenhaus, Norbert},
  journal = {Physical Review Research},
  volume = {6},
  issue = {2},
  pages = {023002},
  numpages = {20},
  year = {2024},
  publisher = {American Physical Society},
  doi = {10.1103/PhysRevResearch.6.023002},
  url = {https://link.aps.org/doi/10.1103/PhysRevResearch.6.023002}
}

@article{arx_NTZ+24,
   title={{Postselection Technique for Optical Quantum Key Distribution with Improved de Finetti Reductions}},
   volume={5},
   ISSN={2691-3399},
   url={http://dx.doi.org/10.1103/PRXQuantum.5.040315},
   DOI={10.1103/prxquantum.5.040315},
   number={4},
   journal={PRX Quantum},
   publisher={American Physical Society (APS)},
   author={Nahar, Shlok and Tupkary, Devashish and Zhao, Yuming and L\"{u}tkenhaus, Norbert and Tan, Ernest Y.-Z.},
   year={2024},
   month=oct }

@ARTICLE{Dup23,
	author={Dupuis, Frédéric},
	journal={IEEE Transactions on Information Theory}, 
	title={Privacy Amplification and Decoupling Without Smoothing}, 
	year={2023},
	volume={69},
	number={12},
	pages={7784-7792},
	doi={10.1109/TIT.2023.3301812},
	url = {https://ieeexplore.ieee.org/document/10232924}
}

@article{arx_KAG+24,
  title = {{Finite-Size Analysis of Prepare-and-Measure and Decoy-State Quantum Key Distribution via Entropy Accumulation}},
  author = {Kamin, Lars and Arqand, Amir and George, Ian and L\"utkenhaus, Norbert and Tan, Ernest Y.-Z.},
  journal = {PRX Quantum},
  volume = {6},
  issue = {2},
  pages = {020342},
  numpages = {38},
  year = {2025},
  month = {Jun},
  publisher = {American Physical Society},
  doi = {10.1103/PRXQuantum.6.020342},
  url = {https://link.aps.org/doi/10.1103/PRXQuantum.6.020342}
}

@article{LXP+21,
  title = {{Security Analysis of Quantum Key Distribution with Small Block Length and Its Application to Quantum Space Communications}},
  author = {Lim, Charles Ci-Wen and Xu, Feihu and Pan, Jian-Wei and Ekert, Artur},
  journal = {Physical Review Letters},
  volume = {126},
  issue = {10},
  pages = {100501},
  numpages = {5},
  year = {2021},
  month = {Mar},
  publisher = {American Physical Society},
  doi = {10.1103/PhysRevLett.126.100501},
  url = {https://link.aps.org/doi/10.1103/PhysRevLett.126.100501}
}

@article{LCQ12,
  title = {{Measurement-Device-Independent Quantum Key Distribution}},
  author = {Lo, Hoi-Kwong and Curty, Marcos and Qi, Bing},
  journal = {Physical Review Letters},
  volume = {108},
  issue = {13},
  pages = {130503},
  numpages = {5},
  year = {2012},
  month = {Mar},
  publisher = {American Physical Society},
  doi = {10.1103/PhysRevLett.108.130503},
  url = {https://link.aps.org/doi/10.1103/PhysRevLett.108.130503}
}

@article{LL18,
  title = {{Simple security analysis of phase-matching measurement-device-independent quantum key distribution}},
  author = {Lin, Jie and L\"utkenhaus, Norbert},
  journal = {Physical Review A},
  volume = {98},
  issue = {4},
  pages = {042332},
  numpages = {15},
  year = {2018},
  month = {Oct},
  publisher = {American Physical Society},
  doi = {10.1103/PhysRevA.98.042332},
  url = {https://link.aps.org/doi/10.1103/PhysRevA.98.042332}
}

\end{document}